\newfont{\mycrnotice}{ptmr8t at 7pt}
\newfont{\myconfname}{ptmri8t at 7pt}
\let \originalleft \left
\let\originalright\right
\renewcommand{\left}{\mathopen{}\mathclose\bgroup\originalleft}
\renewcommand{\right}{\aftergroup\egroup\originalright}
\newtheorem{theorem}{Theorem}
\newtheorem{corollary}{Corollary}
\newcommand{\fullver}[2]{{{\ifx\submissionversion\undefined#1\else#2\fi}}}
\newcommand{\framework}{\textsc{ConnectIt}\xspace}
\newcommand{\oursystem}{\textsc{ConnectIt}\xspace}
\newcommand{\makeset}{\textsc{MakeSet}}
\newcommand{\union}{\textsc{Union}}
\newcommand{\find}{\textsc{Find}}
\newcommand{\maxpathlen}{Max Path Length}
\newcommand{\totalpathlen}{Total Path Length}
\newcommand{\unionfind}{union-find}
\newcommand{\components}{connected components}
\newcommand{\minbased}{min-based}
\newcommand{\otherminbased}{other min-based}
\newcommand{\labelvar}{\ensuremath{C}}
\newcommand{\largestcomp}{\ensuremath{L_{\max}}}
\newcommand{\samplecorrect}{correct}
\newcommand{\koutsample}{$k$-out Sampling}
\newcommand{\bfssample}{BFS Sampling}
\newcommand{\lddsample}{LDD Sampling}
\newcommand{\koutpure}{$\mathsf{kout\mhyphen pure}$}
\newcommand{\kouthybrid}{$\mathsf{kout\mhyphen hybrid}$}
\newcommand{\koutafforest}{$\mathsf{kout\mhyphen afforest}$}
\newcommand{\koutmaxdeg}{$\mathsf{kout\mhyphen maxdeg}$}
\definecolor{mygreen}{rgb}{0.0, 0.5, 0.0}
\newcommand{\stinger}{STINGER}
\newcommand{\mpc}{\ensuremath{\mathsf{MPC}}}
\newcommand{\connect}{\ensuremath{\mathsf{Connect}}}
\newcommand{\parentconnect}{\ensuremath{\mathsf{ParentConnect}}}
\newcommand{\extendedconnect}{\ensuremath{\mathsf{ExtendedConnect}}}
\newcommand{\rootupdate}{\ensuremath{\mathsf{RootUp}}}
\newcommand{\shortcut}{\ensuremath{\mathsf{Shortcut}}}
\newcommand{\alter}{\ensuremath{\mathsf{Alter}}}
\newcommand{\fullshortcut}{\ensuremath{\mathsf{FullShortcut}}}
\mathchardef\mhyphen="2D
\newcommand{\unionasync}{\ensuremath{\mathsf{UF\mhyphen Async}}}
\newcommand{\unionhook}{\ensuremath{\mathsf{UF\mhyphen Hooks}}}
\newcommand{\unionearly}{\ensuremath{\mathsf{UF\mhyphen Early}}}
\newcommand{\unionremlock}{\ensuremath{\mathsf{UF\mhyphen Rem\mhyphen Lock}}}
\newcommand{\unionremcas}{\ensuremath{\mathsf{UF\mhyphen Rem\mhyphen CAS}}}
\newcommand{\ufmetaalgorithm}{\ensuremath{\mathsf{Connectivity (Union\mhyphen Find)}}}
\newcommand{\jayanti}{\ensuremath{\mathsf{UF\mhyphen JTB}}}
\newcommand{\liutarjan}{\ensuremath{\mathsf{Liu\mhyphen Tarjan}}}
\newcommand{\shiloachvishkin}{\ensuremath{\mathsf{SV}}}
\newcommand{\labelpropagation}{\ensuremath{\mathsf{Label\mhyphen Prop}}}
\newcommand{\bfscc}{BFSCC}
\newcommand{\gbbscc}{WorkeffCC}
\newcommand{\labelprop}{label propagation}
\definecolor{amaranth}{rgb}{0.9, 0.17, 0.31}
\definecolor{revise-color}{rgb}{0.55, 0.71, 0.0} 
\newcommand{\revised}[1]{#1}
\newcommand{\sndrevised}[1]{#1}
\newcommand{\splitone}{\ensuremath{\mathsf{SplitAtomicOne}}}
\newcommand{\halveone}{\ensuremath{\mathsf{HalveAtomicOne}}}
\newcommand{\splice}{\ensuremath{\mathsf{SpliceAtomic}}}
\newcommand{\findnaive}{\ensuremath{\mathsf{FindNaive}}}
\newcommand{\findcompress}{\ensuremath{\mathsf{FindCompress}}}
\newcommand{\findhalve}{\ensuremath{\mathsf{FindAtomicHalve}}}
\newcommand{\findsplit}{\ensuremath{\mathsf{FindAtomicSplit}}}
\newcommand{\findtwotry}{\ensuremath{\mathsf{FindTwoTrySplit}}}
\newcommand{\mapedges}{\ensuremath{\textsc{MapEdges}}}
\newcommand{\gatheredges}{\ensuremath{\textsc{GatherEdges}}}
\newcommand{\cas}{CAS}
\newcommand{\writemin}{writeMin}
\newcommand{\codevar}[1]{\mathit{#1}}
\newcommand{\boruvka}{Bor\r{u}vka}
\theoremstyle{definition}
\newtheorem{definition}{Definition}[section]
\newcounter{myalgctr}
\newtcolorbox{OuterBox}[1][]{%
    breakable,
    enhanced,
    frame hidden,
    interior hidden,
    left=-5pt,
    right=-5pt,
    top=-5pt,
    float=p,
    boxsep=0pt,
    arc=0pt
#1}%
\newtcolorbox{InnerBox}[1][]{%
    enforce breakable,
    enhanced,
    colback=gray,
    colframe=white,
#1}%
\newenvironment{tbox}{
\vspace{0.2cm}
\begin{tcolorbox}[width=0.45\textwidth,
                  enhanced,
                  boxsep=1pt,
                  left=1pt,
                  right=1pt,
                  top=0.5pt,
                  boxrule=1pt,
                  arc=0pt,
                  colback=white,
                  colframe=black,
                  unbreakable
                  ]
}{
\end{tcolorbox}
}
\newenvironment{mytbox}{
\vspace{0.2cm}
\begin{tcolorbox}[width=\textwidth,
                  enhanced,
                  boxsep=1pt,
                  left=1pt,
                  right=1pt,
                  top=4pt,
                  boxrule=1pt,
                  arc=0pt,
                  colback=white,
                  colframe=black,
                  unbreakable
                  ]
}{
\end{tcolorbox}
}
\newcommand{\tboxhrule}[0]{\vspace{0.1cm} {\color{black} \hrule} \vspace{0.2cm}}
\newenvironment{titledtbox}[1]{\begin{tbox}#1 \tboxhrule}{\end{tbox}}
\newenvironment{mytitledtbox}[1]{\begin{mytbox}#1 \tboxhrule}{\end{mytbox}}
\bfseries\color{green!40!black},
\itshape\color{purple!40!black},
\newcommand{\defn}[1]{\emph{\textbf{#1}}}
\newcommand{\myparagraph}[1]{\vspace{1.5pt}\noindent {\bf #1.}}
\newcommand{\id}[1]{\ifmmode\mathit{#1}\else\textit{#1}\fi}
\newcommand{\const}[1]{\ifmmode\mbox{\textc{#1}}\else\textsc{#1}\fi}
\begin{document}
\date{}

\title{ConnectIt: A Framework for Static and Incremental Parallel Graph Connectivity Algorithms
}
\titlenote{This is an extended version of a paper in PVLDB (to be presented at VLDB’21).}


\author{Laxman Dhulipala}
\affiliation{%
  \institution{MIT CSAIL}
}
\email{laxman@mit.edu}

\author{Changwan Hong}
\affiliation{%
  \institution{MIT CSAIL}
}
\email{changwan@mit.edu}

\author{Julian Shun}
\affiliation{%
  \institution{MIT CSAIL}
}
\email{jshun@mit.edu}

\fancyhead{}

\begin{abstract}
Connected components is a fundamental kernel in graph applications.
The
fastest existing  multicore algorithms for solving graph connectivity are
based on some form of edge sampling and/or linking and compressing
trees. However, many combinations of these design choices have been
left unexplored. In this paper, we design the \framework{} framework,
which provides different sampling strategies as well as various tree
linking and compression schemes.
\framework{} enables us to obtain
several hundred new variants of connectivity algorithms, most of which
extend to computing spanning forest. In addition to static graphs, we
also extend \framework{} to support mixes of insertions and
connectivity queries in the concurrent setting.

We present an experimental evaluation of \framework{} on a 72-core
machine, which we believe is the most comprehensive evaluation of
parallel connectivity algorithms to date. Compared to a collection of
state-of-the-art static multicore algorithms, we obtain an average
speedup of 12.4x (2.36x average speedup over the fastest existing
implementation for each graph). Using \framework{}, we are able to
compute connectivity on the largest publicly-available graph (with
over 3.5 billion vertices and 128 billion edges) in under 10 seconds
using a 72-core machine, providing a 3.1x speedup over the fastest
existing connectivity result for this graph, in any computational
setting. For our incremental algorithms, we show that our algorithms
can ingest graph updates at up to several billion edges per second.
To guide the user in selecting the best variants in
\framework{} for different situations, we provide a detailed analysis
of the different strategies.
Finally, we show how the techniques in \framework can be used to speed up two important graph applications: approximate minimum spanning forest and SCAN clustering.

\end{abstract}

\maketitle


\section{Introduction}\label{sec:intro}
Computing the \components{} (connectivity) of an undirected
graph is a fundamental problem for which numerous algorithms have been
designed. In the \components{} problem, we are given an
undirected graph and the goal is to assign labels to the vertices such
that two vertices reachable from one another have the same label, and
otherwise have different labels~\cite{CLRS}. A recent paper by Sahu et
al.~\cite{sahu2017ubiquity} surveying industrial uses of graph
algorithms shows that connectivity is the most frequently performed
graph computation out of a list of 13 fundamental graph routines
including shortest paths, centrality computations, triangle
counting, and others. Computing \components{} is also used to solve
many other graph problems, for example, to solve biconnectivity and
higher-order connectivity~\cite{Tarjan85}, as well as a subroutine in
popular clustering algorithms~\cite{Ester1996, xu2007scan, Patwary12,
Gan2015,Fang2016, wen2017efficient}.

In the sequential setting, \components{} can be easily solved
using breadth-first search, depth-first search, or union-find.
However, it is important to have fast parallel algorithms for
the problem in order to achieve high performance. Many parallel algorithms for \components{}
have been proposed in the literature (see,
e.g.,~\cite{Awerbuch1987,Chin1982,Chong95,Cole1991,Han1990,Hirschberg1979,Iwama1994,Johnson,Karger1999,Koubek85,Kruskal90,NathM82,Phillips89,Reif85,ShiloachV82,Gazit1991,Vishkin1984,SDB14,DhBlSh18,Greiner94,HalperinZ94,Slota14,BFGS12,Bader2005a,BaderJ96,MadduriB09,Bader2005b,Bus01,Caceres04,Krishnamurthy94,PatwaryRM12,Hambrusch,Hsu97,Nguyen2013,Hawick2010,Soman,Banerjee2011,ShunB2013,Sutton2018,Jaiganesh2018,LiuT19,Stergiou2018,Jain2017,Kiveris2014,Iverson2015,Feng2018,Cong2014,Meng2019,andoni2018parallel,behnezhad2019massively,behnezhad2019near,behnezhad2020massively,zhang2020fastsv,Pai2016,Ben-Nun2017,Rastogi2013,Yan2014},
among many others).  Recent state-of-the-art parallel implementations
are based on graph traversal~\cite{Slota14,DhBlSh18,SDB14,ShunB2013},
label propagation~\cite{Slota14,Nguyen2013,ShunB2013},
union-find~\cite{BFGS12,PatwaryRM12,Jaiganesh2018}, or the
hook-compress
paradigm~\cite{LiuT19,ShiloachV82,Bader2005a,Awerbuch1987,Cong2014,BeamerAP15,Greiner94,Soman,Meng2019,Wang2017,Pai2016,Ben-Nun2017,zhang2020fastsv}.
Recent work by Sutton et al.~\cite{Sutton2018} uses sampling to find
the \components{} on a subset of the edges, which can be used
to reduce the number of edge inspections when running
connectivity on the remaining edges. However, most prior work has
provided only one, or a few implementations of a specific approach for
a particular architecture, and there are many variants of these
algorithmic approaches that have been left unexplored.

In this paper, we design the \framework framework for multicore CPUs,
which enables many possible implementation choices of the algorithmic
paradigms for parallel connectivity from the literature.
Furthermore, as many real-world graphs are frequently updated under
insertion-heavy workloads (e.g., there are about 6,000 tweets
per second on Twitter, but only a few percent of tweets are
deleted~\cite{almuhimedi2013tweets}), \framework provides algorithms
that can maintain connectivity under incremental updates (edge
insertions) to the graph. A subset of the \framework{} implementations
also support computing the spanning forest of a graph in
both the static and incremental settings. We focus on the multicore
setting as the largest publicly-available real-world graphs can fit in
the memory of a single machine~\cite{DhBlSh18, dhulipala2020semi}. We
also compare \framework{}'s results with reported results for the
distributed-memory setting, showing that our multicore solutions are
significantly faster and much more cost-efficient.
We have recently extended our techniques to the GPU setting~\cite{HongDS20}.
\subsection*{\framework Overview}

\myparagraph{Algorithms}
\framework{} is designed for \emph{\minbased{} connectivity
algorithms}, which are based on vertices propagating labels to other
vertices that they are connected to, and updating labels based on the
minimum label received. All of the algorithms conceptually view the
label of a vertex $v$ as a directed edge from $v$ to the vertex
corresponding to $v$'s label. Thus, these directed edges form a set of
directed trees. All of the algorithms that we study maintain acyclicity in
this forest (ignoring self-loops at the roots of trees). The
\minbased{} algorithms that we study include both
\emph{root-based algorithms}, which include a broad class of union-find
algorithms and several other algorithms that only modify the labels
of roots of trees in the forest, as well as \emph{\otherminbased{}
algorithms} which deviate from this rule and can modify the labels of
non-root vertices.

\myparagraph{Sampling and Two-phase Execution} Inspired by the
Afforest algorithm~\cite{Sutton2018}, \framework produces algorithms
with two phases: the \defn{sampling} phase and the \defn{finish}
phase. In the sampling phase, we run \components{} on a subset of the
edges in the graph, which assigns a temporary label to each vertex. We
then find the most frequent label, $L_{\max}$, which corresponds to the
ID of the largest component (not necessarily maximal) found so far.
In the finish phase, we only need to run \components{} on the incident
edges of vertices with a label not equal to $L_{\max}$, which can
significantly reduce the number of edge traversals. We observe that
this optimization is similar in spirit to the direction-optimization
in breadth-first search~\cite{Beamer12}, which skips over incoming
edges for vertices, once they have already been visited during dense
iterations. If sampling is not used, then the finish phase is
equivalent to running a min-based algorithms on all vertices and
edges.

Our main observation is that these sampling techniques are general
strategies that reduce the number of edge traversals in the finish
phase, and thus accelerate the overall algorithm.  In \framework, any
of the \minbased{} algorithms that we consider can be used for the finish
phase in combination with any of the three sampling schemes: $k$-out, breadth-first search, and low-diameter decomposition sampling.
For the implementations where the finish phase uses a root-based
algorithm, \framework also supports spanning forest computation.  Our
generalized sampling paradigm and integration into \framework enables
us to express over 232 combinations of parallel algorithms of parallel
graph connectivity and 192 implementations for parallel spanning
forest (only our root-based algorithms support
spanning forest).

\myparagraph{Incremental Connectivity} Due to the frequency of updates
to graphs, various parallel streaming algorithms for connected
components have been
developed~\cite{McColl13,Ediger12,Simsiri2017,Sengupta17,Acar2019,dhulipala2020parallel}.
Motivated by this, the \framework framework supports a combination of
edge insertions and connectivity queries in the graph.  Both the
union-find and root-based algorithm implementations in \framework
support batched edge insertions/queries.  Additionally, all of the
union-find implementations, other than the variants of Rem's algorithm combined with
the splice compression scheme, support asynchronous updates and
queries, and most of them are lock-free or wait-free.

\setlength{\tabcolsep}{2.2pt}
\begin{table}[!t]\footnotesize
\centering

\scalebox{0.95}{
\begin{tabular}[!t]{llrrrr}
\toprule
{\bf System} & {\bf Graph} & {\bf Mem. (TB)} & {\bf Threads} & {\bf Nodes} & {\bf Time (s)} \\
\midrule
\multirow{1}{*}{Mosaic~\cite{maass2017mosaic}}
& Hyperlink2014 & 0.768 & 1000 & 1 & 708  \\

\multirow{1}{*}{FlashGraph~\cite{da2015flashgraph}}
& Hyperlink2012 & .512 & 64 & 1 & 461  \\

\multirow{1}{*}{GBBS~\cite{DhBlSh18}}
& Hyperlink2012 & 1 & 144 & 1 & 25.8  \\

\multirow{1}{*}{GBBS (NVRAM)~\cite{dhulipala2020semi}}
& Hyperlink2012 & 0.376 & 96 & 1 & 36.2  \\

\multirow{1}{*}{Galois (NVRAM)~\cite{Gill2020}}
& Hyperlink2012 & 0.376 & 96 & 1 & 76.0  \\

\multirow{1}{*}{Slota et al.~\cite{Slota2016}}
& Hyperlink2012 & 16.3 & 8192 & 256 & 63  \\

\multirow{1}{*}{Stergiou et al.~\cite{Stergiou2018}}
& Hyperlink2012 & 128 & 24000 & 1000 & 341 \\

\multirow{1}{*}{Gluon~\cite{Dathathri2018}}
& Hyperlink2012 & 24 & 69632 & 256 & 75.3  \\

\multirow{1}{*}{Zhang et al.~\cite{zhang2020fastsv}}
& Hyperlink2012 & $\geq 256$ & 262,000 & 4096 & 30  \\

\midrule
\multirow{2}{*}{\framework{}}
& Hyperlink2014 & 1 & 144 & 1 & {\bf \color{mygreen}2.83} \\
& Hyperlink2012 & 1 & 144 & 1 & {\bf \color{mygreen}8.20} \\
\bottomrule

\end{tabular}
}
\captionof{table}{\small
  System configurations, including memory (terabytes), num.
  hyper-threads and nodes, and running times (seconds) of connectivity
  results on the Hyperlink graphs.
  The last rows show the fastest \framework{} times. The
  fastest time per graph is shown in green. }
  \label{table:big_comparison}
\end{table}

\myparagraph{Experimental Evaluation}
We conduct a comprehensive experimental evaluation of
all of the connectivity and spanning forest implementations in \framework on a 72-core
multicore machine, in both the static and incremental setting.
Compared to existing work, the \framework implementations significantly
outperform state-of-the-art implementations using the new sampling
techniques proposed in this paper, and often outperform prior work
even without applying sampling. Our fastest algorithms using sampling
are always faster than the fastest currently available implementation.
As an example of our high performance, \framework is able to produce
implementations that can process the Hyperlink2012
graph~\cite{meusel15hyperlink}, which is the largest publicly-available
real-world graph and has over 3.5 billion vertices and 225 billion
directed edges,\footnote{We symmetrize the graphs to obtain an undirected graph for connectivity, and when reporting the number of edges we count each edge once per direction.} on a single 72-core machine with a terabyte of RAM. We
show existing results on this graph, and the smaller Hyperlink2014
graph (1.7 billion vertices and 128 billion directed edges) in
Table~\ref{table:big_comparison}. Our running times are between
3.65--41.5x faster than existing distributed-memory results that
report results for Hyperlink2012, while using orders of magnitude
fewer computing resources.
Finally, we show that \framework can be used to speed up two graph
applications: approximate minimum spanning forest and index-based SCAN
clustering.

\myparagraph{Contributions}
The contributions of this paper are as follows.

\begin{enumerate}[label=(\textbf{\arabic*}),topsep=1pt,itemsep=0pt,parsep=0pt,leftmargin=15pt]
\item We introduce \framework{}, which provides several hundred
different multicore implementations of connectivity, spanning forest,
and incremental connectivity, most of which are new.

\item \framework provides different choices of sampling methods based
  on provably-efficient graph algorithms, which can be used in two-phase
  execution to reduce the number of edge traversals.

\item The fastest implementation in \framework achieves an average
  speedup of 2.3x (and ranges from 1.5--4.02x speedup) over the fastest
  existing static multicore connectivity algorithms.

\item For incremental connectivity, the multicore implementations in
  \framework achieve a speedup of 1,461--28,364x over existing
  multicore solutions and a throughput (in terms of directed edges) of
  between 4.6 million insertions per second for very small batch sizes
  to 7.1 billion insertions per second for large batches, on graphs of
  varying sizes.

\item We present a detailed experimental analysis of the different
  implementations in \framework to guide the user into finding the
  most efficient implementation for each situation.

\item \revised{We show  that \framework can be used to speed up approximate minimum spanning forest by 2.03--5.36x  and index-based SCAN clustering by 42.5--50.5x.}

\item The \framework source code is available at \url{https://github.com/ParAlg/gbbs/tree/master/benchmarks/Connectivity/ConnectIt}.
\end{enumerate}

\begin{figure*}
\begin{center}
\vspace{-3em}
\includegraphics[width=0.85\textwidth]{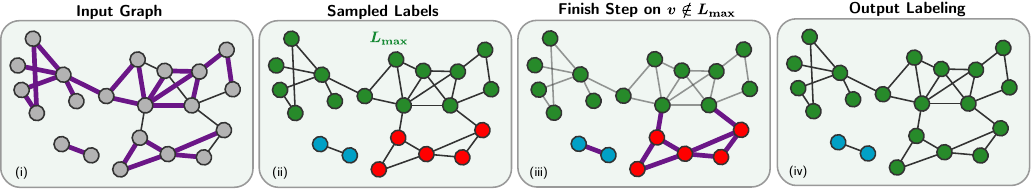}
\vspace{0.25em}
\caption{\label{fig:samplefig}
This figure illustrates the \framework{} framework for connectivity using
\koutsample{} on a small input graph. {\bf (i)} illustrates the input graph. The bolded purple
edges are those selected by the $k$-out sample with $k=2$. {\bf(ii)}
shows the partial connectivity labeling after computing connectivity
on the sampled edges (e.g., using a union-find algorithm).
\largestcomp{} indicates the vertices in the largest component (shown in
green).  {\bf(iii)} shows the edges (bolded in purple) which still must be processed in
the finish step, namely all edges incident to vertices $v \notin
\largestcomp{}$. Lastly, {\bf(iv)} shows the output connectivity
labeling.
}
\end{center}
\end{figure*}

\section{Preliminaries}\label{sec:prelims}

\myparagraph{Graph Notation and Formats} We denote an unweighted graph
by $G(V, E)$, where $V$ is the set of vertices and $E$ is the set of
edges in the graph. We use $n=|V|$ to refer to the number of vertices
and $m=|E|$ to refer to the number of directed edges.
In our graph
representations, vertices are indexed from $0$ to $n-1$, and we consider
two graph formats---\defn{compressed sparse row (CSR)} and
\defn{edge/coordinate list (COO)}.  In CSR, we are given two arrays,
$I$ and $A$, where the incident edges of a vertex $v$ are stored in
$\{A[I[v]],\ldots,A[I[v+1]-1]\}$ (we assume $A[n]=m$). In
COO, we are given an array of pairs $(u,v)$ corresponding to edge
endpoints. Unless otherwise mentioned, we store graphs in the CSR

\myparagraph{Compare-and-Swap}
A \defn{compare-and-swap (CAS)} takes three arguments: a memory
location \emph{x}, an old value \emph{oldV}, and a new value
\emph{newV}. If the value stored at \emph{x} is equal to \emph{oldV},
the CAS atomically updates the value at \emph{x} to be \emph{newV} and
returns \emph{true}; otherwise the CAS returns \emph{false}. CAS is
supported by most modern processors.

\fullver{
\myparagraph{Work-Depth Model}
We analyze our parallel algorithms using the \defn{work-depth
measure}~\cite{JaJa92, CLRS}. The \defn{work} is the number of
operations used by the algorithm and the \defn{depth} is the length of
the longest sequential dependence.  The work-depth measure is a
fundamental tool in analyzing parallel algorithms, e.g., see
~\cite{Blelloch2016, DhBlSh18, gu15semisort, SDB14,
sun2019supporting, Sun2018, wang2019pardbscan, dhulipala2020semi} for
a sample of recent practical uses of this model.
}{}

\myparagraph{Linearizability}
Linearizability~\cite{Herlihy90, HS} is the standard correctness
criteria for concurrent algorithms.  A set of operations are
\defn{linearizable} if the result of a concurrent execution is the
same as if the operations were applied at a distinct point in time
(the \defn{linearization point}) between the operation's invocation
and response. \fullver{The sequential ordering of the operations based
on their linearization points is its \defn{linearization order}.}{}

\myparagraph{Graph Connectivity and Related Problems}
A \defn{connected component (CC)} in $G$ is a maximal set of vertices,
such that there is a path between any two vertices in the set.
An algorithm for computing connected components returns a
\defn{connectivity labeling} \labelvar{} for each vertex, such that
$\labelvar{}(u)=\labelvar{}(v)$ if and only if vertices $u$ and $v$
are in the same connected component. We represent
connectivity labelings using an array of $n$ integers. A \defn{partial
connectivity labeling} is a labeling $\labelvar{}$ such that
$\labelvar{}(u) = \labelvar{}(v)$ implies that $u$ and $v$ are in the
same component. A \defn{connectivity query} takes as input two vertex
identifiers and returns \emph{true} if and only if their labels are
the same. A \defn{spanning forest (SF)} in $G = (V,E)$ contains one
tree for each connected component in $G$ containing all vertices of
that component.  A \defn{breadth-first search (BFS)} algorithm takes
as input a graph $G = (V,E)$ and a source vertex $\codevar{src} \in V$, and
traverses the vertices reachable from $\codevar{src}$ in increasing order of
their distance from $\codevar{src}$.  A \defn{low-diameter decomposition (LDD)}
of a graph parameterized by $0<\beta<1$ and $d$ is a partition of the
vertices into $V_1,\ldots,V_k$, such that the shortest path between
any two vertices in the same partition using only intra-partition
edges is at most $d$, and the number of inter-partition edges is at
most $\beta m$~\cite{MillerPX2013}.

\fullver{
A \defn{union-find} (or disjoint-set) data structure maintains a
collection of sets where all elements in each set are associated with
the same label. The data structure is represented as a forest where
each element has a parent pointer (with the root of each tree pointing
to itself), and the label of an element is the root of the tree it
belongs to.  A union-find data structure supports the \makeset{},
\union{}, and \find{} operations~\cite{CLRS}. \makeset{($e$)} creates
a new set containing just the element $e$ with a parent pointer to
itself. \union{($e$,$e'$)} merges together the trees of $e$ and $e'$
into a single tree, if they are not already in the same
tree. \find{($e$)} returns the label of element $e$ by looking up the
root of the tree that $e$ belongs to. The \find{} and \union{}
operations can compress paths in the tree to speed up future
operations.  When used in a CC algorithm, the elements are vertices,
and at the end of the algorithm, \find{($v$)} returns
$\labelvar{}(v)$.  There are various versions of union-find based on
how the \union{} and \find{} operations are implemented, and this
paper explores a large number of union-find implementations for the
concurrent setting.

A \defn{\minbased{}} connectivity algorithm also maintains a set data
structure, like union-find algorithms. A \minbased{} algorithm only
updates the label of an element if the new label is smaller than the
previous one. All of the connectivity algorithms studied in this paper
are \minbased{} algorithms.
A \defn{root-based} connectivity algorithm is a special type of
\minbased{} algorithm. Specifically, a root-based algorithm only links
sets together by adding a link from the root of one tree, to a node in
another tree.
}{}

\section{\framework Framework}\label{sec:framework}
In this section, we define the components of the \framework{}
framework and the specific ways in which they can be combined. We start by
presenting our algorithm for computing connectivity. The algorithm
supports combining multiple sampling methods and finish methods, which
we describe in detail in Sections~\ref{subsec:sampling} and
\ref{subsec:finish}. Figure~\ref{fig:samplefig} illustrates how the
\framework{} framework works for a $k$-out sampling algorithm on an
example graph.

Next, in Sections~\ref{subsec:spanningforest} and
\ref{subsec:streaming}, we describe modifications required to adapt
our framework for spanning forest, and for the
incremental setting. We present correctness proofs for our
algorithms in all of the settings in
\fullver{Appendix~\ref{apx:framework}}{the full version of our
  paper~\cite{connectit}}.  Finally, in
Section~\ref{subsec:framework_implementation} we describe details
regarding our implementation.

\subsection{Connectivity}\label{subsec:connectivity}

\begin{algorithm}
\caption{\framework{} Framework: Connectivity} \label{alg:framework_connectivity}
\small
\begin{algorithmic}[1]
\Procedure{Connectivity}{$G(V, E), \codevar{sample\_f}, \codevar{finish\_f}$}
  \State $\codevar{sampling} \gets \textsc{GetSamplingAlgorithm}(\codevar{sample\_opt})$
  \State $\codevar{finish} \gets \textsc{GetFinishAlgorithm}(\codevar{finish\_opt})$
  \State $\codevar{labels} \gets \{i \rightarrow i\ |\ i \in [|V|]\}$\label{line:init}
  \State $\codevar{labels} \gets \codevar{sampling}.\textsc{SampleComponents}(G, \codevar{labels})$\label{line:sample}
  \State $\largestcomp{} \gets \textsc{IdentifyFrequent}(\codevar{labels})$\label{line:identifyfreq}
  \State $\codevar{labels} \gets \codevar{finish}.\textsc{FinishComponents}(G, \codevar{labels},\largestcomp{})$\label{line:finish}
  \State \algorithmicreturn{}$\ \codevar{labels}$
\EndProcedure
\end{algorithmic}
\end{algorithm}

A connectivity algorithm in \framework{} is instantiated by supplying
$\codevar{sampling}$ and $\codevar{finish}$ methods.
Algorithm~\ref{alg:framework_connectivity} presents the generic
\framework{} connectivity algorithm, parameterized by these
user-defined functions.
\revised{The algorithm first initializes a \defn{connectivity
labeling}, which is represented as an array of $n$ integers, by
setting each vertex's label to be its own ID (Line~\ref{line:init}).}
It then performs a sampling step using the provided
$\codevar{sampling}$ algorithm (Line~\ref{line:sample}). The sampling
step results in partial connectivity information being computed and
stored in the $\codevar{labels}$ array. Next, it identifies
$\bm{\largestcomp{}}$, the most frequently occurring component ID in
the $\codevar{labels}$ array (Line~\ref{line:identifyfreq}). The
identified component is then supplied to the $\codevar{finish}$
algorithm, which finishes computing the connected components of $G$.
The finish method potentially saves significant work by avoiding
processing vertices in the component with an ID of $\largestcomp{}$.

\myparagraph{Properties of Sampling Methods}
\sndrevised{
Before discussing the sampling and finish methods, we discuss the
properties that we require in \framework{} to obtain a correct
parallel connectivity algorithm.

\begin{definition}\label{def:sampling_produces_trees}
Consider a graph $G$. Let $\labelvar{}$ be the labeling produced by a sampling
method $S$ and let the labeling $\labelvar{}' = \textsc{Connectivity}$ $(G[\labelvar{}])$
where $G[\labelvar{}]$ is the graph induced by contracting $G$ using
$\labelvar{}$.\footnote{Contracting $G$ with respect to $\labelvar{}$ creates a new graph
by merging all vertices $v$ with the same label into a single vertex,
and only preserving edges $(u,v)$ such that $\labelvar{}(u) \neq
\labelvar{}(v)$, removing duplicate edges.}
We say that a sampling method $S$ is \defn{\samplecorrect{}} if:

$\labelvar{}'' = \{ \labelvar{}'[\labelvar{}[v]]\ |\ v \in V\}$ is a correct connectivity labeling.
\end{definition}

In other words the definition states that the connectivity label found
by composing the sampled label $(\labelvar{}[v])$ with the
connectivity label of the contracted vertex on the contracted graph,
i.e., $\labelvar{}'[\labelvar{}[v]]$, yields a correct connectivity
labeling.
}

\sndrevised{
\myparagraph{Properties of Finish Methods}
Next, we define the correctness properties for finish methods in
\framework{}. The definition uses the interpretation of labels as
rooted trees, which we discussed above.

\begin{definition}\label{def:monotone}
Let $\labelvar{}$ be a connectivity labeling, which initially maps
every vertex to its own node ID. We call a connectivity algorithm
\defn{monotone} if the algorithm updates the labels such that the
updated labeling can be represented as the union of two trees in the
previous labeling.
\end{definition}

In other words, once a vertex is connected to its parent in a tree, it
will always be in the same tree as its parent.
}
Finally, we define linearizable monotonicity, which is the relaxed
correctness property possessed by most finish algorithms in our framework.

\begin{definition}\label{def:linearizable_monotonicity}
  Given an undirected graph $G(V, E)$, we say that a connectivity
  algorithm operating on a connectivity labeling $\labelvar{}$ is \defn{linearizably monotone} if
  \begin{enumerate}[label=(\textbf{\arabic*}),topsep=1pt,itemsep=0pt,parsep=0pt,leftmargin=15pt]
    \item Its operations are linearizable.
    \item Every operation in the linearization preserves monotonicity.\label{def:linearize_mono_preserves}
  \end{enumerate}
\end{definition}

The finish methods considered in \framework{} are linearizably
monotone, with only a few exceptions. The first is a subset of the
variants of Rem's algorithms (all variants using the \emph{splice}
rule),
which we analyze separately\fullver{
(Theorem~\ref{thm:rem_correct})}{}, and the family of other min-based
connectivity algorithms, which includes a subset of Liu-Tarjan
algorithms, Stergiou's algorithm, Shiloach-Vishkin, and
\labelprop{}\fullver{
(Theorem~\ref{thm:min_based_correctness})}{}.  \fullver{}{
\revised{In the full version of our
paper~\cite{connectit}, we provide
proofs showing that both of these exceptions, and the large class of
linearizably monotone finish methods, are correct when composed with
our sampling schemes.
The key idea for the proof for linearizably monotone algorithms is to
apply induction over the linearization order of the operations, and
apply the fact that each operation is monotone and thus preserves the
partial connectivity information.
The proof for the other min-based algorithms is by contradiction: at
the end of the algorithm, two vertices in the same component must have
the same label, as otherwise the finish algorithm will not have
terminated, even if we ignore the largest sampled component.
}
}

\subsection{Sampling Algorithms}\label{subsec:sampling}
This section introduces the \samplecorrect{} sampling methods used in
\framework{}.  \fullver{We provide pseudocode for our sampling methods
in Appendix~\ref{apx:subsec:sampling_pseudocode}.}{Due to space
constraints, we provide the pseudocode for our sampling methods in the
full version of our paper~\cite{connectit}.}

\myparagraph{\koutsample{}}
The $k$-out method takes a positive integer parameter $k$, selects $k$
edges out of each vertex uniformly at random, and computes the
connected components of this sampled graph. An important result shown
in a recent paper by Holm et al.~\cite{holm2019kout} is that if $nk$
edges are sampled in this way for sufficiently large $k$, only
$O(n/k)$ inter-component edges remain after contraction, in
expectation.
Holm et al.\ conjecture that this fact about the
number of inter-component edges holds for any $k \geq 2$, although the
proof holds for $k = \Omega(\log n)$.
Sutton et al.~\cite{Sutton2018} describe a sampling scheme that
selects the \emph{first $k$ edges} incident to the vertex, which does
not use randomization. However, on some graphs with poor orderings,
this method can result in only a small fraction of the components
being discovered (leaving up to several orders of magnitude more
inter-component edges), which results in a costly sampling step that
provides little benefit.
To improve our results for these poorly ordered graphs while achieving
good performance on graphs where this heuristic performs well, we
select the first edge incident to each vertex, and select the
remaining $k-1$ edges randomly. To the best of our knowledge, using
randomness in this experimental setting has not been explored before.
We provide a full evaluation of different options in
\fullver{Appendix~\ref{sec:sampling_eval}}{the full version of our
paper~\cite{connectit}}.

\myparagraph{Breadth-First Search Sampling}
The breadth-first search (BFS) sampling method is a simple heuristic
based on using a breadth-first search from a randomly chosen source
vertex.  Assuming that the graph contains a massive component,
containing, say, at least a constant fraction of the vertices, running
a BFS from a randomly selected vertex will discover this component
with constant probability.  To handle the case where we are unlucky
and pick a bad vertex, we can apply this process $c$ times.  Setting
$c = \Theta(\log n)$ would ensure that we find the massive component
with high probability. In practice we set $c=3$, and in our
experiments we found that one try was sufficient to discover the
massive component on all of the real-world graphs we test on.
\framework terminates the sampling algorithm when a component
containing more than 10\% of the vertices is found or after $c$
rounds, whichever happens first.

\myparagraph{Low-Diameter Decomposition Sampling}
As discussed in Section~\ref{sec:prelims}, running an LDD algorithm on a graph
with parameter $\beta < 1$ partitions the graph into clusters such
that the strong diameter (the shortest path using only edges inside
the cluster) of each cluster is $O(\log n/\beta)$, and cuts $O(\beta
m)$ edges in expectation~\cite{MillerPX2013}. Shun et al.~\cite{SDB14}
give a simple and practical work-efficient (linear-work) parallel connectivity
algorithm based on recursively applying LDD and performing graph
contraction to recurse on a contracted graph.

In this paper, we consider applying just a \emph{single} round of the
LDD algorithm of Miller, Peng, and Xu~\cite{MillerPX2013}, without
actually contracting the graph after performing the LDD. On
low-diameter graphs, the hope is that much of the largest connected
component will be contained in the most frequent cluster identified.
Our approach is practically motivated by studying the behavior of the
work-efficient connectivity algorithm of Shun et al.~\cite{SDB14} on
low-diameter real-world graphs and observing that after one
application of LDD, the resulting clustering contains a single massive
cluster, and the number of distinct clusters in the contracted graph
is extremely small. We provide an empirical analysis of this sampling
method for different values of $\beta$ in
\fullver{Appendix~\ref{sec:sampling_eval}}{the full version of our paper~\cite{connectit}.}

In \fullver{Appendix~\ref{apx:samplepf}}{the full version of the
paper}, we prove that \koutsample{}, \bfssample{}, and \lddsample{}
are all \samplecorrect{}, i.e., they all produce connectivity labeling
satisfying Definition~\ref{def:sampling_produces_trees}.

\begin{figure}[!t]
\begin{center}
\vspace{-1em}
\includegraphics[scale=0.44,trim={0 0.75cm 0 0.25cm}]{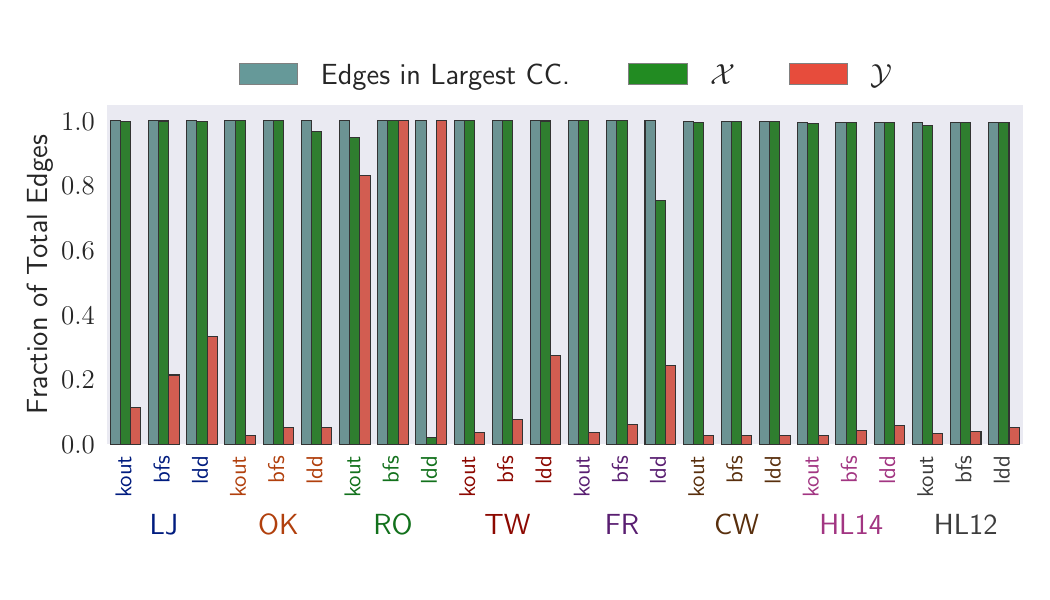}
\caption{\label{fig:sampling_efficiency}
\revised{
Bar plot showing the performance of different sampling strategies on
eight large real-world graphs in terms of the number of edges in the
largest connected component, the number of edges in the most frequent
sampled component ($\mathcal{X}$) and the number of edges processed by
the sampling strategy ($\mathcal{Y}$). All quantities are shown as a
fraction of the total number of edges, $m$.
}
}
\end{center}
\end{figure}

\revised{
\myparagraph{The Potential Benefits of Sampling}
Intuitively, the main advantage of our sampling schemes is that if
the graph has a single large component containing a significant
fraction of the edges, applying a sampling method can allow us to skip
processing most of these edges.
Specifically, suppose applying a \samplecorrect{} sampling scheme
uncovers a frequent component $L_{\max}$ containing $\mathcal{X}$
edges while processing only $\mathcal{Y}$ edges. Then, by skipping
processing vertices in $L_{\max}$ in the finish phase, the total
number of edges that are processed by the algorithm is $m -\mathcal X
+ \mathcal Y$.

Figure~\ref{fig:sampling_efficiency} illustrates the potential
benefits from applying our sampling schemes on a suite of large
real-world graphs (see Section~\ref{sec:eval} for graph details). We
observe that $\mathcal{X}$ is usually a large fraction of both the
number of edges in the largest component and of $m$, indicating that
the sampling methods can help us skip nearly all of the edges in the
graph in the finish phase. The one exception is the road\_usa (RO)
graph, where LDD and BFS Sampling both suffer due to the graph's large
diameter, which we discuss in more detail in
Section~\ref{subsec:static_cpu_nosample}. However, both LDD and BFS
Sampling perform well on the other low-diameter graphs since
direction-optimization enables them to complete while only examining a
small number of edges. Note that applying LDD on high-diameter graphs
results in a large number of small clusters, and thus this scheme is
better suited for low-diameter graphs~\cite{MillerPX2013}. We observe
that \koutsample{} results in a large value of $\mathcal{X}$ in all
cases. Finally, $\mathcal{Y}$ is typically significantly smaller than
$\mathcal{X}$ indicating that our approach enables us to process
$\mathcal{Y} + (m - \mathcal{X})$ edges in total, which is only a
small fraction of $m$.
}

\subsection{Finish Algorithms}\label{subsec:finish}
We now describe different \minbased{} methods that can be used as
\defn{finish} methods in our framework. All of the finish methods that
we describe can be combined with any of the sampling methods described
in Section~\ref{subsec:sampling}. We provide implementations of
several different algorithm classes, which internally have many
options that can be combined to generate different instantiations of
the algorithm. The \minbased{} algorithms that we consider as part of
the framework are \unionfind{} (many different variants, described
below), Shiloach-Vishkin~\cite{ShiloachV82}, Liu-Tarjan~\cite{LiuT19},
Stergiou~\cite{Stergiou2018}, and \labelprop{}. An important
feature of \framework is in modifying these finish methods to
\emph{avoid traversing} the vertices with the most frequently occurring
ID as identified from sampling in
Algorithm~\ref{alg:framework_connectivity}.  \fullver{We provide
pseudocode for  all of our implementations
 described in this section in Appendix~\ref{apx:framework}}
{We provide pseudocode for our
implementations of all of our implementations described in this section in
the full version of our paper~\cite{connectit}}.

\subsubsection{Union-Find}\label{sec:uf}
We consider several different concurrent (asynchronous) union-find
algorithms, which are all \minbased{}. All of these algorithms are
linearizably monotone for a set of concurrent union and find
operations, with the exception of the concurrent Rem's
algorithm variants using the splice rule, which are linearizable only
for a set of concurrent union operations or a set of concurrent find
operations, but not for mixed operations.  All of these algorithms can
be combined with sampling by simply skipping traversing the vertices
with label equal to $\largestcomp{}$ after sampling.

\myparagraph{Asynchronous Union-Find} The first
class of algorithms are inspired by a recent paper exploring
concurrent union-find implementations by Jayanti and
Tarjan~\cite{Jayanti2016}. We implement all of the variants from their
paper, as well as a full path compression technique (also considered
in~\cite{alistarh2019search}) which works better in practice in some
cases. We refer to this union-find algorithm as {$\bm{\unionasync{}}$}
since it is the classic union-find algorithm directly adapted for an
asynchronous shared-memory setting. The algorithm links from
lower-indexed to higher-indexed vertices to avoid cycles, and only
performs links on roots (thus implying that the algorithm is
monotone). This algorithm can be combined with the following
implementations of the find operation: $\bm{\mathsf{FindNaive}}$,
which performs no compression during the operation;
$\bm{\mathsf{FindSplit}}$ and $\bm{\mathsf{FindHalve}}$, which perform
path-splitting and path-halving, respectively; and
$\bm{\mathsf{FindCompress}}$, which fully compresses the find path.
Jayanti and Tarjan show that this class of algorithms is linearizable
for a set of concurrent union and find operations (they do not consider $\mathsf{FindCompress}$, but it is relatively easy to show that it is linearizable). The fact that these
algorithms are linearizably monotone for a set of concurrent union
and find operations follows from the observation that they only link
roots, and thus all label changes made by the operations are the
result of taking the union of trees.

We also consider two similar variants of the \unionasync{} algorithm:
{\unionhook{}} and {\unionearly{}}.  $\bm{\unionhook{}}$ is closely
related to \unionasync{}, with the only difference between the
algorithms being that instead of performing a CAS directly on the array
storing the connectivity labeling, we perform a CAS on an auxiliary
$\codevar{hooks}$ array, and perform an uncontended write on the
$\codevar{parents}$ array. $\bm{\unionearly{}}$ is also similar to
\unionasync{}, except that the algorithm traverses the paths from both vertices together, and tries to eagerly check and
hook a vertex once it is a root.
The algorithm can optionally perform a find on the
endpoints of the edge after the union operation finishes, which has
the effect of compressing the find path. The linearizability proof for
\unionasync{} by Jayanti and Tarjan~\cite{Jayanti2016} applies
 to \unionhook{} and \unionearly{}, and shows that both
algorithms are linearizable for a set of concurrent find and union operations.
Thus, these algorithms are also linearizably monotone for a set of
concurrent union and find operations as they only link roots.

\myparagraph{Randomized Two-Try Splitting}
Next, we incorporate a more sophisticated randomized algorithm by
Jayanti, Tarjan, and Boix{-}Adser{\`{a}}~\cite{JayantiTB19}, and
refer to this algorithm as $\bm{\jayanti{}}$.
The algorithm either performs finds naively, without using any path compression
($\bm{\findnaive{}}$), or uses a strategy called
$\bm{\mathsf{FindTwoTrySplit}}$, which guarantees provably-efficient
bounds for their algorithm, assuming a source of random bits. We
refer to~\cite{JayantiTB19}  for the pseudocode and proofs of correctness.
\fullver{In particular, they show that the algorithm has low total
expected work, and low cost per operation with high probability.
Since the \jayanti{} algorithm is linearizable, and only links roots,
the algorithm is also linearizably monotone.  Extending Theorem 4.1
in~\cite{JayantiTB19} to the work-depth setting, we have the following
corollary:
\vspace{-3pt}
\begin{corollary}
  The \jayanti{} algorithm
  solves connectivity in $O(m \cdot (\alpha(n, m/(np)) + \log (1 +
  np/m)))$ expected work and $O(\log n)$ depth with high probability.
\end{corollary}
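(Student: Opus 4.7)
The plan is to derive the corollary by invoking Theorem 4.1 of Jayanti, Tarjan, and Boix-Adser\`{a} (JTB) for the work bound, and then supplying a separate depth analysis on top of it. For the work bound, the observation is that our connectivity driver performs a parallel loop over the $m$ edges and, for each edge $(u,v)$, invokes a constant number of \findtwotry{} operations followed by at most one union. JTB's Theorem~4.1 bounds the expected total cost of $m$ such operations on $n$ elements executed asynchronously by $p$ processors by $O(m \cdot (\alpha(n, m/(np)) + \log(1 + np/m)))$, and the bookkeeping outside the union-find primitive (initializing the labels array, iterating over the edges, skipping vertices whose label equals $\largestcomp{}$) contributes only $O(m)$ additional work. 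This immediately yields the stated expected work bound.

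For the depth bound, the plan is to show that each individual \findtwotry{} or union operation completes in $O(\log n)$ parallel steps with high probability, and then compose this with an $O(\log n)$-depth $\mathrm{parfor}$ over the edges. The two ingredients I would use are: (i) the randomized linking-by-rank rule of \jayanti{} maintains, at all times during the execution, trees whose rank is $O(\log n)$ w.h.p., and the depth of any node is upper-bounded by its rank; and (ii) both \findnaive{} and \findtwotry{} perform $O(1)$ local work (a read and a constant number of CAS attempts) per level traversed, so the depth of a single find is proportional to the length of the traversed path. Together with the fact that the $m$ edges are processed in parallel via a fork-join loop of depth $O(\log n)$, the overall algorithm runs in $O(\log n)$ depth w.h.p.

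The main obstacle is reconciling the depth of traversed paths with concurrent unions that can lengthen a tree while a thread is walking it. To handle this, I would argue that the number of steps taken by a find starting at vertex $v$ is dominated by the rank of the eventual root it reaches; ranks only grow by $+1$ when two roots of equal rank are linked, and a standard union-by-rank argument (adapted for the randomized rank increments of JTB) bounds the number of such rank increases along any ancestor chain by $O(\log n)$ w.h.p. Because the two-try splitting rule of \findtwotry{} never makes a path longer, the depth of any single operation is $O(\log n)$ w.h.p.\ even under worst-case interleavings. Combining the per-operation depth with the $O(\log n)$-depth parallel scheduler and taking a union bound over the $m = \mathrm{poly}(n)$ operations (absorbing the union bound into the w.h.p.\ constant) gives the claimed $O(\log n)$ depth bound.
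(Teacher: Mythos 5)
Your proposal follows essentially the same route as the paper: the paper's entire proof is the one-line observation that Theorem~4.1 of Jayanti, Tarjan, and Boix-Adser\`{a} provides both a bound on the total expected work and a per-operation cost bound holding with high probability, and that ``extending'' this to the work-depth setting --- a parallel loop over the $m$ edges, skipping vertices labeled $\largestcomp{}$, composed with the per-operation costs --- yields the corollary. Your work-bound argument, and your composition of a per-operation $O(\log n)$ w.h.p.\ bound with an $O(\log n)$-depth parfor and a union bound over $\mathrm{poly}(n)$ operations, is exactly that extension.

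One correction to the details, however: your re-derivation of the per-operation bound misdescribes the algorithm. The \jayanti{} algorithm does not use union by rank --- there are no ranks that ``grow by $+1$ when two roots of equal rank are linked,'' so the ``standard union-by-rank argument adapted for randomized rank increments'' has nothing to attach to. JTB link roots according to random priorities (random indices), and the $O(\log n)$ w.h.p.\ bound on the number of steps taken by any single \findtwotry{} or union --- including the complication you correctly identify, that concurrent root links can extend a path while a thread is traversing it --- is itself part of what their theorem establishes. Since the corollary is obtained precisely by citing that theorem, the clean fix is to invoke JTB's per-operation w.h.p.\ guarantee directly (as the paper does when it notes the algorithm has ``low cost per operation with high probability'') rather than substituting a rank-based argument that does not apply to their linking rule. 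With that substitution, your proof coincides with the paper's.
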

\vspace{-3pt}}{}

\myparagraph{Concurrent Rem's Algorithm} We implement two concurrent
versions of Rem's algorithm (Rem's algorithm was first published in
Dijkstra's book~\cite{dijkstra1976discipline}): a lock-based version
by Patwary et al.~\cite{PatwaryRM12} ($\bm{\unionremlock{}}$) and a
lock-free compare-and-swap based implementation
($\bm{\unionremcas{}}$).  Our implementations of Rem's algorithm can
be combined with the same rules for path compression described for
$\mathsf{Union}$ above, with one exception which we discuss below.  In
addition to path compression strategies, our implementations of Rem's
algorithm take an extra \defn{splice} strategy, which is used when a
step of the union algorithm operates at a non-root vertex.
Specifically, our algorithms support the $\bm{\halveone{}}$,
$\bm{\splitone{}}$, and $\bm{\splice{}}$ rules.  The first two rules
perform a single path-halving or path-splitting.  The third rule
performs the splicing operation described in Rem's
algorithm, which atomically swaps the parent of the higher index
vertex in the path to the lower index vertex.
Combining the \findcompress{} option with the \splice{} rule
results in an incorrect algorithm, and so we exclude this single
combination.
All variants of Rem's algorithm that do not perform $\splice{}$ are
linearizable for a set of concurrent union and find operations by
simply following the proof of Jayanti and Tarjan~\cite{Jayanti2016}.
The implementations of Rem's algorithm combined with \splice{} do not satisfy
linearizability of both concurrent find and union operations.  In
\fullver{Appendix~\ref{apx:rem_correct}
(Theorem~\ref{thm:rem_correct})}{the full version of our paper}, we
show that the algorithm is correct in the \emph{phase-concurrent}
setting, where union and find operations are separated by a barrier.

We note that a recent paper by Alistarh et
al.~\cite{alistarh2019search} performed a careful performance
evaluation of concurrent union-find implementations for multicores on
much smaller graphs. They introduce a lock-free version of concurrent
Rem's algorithm with path-splitting similar to our implementation, but
do not prove that the algorithm is correct, or consider other
path-compaction methods in their algorithm.  Compared to their
lock-free Rem's implementation, our implementation is more general,
allowing the algorithm to be combined with path-halving and splicing
in addition to path-splitting.

\subsubsection{Other Min-Based Algorithms}
Lastly, we overview the other \minbased{} algorithms supported by
\framework{}.

\myparagraph{Liu-Tarjan's Algorithms}
Recently, Liu and Tarjan present a framework for simple concurrent
connectivity algorithms based on several rules that manipulate an
array of parent pointers using edges to transmit the connectivity
information~\cite{LiuT19}. These algorithms are not
true concurrent algorithms, but really parallel algorithms designed
for the synchronous Massively Parallel Computation (\mpc{}) setting.
We implement the framework proposed in their paper as part of
\framework{}.  Their framework ensures that the parent array is a
\emph{minimum labeling}, where each vertex's parent is the minimum
value among candidates that it has observed.

Conceptually, each round of an algorithm in the framework processes
all remaining edges and performs several rules on each edge. On each
round, each vertex observes a number of candidates and updates its
parent at the end of the round to the minimum of its current parent,
and all candidates. Each round performs a \emph{connect phase}, a
\emph{shortcut phase}, and possibly an \emph{alter phase}.
The connect phase updates the parents of edges based on
different operations, the shortcut phase performs  path
compression, and the optional  alter phase updates the
endpoints of an edge to be the current labels of its endpoints.
We provide details of the different options for implementing each phase
in \fullver{Appendix~\ref{apx:pseudocode}}{the full version of
our paper~\cite{connectit}}.

Liu and Tarjan prove that all of the algorithm combinations generated
from their framework are correct, but only analyze five particular
algorithms in terms of their parallel round complexity.  In addition
to the five original algorithms considered by Liu and Tarjan, we
consider a number of algorithm combinations that were not explored in
the original paper. We evaluate all algorithm
combinations that are expressible in the Liu-Tarjan framework, which
we list in \fullver{Appendix~\ref{apx:pseudocode}}{the full version of
our paper~\cite{connectit}}. Note that only the root-based
algorithms in the Liu-Tarjan framework are linearizably monotone.  The
remaining algorithms are not monotone since a non-root
vertex can be moved to a different subtree (one where the previous
tree and new trees are disconnected). The non-monotone algorithms
result in correct connectivity algorithms due to the fact that edges
which were previously applied continue to be applied in subsequent
rounds of the algorithm~\cite{LiuT19}.

\myparagraph{Stergiou et al.'s Algorithm}
Stergiou et al.~\cite{Stergiou2018} recently proposed a \minbased{}
connectivity algorithm for the massively parallel computation setting,
which is not monotone. We implement the algorithm as part of the
Liu-Tarjan framework, within which it can be viewed as a particular
instantiation of the Liu-Tarjan rules~\cite{LiuT19}.

\myparagraph{Shiloach-Vishkin and Label Propagation} \framework{} also
includes the classic Shiloach-Vishkin (\shiloachvishkin{}) algorithm,
which is linearizably monotone, and the folklore \labelprop{}
(\labelpropagation{}) algorithm, which is not monotone, both of which
we discuss in Appendix~\ref{sec:shiloachvishkin}.

\myparagraph{Sampling for Other Min-Based Algorithms}
Lastly, we describe how to combine the other \minbased{} algorithms
above which are \emph{not monotone} with the sampling algorithms in
\framework{}.
If the largest component after sampling, which has ID
$\largestcomp{}$, is relabeled such that all vertices in this
component have the smallest possible ID, then these vertices will
never change components, and we thus we never have to inspect edges
oriented out of these vertices. We show that this modification
produces correct algorithms by viewing the largest component
as a single contracted vertex that only preserves its inter-cluster
edges, and then applying the correctness proof for a connectivity
algorithm in the Liu-Tarjan framework~\cite{LiuT19}. We provide a
detailed proof in \fullver{Theorem~\ref{thm:min_based_correctness} in
the Appendix~\ref{apx:min_based_correctness}}{the full version of the
paper~\cite{connectit}}.

\subsection{Spanning Forest}\label{subsec:spanningforest}
We also extend \framework{} to generate a spanning forest of the
graph. We show that the class of root-based algorithms can be
converted in a black-box manner from parallel connectivity algorithms
to parallel spanning forest algorithms. This class consists of every
finish algorithm discussed in Section~\ref{subsec:finish}, with the
exception of the Liu-Tarjan algorithms that are not root-based, and
Stergiou's algorithm. We defer our description to
\fullver{Appendix~\ref{apx:spanning_forest}}{the full version of our
paper~\cite{connectit}}.

\subsection{Streaming}\label{subsec:streaming}
We now discuss how \framework{} supports streaming graph connectivity
in the parallel batch-incremental and wait-free asynchronous settings.
\revised{Formally, an algorithm in the
\defn{parallel batch-incremental} streaming setting receives a
sequence of \emph{batches} of operations, where each batch consists of
\defn{\textsc{Insert}$(u,v)$} operations and \defn{\textsc{IsConnected}$(u,v)$} queries.
The algorithm must process the batches one after the other, but
operations within a batch are not ordered, and so the streaming
algorithm can use parallelism to accelerate processing a batch. We
also consider the stronger asynchronous setting. Formally, in the
\defn{wait-free asynchronous} streaming setting, operations are not
presented to the algorithm as batches, but instead the \textsc{Insert}
and \textsc{IsConnected} operations can be called concurrently by a
set of asynchronous threads~\cite{HS}. Note that any wait-free
asynchronous algorithm can be easily extended to a parallel
batch-incremental algorithm by simply invoking the concurrent
implementation on every operation in parallel.
\framework{} supports the following types of algorithms in the
streaming setting:
\begin{enumerate}[label=(\textbf{\arabic*}),topsep=1pt,itemsep=0pt,parsep=0pt,leftmargin=15pt]
  \item The union-find algorithms in Section~\ref{sec:uf}, excluding
  Rem's algorithms with the \splice{} method. These algorithms are
  linearizably monotone in both the parallel batch-incremental
  and wait-free asynchronous settings. \label{lab:waitfreealgs}

  \item Shiloach-Vishkin (\shiloachvishkin{}) and the root-based
  \liutarjan{} algorithms.  These algorithms are linearizably monotone in the
  parallel batch-incremental setting, although we show that
  \textsc{IsConnected} queries can be applied concurrently in the
  wait-free asynchronous setting.
However, in these algorithms insertions cause edges to be processed multiple times until convergence, so they must be processed in batches.
  \label{lab:syncupdates}

  \item \unionremcas{} and \unionremlock{} using \splice{}. For this
  class of algorithms, we consider the \emph{phase-concurrent} setting
  which lies between the wait-free asynchronous and parallel
  batch-incremental settings. Essentially, in this setting, we have a
  synchronous barrier between insertion and query phases, but within a
  phase, operations can be called concurrently, like in the wait-free
  asynchronous setting. \label{lab:phaseconcurrentalgs}
\end{enumerate}
}

Due to space constraints, we provide details about our streaming
algorithm in \framework{} in
\fullver{Appendix~\ref{apx:framework}}{our full
paper~\cite{connectit}}.

\subsection{Implementation}\label{subsec:framework_implementation}
Our implementation of \framework is written in C++, and uses template
specialization to generate high-performance implementations while
ensuring that the framework code is high-level and general. Using
\framework, we can instantiate any of the supported connectivity
algorithm combinations using one line of code. Implementing a new
sampling algorithm is done by creating a new structure that implements
the sampling method for connectivity (and if applicable, a specialized
implementation for spanning forest).  The sampling code emits an array containing the partial connectivity
information. Additionally, for spanning forest the code emits a subset
of the spanning forest edges corresponding to the partial connectivity
information.  Implementing a new finish algorithm is done by
implementing a structure providing a
\textsc{FinishComponents} method.  If the finish algorithm supports
spanning forest, it also implements a \textsc{FinishForest} method.
Finally, if it supports streaming, the structure implements a
\textsc{ProcessBatch} method, taking a batch of updates, and returning
results for the queries in the batch. We note that our code can be
easily extended to the wait-free asynchronous setting.

We use the compression techniques provided by Ligra+~\cite{SDB2015,
DhBlSh18} to process the large graphs used in our experiments.
Storing the largest graph used in our experiments in the uncompressed
format would require well over 900GB of space to store the edges
alone. However, the graph requires only 330GB when encoded using byte
codes. The compression scheme uses difference-encoding for each
vertex's adjacency list, storing the differences using variable-length
byte codes. This compression scheme is supported by the Graph Based
Benchmark Suite (GBBS)~\cite{DhBlSh18, dhulipala20grades}, and we used
this code base as the basis for implementing \framework.

\section{Evaluation}\label{sec:eval}
\myparagraph{Overview}
We show the following results in this section:

\begin{itemize}[topsep=0pt,itemsep=0pt,parsep=0pt,leftmargin=8pt]
  \item Without sampling, the \unionremcas{} algorithm using the
  options
  $\{\splitone{}, \halveone{}\}$ is the fastest \framework{}
  algorithm across all graphs
  (Section~\ref{subsec:static_cpu_nosample}).
\item Performance analysis of the union-find variants
  (Section~\ref{subsec:union_find_eval}).
\item With sampling, the \unionremcas{} algorithm using the options $\{\splitone{}, \halveone{}\}$
  is consistently the fastest algorithm in \framework{}
  (Section~\ref{subsec:static_cpu_sample}).
\item The fastest \framework{}
  algorithms using sampling significantly outperform existing
  state-of-the-art results
  (Section~\ref{subsec:static_cpu_comparison}).

  \item \framework{} streaming algorithms achieve throughputs
    between 108M--7.16B
    directed edge insertions per second across all inputs.  Our
    algorithms achieve high throughput even at small batch sizes, and
    have consistent latency (Section~\ref{sec:streaming_cpu}).

  \item \framework{}'s streaming algorithms outperform the streaming
  algorithm from STINGER, an existing state-of-the-art graph streaming
  system, by between 1,461--28,364x (Section~\ref{sec:streaming_cpu}).

  \item \revised{An evaluation of \framework{}'s fastest algorithms on
  synthetic networks, and guidelines for selecting sampling and finish
  methods based on graph propertries (Section~\ref{sec:discussion}).}

\end{itemize}

\noindent We show the additional experimental results in \fullver{the appendix}{our full paper~\cite{connectit}}:
\begin{itemize}[topsep=0pt,itemsep=0pt,parsep=0pt,leftmargin=8pt]
  \item An analysis of the three sampling schemes considered in this
    paper, showing how these schemes behave in practice as a function
    of their parameters, and different implementation choices.
    \fullver{One of
    our main observations is that in practice, the number of
    inter-cluster edges left after \koutsample{} is significantly less
    than $n/k$ (Appendix~\ref{sec:sampling_eval}).}{}

  \item The fastest \framework{} algorithms with sampling are as fast as
    or faster than basic graph routines that perform one indirect
    read per edge.

  \item We compare \framework{}'s performance on very large graphs
    (Table~\ref{table:big_comparison}), to the performance of
    state-of-the-art external-memory, distributed-memory, and
    shared-memory systems.

\item The trends for spanning forest are similar to connectivity,
and on average the additional overhead needed to obtain the spanning
forest compared to connectivity is 23.7\%.
\end{itemize}

\setlength{\tabcolsep}{2pt}
\begin{table}[!t]
\footnotesize
  \centering
\begin{tabular}{l|r|r|r|r|r|c|c}
\toprule
{\bf Graph}  & {\bf $n$} & {\bf $m$} & {\bf Diam.} & {\bf Num C.} & {\bf Largest C.} & \revised{\bf LT-DC (s)} & \revised{\bf LT (s)}\\
        \midrule
{RO}       & 23.9M    & 57.7M                      & 6,809                     & 1            & 23.9M     & \revised{0.108}  &  0.241  \\ 
{LJ}       & 4.8M     & 85.7M                      & 16                        & 1,876        & 4.8M      & \revised{0.101}  &  0.226  \\ 
{CO}       & 3.1M     & 234.4M                     & 9                         & 1            & 3.1M      & \revised{0.094}  &  0.520  \\ 
{TW}       & 41.7M    & 2.4B                       & 23*                       & 1            &41.7M      & \revised{0.115}  &  2.80   \\ 
{FR}       & 65.6M    & 3.6B                       & 32                        & 1            &65.6M      & \revised{0.182}  &  6.07   \\ 
{CW}       & 978.4M   & 74.7B                      & 132*                      & 23.7M   &950.5M          & \revised{0.534}  &  54.2   \\ 
{HL14}     & 1.7B     & 124.1B                     & 207*                      & 129M  &1.57B             & \revised{1.02}  &  101.3  \\ 
{HL12}     & 3.6B     & 225.8B                     & 331*                      & 144M  &3.35B             & \revised{1.64}  &  192.5  \\ 
\end{tabular}
\caption[a]{Graph inputs, including vertices, directed edges, graph
diameter, the number of components (Num C.), the number of vertices in
the largest component (Largest C.), and the time required in seconds
to load the graph input in the binary CSR format used in GBBS,
assuming that the graph is already in the disk-cache (LT-DC). The
loading time (LT) is the time to fully read the input graph from a 1TB
\emph{Samsung-SM961} SSD. (The focus of this paper is not on
accelerating this hardware-dependent loading time, but on accelerating
connected components in the in-memory setting where the data is
already in the disk-cache.) We mark diameter values where we are
unable to calculate the exact diameter with * and report the effective
diameter observed during our experiments, which is a lower bound on
the actual diameter.  } \label{table:sizes}
\end{table}

\myparagraph{Experimental Setup} Our experiments are performed on a
72-core Dell PowerEdge R930 (with two-way hyper-threading) with
$4\times 2.4\mbox{GHz}$ Intel 18-core E7-8867 v4 Xeon processors (with
a 4800MHz bus and 45MB L3 cache) and 1\mbox{TB} of main memory. Our
programs use a work-stealing scheduler that we implemented. The
scheduler is implemented similarly to Cilk for
parallelism~\cite{blelloch2020parlay}. Our programs are compiled with
the \texttt{g++} compiler (version 7.3.0) with the \texttt{-O3} flag.
We use the command \texttt{numactl -i all} to balance the memory
allocations across the sockets. All of the numbers we report are based
on our parallel implementations on 72 cores with hyper-threading.

\myparagraph{Graph Data}
To show how \framework{} performs on graphs at different scales, we
selected a representative set of real-world graphs of varying sizes.
Most of our graphs are Web graphs and social networks---low-diameter
graphs that are frequently used in practice. To test our algorithms on
high-diameter graphs, we also ran our implementations on a road
network.

Table~\ref{table:sizes} lists the graphs we use. We used a collection
of graphs at different scales, including the largest publicly-available graphs.
\defn{road\_usa (RO)} is an undirected road network from the DIMACS
challenge~\cite{road-graph}.
\defn{LiveJournal (LJ)} is a directed graph of the LiveJournal social
network~\cite{boldi2004webgraph}. \defn{com-Orkut (CO)} is an undirected
graph of the Orkut social network. \defn{Twitter (TW)} is a directed graph
of the Twitter network~\cite{kwak2010twitter}. \defn{Friendster (FR)} is an
undirected graph describing friendships from a gaming network.
\defn{ClueWeb (CW)} is a directed Web graph from the Lemur project at
CMU~\cite{boldi2004webgraph}.  \defn{Hyperlink2012 (HL12)} and
\defn{Hyperlink2014 (HL14)} are directed hyperlink graphs obtained from the
WebDataCommons dataset where vertices represent Web
pages~\cite{meusel15hyperlink}.
We note that Hyperlink2012 is the
\emph{largest publicly-available real-world graph}.

Some of the inputs (such as the Hyperlink graphs) are originally
directed. Like previous work on connectivity for these
graphs~\cite{Stergiou2018}, we symmetrize them before applying our
algorithms and find that this results in a single massive
component for all graphs that we consider (the size of the largest
component is shown in Table~\ref{table:sizes}).

\newcommand{\STAB}[1]{\begin{tabular}{@{}c@{}}#1\end{tabular}}
\begin{table}[!t]
\footnotesize
\centering
\renewcommand{\tabcolsep}{0.5 mm}
\begin{tabular}[t]{@{}c |  l | c | c | c | c | c | c | c | c}
  \toprule
  \multicolumn{1}{c|}{\bf{Grp.}} &
  \multicolumn{1}{c|}{\bf Algorithm} &  \multicolumn{1}{c|}{\bf RO} &
  \multicolumn{1}{c|}{\bf LJ} & \multicolumn{1}{c|}{\bf CO} &
  \multicolumn{1}{c|}{\bf TW}  & \multicolumn{1}{c|}{\bf FR}  &
  \multicolumn{1}{c|}{\bf CW} & \multicolumn{1}{c|}{\bf HL14} &
  \multicolumn{1}{c}{\bf HL12}  \\
  \midrule
  \multirow{9}{*}{\STAB{\rotatebox[origin=c]{90}{No Sampling}}}
  & \unionearly{}         &3.61e-2                       & 3.48e-2                        & 8.63e-2                          &2.52                      & 1.50                       &59.8        &17.0       & 32.9 \\ 
  & \unionhook{}          &3.37e-2                       & 1.75e-2                        & 2.69e-2                          &0.390                     & 1.17                       &6.05        &9.37       & 20.0 \\ 
  & \unionasync{}         &4.02e-2                       & 2.03e-2                        & 3.12e-2                          &0.426                     & 1.21                       &7.92        &12.2       & 25.5 \\ 
  & \unionremcas{}        &{\color{mygreen} \bf 2.80e-2} & {\color{mygreen} 1.27e-2}      & {\color{mygreen} 1.91e-2}        &{\color{mygreen}0.316}    & {\color{mygreen}0.902}     &{\color{mygreen}4.04}  &{\color{mygreen}6.64} & {\color{mygreen}13.9} \\ 
  & \unionremlock{}       &5.07e-2                       & 1.95e-2                        & 2.84e-2                          &0.437                     & 1.23                       &5.64        &9.20       & 19.3 \\ 
  & \jayanti{}            &6.90e-2                       & 4.49e-2                        & 8.48e-2                          &0.965                     & 2.76                       &22.5        &36.4       & 72.1 \\ 
  & \liutarjan{}          &7.40e-2                       & 5.18e-2                        & 6.46e-2                          &2.78                      & 6.60                       &30.1        &67.1       & 142 \\ 
  & \shiloachvishkin{}    &0.138                         & 4.34e-2                        & 5.70e-2                          &1.65                      & 5.38                       &21.2        &38.5       & 106 \\ 
  & \labelpropagation{}   &13.4                          & 4.66e-2                        & 6.37e-2                          &1.24                      & 4.37                       &13.4        &20.7       & 46.5 \\ 

  \midrule
  \multirow{9}{*}{\STAB{\rotatebox[origin=c]{90}{\koutsample{}}}}
  &\unionearly{}         &{\color{mygreen} 3.25e-2}     & 9.00e-3                       & 8.61e-3                        &{\color{mygreen} 0.117}      &{\color{mygreen}0.227}      &2.28        &4.77       &8.94 \\ 
  &\unionhook{}          &3.62e-2                       & 9.18e-3                       & 9.16e-3                        &0.121                        &0.230                       &2.22        &3.63       &8.51 \\ 
  &\unionasync{}         &3.33e-2                       & 8.97e-3                       & {\color{mygreen} \bf 8.56e-3}  &{\color{mygreen} 0.117}      &0.228                       &2.21        &3.60       &8.49 \\ 
  &\unionremcas{}        &3.43e-2                       & {\color{mygreen} \bf 8.96e-3} & 8.62e-3                        &{\color{mygreen} 0.117}      &{\color{mygreen}0.227}      &{\color{mygreen}2.15}  &{\color{mygreen}3.51} & {\color{mygreen}\bf 8.20} \\ 
  &\unionremlock{}       &4.45e-2                       & 1.13e-2                       & 1.01e-2                        &0.138                        &0.344                       &2.63        &4.33       &9.91 \\ 
  &\jayanti{}            &3.89e-2                       & 9.77e-3                       & 8.80e-3                        &0.125                        &0.237                       &2.43        &4.05       &9.58 \\ 
  &\liutarjan{}          &6.34e-2                       & 9.90e-3                       & 9.18e-3                        &0.129                        &0.374                       &2.61        &6.74       &11.5 \\ 
  &\shiloachvishkin{}    &5.72e-2                       & 9.72e-3                       & 8.78e-2                        &0.124                        &0.237                       &2.70        &5.03       &12.5 \\ 
  &\labelpropagation{}   &12.6                          & 1.02e-2                       & 9.63e-3                        &0.121                        &0.375                       &2.44        &4.75      &9.68 \\ 

  \midrule
  \multirow{9}{*}{\STAB{\rotatebox[origin=c]{90}{\bfssample{}}}}
  &\unionearly{}         &2.69                         &1.07e-2                         &9.26e-3                         &9.42e-2                       &0.186                      &2.27        &4.02       &9.33 \\ 
  &\unionhook{}          &2.65                         &1.09e-2                         &9.71e-3                         &9.53e-2                       &0.186                      &2.29        &2.94       &9.40 \\ 
  &\unionasync{}         &2.69                         &1.08e-2                         &{\color{mygreen}9.12e-3}        &9.31e-2                       &0.189                      &2.23        &2.87       &9.23 \\ 
  &\unionremcas{}        &2.66                         &{\color{mygreen}1.06e-2}        &9.19e-3                         &{\color{mygreen} \bf 9.24e-2} &{\color{mygreen} \bf 0.183}&{\color{mygreen} 2.21}  &{\color{mygreen}\bf 2.83} & {\color{mygreen}9.11} \\ 
  &\unionremlock{}       &2.67                         &1.13e-2                         &1.07e-2                         &0.113                         &0.219                      &2.69        &3.68       &10.8 \\ 
  &\jayanti{}            &2.75                         &1.14e-2                         &9.52e-3                         &9.80e-2                       &0.195                      &2.38        &3.22       &9.88 \\ 
  &\liutarjan{}          &2.68                         &1.17e-2                         &9.80e-3                         &9.61e-2                       &0.383                      &2.85        &7.61       &13.4 \\ 
  &\shiloachvishkin{}    &{\color{mygreen}2.54}        &1.12e-2                         &9.72e-3                         &9.87e-2                       &0.196                      &2.59        &4.13       &12.2 \\ 
  &\labelpropagation{}   &2.58                         &1.19e-2                         &1.03e-2                         &9.47e-2                       &0.446                      &2.31        &3.21       &9.91 \\ 

  \midrule
  \multirow{9}{*}{\STAB{\rotatebox[origin=c]{90}{\lddsample{}}}}
  &\unionearly{}         &0.117                        &1.32e-2                        &8.63e-3                         &0.124                        &{\color{mygreen}0.193}       &1.74     &4.63    &8.52    \\ 
  &\unionhook{}          &0.112                        &1.33e-2                        &8.81e-3                         &0.127                        &0.197                        &1.75     &3.58    &8.46    \\ 
  &\unionasync{}         &0.103                        &1.32e-2                        &8.49e-3                         &0.123                        &{\color{mygreen}0.193}       &1.71     &3.48    &8.31    \\ 
  &\unionremcas{}        &{\color{mygreen}9.86e-2}     &{\color{mygreen} 1.29e-2}      &{\color{mygreen}8.48e-3}        &{\color{mygreen}0.122}       &{\color{mygreen}0.193}       &{\color{mygreen}\bf 1.69}     &{\color{mygreen}3.46}    &{\color{mygreen}8.28}    \\ 
  &\unionremlock{}       &0.126                        &1.54e-2                        &1.03e-2                         &0.144                        &0.226                        &2.16     &4.31    &9.97    \\ 
  &\jayanti{}            &0.148                        &1.35e-2                        &8.98e-3                         &0.131                        &0.202                        &1.85     &3.84    &9.13    \\ 
  &\liutarjan{}          &0.178                        &1.45e-2                        &8.73e-3                         &0.130                        &1.24                         &2.32     &8.33    &12.5    \\ 
  &\shiloachvishkin{}    &0.250                        &1.36e-2                        &8.81e-3                         &0.131                        &0.197                        &2.07     &4.70    &11.2    \\ 
  &\labelpropagation{}   &14.3                         &1.41e-2                        &8.99e-3                         &0.127                        &2.03                         &1.76     &3.79    &9.06    \\ 

  \midrule
  \multirow{7}{*}{\STAB{\rotatebox[origin=c]{90}{Other Systems}}}

  &\bfscc{}~\cite{ShunB14}                     &2.60             &1.94e-2          &1.05e-2        &0.169          &1.34 & 5.56        &61.6       & 62.5\\
  &\gbbscc{}~\cite{SDB14}                      &0.41             &0.247            &2.78e-2        &0.587          &2.18 & 5.97        &11.4       & 25.8\\
  &MultiStep~\cite{Slota14}                    &29.6             &0.272            &0.138          &--- &1.76  & --- & --- & --- \\
  &Galois~\cite{Nguyen2014}                    &6.10e-2          &2.55e-2          &3.40e-2        &1.167          &1.77  & --- & --- & --- \\
  &PatwaryRM~\cite{PatwaryRM12}                &6.81e-2          &3.65e-2          &3.93e-2        &0.428          &1.15  & --- & --- & --- \\
  &GAP-SV~\cite{BeamerAP15}  &0.103            &0.134            &0.150          &5.669          &7.01  & --- & --- & --- \\
  &GAP-AF~\cite{Sutton2018}          &4.29e-2          &5.30e-2          &7.32e-2        &0.172          &0.306 & --- & --- & --- \\

\end{tabular}
\caption{\small Running times (seconds) of \framework{} algorithms and
  state-of-the-art static connectivity algorithms on a
  72-core machine (with 2-way hyper-threading enabled). We report
  running times for the \emph{No Sampling} setting, as well as the
  \emph{\koutsample{}}, \emph{\bfssample{}}, and \emph{\lddsample{}}
  schemes considered in this paper. Within each group, we display the
  running time of the fastest \framework{} variant in green.
  Additionally, for each graph we display the fastest running time
achieved by any algorithm combination in bold. We mark entries that
did not successfully solve the problem with ---. }
  \label{table:static_cpu_all}
\end{table}

\subsection{Performance Without Sampling}\label{subsec:static_cpu_nosample}
We start by studying the performance of different \framework{} finish
methods without sampling, since trends observed in this setting hold
when sampling is applied.  Table~\ref{table:static_cpu_all} shows the
results of the fastest \framework{} implementations across our inputs.
For algorithms that have many options, such as union-find, we report
the fastest time out of all combinations of options.

Among all of our algorithms, we observe that \unionremcas{} is
consistently the fastest finish algorithm. We find that the fastest
variant of \unionremcas{} across all graphs uses the \findnaive{}
option for find (i.e., it does not perform any extra compression
after the union operation), and most frequently uses the \splitone{}
option to perform path compression during the union operation\fullver{(see
Algorithm~\ref{alg:union_rem_cas}).}{.} We observe that the running
times of the other splice options, \splice{} and \halveone{}, are
almost identical to \splitone{}.

The \unionhook{} algorithm also achieves high performance, and is
between 1.20--1.49x slower than \unionremcas{} across all graphs
(1.35x slower on average).  The \unionasync{} algorithm also achieves
consistent high performance, and is between 1.34--1.96x slower than
\unionremcas{} across all graphs (1.62x slower on average).
\jayanti{} is consistently slower than \unionremcas{}, being between
2.46--5.56x slower across all graphs, and 4.09x slower on average. The
fastest find option for \jayanti{} in this setting was always the
two-try splitting option, with the exception of the road\_usa graph
where \findnaive{} was slightly faster. We discuss the source of
performance differences between these algorithms in
Section~\ref{subsec:union_find_eval}.

Compared to the \unionfind{} algorithms, the \liutarjan{} algorithms are
much slower on our inputs. \fullver{The fastest variants in this
  setting were one of $\{\mathsf{EF}, \mathsf{PRF}, \mathsf{PR},
  \mathsf{CRFA}\}$ (we define these \liutarjan{} variants in
  Appendix~\ref{apx:pseudocode}).}{} The fastest \liutarjan{} variant
is still 2.64--10.2x slower than \unionremcas{} (6.74x slower on
average). Stergiou's algorithm was always slower than the fastest
variant from the \liutarjan{} framework.  Finally, our implementation
of \shiloachvishkin{} is between 2.98--7.62x slower than
\unionremcas{} (5.14x slower on average). The performance of
\labelpropagation{} is between 3.11--4.92x slower on all graphs except
for road\_usa. On road\_usa, its performance is 478x worse than that
of \unionremcas{} because it requires a large number of rounds where
most vertices are active due to the high diameter of the graph.

\emph{Takeaways.} Without sampling, the \unionremcas{} algorithm using
\splitone{}, with no additional path-compression option is a robust
algorithm choice, and consistently performs the best across all graphs
compared with other \framework{} implementations.

\begin{figure}[!t]
  \centering
  \hspace{-1.5em}
    \includegraphics[trim=0em 0em 0em 6em, width=0.435\textwidth]{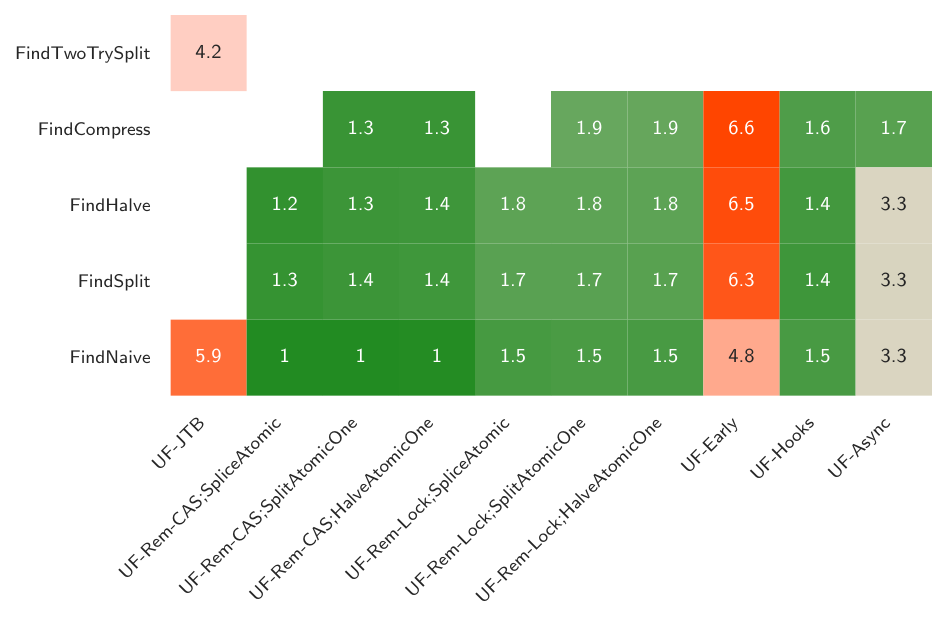}\\ 
    \caption{\small Relative performance of different \unionfind{}
    implementations on graphs used in our evaluation of
    \framework{} in the \emph{No Sampling} setting. The numbers are
    slowdowns relative to the fastest implementation.
  }\label{fig:heatmap_uf_nosample}
  \vspace{-0.5em}
\end{figure}

\subsubsection{Union-Find Evaluation}\label{subsec:union_find_eval}
Figure~\ref{fig:heatmap_uf_nosample} shows the relative performance of
different \unionfind{} variants from \framework{} in the \emph{No
Sampling} setting, averaged across all graphs.  \unionremcas{}
implementations achieve consistently high performance in this setting
using either \splitone{}, \halveone{}, or \splice{} to perform
compression.  The \unionremlock{} implementation is 1.5--1.9x
slower than \unionremcas{} for all 6 variants of this algorithm.
\unionearly{}, \unionhook{}, and \unionasync{} are all slower on
average: between 4.8--6.6x, 1.4--1.6x, and 1.7--3.3x on average,
respectively.  Finally, \jayanti{} is much slower, although the
\findtwotry{} option only incurs a more modest slowdown of
4.2x on average.

\myparagraph{Performance Analysis}
We annotated our \unionfind{} algorithms to measure the \defn{Max Path
Length (MPL)}, or the longest path length experienced by the
\unionfind{} algorithm during the execution of any \textsc{Union}
operation, and the \defn{Total Path Length (TPL)}, which is the
sum of all path lengths observed during all
\textsc{Union} executions. We also measured the number of LLC misses,
and the total amount of bytes transferred to the memory controller
(both reads and writes). \fullver{We note that adding this instrumentation
affects the overall running times between 10--20\%.}{} We report the
detailed results of the analysis in
\fullver{Appendix~\ref{apx:cpu_no_sample}.}{the full
paper~\cite{connectit}.}  We find that the TPL is relevant for
predicting the running time---the TPL has a Pearson correlation
coefficient of 0.738 with running time (the MPL has a weaker
coefficient of 0.344).

\emph{Takeaways.} Based on this discussion, we conclude that
minimizing both the total amount of data written to and read from the
memory controller, and improving the locality of these accesses,
thereby reducing the number of LLC misses, is critical for high
performance. One way of achieving these objectives is by minimizing
the TPL, although optimizing to minimize the
TPL does not by itself guarantee the fastest
performance.

\subsection{Performance With Sampling}\label{subsec:static_cpu_sample}
We defer a detailed analysis of our performance under different
sampling schemes to \fullver{Appendix~\ref{apx:cpu_sample}}{the full paper~\cite{connectit}}, and list our main
findings here. Our results for combining
\framework{} algorithms with different sampling schemes suggest the
following takeaways when selecting algorithms and sampling schemes for
different graphs:
\begin{itemize}[topsep=0pt,itemsep=0pt,parsep=0pt,leftmargin=8pt]
  \item For low-diameter graphs, such as social networks and
    Web graphs, the fastest finish algorithms combined with all three
    sampling schemes explored in this paper provide significant
    speedups over the fastest \framework{} algorithms that do not use
    sampling, ranging from 1.4--3.9x speedup using \koutsample{} (2.2x
    on average), 1.1--4.9x speedup using \bfssample{} (2.4x on
    average), and 0.98--4.6x speedup for \lddsample{} (2.3x on
    average).

  \item If the graph has high diameter, then using \koutsample{} with any
    union-find method seems to be the best fit, unless using
    \labelprop{}, in which case \bfssample{} is the best choice.

  \item For large Web graphs, like the ClueWeb and the two Hyperlink
    graphs studied in this paper, all three sampling schemes result in
    very high speedups over the fastest unsampled algorithms in
    \framework{}, ranging from 1.7--1.9x for \koutsample{},
    1.5--2.3x for \bfssample{}, and 1.7--2.4x for \lddsample{}.
    Furthermore, the fastest \framework{} algorithm for these graphs
    is obtained by combining one of these sampling schemes with the
    \unionremcas{} algorithm (each scheme is fastest on one graph).
\end{itemize}

\subsection{Comparison with State-Of-The-Art}\label{subsec:static_cpu_comparison}
Table~\ref{table:static_cpu_all} reports the performance of these
systems on the inputs used in our evaluation.  Aside from \bfscc{} and
\gbbscc{} (defined below), which we implemented as part of our system, we were unable
to run the other systems on the large inputs due to the fact that
these systems do not support compression, which is required for
systems to compactly store and process our largest graph inputs on our
machine without using an exorbitant amount of memory.
We list the main takeaways of the experimental results here, and
present detailed results of the comparison in
\fullver{Appendix~\ref{apx:static_cpu_comparison}}{the full paper~\cite{connectit}}.

\begin{itemize}[topsep=0pt,itemsep=0pt,parsep=0pt,leftmargin=8pt]
  \item {\bf \bfscc{}}: The BFS-based connectivity implementation available from
    Ligra~\cite{ShunB14}. This algorithm computes each connected
    component by running a parallel BFS from the vertex with the lowest ID
    within it. We find that using sampling, \framework{} is
    between 1.22--80.0x faster than \bfscc{}, and 15.6x faster on
    average.

  \item {\bf \gbbscc{}}: The work-efficient connectivity
    implementation by Shun et al.~\cite{SDB14}, which recursively
    computes LDD (publicly available as part of
    GBBS~\cite{DhBlSh18}). Our implementations without sampling are
    between 1.45--19.4x faster (5.6x faster on average). Our
    implementations with sampling are between 3.1--27.5x faster (9.03x
    faster on average). This algorithm previously held the record
    time for connectivity on the Hyperlink2012 graph on any
    system, running in 25.8 seconds on a 72-core machine. The fastest
    \framework{} algorithm, \unionremcas{} with \splitone{}, is 3.14x
    faster on the same machine, and thus breaks this record.

  \item {\bf Multistep}: The hybrid BFS/\labelprop{} method by
    Slota et al.~\cite{Slota14}. Our codes without sampling are
    between 1.95--21.4x faster, and 10.1x faster on average. Using
    sampling, our codes are between 9.6--30.3x faster, and 18.6x
    faster on average. Our codes are much faster on large-diameter
    networks (over 1,000x faster).

  \item {\bf Galois}: Galois is a state-of-the-art shared-memory
    parallel programming library~\cite{Nguyen2013}.
    We found that their label propagation algorithm is consistently
    the fastest implementation in their code base, and report running
    times for this implementation for all but one graph (road\_usa,
    discussed below).

    Our codes without sampling are between 1.78--3.69x faster than
    theirs (2.32x faster on average), and our codes using sampling are
    between 2.17--12.3x faster than theirs (6.21x faster on average).
    Our codes are 2.43x faster than their default (EdgetiledAsync)
    algorithm on the road\_usa graph, which was their fastest
    algorithm for this graph.

    \item {\bf PatwaryRM}: We compare with the multicore Rem's algorithm by
    Patwary et al.~\cite{PatwaryRM12}. Our fastest
    implementations without sampling achieve between 1.27--2.87x
    speedup over their implementation (1.99x speedup on average), and
    our fastest implementations with sampling achieve between
    2.09--6.28x speedup over their implementation (4.31x speedup on
    average).

  \item {\bf GAPBS}: We compared with the GAP Benchmark
    Suite, a state-of-the-art shared-memory graph processing
    benchmark~\cite{BeamerAP15}, which implements the Afforest
    algorithm~\cite{Sutton2018}. Our fastest implementations without
    sampling are between 0.33--4.17x faster than their
    Afforest~\cite{Sutton2018} implementation (2.08x faster on
    average). Our fastest implementations with sampling are between
    1.32--8.55x faster than their Afforest implementation (3.9x faster
    on average).

\end{itemize}

\begin{table}[!t]
\footnotesize
\renewcommand{\tabcolsep}{0.4 mm}
\scalebox{0.9}{
\begin{tabular}[t]{  l | c | c | c | c | c | c | c | c | c | c}
  \toprule
  {\bf Algorithm} &  \multicolumn{1}{c|}{\bf RO} & \multicolumn{1}{c|}{\bf LJ} & \multicolumn{1}{c|}{\bf CO} & \multicolumn{1}{c|}{\bf TW}  & \multicolumn{1}{c|}{\bf FR} & \multicolumn{1}{c|}{\bf RM} & \multicolumn{1}{c|}{\bf BA} & \multicolumn{1}{c|}{\bf CW} & \multicolumn{1}{c|}{\bf HL14} & \multicolumn{1}{c}{\bf HL12} \\
  \midrule
  \unionearly{}         &1.48e9  &9.23e8  & 1.38e9 & 4.31e8 & 1.05e9 &3.49e8 &5.16e8   &4.00e8   &3.15e9   &2.80e9  \\
  \unionhook{}          &3.12e9  &4.21e9  & 5.94e9 & 2.79e9 & 1.49e9 &7.27e8 &1.18e9   &4.69e9   &5.17e9   &4.48e9  \\
  \unionasync{}         &3.49e9  &3.36e9  & 5.29e9 & 2.73e9 & 1.41e9 &8.05e8 &1.13e9   &4.86e9   &5.92e9   &4.69e9  \\
  \unionremcas{}        &{\bf \color{mygreen}3.98e9}  &{\bf \color{mygreen}5.28e9}  &{\bf \color{mygreen}7.16e9} &{\bf \color{mygreen} 3.85e9} & {\bf \color{mygreen}2.01e9} &{\bf \color{mygreen}8.78e8} &{\bf \color{mygreen}1.46e9}&{\bf \color{mygreen}5.73e9}   &{\bf \color{mygreen}6.64e9}   &{\bf \color{mygreen}5.64e9}  \\
  \unionremlock{}       &1.56e9  &3.68e9  & 5.95e9 & 3.36e9 & 1.74e9 &7.67e8 &1.42e9  &3.56e9   &2.99e9   &3.21e9 \\
  \jayanti{}            &1.15e9  &1.06e9  & 2.68e9 & 1.42e9 & 7.33e8 &2.88e8 &5.27e8  &2.15e9   &2.26e9   &1.79e9 \\
  \liutarjan{}          &2.87e8  &4.31e8  & 5.98e8 & 3.77e8 & 1.84e8 &1.11e8 &1.98e8  &3.02e8   &2.80e8   &2.62e8 \\
  \shiloachvishkin{}    &1.79e8  &4.56e8  & 1.13e9 & 2.89e8 & 1.76e8 &1.06e8, &2.43e8 &3.34e8   &2.65e8   &2.24e8  \\
\end{tabular}
}
\caption{\small Maximum parallel streaming throughput (directed edge insertions
per second) achieved by \framework{} streaming algorithms on our graph
inputs, and two synthetic graph inputs from the RMAT (RM) and
Barabasi-Albert (BA) families.  Due to memory constraints, we were
unable to materialize a COO representation of our three largest
graphs, and instead, sample 10\% of the edges to use in the batch.
Otherwise, the entire graph is used as part of a single batch, which
is not permuted. Note that there are no queries in the batch. For each
graph we display the algorithm with the highest throughput in bold.
}
\label{table:streaming_cpu_all}
\end{table}

\subsection{Streaming Parallel Graph Connectivity}\label{sec:streaming_cpu}

\myparagraph{Experiment Design}
We run two types of streaming experiments.  The first type generates a
stream of edge insertions by sampling a fraction of the edges
($f_{u}$) from a static input graph to use as insertions. Unless
otherwise mentioned, we use all edges as insertions ($f_{u} = 1$). The
second type uses synthetic graph generators to sample edge insertions.
We consider the RMAT and Barabasi-Albert (BA) graph
generators~\cite{bader2006gtgraph,barabasi1999emergence} in these
experiments. We generate RMAT graphs using the parameters $(a,b,c) =
(0.5, 0.1, 0.1)$.  In both the RMAT and BA graphs, the
number of vertices  is $n=2^{30}$ and the number of
edges is $10n$. For both graphs,
the batches used by our streaming algorithms are represented in the
COO format. For the ClueWeb, Hyperlink2014, and Hyperlink2012 graphs,
we are unable to represent the entire graph in COO, and sample 10\% of
the edges to use as insertions.

\subsubsection{Streaming Throughput}
\ We first consider the throughput achieved by each algorithm family in
the setting where only insertions are applied.
Table~\ref{table:streaming_cpu_all} reports the streaming throughput
achieved by the fastest variant of each algorithm on each graph. Note
that no sampling is applied in these streaming experiments. The
insertions are streamed as part of a single, large batch, which is
applied by the algorithm in parallel for Type~\ref{lab:waitfreealgs}
and \ref{lab:syncupdates}, and separately applied for insertions
first, then queries second, for Type~\ref{lab:phaseconcurrentalgs} algorithms
(see the discussion on these types in Section~\ref{subsec:streaming}).

The performance of most of our union-find algorithms is consistently
high on these graphs, in particular the performance of \unionhook{},
\unionasync{}, \unionremcas{}, and \unionremlock{}.  As in the static
setting, the \unionremcas{} algorithm consistently performs the best
across all input graphs, achieving a maximum throughput of over 7 billion
edge insertions per second on the com-Orkut graph. The other two union-find
algorithms, \unionearly{} and \jayanti{}, achieve somewhat
inconsistent performance, which is consistent with our findings from
Section~\ref{subsec:static_cpu_nosample}.
\fullver{
For \unionearly{}, the version of the algorithm with no extra
compression performed best compared with the versions that perform
additional path compression algorithms, as in the static case in
Figure~\ref{fig:heatmap_uf_nosample} (comparing \findnaive{}
performance to that of non-trivial path compression options).
The performance of \shiloachvishkin{} seems to depend mostly on the
number of rounds required (more rounds on larger diameter networks
like road\_usa). We note that the algorithm requires just two rounds
on com-Orkut. The performance of the fastest \liutarjan{} algorithm is
similar to that of the \shiloachvishkin{} algorithm in the streaming
setting across all graphs. Both of these methods are significantly
slower than the fastest union-find methods. We note that the fastest
\liutarjan{} algorithm in this setting is consistently the version
using \connect{}, \rootupdate{}, \fullshortcut{}, and \alter{} (the
$\mathsf{CRFA}$ algorithm in Appendix~\ref{apx:pseudocode}).}{
We discuss the performance of the remaining algorithms, which are even
slower than  \unionearly{} and \jayanti{}, in the full version of our
paper~\cite{connectit}.
}

\begin{table}[!t]
\footnotesize
\centering
\tabcolsep=0.12cm
\hspace*{-0.2cm}
\scalebox{0.9}{
\begin{tabular}[t]{c | c|c|c|c}
  \toprule
  {\bf Batch Size} & \textbf{\stinger{}} & \textbf{Updates/sec} &
  \textbf{\framework{}} & \textbf{Updates/sec} \\
  \midrule
  10             & 6.07e-2 & 164     & 2.14e-6  & 4.67M \\ 
  $10^{2}$       & 9.87e-2 & 1013     & 1.19e-5  & 8.40M \\
  $10^{3}$       & 0.171   & 5,847   & 2.19e-5  & 45.6M \\
  $10^{4}$       & 0.137   & 72,992  & 5.19e-5  & 192M  \\
  $10^{5}$       & 0.503   & 198,807  & 3.25e-4  & 307M  \\
  $10^{6}$       & 3.99    & 250,626 & 2.73e-3  & 366M  \\
  $2\cdot10^{6}$ & 6.52    & 306,748 & 4.313e-3 & 463M  \\
  \bottomrule
\end{tabular}
}
\caption{\small Running times (seconds) and edge insertion rates (directed edges/second) for
  \stinger{}'s dynamic connected components algorithm and
  \framework{}'s \unionremcas{} algorithm (with \splitone{}) when
  performing batch edge insertions on an empty graph with varying
  batch sizes. Inserted edges are sampled from the RMAT graph
  generator.}
\label{table:stinger_vs_connectit}
\end{table}

\subsubsection{Streaming Comparison with \stinger{}}
\stinger{} is based on a dynamic data structure for representing
dynamic graphs in the streaming setting~\cite{Ediger12}.
\stinger{} supports a dynamic connected components algorithm
by McColl et al.~\cite{McColl13}.  Their algorithm also supports edge
deletions, which makes it more general than our algorithms.
\fullver{
Since our codes do not update the graph structure, for the sake of
comparing \stinger{} with our codes, we only report the time required
for \stinger{} to update its connectivity labeling, which is strictly
less than the overall update time of \stinger{}.}{}
As far as we know, the only existing parallel algorithm designed for
incremental connectivity is by Simsiri et al.~\cite{Simsiri2017}, but
unfortunately we were unable to obtain the code from the authors.

\myparagraph{Comparison}
The \stinger{} code takes a parameter which trades off space usage
with the amount of re-computation that has to be done upon an update
(edge insertion or deletion). We set it to the lowest possible value,
which gave the best performance.
\fullver{
We note that the \stinger{} dynamic connectivity algorithm has an
unusually long initialization period, which depends on the number of
vertices.}{}
We were unable to initialize the \stinger{} dynamic connectivity
algorithm within several hours for graphs with more than 1 million
vertices, and were only able to evaluate batches of size up to 2
million due to limitations in their system.  Based on this, we opted
to generate batches from an RMAT graph generator using $2^{20}$
vertices. Our experiment inserts batches of varying sizes and measures
the time required by the dynamic connectivity algorithm to process the
batch. The times we report ignore the time taken by \stinger{} to
update adjacency information.

Table~\ref{table:stinger_vs_connectit} reports the results of the
experiments for \stinger{} and for \framework{}'s
\unionremcas{} algorithm with the \splitone{} option. Both implementations are
run on the same machine using all cores with hyper-threading.
\framework{} significantly outperforms \stinger{} in this setting,
achieving between \emph{1,461--28,364x} speedup across different batch
sizes. Surprisingly, \framework{} applied with a batch size of 10
achieves significantly higher throughput than \stinger{} for a batch
size of 2M. We realize that our comparison is somewhat unfair toward
\stinger{}, since although this implementation is one of the fastest
batch-incremental connectivity implementations currently publicly
available, it is designed for both edge insertions and deletions, and must perform extra
work in anticipation of edge deletions, which our algorithms do not
handle.

\revised{
\subsection{Algorithm Selection in \framework{}}\label{sec:discussion}

Given a graph, which combination of \framework{}'s sample and
finish methods should one apply to obtain the best performance?

\begin{figure}[!t]
\centering
\vspace{-16pt}
\begin{subfigure}[b]{0.49\columnwidth}
\includegraphics[width=\textwidth]{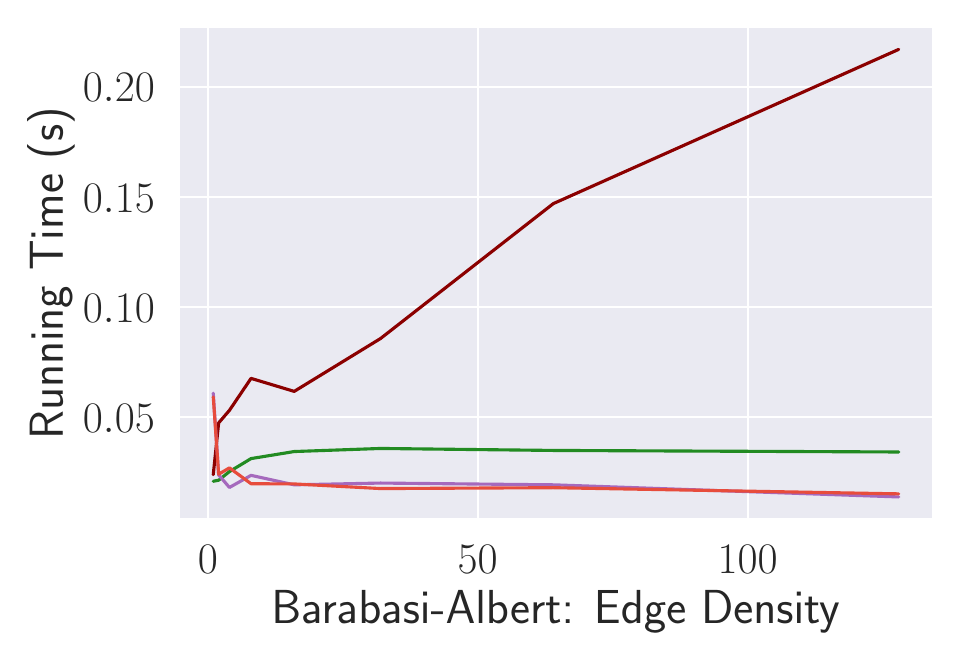}
\caption{
\label{fig:ba_sampling}
}
\end{subfigure}
\hfill
\begin{subfigure}[b]{0.49\columnwidth}
\includegraphics[width=\textwidth]{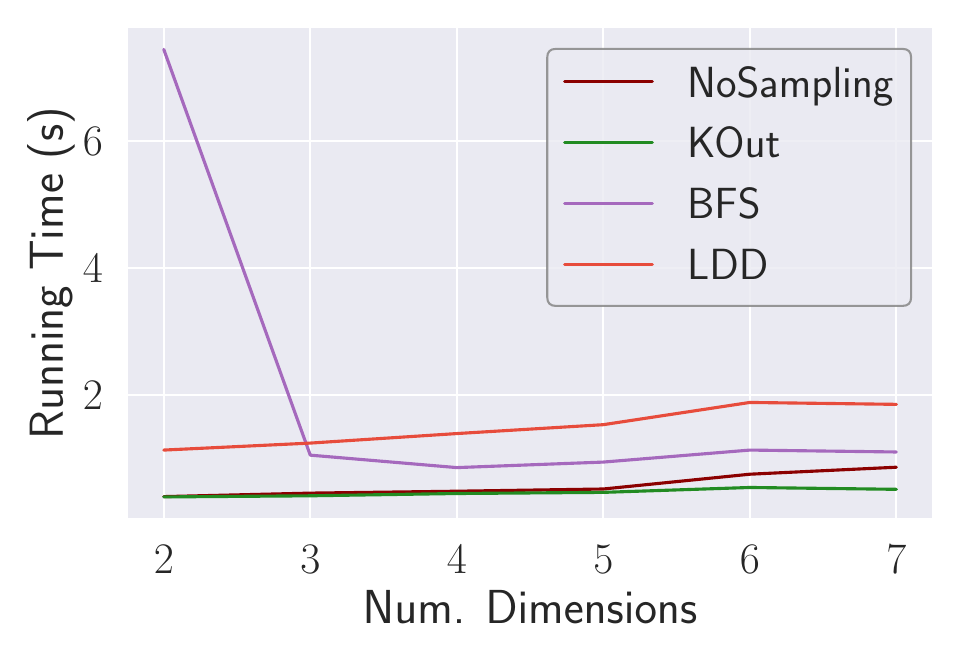}
\caption{
\label{fig:kdgrid_sampling}
}
\end{subfigure}\\
\caption{\label{fig:discussion_sample}
Running times (seconds) of the \unionremcas{} algorithm with
\splitone{} and \findnaive{} using different sampling methods on two
synthetic graph families.
Subfigure~\ref{fig:ba_sampling} displays results for graphs drawn from
the Barabasi-Albert generator where the $x$-axis specifies the edge
density, or the number of edges drawn for each newly added vertex.
Subfigure~\ref{fig:kdgrid_sampling} shows results for $d$-dimensional
torii where the $x$-axis shows the dimension.
}
\end{figure}

\myparagraph{Evaluation on Synthetic Networks}
To provide guidance, we evaluated \unionremcas{} with \splitone{} and
\findnaive{}, which are consistently  \framework{}'s fastest
connectivity methods, on two synthetic graph families with
significantly different properties. Figure~\ref{fig:ba_sampling}
displays results for graphs drawn from the Barabasi-Albert (BA)
generator on $10^{7}$ vertices, varying the edge density of the
graphs in powers of two from $1$ to $128$.
Figure~\ref{fig:kdgrid_sampling} displays results for $d$-dimensional
torii on $10^{9}$ vertices where we vary the dimension of the grid
(each vertex is connected to its $2d$ adjacent neighbors).

The diameter of the BA graphs decreases with increasing density, and
as a result both BFS and LDD Sampling  achieve good performance as the density
increases. Both schemes use the direction-optimization technique used when traversing
frontiers~\cite{Beamer12, ShunB2013}, which is beneficial in high-density graphs.
For the sparse (high-diameter) case where $n =
m$,
we found that \koutsample{} and not using sampling were the
fastest. 

For $d$-dimensional torii, \koutsample{} consistently performs the
best across the range of $d$ that we evaluate. For small $d$, there is
little difference between \koutsample{} and no sampling, but for
larger $d$, and thus higher average degree, \koutsample{} begins to
achieve significant speedups over no sampling. The performance of BFS
Sampling is poor on this graph family since the diameter of the
$d$-dimensional torus is $O(n^{1/d})$, causing BFS to perform many
rounds. The performance of LDD Sampling is also poor, since the
induced clustering consists of many small clusters and thus most of
the vertices must be processed in the finish phase.

\myparagraph{Picking an Algorithm}
Based on both our evaluation on the synthetic graph families above, and
the results for a broad collection of large real-world graphs shown in
Table~\ref{table:static_cpu_all}, we devised a decision tree that can
help users select an appropriate algorithm, which we show in
Figure~\ref{fig:decision_tree}. We recommend using the \unionremcas{}
algorithm, with any of the three possible splice strategies in
combination with the \findnaive{} strategy. Based on our evaluation of
union-find algorithms in Section~\ref{subsec:union_find_eval}, this
family of algorithms is consistently the fastest. Regarding sampling, for extremely sparse networks with low average degree such as road\_usa $(m/n < 3)$, not
using sampling can be beneficial. The reason is that the cost of
simply going over this small set of edges twice during two-phase
execution outweighs the additional benefit provided by sampling, since
we almost finish computing connectivity after applying \koutsample{}.
On the other hand, if the graph is reasonably dense, and also has low
diameter, either BFS or LDD Sampling can obtain the best results.
Finally, if the graph diameter is unknown, or known to be high, then
we recommend using \koutsample{}.

\begin{figure}[!t]
\begin{center}
\vspace{-20pt}
\includegraphics[scale=0.60]{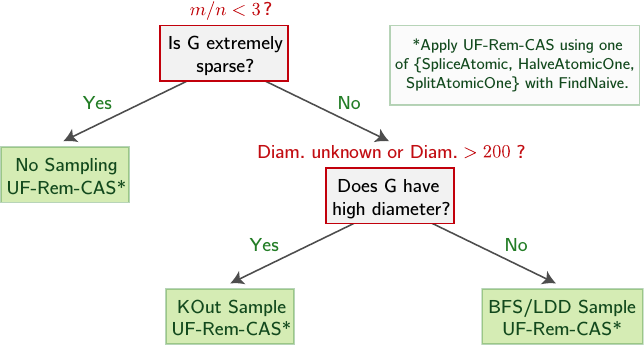}
\vspace{1em}
\caption{\label{fig:decision_tree}
\revised{
Decision tree for selecting a \framework{} algorithm based on the
input graph properties. The text shown in red are suggested
(heuristic) decision rules derived from our evaluation.
}
}
\end{center}
\end{figure}
}

\section{Applying \framework{}}\label{sec:applications}
In this section, we examine how \framework{} can be used to accelerate
graph processing in two important graph applications.

\subsection{Approximate Minimum Spanning Forest}
Computing a minimum spanning forest (MSF) of a weighted undirected
graph is an important problem, and some popular graph clustering
algorithms including single linkage, and affinity clustering can be
viewed as post-processing of a graph's minimum spanning
forest~\cite{irbook, AffinityCluster}. We consider the closely related
problem of computing an \emph{approximate} MSF, and show how a
folklore approach can be accelerated using \framework{}.

\myparagraph{Definition}
Consider a weighted graph $G(V,E,w)$. Let $\mathcal{F}_{\mathsf{OPT}}$
represent any spanning forest of $G$ of minimum weight. The
approximate minimum spanning forest (AMSF) problem is to compute a
spanning forest $\mathcal{F}_{\mathsf{APX}}$ of $G$, where
$W(\mathcal{F}_{\mathsf{OPT}}) \leq W(\mathcal{F}_{\mathsf{APX}}) \leq
(1+\epsilon)W(\mathcal{F}_{\mathsf{OPT}}).$
We make the standard assumption that the weights are
polynomially-bounded, i.e., $\exists c$ s.t. $\forall e \in E, w(e) =
O(n^{c})$ for some constant $c$.

\myparagraph{Algorithm}
The following is a folklore algorithm for the AMSF problem. Let
$W_{\min} = \min_{e \in E} w(e)$. Bucket the edges of $G$, such that
the $i$'th bucket contains edges with weight in the range
$[W_{\min}(1+\epsilon)^{i}, W_{\min}(1+\epsilon)^{i+1})$.  As the
weights in $G$ are polynomially bounded, there are
$O(\log_{1+\epsilon} n)$ buckets. The algorithm maintains a
connectivity labeling $\mathcal{\labelvar{}}$, which is updated as it
processes the buckets from smallest to largest weight. For the $i$'th
bucket, it first removes all edges that are self-loops. It then
computes a spanning forest $\mathcal{F}_{i}$ on the remaining edges in
the bucket, and updates the connectivity labeling so that
$\mathcal{\labelvar{}}$ represents the connected components of
$\cup_{j=0}^{i} \mathcal{F}_j$. The final minimum spanning forest is
simply the union of all of the computed spanning forests.

\newcommand{\amsfea}[0]{\ensuremath{\mathsf{AMSF}\mhyphen\mathsf{COO}}}
\newcommand{\amsffilter}[0]{\ensuremath{\mathsf{AMSF}\mhyphen\mathsf{F}}}
\newcommand{\amsfnofilter}[0]{\ensuremath{\mathsf{AMSF}\mhyphen\mathsf{NF}}}
\newcommand{\amsfnofiltersample}[0]{\ensuremath{\mathsf{AMSF}\mhyphen\mathsf{NF}\mhyphen\mathsf{S}}}
\newcommand{\gbbsmsf}[0]{\ensuremath{\mathsf{GBBS}\mhyphen\mathsf{MSF}}}

\myparagraph{Variants}
We consider several variants of the algorithm above:

\begin{enumerate}[label=(\textbf{\arabic*}),topsep=1pt,itemsep=0pt,parsep=0pt,leftmargin=15pt]
\item \amsfea{} is a direct implementation of the algorithm described
above, which works by writing the edges in the graph into the COO
format, which is then sorted by weight. The buckets are represented
using $O(\log n)$ pointers into this array.

\item \amsffilter{} avoids extracting \emph{all} of the edges into
COO, and instead extracts the \emph{buckets} in COO format one at a
time from input graph (in CSR representation).  The extracted edges
are removed (filtered out) from the CSR representation, and at the end
of the algorithm, the CSR graph has no remaining edges.

\item \amsfnofilter{} works similarly to \amsffilter{}, with the
exception that the graph is not mutated when extracting edges (and
thus all edges in the graph are inspected in every round).
\end{enumerate}

For the \amsfnofilter{} variant, we consider applying a
\emph{sampling} optimization similar to the one used in \framework{},
which we refer to as \amsfnofiltersample{}.  The idea is to compute,
in each round, the largest connected component in the connectivity
labeling, $L_{\max}$, and to skip processing vertices in the
$L_{\max}$ component and their incident edges when searching for the
spanning forest over edges in the current round.  This optimization is
correct, since any edge emanating out of the $L_{\max}$ component that
is skipped will be considered from the other endpoint, which will not
be skipped.

\begin{figure}[!t]
\begin{center}
\vspace{-14pt}
\includegraphics[scale=0.36]{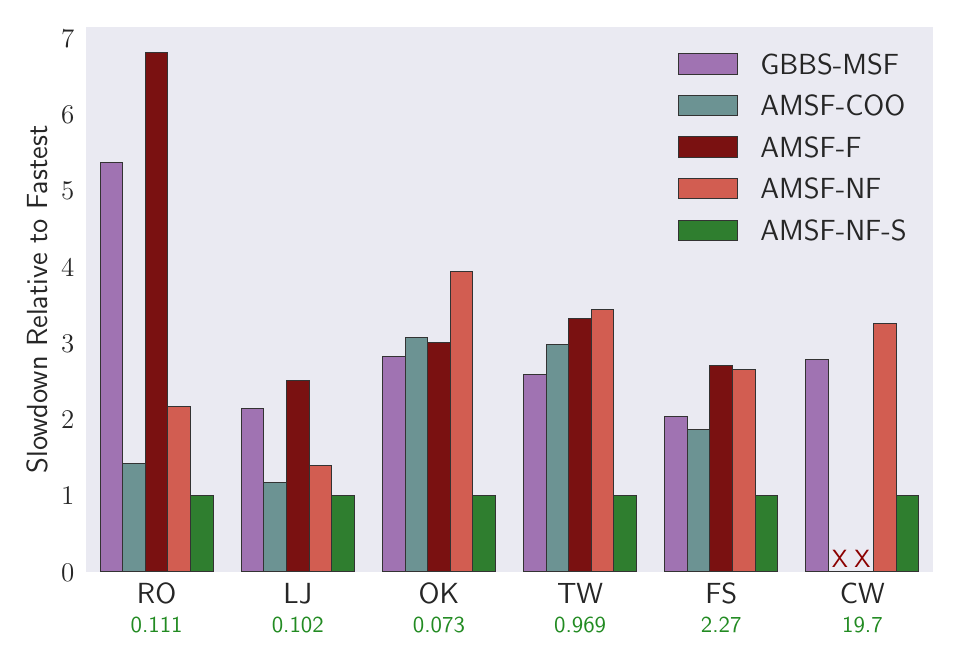}
\caption{\label{fig:amsf_results}
\revised{
Relative performance of the AMSF algorithms with $\epsilon = 0.25$.
The values on the $y$-axis are normalized to the running time of the
fastest algorithm; this time in seconds is shown below the name of
each graph in green. Methods resulting in failure are marked with a
red $\mathsf{x}$.
}
}
\end{center}
\end{figure}

\myparagraph{Experimental Results}
We evaluated all of the AMSF variants above over weighted versions of
our unweighted graph inputs by adding random weights drawn from an
exponential distribution with a constant mean. We set  $\epsilon=0.25$.
We did not evaluate weighted versions of the Hyperlink
graphs due to storage constraints in our experimental environment.
All AMSF variants use the \unionremcas{} method with \splitone{} and
\findnaive{} to concurrently update the connectivity labeling when
processing edges within a bucket. We compare these variants to
\gbbsmsf{}, an \emph{exact} minimum spanning forest (MSF) algorithm from
GBBS~\cite{DhBlSh18}, which is a CSR-based implementation of
\boruvka{}'s algorithm. To the best of our knowledge, \gbbsmsf{} is
the fastest existing multicore MSF algorithm.

Figure~\ref{fig:amsf_results} shows our results.
Other than the two smallest graphs, no AMSF variant without sampling
outperforms the \gbbsmsf{} algorithm by a significant margin. For \amsfea{},
the primary reason for its poor performance on large graphs is the
large overhead of sorting and processing edges stored in COO.
\amsffilter{} suffers for similar reasons, since a large fraction of
the edges fall into the first few buckets, which are explicitly stored
in COO. Both \amsfea{} and \amsffilter{} fail due to memory allocation
errors on the ClueWeb graph for this reason. \amsfnofiltersample{}
consistently attains the best performance across all graphs, obtaining
between 2.03--5.36x speedup over the exact algorithm (2.95x on
average), since it quickly finds a partial spanning forest spanning
most of the largest connected component after processing the first few
buckets, and can skip processing vertices in this component in
subsequent rounds.

\subsection{Index-Based SCAN}
The Structural Clustering Algorithm for Networks (SCAN) clustering
algorithm~\cite{xu2007scan} clusters graphs using the idea that
`similar' vertices have similar neighbor sets, e.g., using a
similarity measure such as cosine similarity.  SCAN is defined using
two parameters: a similarity threshold $\epsilon \in [0, 1]$ and $\mu
\ge 2$. Two vertices are said to be \emph{$\epsilon$-similar} if their
similarity is at least $\epsilon$. A vertex is a \emph{core} vertex if
it has at least $\mu$ neighbors that it is $\epsilon$-similar to. The
objective is to find a maximal clustering where all vertices within a
cluster are connected over a path of $\epsilon$-similar edges.

\myparagraph{GS*-Index and GS*-Query~\cite{wen2017efficient}}
Motivated by the fact that often one is interested in finding multiple
clusterings with varying $\epsilon$ and $\mu$, the
GS*-Index~\cite{wen2017efficient} algorithm builds an index structure
so that cluster retrieval can be quickly performed using the index.
Once the index has been computed, for a given $\epsilon$ and $\mu$ the
query algorithm, GS*-Query (Algorithm 4 in~\cite{wen2017efficient}),
performs a sequential search from the core vertices, considering only
$\epsilon$-similar edges.

\begin{figure}[!t]
\begin{center}
\vspace{-18pt}
\includegraphics[scale=0.48]{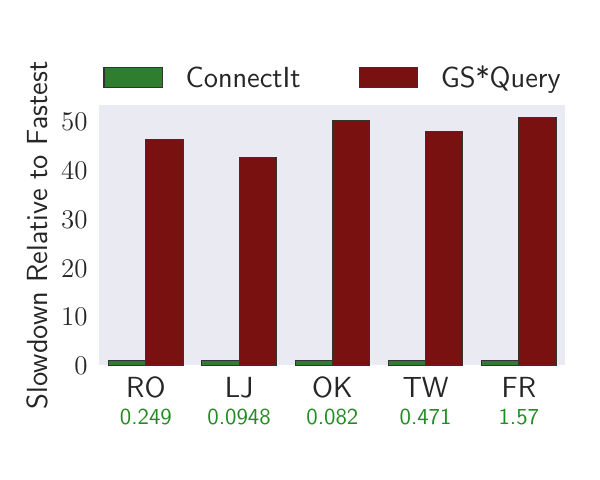}
\caption{\label{fig:scan_query}
\revised{
Relative performance of our parallel implementation of GS*-Query using
\framework{}, and GS*-Query when searching for a clustering with $\epsilon = 0.1$ and $\mu
= 3$. The fastest running time in seconds for each graph is shown in green.
}
}
\end{center}
\end{figure}

\myparagraph{Experimental Results}
We parallelized the GS*-Query algorithm with \framework{} using the
\unionremcas{} algorithm with \findnaive{} and \splitone{}, and
compared the parallel query algorithm to the sequential GS*-Query
algorithm.  Figure~\ref{fig:scan_query} shows the
results of our evaluation for $\epsilon=0.1$ and $\mu=3$ on a subset
of our graph inputs. The index requires $O(m)$ space (with a
significant constant), and so we were unable to evaluate it on our
larger graph inputs. Our results show that by using \framework{}, we
can obtain between 42.5--50.5x speedup (47.4x on average) over the
original sequential GS*-Query, potentially enabling an order of
magnitude more clusterings to be evaluated by users.
We have recently also parallelized the index construction algorithm
(GS*-Index)~\cite{tseng2021scan}.

\section{Conclusion}\label{sec:conclusion}
We have introduced the \framework framework, which provides orders of
magnitude more parallel static and incremental connectivity and
spanning forest implementations than what currently exist today.  We
have found that the fastest multicore implementations in \framework
significantly outperform state-of-the-art parallel solutions.  We
believe that this paper is one of the most comprehensive evaluation of
multicore connectivity implementations to date.


\section*{Acknowledgement}
Thanks to Edward Fan for initial implementations of Rem's algorithm.
Thanks to Guy Blelloch, Siddhartha Jayanti, Jakub Lacki, and Yuanhao
Wei for for helpful discussions and suggestions. The name of our framework is inspired by Saman Amarasinghe and the Commit group.
This research was supported by DOE Early Career Award \#DE-SC0018947, NSF
CAREER Award \#CCF-1845763, Google Faculty Research Award, DARPA SDH Award \#HR0011-18-3-0007, and
Applications Driving Architectures (ADA) Research Center, a JUMP
Center co-sponsored by SRC and DARPA.

\bibliographystyle{ACM-Reference-Format}
\bibliography{references}


\begin{thebibliography}{114}


\ifx \showCODEN    \undefined \def \showCODEN     #1{\unskip}     \fi
\ifx \showDOI      \undefined \def \showDOI       #1{#1}\fi
\ifx \showISBNx    \undefined \def \showISBNx     #1{\unskip}     \fi
\ifx \showISBNxiii \undefined \def \showISBNxiii  #1{\unskip}     \fi
\ifx \showISSN     \undefined \def \showISSN      #1{\unskip}     \fi
\ifx \showLCCN     \undefined \def \showLCCN      #1{\unskip}     \fi
\ifx \shownote     \undefined \def \shownote      #1{#1}          \fi
\ifx \showarticletitle \undefined \def \showarticletitle #1{#1}   \fi
\ifx \showURL      \undefined \def \showURL       {\relax}        \fi
\providecommand\bibfield[2]{#2}
\providecommand\bibinfo[2]{#2}
\providecommand\natexlab[1]{#1}
\providecommand\showeprint[2][]{arXiv:#2}

\bibitem[\protect\citeauthoryear{Acar, Anderson, Blelloch, and Dhulipala}{Acar
  et~al\mbox{.}}{2019}]%
        {Acar2019}
\bibfield{author}{\bibinfo{person}{Umut~A. Acar}, \bibinfo{person}{Daniel
  Anderson}, \bibinfo{person}{Guy~E. Blelloch}, {and} \bibinfo{person}{Laxman
  Dhulipala}.} \bibinfo{year}{2019}\natexlab{}.
\newblock \showarticletitle{Parallel Batch-Dynamic Graph Connectivity}. In
  \bibinfo{booktitle}{\emph{ACM Symposium on Parallelism in Algorithms and
  Architectures (SPAA)}}. \bibinfo{pages}{381--392}.
\newblock


\bibitem[\protect\citeauthoryear{Alistarh, Fedorov, and Koval}{Alistarh
  et~al\mbox{.}}{2019}]%
        {alistarh2019search}
\bibfield{author}{\bibinfo{person}{Dan Alistarh}, \bibinfo{person}{Alexander
  Fedorov}, {and} \bibinfo{person}{Nikita Koval}.}
  \bibinfo{year}{2019}\natexlab{}.
\newblock \showarticletitle{In Search of the Fastest Concurrent Union-Find
  Algorithm}. In \bibinfo{booktitle}{\emph{International Conference on
  Principles of Distributed Systems, (OPODIS)}}. \bibinfo{pages}{15:1--15:16}.
\newblock


\bibitem[\protect\citeauthoryear{Almuhimedi, Wilson, Liu, Sadeh, and
  Acquisti}{Almuhimedi et~al\mbox{.}}{2013}]%
        {almuhimedi2013tweets}
\bibfield{author}{\bibinfo{person}{Hazim Almuhimedi}, \bibinfo{person}{Shomir
  Wilson}, \bibinfo{person}{Bin Liu}, \bibinfo{person}{Norman Sadeh}, {and}
  \bibinfo{person}{Alessandro Acquisti}.} \bibinfo{year}{2013}\natexlab{}.
\newblock \showarticletitle{Tweets are Forever: A Large-Scale Quantitative
  Analysis of Deleted Tweets}. In \bibinfo{booktitle}{\emph{Conference on
  Computer Supported Cooperative Work}}. \bibinfo{pages}{897--908}.
\newblock


\bibitem[\protect\citeauthoryear{Andoni, Song, Stein, Wang, and Zhong}{Andoni
  et~al\mbox{.}}{2018}]%
        {andoni2018parallel}
\bibfield{author}{\bibinfo{person}{Alexandr Andoni}, \bibinfo{person}{Zhao
  Song}, \bibinfo{person}{Clifford Stein}, \bibinfo{person}{Zhengyu Wang},
  {and} \bibinfo{person}{Peilin Zhong}.} \bibinfo{year}{2018}\natexlab{}.
\newblock \showarticletitle{Parallel Graph Connectivity in Log Diameter
  Rounds}. In \bibinfo{booktitle}{\emph{{IEEE} Symposium on Foundations of
  Computer Science (FOCS)}}. \bibinfo{pages}{674--685}.
\newblock


\bibitem[\protect\citeauthoryear{Awerbuch and Shiloach}{Awerbuch and
  Shiloach}{1987}]%
        {Awerbuch1987}
\bibfield{author}{\bibinfo{person}{Baruch Awerbuch} {and}
  \bibinfo{person}{Yossi Shiloach}.} \bibinfo{year}{1987}\natexlab{}.
\newblock \showarticletitle{New Connectivity and MSF Algorithms for
  Shuffle-Exchange Network and PRAM}.
\newblock \bibinfo{journal}{\emph{IEEE Trans. Comput.}} \bibinfo{volume}{C-36},
  \bibinfo{number}{10} (\bibinfo{year}{1987}), \bibinfo{pages}{1258--1263}.
\newblock


\bibitem[\protect\citeauthoryear{Bader and Cong}{Bader and Cong}{2005}]%
        {Bader2005b}
\bibfield{author}{\bibinfo{person}{David~A. Bader} {and}
  \bibinfo{person}{Guojing Cong}.} \bibinfo{year}{2005}\natexlab{}.
\newblock \showarticletitle{A Fast, Parallel Spanning Tree Algorithm for
  Symmetric Multiprocessors ({SMPs})}.
\newblock  \bibinfo{volume}{65}, \bibinfo{number}{9} (\bibinfo{year}{2005}),
  \bibinfo{pages}{994--1006}.
\newblock


\bibitem[\protect\citeauthoryear{Bader, Cong, and Feo}{Bader
  et~al\mbox{.}}{2005}]%
        {Bader2005a}
\bibfield{author}{\bibinfo{person}{David~A. Bader}, \bibinfo{person}{Guojing
  Cong}, {and} \bibinfo{person}{John Feo}.} \bibinfo{year}{2005}\natexlab{}.
\newblock \showarticletitle{On the Architectural Requirements for Efficient
  Execution of Graph Algorithms}. In \bibinfo{booktitle}{\emph{International
  Conference on Parallel Processing (ICPP)}}. \bibinfo{pages}{547--556}.
\newblock


\bibitem[\protect\citeauthoryear{Bader and JaJa}{Bader and JaJa}{1996}]%
        {BaderJ96}
\bibfield{author}{\bibinfo{person}{David~A. Bader} {and}
  \bibinfo{person}{Joseph JaJa}.} \bibinfo{year}{1996}\natexlab{}.
\newblock \showarticletitle{Parallel Algorithms for Image Histogramming and
  Connected Components with an Experimental Study}.
\newblock \bibinfo{journal}{\emph{J. Parallel Distrib. Comput.}}
  \bibinfo{volume}{35}, \bibinfo{number}{2} (\bibinfo{year}{1996}),
  \bibinfo{pages}{173--190}.
\newblock


\bibitem[\protect\citeauthoryear{Bader and Madduri}{Bader and Madduri}{2006}]%
        {bader2006gtgraph}
\bibfield{author}{\bibinfo{person}{David~A. Bader} {and}
  \bibinfo{person}{Kamesh Madduri}.} \bibinfo{year}{2006}\natexlab{}.
\newblock \showarticletitle{{GT}graph: A Synthetic Graph Generator Suite}.
\newblock \bibinfo{journal}{\emph{Atlanta, GA, February}}  \bibinfo{volume}{38}
  (\bibinfo{year}{2006}).
\newblock


\bibitem[\protect\citeauthoryear{Banerjee and Kothapalli}{Banerjee and
  Kothapalli}{2011}]%
        {Banerjee2011}
\bibfield{author}{\bibinfo{person}{Dip~Sankar Banerjee} {and}
  \bibinfo{person}{Kishore Kothapalli}.} \bibinfo{year}{2011}\natexlab{}.
\newblock \showarticletitle{Hybrid Algorithms for List Ranking and Graph
  Connected Components}. In \bibinfo{booktitle}{\emph{International Conference
  on High Performance Computing (HiPC)}}. \bibinfo{pages}{1--10}.
\newblock


\bibitem[\protect\citeauthoryear{Barab{\'a}si and Albert}{Barab{\'a}si and
  Albert}{1999}]%
        {barabasi1999emergence}
\bibfield{author}{\bibinfo{person}{Albert-L{\'a}szl{\'o} Barab{\'a}si} {and}
  \bibinfo{person}{R{\'e}ka Albert}.} \bibinfo{year}{1999}\natexlab{}.
\newblock \showarticletitle{Emergence of Scaling in Random Networks}.
\newblock \bibinfo{journal}{\emph{Science}} \bibinfo{volume}{286},
  \bibinfo{number}{5439} (\bibinfo{year}{1999}), \bibinfo{pages}{509--512}.
\newblock


\bibitem[\protect\citeauthoryear{Bateni, Behnezhad, Derakhshan, Hajiaghayi,
  Kiveris, Lattanzi, and Mirrokni}{Bateni et~al\mbox{.}}{2017}]%
        {AffinityCluster}
\bibfield{author}{\bibinfo{person}{MohammadHossein Bateni},
  \bibinfo{person}{Soheil Behnezhad}, \bibinfo{person}{Mahsa Derakhshan},
  \bibinfo{person}{MohammadTaghi Hajiaghayi}, \bibinfo{person}{Raimondas
  Kiveris}, \bibinfo{person}{Silvio Lattanzi}, {and} \bibinfo{person}{Vahab~S.
  Mirrokni}.} \bibinfo{year}{2017}\natexlab{}.
\newblock \showarticletitle{Affinity Clustering: Hierarchical Clustering at
  Scale}. In \bibinfo{booktitle}{\emph{{Advances in Neural Information
  Processing Systems (NeurIPS)}}}. \bibinfo{pages}{6867--6877}.
\newblock


\bibitem[\protect\citeauthoryear{Beamer, Asanovic, and Patterson}{Beamer
  et~al\mbox{.}}{2012}]%
        {Beamer12}
\bibfield{author}{\bibinfo{person}{Scott Beamer}, \bibinfo{person}{Krste
  Asanovic}, {and} \bibinfo{person}{David Patterson}.}
  \bibinfo{year}{2012}\natexlab{}.
\newblock \showarticletitle{Direction-Optimizing Breadth-First Search}. In
  \bibinfo{booktitle}{\emph{ACM/IEEE International Conference for High
  Performance Computing, Networking, Storage and Analysis (SC)}}.
  \bibinfo{pages}{12:1--12:10}.
\newblock


\bibitem[\protect\citeauthoryear{Beamer, Asanovic, and Patterson}{Beamer
  et~al\mbox{.}}{2015}]%
        {BeamerAP15}
\bibfield{author}{\bibinfo{person}{Scott Beamer}, \bibinfo{person}{Krste
  Asanovic}, {and} \bibinfo{person}{David~A. Patterson}.}
  \bibinfo{year}{2015}\natexlab{}.
\newblock \showarticletitle{The {GAP} Benchmark Suite}.
\newblock \bibinfo{journal}{\emph{CoRR}}  \bibinfo{volume}{abs/1508.03619}
  (\bibinfo{year}{2015}).
\newblock
\urldef\tempurl%
\url{http://arxiv.org/abs/1508.03619}
\showURL{%
\tempurl}


\bibitem[\protect\citeauthoryear{Behnezhad, Dhulipala, Esfandiari, Lacki, and
  Mirrokni}{Behnezhad et~al\mbox{.}}{2019a}]%
        {behnezhad2019near}
\bibfield{author}{\bibinfo{person}{Soheil Behnezhad}, \bibinfo{person}{Laxman
  Dhulipala}, \bibinfo{person}{Hossein Esfandiari}, \bibinfo{person}{Jakub
  Lacki}, {and} \bibinfo{person}{Vahab Mirrokni}.}
  \bibinfo{year}{2019}\natexlab{a}.
\newblock \showarticletitle{Near-Optimal Massively Parallel Graph
  Connectivity}. In \bibinfo{booktitle}{\emph{{IEEE} Symposium on Foundations
  of Computer Science (FOCS)}}. \bibinfo{pages}{1615--1636}.
\newblock


\bibitem[\protect\citeauthoryear{Behnezhad, Dhulipala, Esfandiari, Lacki,
  Mirrokni, and Schudy}{Behnezhad et~al\mbox{.}}{2019b}]%
        {behnezhad2019massively}
\bibfield{author}{\bibinfo{person}{Soheil Behnezhad}, \bibinfo{person}{Laxman
  Dhulipala}, \bibinfo{person}{Hossein Esfandiari}, \bibinfo{person}{Jakub
  Lacki}, \bibinfo{person}{Vahab Mirrokni}, {and} \bibinfo{person}{Warren
  Schudy}.} \bibinfo{year}{2019}\natexlab{b}.
\newblock \showarticletitle{Massively Parallel Computation via Remote Memory
  Access}. In \bibinfo{booktitle}{\emph{{ACM} Symposium on Parallelism in
  Algorithms and Architectures (SPAA)}}. \bibinfo{pages}{59--68}.
\newblock


\bibitem[\protect\citeauthoryear{Behnezhad, Dhulipala, Esfandiari, Lacki,
  Mirrokni, and Schudy}{Behnezhad et~al\mbox{.}}{2020}]%
        {behnezhad2020massively}
\bibfield{author}{\bibinfo{person}{Soheil Behnezhad}, \bibinfo{person}{Laxman
  Dhulipala}, \bibinfo{person}{Hossein Esfandiari}, \bibinfo{person}{Jakub
  Lacki}, \bibinfo{person}{Vahab Mirrokni}, {and} \bibinfo{person}{Warren
  Schudy}.} \bibinfo{year}{2020}\natexlab{}.
\newblock \showarticletitle{{Parallel Graph Algorithms in Constant Adaptive
  Rounds: Theory Meets Practice}}.
\newblock \bibinfo{journal}{\emph{PVLDB}} \bibinfo{volume}{13},
  \bibinfo{number}{13} (\bibinfo{year}{2020}), \bibinfo{pages}{3588--3602}.
\newblock


\bibitem[\protect\citeauthoryear{Ben-Nun, Sutton, Pai, and Pingali}{Ben-Nun
  et~al\mbox{.}}{2017}]%
        {Ben-Nun2017}
\bibfield{author}{\bibinfo{person}{Tal Ben-Nun}, \bibinfo{person}{Michael
  Sutton}, \bibinfo{person}{Sreepathi Pai}, {and} \bibinfo{person}{Keshav
  Pingali}.} \bibinfo{year}{2017}\natexlab{}.
\newblock \showarticletitle{Groute: An Asynchronous Multi-GPU Programming Model
  for Irregular Computations}. In \bibinfo{booktitle}{\emph{ACM SIGPLAN
  Symposium on Principles and Practice of Parallel Programming (PPoPP)}}.
  \bibinfo{pages}{235--248}.
\newblock


\bibitem[\protect\citeauthoryear{Blelloch, Anderson, and Dhulipala}{Blelloch
  et~al\mbox{.}}{2020}]%
        {blelloch2020parlay}
\bibfield{author}{\bibinfo{person}{Guy~E. Blelloch}, \bibinfo{person}{Daniel
  Anderson}, {and} \bibinfo{person}{Laxman Dhulipala}.}
  \bibinfo{year}{2020}\natexlab{}.
\newblock \showarticletitle{{{ParlayLib} --- A Toolkit for Parallel Algorithms
  on Shared-Memory Multicore Machines}}. In \bibinfo{booktitle}{\emph{{ACM}
  Symposium on Parallelism in Algorithms and Architectures (SPAA)}}.
  \bibinfo{pages}{507--509}.
\newblock


\bibitem[\protect\citeauthoryear{Blelloch, Ferizovic, and Sun}{Blelloch
  et~al\mbox{.}}{2016}]%
        {Blelloch2016}
\bibfield{author}{\bibinfo{person}{Guy~E. Blelloch}, \bibinfo{person}{Daniel
  Ferizovic}, {and} \bibinfo{person}{Yihan Sun}.}
  \bibinfo{year}{2016}\natexlab{}.
\newblock \showarticletitle{Just Join for Parallel Ordered Sets}. In
  \bibinfo{booktitle}{\emph{ACM Symposium on Parallelism in Algorithms and
  Architectures (SPAA)}}. \bibinfo{pages}{253--264}.
\newblock


\bibitem[\protect\citeauthoryear{Blelloch, Fineman, Gibbons, and Shun}{Blelloch
  et~al\mbox{.}}{2012}]%
        {BFGS12}
\bibfield{author}{\bibinfo{person}{Guy~E. Blelloch}, \bibinfo{person}{Jeremy~T.
  Fineman}, \bibinfo{person}{Phillip~B. Gibbons}, {and} \bibinfo{person}{Julian
  Shun}.} \bibinfo{year}{2012}\natexlab{}.
\newblock \showarticletitle{Internally Deterministic Parallel Algorithms can be
  Fast}. In \bibinfo{booktitle}{\emph{ACM SIGPLAN Symposium on Proceedings of
  Principles and Practice of Parallel Programming (PPoPP)}}.
  \bibinfo{pages}{181--192}.
\newblock


\bibitem[\protect\citeauthoryear{Boldi and Vigna}{Boldi and Vigna}{2004}]%
        {boldi2004webgraph}
\bibfield{author}{\bibinfo{person}{Paolo Boldi} {and}
  \bibinfo{person}{Sebastiano Vigna}.} \bibinfo{year}{2004}\natexlab{}.
\newblock \showarticletitle{The {W}eb{G}raph Framework {I}: {C}ompression
  Techniques}. In \bibinfo{booktitle}{\emph{International World Wide Web
  Conference (WWW)}}. \bibinfo{pages}{595--601}.
\newblock


\bibitem[\protect\citeauthoryear{Bus and Tvrdik}{Bus and Tvrdik}{2001}]%
        {Bus01}
\bibfield{author}{\bibinfo{person}{Libor Bus} {and} \bibinfo{person}{Pavel
  Tvrdik}.} \bibinfo{year}{2001}\natexlab{}.
\newblock \showarticletitle{A Parallel Algorithm for Connected Components on
  Distributed Memory Machines}.
\newblock In \bibinfo{booktitle}{\emph{Recent Advances in Parallel Virtual
  Machine and Message Passing Interface}}. \bibinfo{pages}{280--287}.
\newblock


\bibitem[\protect\citeauthoryear{Caceres, Dehne, Mongelli, Song, and
  Szwarcfiter}{Caceres et~al\mbox{.}}{2004}]%
        {Caceres04}
\bibfield{author}{\bibinfo{person}{E.Norberto Caceres}, \bibinfo{person}{Frank
  Dehne}, \bibinfo{person}{Henrique Mongelli}, \bibinfo{person}{Siang~W. Song},
  {and} \bibinfo{person}{Jayme~L. Szwarcfiter}.}
  \bibinfo{year}{2004}\natexlab{}.
\newblock \showarticletitle{A Coarse-Grained Parallel Algorithm for Spanning
  Tree and Connected Components}.
\newblock In \bibinfo{booktitle}{\emph{{European Conference on Parallel
  Processing (Euro-Par)}}}. \bibinfo{pages}{828--831}.
\newblock


\bibitem[\protect\citeauthoryear{Chin, Lam, and Chen}{Chin
  et~al\mbox{.}}{1982}]%
        {Chin1982}
\bibfield{author}{\bibinfo{person}{Francis~Y. Chin}, \bibinfo{person}{John
  Lam}, {and} \bibinfo{person}{I-Ngo Chen}.} \bibinfo{year}{1982}\natexlab{}.
\newblock \showarticletitle{Efficient Parallel Algorithms for Some Graph
  Problems}.
\newblock \bibinfo{journal}{\emph{Commun. ACM}} \bibinfo{volume}{25},
  \bibinfo{number}{9} (\bibinfo{year}{1982}), \bibinfo{pages}{659--665}.
\newblock


\bibitem[\protect\citeauthoryear{Chong and Lam}{Chong and Lam}{1995}]%
        {Chong95}
\bibfield{author}{\bibinfo{person}{Ka~W. Chong} {and} \bibinfo{person}{Tak~W.
  Lam}.} \bibinfo{year}{1995}\natexlab{}.
\newblock \showarticletitle{Finding Connected Components in {$O(\log n \log
  \log n)$} Time on the {EREW PRAM}}.
\newblock \bibinfo{journal}{\emph{J. Algorithms}} \bibinfo{volume}{18},
  \bibinfo{number}{3} (\bibinfo{year}{1995}), \bibinfo{pages}{378--402}.
\newblock


\bibitem[\protect\citeauthoryear{Cole and Vishkin}{Cole and Vishkin}{1991}]%
        {Cole1991}
\bibfield{author}{\bibinfo{person}{Richard Cole} {and} \bibinfo{person}{Uzi
  Vishkin}.} \bibinfo{year}{1991}\natexlab{}.
\newblock \showarticletitle{Approximate Parallel Scheduling. {II}. Applications
  to Logarithmic-Time Optimal Parallel Graph Algorithms}.
\newblock \bibinfo{journal}{\emph{Information and Computation}}
  \bibinfo{volume}{92}, \bibinfo{number}{1} (\bibinfo{year}{1991}),
  \bibinfo{pages}{1--47}.
\newblock


\bibitem[\protect\citeauthoryear{Cong and Muzio}{Cong and Muzio}{2014}]%
        {Cong2014}
\bibfield{author}{\bibinfo{person}{Guojing Cong} {and} \bibinfo{person}{Paul
  Muzio}.} \bibinfo{year}{2014}\natexlab{}.
\newblock \showarticletitle{Fast Parallel Connected Components Algorithms on
  GPUs}. In \bibinfo{booktitle}{\emph{{European Conference on Parallel
  Processing (Euro-Par)}}}. \bibinfo{pages}{153--164}.
\newblock


\bibitem[\protect\citeauthoryear{Cormen, Leiserson, Rivest, and Stein}{Cormen
  et~al\mbox{.}}{2009}]%
        {CLRS}
\bibfield{author}{\bibinfo{person}{Thomas~H. Cormen},
  \bibinfo{person}{Charles~E. Leiserson}, \bibinfo{person}{Ronald~L. Rivest},
  {and} \bibinfo{person}{Clifford Stein}.} \bibinfo{year}{2009}\natexlab{}.
\newblock \bibinfo{booktitle}{\emph{Introduction to Algorithms (3. ed.)}}.
\newblock \bibinfo{publisher}{MIT Press}.
\newblock


\bibitem[\protect\citeauthoryear{Dathathri, Gill, Hoang, Dang, Brooks, Dryden,
  Snir, and Pingali}{Dathathri et~al\mbox{.}}{2018}]%
        {Dathathri2018}
\bibfield{author}{\bibinfo{person}{Roshan Dathathri},
  \bibinfo{person}{Gurbinder Gill}, \bibinfo{person}{Loc Hoang},
  \bibinfo{person}{Hoang-Vu Dang}, \bibinfo{person}{Alex Brooks},
  \bibinfo{person}{Nikoli Dryden}, \bibinfo{person}{Marc Snir}, {and}
  \bibinfo{person}{Keshav Pingali}.} \bibinfo{year}{2018}\natexlab{}.
\newblock \showarticletitle{Gluon: A Communication-optimizing Substrate for
  Distributed Heterogeneous Graph Analytics}. In \bibinfo{booktitle}{\emph{ACM
  SIGPLAN Conference on Programming Language Design and Implementation
  (PLDI)}}. \bibinfo{pages}{752--768}.
\newblock


\bibitem[\protect\citeauthoryear{Demetrescu, Goldberg, and Johnson}{Demetrescu
  et~al\mbox{.}}{2019}]%
        {road-graph}
\bibfield{author}{\bibinfo{person}{Camil Demetrescu}, \bibinfo{person}{Andrew
  Goldberg}, {and} \bibinfo{person}{David Johnson}.}
  \bibinfo{year}{2019}\natexlab{}.
\newblock \bibinfo{title}{9th DIMACS Implementation Challenge: Shortest Paths}.
\newblock \bibinfo{howpublished}{http://www.dis.uniroma1.it/challenge9/}.
\newblock


\bibitem[\protect\citeauthoryear{Dhulipala, Blelloch, and Shun}{Dhulipala
  et~al\mbox{.}}{2018}]%
        {DhBlSh18}
\bibfield{author}{\bibinfo{person}{Laxman Dhulipala}, \bibinfo{person}{Guy~E.
  Blelloch}, {and} \bibinfo{person}{Julian Shun}.}
  \bibinfo{year}{2018}\natexlab{}.
\newblock \showarticletitle{Theoretically Efficient Parallel Graph Algorithms
  can be Fast and Scalable}. In \bibinfo{booktitle}{\emph{ACM Symposium on
  Parallelism in Algorithms and Architectures (SPAA)}}.
  \bibinfo{pages}{393--404}.
\newblock


\bibitem[\protect\citeauthoryear{Dhulipala, Durfee, Kulkarni, Peng, Sawlani,
  and Sun}{Dhulipala et~al\mbox{.}}{2020a}]%
        {dhulipala2020parallel}
\bibfield{author}{\bibinfo{person}{Laxman Dhulipala}, \bibinfo{person}{David
  Durfee}, \bibinfo{person}{Janardhan Kulkarni}, \bibinfo{person}{Richard
  Peng}, \bibinfo{person}{Saurabh Sawlani}, {and} \bibinfo{person}{Xiaorui
  Sun}.} \bibinfo{year}{2020}\natexlab{a}.
\newblock \showarticletitle{Parallel Batch-Dynamic Graphs: Algorithms and Lower
  Bounds}. In \bibinfo{booktitle}{\emph{{ACM-SIAM} Symposium on Discrete
  Algorithms (SODA)}}. \bibinfo{pages}{1300--1319}.
\newblock


\bibitem[\protect\citeauthoryear{Dhulipala, McGuffey, Kang, Gu, Blelloch,
  Gibbons, and Shun}{Dhulipala et~al\mbox{.}}{2020b}]%
        {dhulipala2020semi}
\bibfield{author}{\bibinfo{person}{Laxman Dhulipala}, \bibinfo{person}{Charlie
  McGuffey}, \bibinfo{person}{Hongbo Kang}, \bibinfo{person}{Yan Gu},
  \bibinfo{person}{Guy~E. Blelloch}, \bibinfo{person}{Phillip~B. Gibbons},
  {and} \bibinfo{person}{Julian Shun}.} \bibinfo{year}{2020}\natexlab{b}.
\newblock \showarticletitle{Sage: Parallel Semi-Asymmetric Graph Algorithms for
  NVRAMs}.
\newblock \bibinfo{journal}{\emph{{PVLDB}}} \bibinfo{volume}{13},
  \bibinfo{number}{9} (\bibinfo{year}{2020}), \bibinfo{pages}{1598--1613}.
\newblock


\bibitem[\protect\citeauthoryear{Dhulipala, Shi, Tseng, Blelloch, and
  Shun}{Dhulipala et~al\mbox{.}}{2020c}]%
        {dhulipala20grades}
\bibfield{author}{\bibinfo{person}{Laxman Dhulipala}, \bibinfo{person}{Jessica
  Shi}, \bibinfo{person}{Tom Tseng}, \bibinfo{person}{Guy~E. Blelloch}, {and}
  \bibinfo{person}{Julian Shun}.} \bibinfo{year}{2020}\natexlab{c}.
\newblock \showarticletitle{The Graph Based Benchmark Suite {(GBBS)}}. In
  \bibinfo{booktitle}{\emph{International Workshop on Graph Data Management
  Experiences and Systems (GRADES) and Network Data Analytics (NDA)}}.
  \bibinfo{pages}{11:1--11:8}.
\newblock


\bibitem[\protect\citeauthoryear{Dijkstra}{Dijkstra}{1976}]%
        {dijkstra1976discipline}
\bibfield{author}{\bibinfo{person}{Edsger~W. Dijkstra}.}
  \bibinfo{year}{1976}\natexlab{}.
\newblock \bibinfo{booktitle}{\emph{A Discipline of Programming}}.
\newblock \bibinfo{publisher}{Prentice-Hall}.
\newblock


\bibitem[\protect\citeauthoryear{Ediger, McColl, Riedy, and Bader}{Ediger
  et~al\mbox{.}}{2012}]%
        {Ediger12}
\bibfield{author}{\bibinfo{person}{David Ediger}, \bibinfo{person}{Rob McColl},
  \bibinfo{person}{Jason Riedy}, {and} \bibinfo{person}{David~A. Bader}.}
  \bibinfo{year}{2012}\natexlab{}.
\newblock \showarticletitle{{STINGER}: High Performance Data Structure for
  Streaming Graphs}. In \bibinfo{booktitle}{\emph{IEEE Conference on High
  Performance Extreme Computing (HPEC)}}. \bibinfo{pages}{1--5}.
\newblock


\bibitem[\protect\citeauthoryear{Ester, Kriegel, Sander, and Xu}{Ester
  et~al\mbox{.}}{1996}]%
        {Ester1996}
\bibfield{author}{\bibinfo{person}{Martin Ester}, \bibinfo{person}{Hans-Peter
  Kriegel}, \bibinfo{person}{J\"{o}rg Sander}, {and} \bibinfo{person}{Xiaowei
  Xu}.} \bibinfo{year}{1996}\natexlab{}.
\newblock \showarticletitle{A Density-based Algorithm for Discovering Clusters
  in Large Spatial Databases with Noise}. In
  \bibinfo{booktitle}{\emph{International Conference on Knowledge Discovery and
  Data Mining (KDD)}}. \bibinfo{pages}{226--231}.
\newblock


\bibitem[\protect\citeauthoryear{Fang, Cheng, Luo, and Hu}{Fang
  et~al\mbox{.}}{2016}]%
        {Fang2016}
\bibfield{author}{\bibinfo{person}{Yixiang Fang}, \bibinfo{person}{Reynold
  Cheng}, \bibinfo{person}{Siqiang Luo}, {and} \bibinfo{person}{Jiafeng Hu}.}
  \bibinfo{year}{2016}\natexlab{}.
\newblock \showarticletitle{Effective Community Search for Large Attributed
  Graphs}.
\newblock \bibinfo{journal}{\emph{PVLDB}} \bibinfo{volume}{9},
  \bibinfo{number}{12} (\bibinfo{year}{2016}), \bibinfo{pages}{1233–1244}.
\newblock


\bibitem[\protect\citeauthoryear{Feng, Chang, Lin, Qin, Zhang, and Yuan}{Feng
  et~al\mbox{.}}{2018}]%
        {Feng2018}
\bibfield{author}{\bibinfo{person}{Xing Feng}, \bibinfo{person}{Lijun Chang},
  \bibinfo{person}{Xuemin Lin}, \bibinfo{person}{Lu Qin},
  \bibinfo{person}{Wenjie Zhang}, {and} \bibinfo{person}{Long Yuan}.}
  \bibinfo{year}{2018}\natexlab{}.
\newblock \showarticletitle{Distributed Computing Connected Components with
  Linear Communication Cost}.
\newblock \bibinfo{journal}{\emph{Distributed and Parallel Databases}}
  \bibinfo{volume}{36}, \bibinfo{number}{3} (\bibinfo{year}{2018}),
  \bibinfo{pages}{555--592}.
\newblock


\bibitem[\protect\citeauthoryear{Gan and Tao}{Gan and Tao}{2015}]%
        {Gan2015}
\bibfield{author}{\bibinfo{person}{Junhao Gan} {and} \bibinfo{person}{Yufei
  Tao}.} \bibinfo{year}{2015}\natexlab{}.
\newblock \showarticletitle{{DBSCAN} Revisited: Mis-Claim, Un-Fixability, and
  Approximation}. In \bibinfo{booktitle}{\emph{ACM International Conference on
  Management of Data (SIGMOD)}}. \bibinfo{pages}{519--530}.
\newblock


\bibitem[\protect\citeauthoryear{Gazit}{Gazit}{1991}]%
        {Gazit1991}
\bibfield{author}{\bibinfo{person}{Hillel Gazit}.}
  \bibinfo{year}{1991}\natexlab{}.
\newblock \showarticletitle{An Optimal Randomized Parallel Algorithm for
  Finding Connected Components in a Graph}.
\newblock \bibinfo{journal}{\emph{SIAM J. Comput.}} \bibinfo{volume}{20},
  \bibinfo{number}{6} (\bibinfo{year}{1991}), \bibinfo{pages}{1046--1067}.
\newblock


\bibitem[\protect\citeauthoryear{Gill, Dathathri, Hoang, Peri, and
  Pingali}{Gill et~al\mbox{.}}{2020}]%
        {Gill2020}
\bibfield{author}{\bibinfo{person}{Gurbinder Gill}, \bibinfo{person}{Roshan
  Dathathri}, \bibinfo{person}{Loc Hoang}, \bibinfo{person}{Ramesh Peri}, {and}
  \bibinfo{person}{Keshav Pingali}.} \bibinfo{year}{2020}\natexlab{}.
\newblock \showarticletitle{Single Machine Graph Analytics on Massive Datasets
  using Intel Optane {DC} Persistent Memory}.
\newblock \bibinfo{journal}{\emph{{PVLDB}}} \bibinfo{volume}{13},
  \bibinfo{number}{8} (\bibinfo{year}{2020}), \bibinfo{pages}{1304--13}.
\newblock


\bibitem[\protect\citeauthoryear{Giraph}{Giraph}{2012}]%
        {Giraph}
Giraph \bibinfo{year}{2012}\natexlab{}.
\newblock \bibinfo{title}{Giraph}.
\newblock \bibinfo{howpublished}{\url{http://giraph.apache.org}}.
\newblock


\bibitem[\protect\citeauthoryear{Greiner}{Greiner}{1994}]%
        {Greiner94}
\bibfield{author}{\bibinfo{person}{John Greiner}.}
  \bibinfo{year}{1994}\natexlab{}.
\newblock \showarticletitle{A Comparison of Parallel Algorithms for Connected
  Components}. In \bibinfo{booktitle}{\emph{ACM Symposium on Parallelism in
  Algorithms and Architectures (SPAA)}}. \bibinfo{pages}{16--25}.
\newblock


\bibitem[\protect\citeauthoryear{Gu, Shun, Sun, and Blelloch}{Gu
  et~al\mbox{.}}{2015}]%
        {gu15semisort}
\bibfield{author}{\bibinfo{person}{Yan Gu}, \bibinfo{person}{Julian Shun},
  \bibinfo{person}{Yihan Sun}, {and} \bibinfo{person}{Guy~E. Blelloch}.}
  \bibinfo{year}{2015}\natexlab{}.
\newblock \showarticletitle{A Top-Down Parallel Semisort}. In
  \bibinfo{booktitle}{\emph{ACM Symposium on Parallelism in Algorithms and
  Architectures (SPAA)}}. \bibinfo{pages}{24--34}.
\newblock


\bibitem[\protect\citeauthoryear{Halperin and Zwick}{Halperin and
  Zwick}{1996}]%
        {HalperinZ94}
\bibfield{author}{\bibinfo{person}{Shay Halperin} {and} \bibinfo{person}{Uri
  Zwick}.} \bibinfo{year}{1996}\natexlab{}.
\newblock \showarticletitle{An Optimal Randomized Logarithmic Time Connectivity
  Algorithm for the {EREW PRAM}}.
\newblock \bibinfo{journal}{\emph{J. Comput. Syst. Sci.}} \bibinfo{volume}{53},
  \bibinfo{number}{3} (\bibinfo{year}{1996}), \bibinfo{pages}{395--416}.
\newblock


\bibitem[\protect\citeauthoryear{Hambrusch and TeWinkel}{Hambrusch and
  TeWinkel}{1988}]%
        {Hambrusch}
\bibfield{author}{\bibinfo{person}{Susanne~E Hambrusch} {and}
  \bibinfo{person}{Lynn TeWinkel}.} \bibinfo{year}{1988}\natexlab{}.
\newblock \showarticletitle{A Study of Connected Component Labeling Algorithms
  on the {MPP}}. In \bibinfo{booktitle}{\emph{International Conference on
  Supercomputing (ICS)}}. \bibinfo{pages}{477--483}.
\newblock


\bibitem[\protect\citeauthoryear{Han and Wagner}{Han and Wagner}{1990}]%
        {Han1990}
\bibfield{author}{\bibinfo{person}{Yujie Han} {and} \bibinfo{person}{Robert~A.
  Wagner}.} \bibinfo{year}{1990}\natexlab{}.
\newblock \showarticletitle{An Efficient and Fast Parallel-Connected Component
  Algorithm}.
\newblock \bibinfo{journal}{\emph{J. ACM}} \bibinfo{volume}{37},
  \bibinfo{number}{3} (\bibinfo{year}{1990}), \bibinfo{pages}{626--642}.
\newblock


\bibitem[\protect\citeauthoryear{Hawick, Leist, and Playne}{Hawick
  et~al\mbox{.}}{2010}]%
        {Hawick2010}
\bibfield{author}{\bibinfo{person}{Kenneth~A. Hawick}, \bibinfo{person}{Arno
  Leist}, {and} \bibinfo{person}{Daniel~P. Playne}.}
  \bibinfo{year}{2010}\natexlab{}.
\newblock \showarticletitle{Parallel Graph Component Labelling with {GPUs and
  CUDA}}.
\newblock \bibinfo{journal}{\emph{Parallel Comput.}} \bibinfo{volume}{36},
  \bibinfo{number}{12} (\bibinfo{year}{2010}), \bibinfo{pages}{655--678}.
\newblock


\bibitem[\protect\citeauthoryear{Herlihy and Shavit}{Herlihy and
  Shavit}{2012}]%
        {HS}
\bibfield{author}{\bibinfo{person}{Maurice Herlihy} {and} \bibinfo{person}{Nir
  Shavit}.} \bibinfo{year}{2012}\natexlab{}.
\newblock \bibinfo{booktitle}{\emph{The Art of Multiprocessor Programming}}.
\newblock \bibinfo{publisher}{Morgan Kaufmann}.
\newblock


\bibitem[\protect\citeauthoryear{{Herlihy} and {Wing}}{{Herlihy} and
  {Wing}}{1990}]%
        {Herlihy90}
\bibfield{author}{\bibinfo{person}{Maurice~P. {Herlihy}} {and}
  \bibinfo{person}{Jeanette~M. {Wing}}.} \bibinfo{year}{1990}\natexlab{}.
\newblock \showarticletitle{Linearizability: A Correctness Condition for
  Concurrent Objects}.
\newblock \bibinfo{journal}{\emph{ACM Transactions on Programming Languages and
  Systems}} \bibinfo{volume}{12}, \bibinfo{number}{3} (\bibinfo{year}{1990}),
  \bibinfo{pages}{463--492}.
\newblock


\bibitem[\protect\citeauthoryear{Hirschberg, Chandra, and Sarwate}{Hirschberg
  et~al\mbox{.}}{1979}]%
        {Hirschberg1979}
\bibfield{author}{\bibinfo{person}{Daniel~S. Hirschberg},
  \bibinfo{person}{Ashok~K. Chandra}, {and} \bibinfo{person}{Dilip~V.
  Sarwate}.} \bibinfo{year}{1979}\natexlab{}.
\newblock \showarticletitle{Computing Connected Components on Parallel
  Computers}.
\newblock \bibinfo{journal}{\emph{Commun. ACM}} \bibinfo{volume}{22},
  \bibinfo{number}{8} (\bibinfo{year}{1979}), \bibinfo{pages}{461--464}.
\newblock


\bibitem[\protect\citeauthoryear{Holm, King, Thorup, Zamir, and Zwick}{Holm
  et~al\mbox{.}}{2019}]%
        {holm2019kout}
\bibfield{author}{\bibinfo{person}{Jacob Holm}, \bibinfo{person}{Valerie King},
  \bibinfo{person}{Mikkel Thorup}, \bibinfo{person}{Or Zamir}, {and}
  \bibinfo{person}{Uri Zwick}.} \bibinfo{year}{2019}\natexlab{}.
\newblock \showarticletitle{Random $k$-Out Subgraph Leaves Only $O(n/k)$
  Inter-component Edges}. In \bibinfo{booktitle}{\emph{2019 IEEE 60th Annual
  Symposium on Foundations of Computer Science (FOCS)}}. IEEE,
  \bibinfo{pages}{896--909}.
\newblock


\bibitem[\protect\citeauthoryear{Hong, Dhulipala, and Shun}{Hong
  et~al\mbox{.}}{2020}]%
        {HongDS20}
\bibfield{author}{\bibinfo{person}{Changwan Hong}, \bibinfo{person}{Laxman
  Dhulipala}, {and} \bibinfo{person}{Julian Shun}.}
  \bibinfo{year}{2020}\natexlab{}.
\newblock \showarticletitle{Exploring the Design Space of Static and
  Incremental Graph Connectivity Algorithms on {GPUs}}. In
  \bibinfo{booktitle}{\emph{International Conference on Parallel Architectures
  and Compilation Techniques (PACT)}}. \bibinfo{pages}{55--69}.
\newblock


\bibitem[\protect\citeauthoryear{Hsu, Ramachandran, and Dean}{Hsu
  et~al\mbox{.}}{1997}]%
        {Hsu97}
\bibfield{author}{\bibinfo{person}{Tsan-Sheng Hsu}, \bibinfo{person}{Vijaya
  Ramachandran}, {and} \bibinfo{person}{Nathaniel Dean}.}
  \bibinfo{year}{1997}\natexlab{}.
\newblock \showarticletitle{Parallel Implementation of Algorithms for Finding
  Connected Components in Graphs}. In \bibinfo{booktitle}{\emph{Parallel
  Algorithms: 3rd DIMACS Implementation Challenge}}. \bibinfo{pages}{23--41}.
\newblock


\bibitem[\protect\citeauthoryear{Iverson, Kamath, and Karypis}{Iverson
  et~al\mbox{.}}{2015}]%
        {Iverson2015}
\bibfield{author}{\bibinfo{person}{Jeremy Iverson}, \bibinfo{person}{Chandrika
  Kamath}, {and} \bibinfo{person}{George Karypis}.}
  \bibinfo{year}{2015}\natexlab{}.
\newblock \showarticletitle{Evaluation of Connected-Component Labeling
  Algorithms for Distributed-Memory Systems}.
\newblock \bibinfo{journal}{\emph{Parallel Comput.}}  \bibinfo{volume}{44}
  (\bibinfo{year}{2015}), \bibinfo{pages}{53--68}.
\newblock


\bibitem[\protect\citeauthoryear{Iwama and Kambayashi}{Iwama and
  Kambayashi}{1994}]%
        {Iwama1994}
\bibfield{author}{\bibinfo{person}{Kazuo Iwama} {and} \bibinfo{person}{Yahiko
  Kambayashi}.} \bibinfo{year}{1994}\natexlab{}.
\newblock \showarticletitle{A Simpler Parallel Algorithm for Graph
  Connectivity}.
\newblock \bibinfo{journal}{\emph{J. Algorithms}} \bibinfo{volume}{16},
  \bibinfo{number}{2} (\bibinfo{year}{1994}), \bibinfo{pages}{190--217}.
\newblock


\bibitem[\protect\citeauthoryear{Jaiganesh and Burtscher}{Jaiganesh and
  Burtscher}{2018}]%
        {Jaiganesh2018}
\bibfield{author}{\bibinfo{person}{Jayadharini Jaiganesh} {and}
  \bibinfo{person}{Martin Burtscher}.} \bibinfo{year}{2018}\natexlab{}.
\newblock \showarticletitle{A High-performance Connected Components
  Implementation for GPUs}. In \bibinfo{booktitle}{\emph{International
  Symposium on High-Performance Parallel and Distributed Computing (HPDC)}}.
  \bibinfo{pages}{92--104}.
\newblock


\bibitem[\protect\citeauthoryear{Jain, Flick, Pan, Green, and Aluru}{Jain
  et~al\mbox{.}}{2017}]%
        {Jain2017}
\bibfield{author}{\bibinfo{person}{Chirag Jain}, \bibinfo{person}{Patrick
  Flick}, \bibinfo{person}{Tony Pan}, \bibinfo{person}{Oded Green}, {and}
  \bibinfo{person}{Srinivas Aluru}.} \bibinfo{year}{2017}\natexlab{}.
\newblock \showarticletitle{An Adaptive Parallel Algorithm for Computing
  Connected Components}.
\newblock \bibinfo{journal}{\emph{IEEE Transactions on Parallel and Distributed
  Systems}} \bibinfo{volume}{28}, \bibinfo{number}{9} (\bibinfo{year}{2017}),
  \bibinfo{pages}{2428--2439}.
\newblock


\bibitem[\protect\citeauthoryear{Jaja}{Jaja}{1992}]%
        {JaJa92}
\bibfield{author}{\bibinfo{person}{J. Jaja}.} \bibinfo{year}{1992}\natexlab{}.
\newblock \bibinfo{booktitle}{\emph{Introduction to Parallel Algorithms}}.
\newblock \bibinfo{publisher}{Addison-Wesley Professional}.
\newblock


\bibitem[\protect\citeauthoryear{Jayanti and Tarjan}{Jayanti and
  Tarjan}{2016}]%
        {Jayanti2016}
\bibfield{author}{\bibinfo{person}{Siddhartha~V. Jayanti} {and}
  \bibinfo{person}{Robert~E. Tarjan}.} \bibinfo{year}{2016}\natexlab{}.
\newblock \showarticletitle{A Randomized Concurrent Algorithm for Disjoint Set
  Union}. In \bibinfo{booktitle}{\emph{ACM Symposium on Principles of
  Distributed Computing (PODC)}}. \bibinfo{pages}{75--82}.
\newblock


\bibitem[\protect\citeauthoryear{Jayanti, Tarjan, and
  Boix{-}Adser{\`{a}}}{Jayanti et~al\mbox{.}}{2019}]%
        {JayantiTB19}
\bibfield{author}{\bibinfo{person}{Siddhartha~V. Jayanti},
  \bibinfo{person}{Robert~E. Tarjan}, {and} \bibinfo{person}{Enric
  Boix{-}Adser{\`{a}}}.} \bibinfo{year}{2019}\natexlab{}.
\newblock \showarticletitle{Randomized Concurrent Set Union and Generalized
  Wake-Up}. In \bibinfo{booktitle}{\emph{{ACM} Symposium on Principles of
  Distributed Computing (PODC)}}. \bibinfo{pages}{187--196}.
\newblock


\bibitem[\protect\citeauthoryear{Johnson and Metaxas}{Johnson and
  Metaxas}{1997}]%
        {Johnson}
\bibfield{author}{\bibinfo{person}{Donald~B. Johnson} {and}
  \bibinfo{person}{Panagiotis Metaxas}.} \bibinfo{year}{1997}\natexlab{}.
\newblock \showarticletitle{Connected Components in {$O(\log^{3/2} n)$}
  Parallel Time for the {CREW PRAM}}.
\newblock \bibinfo{journal}{\emph{J. Comput. System Sci.}}
  \bibinfo{volume}{54}, \bibinfo{number}{2} (\bibinfo{year}{1997}),
  \bibinfo{pages}{227--242}.
\newblock


\bibitem[\protect\citeauthoryear{Karger, Nisan, and Parnas}{Karger
  et~al\mbox{.}}{1999}]%
        {Karger1999}
\bibfield{author}{\bibinfo{person}{David~R. Karger}, \bibinfo{person}{Noam
  Nisan}, {and} \bibinfo{person}{Michal Parnas}.}
  \bibinfo{year}{1999}\natexlab{}.
\newblock \showarticletitle{Fast Connected Components Algorithms for the {EREW
  PRAM}}.
\newblock \bibinfo{journal}{\emph{SIAM J. Comput.}} \bibinfo{volume}{28},
  \bibinfo{number}{3} (\bibinfo{year}{1999}), \bibinfo{pages}{1021--1034}.
\newblock


\bibitem[\protect\citeauthoryear{Kiveris, Lattanzi, Mirrokni, Rastogi, and
  Vassilvitskii}{Kiveris et~al\mbox{.}}{2014}]%
        {Kiveris2014}
\bibfield{author}{\bibinfo{person}{Raimondas Kiveris}, \bibinfo{person}{Silvio
  Lattanzi}, \bibinfo{person}{Vahab Mirrokni}, \bibinfo{person}{Vibhor
  Rastogi}, {and} \bibinfo{person}{Sergei Vassilvitskii}.}
  \bibinfo{year}{2014}\natexlab{}.
\newblock \showarticletitle{Connected Components in MapReduce and Beyond}. In
  \bibinfo{booktitle}{\emph{Proceedings of the ACM Symposium on Cloud Computing
  (SOCC)}}. \bibinfo{pages}{18:1--18:13}.
\newblock


\bibitem[\protect\citeauthoryear{Koubek and Krsnakova}{Koubek and
  Krsnakova}{1985}]%
        {Koubek85}
\bibfield{author}{\bibinfo{person}{Vaclav Koubek} {and} \bibinfo{person}{Jana
  Krsnakova}.} \bibinfo{year}{1985}\natexlab{}.
\newblock \showarticletitle{Parallel Algorithms for Connected Components in a
  Graph}.
\newblock In \bibinfo{booktitle}{\emph{Fundamentals of Computation Theory}}.
  \bibinfo{pages}{208--217}.
\newblock
\showISBNx{978-3-540-15689-5}


\bibitem[\protect\citeauthoryear{Krishnamurthy, Lumetta, Culler, and
  Yelick}{Krishnamurthy et~al\mbox{.}}{1994}]%
        {Krishnamurthy94}
\bibfield{author}{\bibinfo{person}{Arvind Krishnamurthy},
  \bibinfo{person}{Steven~S. Lumetta}, \bibinfo{person}{David~E. Culler}, {and}
  \bibinfo{person}{Katherine Yelick}.} \bibinfo{year}{1994}\natexlab{}.
\newblock \showarticletitle{Connected Components on Distributed Memory
  Machines}. In \bibinfo{booktitle}{\emph{Parallel Algorithms: 3rd DIMACS
  Implementation Challenge}}. \bibinfo{pages}{1--21}.
\newblock


\bibitem[\protect\citeauthoryear{Kruskal, Rudolph, and Snir}{Kruskal
  et~al\mbox{.}}{1990}]%
        {Kruskal90}
\bibfield{author}{\bibinfo{person}{Clyde~P. Kruskal}, \bibinfo{person}{Larry
  Rudolph}, {and} \bibinfo{person}{Marc Snir}.}
  \bibinfo{year}{1990}\natexlab{}.
\newblock \showarticletitle{Efficient Parallel Algorithms for Graph Problems}.
\newblock \bibinfo{journal}{\emph{Algorithmica}} \bibinfo{volume}{5},
  \bibinfo{number}{1-4} (\bibinfo{year}{1990}), \bibinfo{pages}{43--64}.
\newblock


\bibitem[\protect\citeauthoryear{Kwak, Lee, Park, and Moon}{Kwak
  et~al\mbox{.}}{2010}]%
        {kwak2010twitter}
\bibfield{author}{\bibinfo{person}{Haewoon Kwak}, \bibinfo{person}{Changhyun
  Lee}, \bibinfo{person}{Hosung Park}, {and} \bibinfo{person}{Sue Moon}.}
  \bibinfo{year}{2010}\natexlab{}.
\newblock \showarticletitle{What is {Twitter}, a Social Network or a News
  Media?}. In \bibinfo{booktitle}{\emph{International World Wide Web Conference
  (WWW)}}. \bibinfo{pages}{591--600}.
\newblock


\bibitem[\protect\citeauthoryear{Liu and Tarjan}{Liu and Tarjan}{2019}]%
        {LiuT19}
\bibfield{author}{\bibinfo{person}{Sixue Liu} {and} \bibinfo{person}{Robert~E.
  Tarjan}.} \bibinfo{year}{2019}\natexlab{}.
\newblock \showarticletitle{Simple Concurrent Labeling Algorithms for Connected
  Components}. In \bibinfo{booktitle}{\emph{SIAM Symposium on Simplicity in
  Algorithms (SOSA)}}. \bibinfo{pages}{3:1--3:20}.
\newblock


\bibitem[\protect\citeauthoryear{Maass, Min, Kashyap, Kang, Kumar, and
  Kim}{Maass et~al\mbox{.}}{2017}]%
        {maass2017mosaic}
\bibfield{author}{\bibinfo{person}{Steffen Maass}, \bibinfo{person}{Changwoo
  Min}, \bibinfo{person}{Sanidhya Kashyap}, \bibinfo{person}{Woonhak Kang},
  \bibinfo{person}{Mohan Kumar}, {and} \bibinfo{person}{Taesoo Kim}.}
  \bibinfo{year}{2017}\natexlab{}.
\newblock \showarticletitle{Mosaic: Processing a Trillion-Edge Graph on a
  Single Machine}. In \bibinfo{booktitle}{\emph{European Conference on Computer
  Systems (EuroSys)}}. \bibinfo{pages}{527--543}.
\newblock


\bibitem[\protect\citeauthoryear{Madduri and Bader}{Madduri and Bader}{2009}]%
        {MadduriB09}
\bibfield{author}{\bibinfo{person}{Kamesh Madduri} {and}
  \bibinfo{person}{David~A. Bader}.} \bibinfo{year}{2009}\natexlab{}.
\newblock \showarticletitle{Compact Graph Representations and Parallel
  Connectivity Algorithms for Massive Dynamic Network Analysis}. In
  \bibinfo{booktitle}{\emph{IEEE International Parallel and Distributed
  Processing Symposium (IPDPS)}}. \bibinfo{pages}{1--11}.
\newblock


\bibitem[\protect\citeauthoryear{Malewicz, Austern, Bik, Dehnert, Horn, Leiser,
  and Czajkowski}{Malewicz et~al\mbox{.}}{2010}]%
        {Pregel}
\bibfield{author}{\bibinfo{person}{Grzegorz Malewicz},
  \bibinfo{person}{Matthew~H. Austern}, \bibinfo{person}{Aart~J.C Bik},
  \bibinfo{person}{James~C. Dehnert}, \bibinfo{person}{Ilan Horn},
  \bibinfo{person}{Naty Leiser}, {and} \bibinfo{person}{Grzegorz Czajkowski}.}
  \bibinfo{year}{2010}\natexlab{}.
\newblock \showarticletitle{Pregel: A System for Large-Scale Graph Processing}.
  In \bibinfo{booktitle}{\emph{ACM Conference on Management of Data (SIGMOD)}}.
  \bibinfo{pages}{135--146}.
\newblock


\bibitem[\protect\citeauthoryear{Manning, Sch{\"u}tze, and Raghavan}{Manning
  et~al\mbox{.}}{2008}]%
        {irbook}
\bibfield{author}{\bibinfo{person}{Christopher~D. Manning},
  \bibinfo{person}{Hinrich Sch{\"u}tze}, {and} \bibinfo{person}{Prabhakar
  Raghavan}.} \bibinfo{year}{2008}\natexlab{}.
\newblock \bibinfo{booktitle}{\emph{Introduction to Information Retrieval}}.
\newblock \bibinfo{publisher}{Cambridge University Press}.
\newblock


\bibitem[\protect\citeauthoryear{McColl, Green, and Bader}{McColl
  et~al\mbox{.}}{2013}]%
        {McColl13}
\bibfield{author}{\bibinfo{person}{Robert McColl}, \bibinfo{person}{Oded
  Green}, {and} \bibinfo{person}{David~A. Bader}.}
  \bibinfo{year}{2013}\natexlab{}.
\newblock \showarticletitle{A New Parallel Algorithm for Connected Components
  in Dynamic Graphs}. In \bibinfo{booktitle}{\emph{IEEE International
  Conference on High Performance Computing (HiPC)}}. \bibinfo{pages}{246--255}.
\newblock


\bibitem[\protect\citeauthoryear{Meng, Li, Tan, and Sun}{Meng
  et~al\mbox{.}}{2019}]%
        {Meng2019}
\bibfield{author}{\bibinfo{person}{Ke Meng}, \bibinfo{person}{Jiajia Li},
  \bibinfo{person}{Guangming Tan}, {and} \bibinfo{person}{Ninghui Sun}.}
  \bibinfo{year}{2019}\natexlab{}.
\newblock \showarticletitle{A Pattern Based Algorithmic Autotuner for Graph
  Processing on GPUs}. In \bibinfo{booktitle}{\emph{Proceedings of the
  Symposium on Principles and Practice of Parallel Programming (PPoPP)}}.
  \bibinfo{pages}{201--213}.
\newblock


\bibitem[\protect\citeauthoryear{Meusel, Vigna, Lehmberg, and Bizer}{Meusel
  et~al\mbox{.}}{2015}]%
        {meusel15hyperlink}
\bibfield{author}{\bibinfo{person}{Robert Meusel}, \bibinfo{person}{Sebastiano
  Vigna}, \bibinfo{person}{Oliver Lehmberg}, {and} \bibinfo{person}{Christian
  Bizer}.} \bibinfo{year}{2015}\natexlab{}.
\newblock \showarticletitle{The Graph Structure in the Web--Analyzed on
  Different Aggregation Levels}.
\newblock \bibinfo{journal}{\emph{The Journal of Web Science}}
  \bibinfo{volume}{1}, \bibinfo{number}{1} (\bibinfo{year}{2015}).
\newblock


\bibitem[\protect\citeauthoryear{Miller, Peng, and Xu}{Miller
  et~al\mbox{.}}{2013}]%
        {MillerPX2013}
\bibfield{author}{\bibinfo{person}{Gary~L. Miller}, \bibinfo{person}{Richard
  Peng}, {and} \bibinfo{person}{Shen~C. Xu}.} \bibinfo{year}{2013}\natexlab{}.
\newblock \showarticletitle{Parallel Graph Decomposition using Random Shifts}.
  In \bibinfo{booktitle}{\emph{ACM Symposium on Parallelism in Algorithms and
  Architectures (SPAA)}}. \bibinfo{pages}{196--203}.
\newblock


\bibitem[\protect\citeauthoryear{Nath and Maheshwari}{Nath and
  Maheshwari}{1982}]%
        {NathM82}
\bibfield{author}{\bibinfo{person}{Dhruva Nath} {and}
  \bibinfo{person}{Shachindra~N. Maheshwari}.} \bibinfo{year}{1982}\natexlab{}.
\newblock \showarticletitle{Parallel Algorithms for the Connected Components
  and Minimal Spanning Tree Problems}.
\newblock \bibinfo{journal}{\emph{Inf. Process. Lett.}} \bibinfo{volume}{14},
  \bibinfo{number}{1} (\bibinfo{year}{1982}), \bibinfo{pages}{7--11}.
\newblock


\bibitem[\protect\citeauthoryear{Nguyen, Lenharth, and Pingali}{Nguyen
  et~al\mbox{.}}{2013}]%
        {Nguyen2013}
\bibfield{author}{\bibinfo{person}{Donald Nguyen}, \bibinfo{person}{Andrew
  Lenharth}, {and} \bibinfo{person}{Keshav Pingali}.}
  \bibinfo{year}{2013}\natexlab{}.
\newblock \showarticletitle{A Lightweight Infrastructure for Graph Analytics}.
  In \bibinfo{booktitle}{\emph{ACM Symposium on Operating Systems Principles
  (SOSP)}}. \bibinfo{pages}{456--471}.
\newblock


\bibitem[\protect\citeauthoryear{Nguyen, Lenharth, and Pingali}{Nguyen
  et~al\mbox{.}}{2014}]%
        {Nguyen2014}
\bibfield{author}{\bibinfo{person}{Donald Nguyen}, \bibinfo{person}{Andrew
  Lenharth}, {and} \bibinfo{person}{Keshav Pingali}.}
  \bibinfo{year}{2014}\natexlab{}.
\newblock \showarticletitle{Deterministic {Galois}: On-Demand, Portable and
  Parameterless}. In \bibinfo{booktitle}{\emph{ACM International Conference on
  Architectural Support for Programming Languages and Operating Systems
  (ASPLOS)}}. \bibinfo{pages}{499--512}.
\newblock


\bibitem[\protect\citeauthoryear{Pai and Pingali}{Pai and Pingali}{2016}]%
        {Pai2016}
\bibfield{author}{\bibinfo{person}{Sreepathi Pai} {and} \bibinfo{person}{Keshav
  Pingali}.} \bibinfo{year}{2016}\natexlab{}.
\newblock \showarticletitle{A Compiler for Throughput Optimization of Graph
  Algorithms on GPUs}. In \bibinfo{booktitle}{\emph{ACM SIGPLAN International
  Conference on Object-Oriented Programming, Systems, Languages, and
  Applications (OOPSLA)}}. \bibinfo{pages}{1--19}.
\newblock


\bibitem[\protect\citeauthoryear{Patwary, Palsetia, Agrawal, Liao, Manne, and
  Choudhary}{Patwary et~al\mbox{.}}{2012a}]%
        {Patwary12}
\bibfield{author}{\bibinfo{person}{M.~M.~Ali Patwary}, \bibinfo{person}{Diana
  Palsetia}, \bibinfo{person}{Ankit Agrawal}, \bibinfo{person}{Wei-keng Liao},
  \bibinfo{person}{Fredrik Manne}, {and} \bibinfo{person}{Alok Choudhary}.}
  \bibinfo{year}{2012}\natexlab{a}.
\newblock \showarticletitle{A New Scalable Parallel {DBSCAN} Algorithm using
  the Disjoint-Set Data Structure}. In \bibinfo{booktitle}{\emph{ACM/IEEE
  International Conference on High Performance Computing, Networking, Storage
  and Analysis (SC)}}. \bibinfo{pages}{62:1--62:11}.
\newblock


\bibitem[\protect\citeauthoryear{Patwary, Refsnes, and Manne}{Patwary
  et~al\mbox{.}}{2012b}]%
        {PatwaryRM12}
\bibfield{author}{\bibinfo{person}{M.~M.~Ali Patwary}, \bibinfo{person}{Peder
  Refsnes}, {and} \bibinfo{person}{Fredrik Manne}.}
  \bibinfo{year}{2012}\natexlab{b}.
\newblock \showarticletitle{Multi-core Spanning Forest Algorithms using the
  Disjoint-Set Data Structure}. In \bibinfo{booktitle}{\emph{{IEEE}
  International Parallel and Distributed Processing Symposium (IPDPS)}}.
  \bibinfo{pages}{827--835}.
\newblock


\bibitem[\protect\citeauthoryear{Phillips}{Phillips}{1989}]%
        {Phillips89}
\bibfield{author}{\bibinfo{person}{Cynthia~A. Phillips}.}
  \bibinfo{year}{1989}\natexlab{}.
\newblock \showarticletitle{Parallel Graph Contraction}. In
  \bibinfo{booktitle}{\emph{ACM Symposium on Parallelism in Algorithms and
  Architectures (SPAA)}}. \bibinfo{pages}{148--157}.
\newblock


\bibitem[\protect\citeauthoryear{Rastogi, Machanavajjhala, Chitnis, and
  Sarma}{Rastogi et~al\mbox{.}}{2013}]%
        {Rastogi2013}
\bibfield{author}{\bibinfo{person}{Vibhor Rastogi}, \bibinfo{person}{Ashwin
  Machanavajjhala}, \bibinfo{person}{Laukik Chitnis}, {and}
  \bibinfo{person}{Anish~D. Sarma}.} \bibinfo{year}{2013}\natexlab{}.
\newblock \showarticletitle{Finding Connected Components in Map-Reduce in
  Logarithmic Rounds}. In \bibinfo{booktitle}{\emph{IEEE International
  Conference on Data Engineering (ICDE)}}. \bibinfo{pages}{50--61}.
\newblock


\bibitem[\protect\citeauthoryear{Reif}{Reif}{1985}]%
        {Reif85}
\bibfield{author}{\bibinfo{person}{John~H. Reif}.}
  \bibinfo{year}{1985}\natexlab{}.
\newblock \showarticletitle{Optimal Parallel Algorithms for Integer Sorting and
  Graph Connectivity}.
\newblock \bibinfo{journal}{\emph{TR-08-85, Harvard University}}
  (\bibinfo{year}{1985}).
\newblock


\bibitem[\protect\citeauthoryear{Sahu, Mhedhbi, Salihoglu, Lin, and
  {\"O}zsu}{Sahu et~al\mbox{.}}{2017}]%
        {sahu2017ubiquity}
\bibfield{author}{\bibinfo{person}{Siddhartha Sahu}, \bibinfo{person}{Amine
  Mhedhbi}, \bibinfo{person}{Semih Salihoglu}, \bibinfo{person}{Jimmy Lin},
  {and} \bibinfo{person}{M.~Tamer {\"O}zsu}.} \bibinfo{year}{2017}\natexlab{}.
\newblock \showarticletitle{The Ubiquity of Large Graphs and Surprising
  Challenges of Graph Processing}.
\newblock \bibinfo{journal}{\emph{Proceedings of the VLDB Endowment}}
  \bibinfo{volume}{11}, \bibinfo{number}{4} (\bibinfo{year}{2017}),
  \bibinfo{pages}{420--431}.
\newblock


\bibitem[\protect\citeauthoryear{Sengupta and Song}{Sengupta and Song}{2017}]%
        {Sengupta17}
\bibfield{author}{\bibinfo{person}{Dipanjan Sengupta} {and}
  \bibinfo{person}{Shuaiwen~L. Song}.} \bibinfo{year}{2017}\natexlab{}.
\newblock \showarticletitle{{EvoGraph}: On-the-Fly Efficient Mining of Evolving
  Graphs on {GPU}}. In \bibinfo{booktitle}{\emph{High Performance Computing}}.
  \bibinfo{pages}{97--119}.
\newblock


\bibitem[\protect\citeauthoryear{Shiloach and Vishkin}{Shiloach and
  Vishkin}{1982}]%
        {ShiloachV82}
\bibfield{author}{\bibinfo{person}{Yossi Shiloach} {and} \bibinfo{person}{Uzi
  Vishkin}.} \bibinfo{year}{1982}\natexlab{}.
\newblock \showarticletitle{An {$O(\log n)$} Parallel Connectivity Algorithm}.
\newblock \bibinfo{journal}{\emph{J. Algorithms}} \bibinfo{volume}{3},
  \bibinfo{number}{1} (\bibinfo{year}{1982}), \bibinfo{pages}{57--67}.
\newblock


\bibitem[\protect\citeauthoryear{Shun and Blelloch}{Shun and Blelloch}{2013}]%
        {ShunB2013}
\bibfield{author}{\bibinfo{person}{Julian Shun} {and} \bibinfo{person}{Guy~E.
  Blelloch}.} \bibinfo{year}{2013}\natexlab{}.
\newblock \showarticletitle{Ligra: A Lightweight Graph Processing Framework for
  Shared Memory}. In \bibinfo{booktitle}{\emph{ACM SIGPLAN Symposium on
  Principles and Practice of Parallel Programming (PPoPP)}}.
  \bibinfo{pages}{135--146}.
\newblock


\bibitem[\protect\citeauthoryear{Shun and Blelloch}{Shun and Blelloch}{2014}]%
        {ShunB14}
\bibfield{author}{\bibinfo{person}{Julian Shun} {and} \bibinfo{person}{Guy~E.
  Blelloch}.} \bibinfo{year}{2014}\natexlab{}.
\newblock \showarticletitle{Phase-Concurrent Hash Tables for Determinism}. In
  \bibinfo{booktitle}{\emph{ACM Symposium on Parallelism in Algorithms and
  Architectures (SPAA)}}. \bibinfo{pages}{96--107}.
\newblock


\bibitem[\protect\citeauthoryear{Shun, Blelloch, Fineman, and Gibbons}{Shun
  et~al\mbox{.}}{2013}]%
        {ShunBFG2013}
\bibfield{author}{\bibinfo{person}{Julian Shun}, \bibinfo{person}{Guy~E.
  Blelloch}, \bibinfo{person}{Jeremy~T. Fineman}, {and}
  \bibinfo{person}{Phillip~B. Gibbons}.} \bibinfo{year}{2013}\natexlab{}.
\newblock \showarticletitle{Reducing Contention Through Priority Updates}. In
  \bibinfo{booktitle}{\emph{ACM Symposium on Parallelism in Algorithms and
  Architectures (SPAA)}}. \bibinfo{pages}{152--163}.
\newblock


\bibitem[\protect\citeauthoryear{Shun, Dhulipala, and Blelloch}{Shun
  et~al\mbox{.}}{2014}]%
        {SDB14}
\bibfield{author}{\bibinfo{person}{Julian Shun}, \bibinfo{person}{Laxman
  Dhulipala}, {and} \bibinfo{person}{Guy~E. Blelloch}.}
  \bibinfo{year}{2014}\natexlab{}.
\newblock \showarticletitle{A Simple and Practical Linear-Work Parallel
  Algorithm for Connectivity}. In \bibinfo{booktitle}{\emph{{ACM} Symposium on
  Parallelism in Algorithms and Architectures (SPAA)}}.
  \bibinfo{pages}{143--153}.
\newblock


\bibitem[\protect\citeauthoryear{Shun, Dhulipala, and Blelloch}{Shun
  et~al\mbox{.}}{2015}]%
        {SDB2015}
\bibfield{author}{\bibinfo{person}{Julian Shun}, \bibinfo{person}{Laxman
  Dhulipala}, {and} \bibinfo{person}{Guy~E. Blelloch}.}
  \bibinfo{year}{2015}\natexlab{}.
\newblock \showarticletitle{Smaller and Faster: Parallel Processing of
  Compressed Graphs with {Ligra+}}. In \bibinfo{booktitle}{\emph{IEEE Data
  Compression Conference (DCC)}}. \bibinfo{pages}{403--412}.
\newblock


\bibitem[\protect\citeauthoryear{Simsiri, Tangwongsan, Tirthapura, and
  Wu}{Simsiri et~al\mbox{.}}{2017}]%
        {Simsiri2017}
\bibfield{author}{\bibinfo{person}{Natcha Simsiri}, \bibinfo{person}{Kanat
  Tangwongsan}, \bibinfo{person}{Srikanta Tirthapura}, {and}
  \bibinfo{person}{Kun‐Lung Wu}.} \bibinfo{year}{2017}\natexlab{}.
\newblock \showarticletitle{Work-Efficient Parallel Union-Find}.
\newblock \bibinfo{journal}{\emph{Concurrency and Computation: Practice and
  Experience}} \bibinfo{volume}{30}, \bibinfo{number}{4}
  (\bibinfo{year}{2017}).
\newblock


\bibitem[\protect\citeauthoryear{Slota, Rajamanickam, and Madduri}{Slota
  et~al\mbox{.}}{2014}]%
        {Slota14}
\bibfield{author}{\bibinfo{person}{George~M. Slota},
  \bibinfo{person}{Sivasankaran Rajamanickam}, {and} \bibinfo{person}{Kamesh
  Madduri}.} \bibinfo{year}{2014}\natexlab{}.
\newblock \showarticletitle{{BFS} and Coloring-based Parallel Algorithms for
  Strongly Connected Components and Related Problems}. In
  \bibinfo{booktitle}{\emph{IEEE International Parallel and Distributed
  Processing Symposium (IPDPS)}}. \bibinfo{pages}{550--559}.
\newblock


\bibitem[\protect\citeauthoryear{Slota, Rajamanickam, and Madduri}{Slota
  et~al\mbox{.}}{2016}]%
        {Slota2016}
\bibfield{author}{\bibinfo{person}{George~M. Slota},
  \bibinfo{person}{Sivasankaran Rajamanickam}, {and} \bibinfo{person}{Kamesh
  Madduri}.} \bibinfo{year}{2016}\natexlab{}.
\newblock \showarticletitle{A Case Study of Complex Graph Analysis in
  Distributed Memory: Implementation and Optimization}. In
  \bibinfo{booktitle}{\emph{IEEE International Parallel and Distributed
  Processing Symposium (IPDPS)}}. \bibinfo{pages}{293--302}.
\newblock


\bibitem[\protect\citeauthoryear{Soman, Kishore, and Narayanan}{Soman
  et~al\mbox{.}}{2010}]%
        {Soman}
\bibfield{author}{\bibinfo{person}{Jyothish Soman}, \bibinfo{person}{Kothapalli
  Kishore}, {and} \bibinfo{person}{P.J. Narayanan}.}
  \bibinfo{year}{2010}\natexlab{}.
\newblock \showarticletitle{A Fast {GPU} Algorithm for Graph Connectivity}. In
  \bibinfo{booktitle}{\emph{IEEE International Parallel and Distributed
  Processing Symposium (IPDPS)}}. \bibinfo{pages}{1--8}.
\newblock


\bibitem[\protect\citeauthoryear{Stergiou, Rughwani, and
  Tsioutsiouliklis}{Stergiou et~al\mbox{.}}{2018}]%
        {Stergiou2018}
\bibfield{author}{\bibinfo{person}{Stergios Stergiou}, \bibinfo{person}{Dipen
  Rughwani}, {and} \bibinfo{person}{Kostas Tsioutsiouliklis}.}
  \bibinfo{year}{2018}\natexlab{}.
\newblock \showarticletitle{Shortcutting Label Propagation for Distributed
  Connected Components}. In \bibinfo{booktitle}{\emph{ACM International
  Conference on Web Search and Data Mining (WSDM)}}. \bibinfo{pages}{540--546}.
\newblock


\bibitem[\protect\citeauthoryear{Sun, Blelloch, Lim, and Pavlo}{Sun
  et~al\mbox{.}}{2019}]%
        {sun2019supporting}
\bibfield{author}{\bibinfo{person}{Yihan Sun}, \bibinfo{person}{Guy~E
  Blelloch}, \bibinfo{person}{Wan~Shen Lim}, {and} \bibinfo{person}{Andrew
  Pavlo}.} \bibinfo{year}{2019}\natexlab{}.
\newblock \showarticletitle{On supporting efficient snapshot isolation for
  hybrid workloads with multi-versioned indexes}.
\newblock \bibinfo{journal}{\emph{{PVLDB}}} \bibinfo{volume}{13},
  \bibinfo{number}{2} (\bibinfo{year}{2019}), \bibinfo{pages}{211--225}.
\newblock


\bibitem[\protect\citeauthoryear{Sun, Ferizovic, and Belloch}{Sun
  et~al\mbox{.}}{2018}]%
        {Sun2018}
\bibfield{author}{\bibinfo{person}{Yihan Sun}, \bibinfo{person}{Daniel
  Ferizovic}, {and} \bibinfo{person}{Guy~E. Belloch}.}
  \bibinfo{year}{2018}\natexlab{}.
\newblock \showarticletitle{PAM: Parallel Augmented Maps}. In
  \bibinfo{booktitle}{\emph{ACM SIGPLAN Symposium on Principles and Practice of
  Parallel Programming (PPoPP)}}. \bibinfo{pages}{290--304}.
\newblock


\bibitem[\protect\citeauthoryear{Sutton, Ben-Nun, and Barak}{Sutton
  et~al\mbox{.}}{2018}]%
        {Sutton2018}
\bibfield{author}{\bibinfo{person}{Michael Sutton}, \bibinfo{person}{Tal
  Ben-Nun}, {and} \bibinfo{person}{Amnon Barak}.}
  \bibinfo{year}{2018}\natexlab{}.
\newblock \showarticletitle{Optimizing Parallel Graph Connectivity Computation
  via Subgraph Sampling}. In \bibinfo{booktitle}{\emph{IEEE International
  Parallel and Distributed Processing Symposium (IPDPS)}}.
  \bibinfo{pages}{12--21}.
\newblock


\bibitem[\protect\citeauthoryear{Tarjan and Vishkin}{Tarjan and
  Vishkin}{1985}]%
        {Tarjan85}
\bibfield{author}{\bibinfo{person}{Robert~E. Tarjan} {and} \bibinfo{person}{Uzi
  Vishkin}.} \bibinfo{year}{1985}\natexlab{}.
\newblock \showarticletitle{An Efficient Parallel Biconnectivity Algorithm}.
\newblock \bibinfo{journal}{\emph{SIAM J. Comput.}} \bibinfo{volume}{14},
  \bibinfo{number}{4} (\bibinfo{year}{1985}), \bibinfo{pages}{862--874}.
\newblock


\bibitem[\protect\citeauthoryear{Tseng, Dhulipala, and Shun}{Tseng
  et~al\mbox{.}}{2021}]%
        {tseng2021scan}
\bibfield{author}{\bibinfo{person}{Tom Tseng}, \bibinfo{person}{Laxman
  Dhulipala}, {and} \bibinfo{person}{Julian Shun}.}
  \bibinfo{year}{2021}\natexlab{}.
\newblock \showarticletitle{Parallel Index-Based Structural Graph Clustering
  and Its Approximation}.
\newblock \bibinfo{journal}{\emph{To appear in ACM International Conference on
  Management of Data {(SIGMOD)}}} (\bibinfo{year}{2021}).
\newblock


\bibitem[\protect\citeauthoryear{Vishkin}{Vishkin}{1984}]%
        {Vishkin1984}
\bibfield{author}{\bibinfo{person}{Uzi Vishkin}.}
  \bibinfo{year}{1984}\natexlab{}.
\newblock \showarticletitle{An Optimal Parallel Connectivity Algorithm}.
\newblock \bibinfo{journal}{\emph{Discrete Applied Mathematics}}
  \bibinfo{volume}{9}, \bibinfo{number}{2} (\bibinfo{year}{1984}),
  \bibinfo{pages}{197--207}.
\newblock


\bibitem[\protect\citeauthoryear{Wang, Gu, and Shun}{Wang
  et~al\mbox{.}}{2020}]%
        {wang2019pardbscan}
\bibfield{author}{\bibinfo{person}{Yiqiu Wang}, \bibinfo{person}{Yan Gu}, {and}
  \bibinfo{person}{Julian Shun}.} \bibinfo{year}{2020}\natexlab{}.
\newblock \showarticletitle{Theoretically-Efficient and Practical Parallel
  {DBSCAN}}. In \bibinfo{booktitle}{\emph{{ACM International Conference on
  Management of Data {(SIGMOD)}}}}.
\newblock


\bibitem[\protect\citeauthoryear{Wang, Pan, Davidson, Wu, Yang, Wang, Osama,
  Yuan, Liu, Riffel, and Owens}{Wang et~al\mbox{.}}{2017}]%
        {Wang2017}
\bibfield{author}{\bibinfo{person}{Yangzihao Wang}, \bibinfo{person}{Yuechao
  Pan}, \bibinfo{person}{Andrew Davidson}, \bibinfo{person}{Yuduo Wu},
  \bibinfo{person}{Carl Yang}, \bibinfo{person}{Leyuan Wang},
  \bibinfo{person}{Muhammad Osama}, \bibinfo{person}{Chenshan Yuan},
  \bibinfo{person}{Weitang Liu}, \bibinfo{person}{Andy~T. Riffel}, {and}
  \bibinfo{person}{John~D. Owens}.} \bibinfo{year}{2017}\natexlab{}.
\newblock \showarticletitle{Gunrock: GPU Graph Analytics}.
\newblock \bibinfo{journal}{\emph{ACM Trans. Parallel Comput.}}
  \bibinfo{volume}{4}, \bibinfo{number}{1} (\bibinfo{year}{2017}),
  \bibinfo{pages}{3:1--3:49}.
\newblock


\bibitem[\protect\citeauthoryear{Wen, Qin, Zhang, Chang, and Lin}{Wen
  et~al\mbox{.}}{2017}]%
        {wen2017efficient}
\bibfield{author}{\bibinfo{person}{Dong Wen}, \bibinfo{person}{Lu Qin},
  \bibinfo{person}{Ying Zhang}, \bibinfo{person}{Lijun Chang}, {and}
  \bibinfo{person}{Xuemin Lin}.} \bibinfo{year}{2017}\natexlab{}.
\newblock \showarticletitle{Efficient Structural Graph Clustering: An
  Index-based Approach}.
\newblock \bibinfo{journal}{\emph{PVLDB}} \bibinfo{volume}{11},
  \bibinfo{number}{3} (\bibinfo{year}{2017}), \bibinfo{pages}{243–255}.
\newblock


\bibitem[\protect\citeauthoryear{Xu, Yuruk, Feng, and Schweiger}{Xu
  et~al\mbox{.}}{2007}]%
        {xu2007scan}
\bibfield{author}{\bibinfo{person}{Xiaowei Xu}, \bibinfo{person}{Nurcan Yuruk},
  \bibinfo{person}{Zhidan Feng}, {and} \bibinfo{person}{Thomas A.~J.
  Schweiger}.} \bibinfo{year}{2007}\natexlab{}.
\newblock \showarticletitle{{SCAN}: A Structural Clustering Algorithm for
  Networks}. In \bibinfo{booktitle}{\emph{{ACM} International Conference on
  Knowledge Discovery and Data Mining {(KDD)}}}. \bibinfo{pages}{824--833}.
\newblock


\bibitem[\protect\citeauthoryear{Yan, Cheng, Xing, Lu, Ng, and Bu}{Yan
  et~al\mbox{.}}{2014}]%
        {Yan2014}
\bibfield{author}{\bibinfo{person}{Da Yan}, \bibinfo{person}{James Cheng},
  \bibinfo{person}{Kai Xing}, \bibinfo{person}{Yi Lu}, \bibinfo{person}{Wilfred
  Ng}, {and} \bibinfo{person}{Yingyi Bu}.} \bibinfo{year}{2014}\natexlab{}.
\newblock \showarticletitle{Pregel Algorithms for Graph Connectivity Problems
  with Performance Guarantees}.
\newblock \bibinfo{journal}{\emph{Proc. VLDB Endow.}} \bibinfo{volume}{7},
  \bibinfo{number}{14} (\bibinfo{year}{2014}), \bibinfo{pages}{1821–1832}.
\newblock


\bibitem[\protect\citeauthoryear{Zhang, Azad, and Hu}{Zhang
  et~al\mbox{.}}{2020}]%
        {zhang2020fastsv}
\bibfield{author}{\bibinfo{person}{Yongzhe Zhang}, \bibinfo{person}{Ariful
  Azad}, {and} \bibinfo{person}{Zhenjiang Hu}.}
  \bibinfo{year}{2020}\natexlab{}.
\newblock \showarticletitle{{FastSV:} A Distributed-Memory Connected Component
  Algorithm with Fast Convergence}. In \bibinfo{booktitle}{\emph{SIAM
  Conference on Parallel Processing for Scientific Computing (PP)}}.
  \bibinfo{pages}{46--57}.
\newblock


\bibitem[\protect\citeauthoryear{Zheng, Mhembere, Burns, Vogelstein, Priebe,
  and Szalay}{Zheng et~al\mbox{.}}{2015}]%
        {da2015flashgraph}
\bibfield{author}{\bibinfo{person}{Da Zheng}, \bibinfo{person}{Disa Mhembere},
  \bibinfo{person}{Randal Burns}, \bibinfo{person}{Joshua Vogelstein},
  \bibinfo{person}{Carey~E Priebe}, {and} \bibinfo{person}{Alexander~S
  Szalay}.} \bibinfo{year}{2015}\natexlab{}.
\newblock \showarticletitle{{FlashGraph}: Processing Billion-Node Graphs on an
  Array of Commodity {SSDs}}. In \bibinfo{booktitle}{\emph{USENIX Conference on
  File and Storage Technologies (FAST)}}. \bibinfo{pages}{45--58}.
\newblock


\end{thebibliography}

\clearpage

\appendix

\section{Appendix Overview}
The appendix is structured as follows:

\begin{enumerate}[topsep=0pt,itemsep=0pt,parsep=0pt]
  \item Section~\ref{apx:framework} contains deferred content and
  proofs about the \framework{} framework.

  \begin{enumerate}[label=(\roman*),topsep=0pt,itemsep=0pt,parsep=0pt,leftmargin=20pt]
  \item Section~\ref{apx:samplepf} contains correctness proofs for our
  sampling algorithms.

  \item Section~\ref{apx:subsec:finish} contains correctness proofs
  about our finish algorithms, as well as more information about our
  Shiloach-Vishkin, Stergiou, and Label Propagation algorithms.

  \item Section~\ref{apx:spanning_forest} describes the \framework{}
  framework for spanning forest.

  \item Lastly, Section~\ref{apx:streaming} describes the \framework{}
  framework for streaming in the batch-incremental and wait-free
  asynchronous settings.

  \end{enumerate}

\item Section~\ref{apx:experiments} contains deferred experimental
  results including:
  \begin{enumerate}[label=(\roman*),topsep=0pt,itemsep=0pt,parsep=0pt,leftmargin=20pt]
  \item Results about \framework{} algorithms in the setting
    \emph{without sampling} (Section~\ref{apx:cpu_no_sample}). This
    includes a performance characterization of our union-find
    and Liu-Tarjan algorithms based on performance counters and other
    metrics.

  \item Results about \framework{} algorithms in the setting
    \emph{with sampling} (Section~\ref{apx:cpu_no_sample}).

  \item Additional details about the experimental comparisons to
    state-of-the-art algorithms done in this paper
    (Section~\ref{apx:static_cpu_comparison}).

  \item An experimental evaluation of the sampling schemes considered
  in this paper, analyzing how adjusting their parameters affects
  their running time, and the sampling quality
  (Section~\ref{sec:sampling_eval}).

  \item A comparison between \framework{} algorithms, and basic graph
    processing primitives that map, and perform indirect reads across
    all edges (Section~\ref{sec:edge_gather}).

  \item A comparison between \framework{} algorithms on very large
    graphs, and existing state-of-the-art results for such large
    graphs (Section~\ref{sec:large_graphs}).

  \end{enumerate}

\item For completeness, we provide pseudocode for our
  algorithms (Section~\ref{apx:pseudocode}).
\end{enumerate}

\myparagraph{Additional Primitives}
We define an additional primitive used in the
following sections:
a \defn{writeMin} takes two arguments, a memory location \emph{x} and
a value \emph{val}. If \emph{val} is less than the value stored at
\emph{x}, then \defn{writeMin} atomically updates the value at \emph{x} to
\emph{val} and returns \emph{true}; otherwise it returns \emph{false}.
A \defn{writeMin} can be implemented with a loop that attempts a CAS as long
as \emph{val} is less than the value at
\emph{x}~\cite{ShunBFG2013}.

\section{Framework}\label{apx:framework}
In this section, we provide deferred correctness proofs, as well as
descriptions of algorithms deferred from the main body of the paper.

\subsection{Correctness for Sampling Algorithms}\label{apx:samplepf}

\begin{theorem}\label{thm:apx:samplingcorrect}
\koutsample{}, \bfssample{}, and \lddsample{} all produce connectivity
labelings satisfying Definition~\ref{def:sampling_produces_trees}.
\end{theorem}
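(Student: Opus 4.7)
The plan is to verify the two requirements of Definition~\ref{def:sampling_produces_trees} separately for each sampling scheme, after first establishing a common observation: each scheme computes partial connectivity by running some connectivity subroutine on a subgraph $H \subseteq G$ (the sampled $k$ edges per vertex for \koutsample{}, the BFS tree for \bfssample{}, and the intra-cluster edges of the computed decomposition for \lddsample{}). Since $H \subseteq G$, if two vertices receive the same sample label then they must lie in the same connected component of $G$. I will use this repeatedly.

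For requirement~\ref{def:sample_trees_one}, I would argue that each scheme produces (or can be finalized in one pass to produce) a star-shaped labeling. For \bfssample{}, every vertex discovered by the BFS from source $s$ is labeled with $s$, and $\labelvar{}[s]=s$, so trees have height one by construction; undiscovered vertices remain labeled to themselves. For \lddsample{}, all vertices in a cluster are labeled by the cluster center, which is its own label. For \koutsample{}, the underlying union-find (or other \minbased{}) subroutine may internally produce trees of height larger than one, but the output is explicitly flattened by a single pass applying \find{} to every vertex, after which each non-root points directly to a root.

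For requirement~\ref{def:sample_trees_two}, let $\labelvar{}'=\textsc{Connectivity}(G[\labelvar{}])$ and consider any two vertices $u,v \in V$. If $u$ and $v$ are in the same connected component of $G$, I consider two cases. If $\labelvar{}[u]=\labelvar{}[v]$, then $\labelvar{}'[\labelvar{}[u]]=\labelvar{}'[\labelvar{}[v]]$ trivially. Otherwise, take any $u$-$v$ path in $G$; contracting this path under $\labelvar{}$ yields a walk in $G[\labelvar{}]$ from the super-vertex $\labelvar{}[u]$ to $\labelvar{}[v]$, since every original edge $(a,b)$ with $\labelvar{}[a]\ne\labelvar{}[b]$ is preserved (up to duplicate removal) by the contraction. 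Hence $\labelvar{}[u]$ and $\labelvar{}[v]$ are connected in $G[\labelvar{}]$, so $\labelvar{}'[\labelvar{}[u]]=\labelvar{}'[\labelvar{}[v]]$. Conversely, if $u$ and $v$ lie in different components of $G$, then by the common observation $\labelvar{}[u]$ and $\labelvar{}[v]$ are distinct and belong to different components of $G$; since $G[\labelvar{}]$ contains no edges crossing components of $G$, the super-vertices remain disconnected in $G[\labelvar{}]$ and hence receive different $\labelvar{}'$ labels, giving $\labelvar{}''[u]\ne\labelvar{}''[v]$.

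The main obstacle I anticipate is being precise about the distinction between what the sampling subroutine computes internally and the labeling it exposes to \framework{}. In particular, for \koutsample{}, care is needed to state that the flattening pass is a required postcondition so that requirement~\ref{def:sample_trees_one} holds regardless of which \minbased{} algorithm is used for the sampled subgraph. A second minor subtlety is for \bfssample{} when multiple BFS rounds are executed: I would handle this inductively by treating each round's newly discovered vertices as added to the current star of the source of that round, keeping the height-one invariant intact.
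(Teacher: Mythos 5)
Your proposal is correct and follows essentially the same route as the paper's proof: verify the height-one (star) structure per scheme for requirement (1) --- including the key point that \koutsample{} only hooks roots and relies on its final full-compression pass --- and reduce requirement (2) to the claim that the sampled labeling is a valid \emph{partial} connectivity labeling (same label implies same component of $G$). The only difference is one of emphasis: the paper asserts the equivalence between requirement (2) and partial-labeling validity as an observation and then checks validity per scheme (for \koutsample{} by invoking the linearizability of the union-find subroutine and the monotone-correctness theorem), whereas you prove that implication explicitly via the path-contraction argument while justifying validity uniformly from the subgraph containment $H \subseteq G$, which leaves slightly implicit the correctness of the concurrent union-find run on the sampled edges.
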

\begin{proof}

Recall that Definition~\ref{def:sampling_produces_trees} requires that
the labeling $\labelvar{}$ emitted by the sampling method is a valid
partial connectivity labeling (i.e., if two vertices $u,v \in V$ are
in the same tree in $C$, then they are connected in $G$). \bfssample{}
satisfies this requirement since the labeling represents a single
connected component which is clearly a valid partial labeling.
Similarly, \lddsample{} satisfies this definition since the labeling
is precisely the contraction induced by the low-diameter decomposition
clustering, and it is easy to check that this is a valid partial
labeling.  Finally, since we use a linearizable union-find algorithm
as our sub-routine in \koutsample{}, a proof almost identical to that
of Theorem~\ref{thm:connectit_connectivity_correct_monotone} shows
that the labeling induced by \koutsample{} is a correct partial
labeling of $G$.
\end{proof}

\subsection{Finish Algorithms}\label{apx:subsec:finish}
We now finish describing our \framework{} finish methods and providing
deferred correctness proofs for our finish methods. We start with the
correctness proofs for linearizably monotonic methods.

\subsubsection{Correctness for Linearizably Monotonic
Methods}\label{apx:linmoncorrect}

\begin{theorem}\label{thm:connectit_connectivity_correct_monotone}
The connectivity algorithm in \framework{}
(Algorithm~\ref{alg:framework_connectivity}) applied with a sampling
method $\mathcal{S}$ satisfying
Definition~\ref{def:sampling_produces_trees} and a linearizably
monotone finish method $\mathcal{F}$ is correct.
\end{theorem}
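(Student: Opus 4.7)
The plan is to prove correctness by maintaining the invariant that the current labeling $\labelvar{}$ is always a valid partial connectivity labeling (i.e., $\labelvar{}(u) = \labelvar{}(v)$ implies $u$ and $v$ are in the same connected component of $G$), and then show at termination that the labeling is moreover complete (two vertices in the same component share a label). By linearizability, I can replace the concurrent finish execution with its sequential linearization order and argue by induction over that ordering.

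For the base case, I would show that the labeling produced by the sampling method $\mathcal{S}$ is already a valid partial labeling. By requirement \ref{def:sample_trees_one} of Definition~\ref{def:sampling_produces_trees}, the labeling is a height-one forest, and by requirement \ref{def:sample_trees_two}, composing it with any correct connectivity labeling on $G[\labelvar{}]$ yields a correct labeling; this forces each tree of the sampled forest to lie entirely within a single component of $G$, which is exactly the partial-labeling property. In particular, all vertices sharing the label $\largestcomp{}$ are genuinely in the same component, which will later be the key to handling skipped vertices.

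For the inductive step, I invoke linearizable monotonicity. By linearizability, the effect of the finish method is captured by some sequential order of atomic operations; by monotonicity, each such operation updates the labeling so that the resulting forest is the union of two previously existing trees $T_1, T_2$. Because the finish method only initiates merges in response to an inspected edge $(u, v) \in E$ with $u \in T_1$ and $v \in T_2$, and because the inductive hypothesis guarantees that $T_1$ and $T_2$ already lie within single components, the merged tree also lies within a single component. Thus the partial-labeling invariant is preserved across every step of the linearization.

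Finally, for completeness, I would take any two vertices $u, v$ in the same component of $G$ and a $u$-to-$v$ path $u = v_0, v_1, \ldots, v_k = v$. For each edge $(v_i, v_{i+1})$, either at least one endpoint lies outside $\largestcomp{}$, in which case the finish method processes this edge and its attempted merge is linearized at some point (so $v_i$ and $v_{i+1}$ end up with the same label), or both endpoints lie in $\largestcomp{}$, in which case they already share a label from sampling. Chaining along the path then gives $\labelvar{}(u) = \labelvar{}(v)$. I expect the subtlest step to be this completeness argument: specifically, justifying that the linearized sequence actually includes, for every non-$\largestcomp{}$ edge, an operation that succeeds in equalizing the endpoints' roots. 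This will follow from the termination condition of each finish method (they only halt when no non-$\largestcomp{}$ edge induces a further merge) combined with the monotonicity invariant, which prevents any merge from being ``undone'' by a later operation.
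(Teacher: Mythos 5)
Your proposal is correct and follows essentially the same route as the paper's proof: establish that sampling yields a valid partial connectivity labeling (rooted trees within components), induct over the linearization order using monotonicity to show every operation preserves that invariant, and then obtain completeness from the fact that any edge not internal to $\largestcomp{}$ is applied from its endpoint outside $\largestcomp{}$ (graph symmetry), while edges internal to $\largestcomp{}$ are already merged by sampling. Your explicit path-chaining at the end and your flagged subtlety about merges not being undone are just more detailed renderings of the paper's closing argument that all edges are effectively applied, so the two proofs coincide in substance.
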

\begin{proof}
Let $\labelvar{}$ be the partial labeling produced by the sampling
method $\mathcal{S}$.
Now, consider $\mathcal{F}$, which is a linearizably monotone finish
method operating on the partial connectivity labeling $\labelvar{}$. Since we
are handling an abstract connectivity algorithm $\mathcal{F}$, it is
helpful to think about operations of two types: (i) those that try to
``apply" an edge $(u,v)$ (i.e., it checks whether $u$ and $v$ are in
different trees, and if so changes the labeling so that $u$ and $v$
are in the same tree in the new labeling), and (ii) lower-level
operations that compress trees.
Furthermore, note that $\mathcal{F}$ will avoid applying (directed)
edges coming out of $\largestcomp{}$, which could be problematic
since there may be many edges of the form $(u,v)$ where $u \in
\largestcomp{}$ and $v \notin \largestcomp{}$ that must be
examined to compute a correct labeling. However, since the graph is
symmetric, $\mathcal{F}$ does apply all such inter-cluster
edges, since such edges will be applied while processing vertex $v
\notin \largestcomp{}$, which is contained in some initial component
$c \neq \largestcomp{}$.  The algorithm must process this component,
since it only skips applying the directed edges incident to vertices
in component $\largestcomp{}$.

Consider the linearization order of the finish algorithm
$\mathcal{F}$. Before running $\mathcal{F}$, the labeling $\labelvar{}$
represents some set of intermediate clusters that is a partial
connectivity labeling of $G$ (i.e., if there is a path between two
vertices $u$ and $v$ in the trees represented by the labeling
$\labelvar{}$, then this implies that there is a path between $u$ and
$v$ in $G$).  We will show by induction on the linearization order
that each operation preserves the fact that $\labelvar{}$ is a partial
connectivity labeling of $G$. Consider the two types of operations:

\begin{itemize}
  \item Operations of type (ii) described above are trivial to handle
    since the operations only change the structure of the labeling,
    and not the connectivity information represented by it. Thus, the
    labeling is still a valid partial connectivity labeling of $G$.

  \item Next, consider an operation of type (i), which applies an edge
    $(u,v)$.  If $u$ and $v$ are in the same tree before the
    operation, the operation does not modify the structure of the
    trees and thus the new connectivity labeling (if any labels
    changed) is equivalent to the previous labeling (and is thus
    valid).  Otherwise, if $u$ and $v$ are in different trees, the
    operation will produce a new labeling where $u$ and $v$ are in the
    same tree, and the labeling for any vertices $z$ not in the same
    tree as either $u$ or $v$ is unmodified. Thus, the new labeling is
    a partial connectivity labeling of $G$, since the algorithm only
    changes labels of vertices in the trees containing $u$ and $v$,
    and by induction all vertices in each of $u$'s and $v$'s trees
    were connected by paths in $G$, and due to the $(u,v)$ edge, all
    vertices in the union of these trees have paths between each other
    in $G$.
\end{itemize}

Therefore, at the end of the execution, the algorithm has computed a
partial connectivity labeling, and has linearized the apply operations
of all $(u,v)$ edges, where $u$ and $v$ are not both in
$\largestcomp{}$. Note that all edges $(u,v)$, where both $u$ and
$v$ are initially in $\largestcomp{}$ are initially in the same
tree. Therefore, the algorithm has computed a connectivity labeling,
having applied all edges $(u,v) \in E$, and so for any $u,v \in V$,
$u$ and $v$ are in the same tree in $\labelvar{}$ if and only if there
is a path between $u$ and $v$ in $G$, completing the proof.
\end{proof}

\subsubsection{Correctness Properties of Rem's Algorithm}\label{apx:rem_correct}

\begin{restatable}{theorem}{remcorrect}\label{thm:rem_correct}
  \unionremcas{} and \unionremlock{} are correct connectivity
  algorithms in the phase-concurrent setting, and result in
  linearizable results for a set of phase-concurrent union and find
  operations.
\end{restatable}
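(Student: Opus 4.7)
The plan is to exploit the phase barrier to handle union and find operations separately, then combine to obtain overall linearizability. For a find-only phase, since the only parent-pointer modifications come from path compression (which redirects a pointer to an ancestor on the path to the same root), the set of roots and the partition of vertices into trees is invariant under concurrent finds. Each find can therefore be linearized at the step where it reads a self-loop at the root, producing a valid sequential order whose results match the concurrent outputs.

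The main work is in the union phase. I plan to establish the following key invariant: at every \emph{quiescent} moment (when no union operation is in the middle of executing), the partition of vertices induced by the parent array coincides with the partition induced by the set of completed edge insertions. Since the phase barrier yields a quiescent moment at the end of the union phase, this invariant directly implies that the forest at the barrier is a correct connectivity labeling. Linearization points for the union operations can then be chosen at the step at which each one performs its final root-linking write (the successful CAS in \unionremcas{}, or the write under the root locks in \unionremlock{}), and verifying that the sequential order defined by these points produces the same partition reduces to the invariant.

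The main obstacle is proving this invariant in the presence of concurrent \splice{} operations. When a splice inside a pending union $A$ sets $\text{parent}(a) \gets q$, it can transiently move $a$'s subtree from the tree rooted at one of $A$'s endpoints to the tree rooted at the other endpoint, breaking the quiescent partition property momentarily. The plan is to argue, by induction on the completion order of operations, that every such splice is eventually ``balanced'' by the completion of $A$: when $A$ finishes, the cumulative effect of all of its splices together with its final root-link is equivalent, in terms of the induced partition, to a single sequential union of the two trees that $A$ began with. The key sub-claim is that splice preserves acyclicity and only moves vertices between the two trees that $A$ is currently merging; this uses that $\text{parent}(b)$ always lies on the path $A$ is walking in the other half, and that concurrent operations only CAS (or lock) individual root vertices, so they cannot destroy $A$'s two roots before $A$'s own link attempt retries and succeeds.

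Finally, correctness across phases follows from the barrier: all union writes are visible before any find begins, so finds observe a consistent forest. The global linearization order takes the unions in their completion order followed by the finds in their root-reading order; together with the per-phase arguments above, this yields linearizable phase-concurrent union and find operations, establishing the theorem.
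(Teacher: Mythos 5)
Your overall architecture (separating the phases at the barrier, inducting on completion order, and placing linearization points at each union's final root-link or at its read observing a common parent) matches the spirit of the paper's argument, and your treatment of find-only phases is fine. The genuine gap is the key sub-claim you use to tame \splice{}: that a splice performed by a pending union $A$ ``only moves vertices between the two trees that $A$ is currently merging.'' This is false once unions overlap, and your supporting premise---that concurrent operations only CAS (or lock) individual \emph{root} vertices---is also false, since \splice{} by definition performs its CAS at a \emph{non-root} vertex. Concretely, $A$'s walk pointer $r_u$ is only guaranteed to be a vertex that was an ancestor of $u$ at some earlier time; a concurrent union $B$ can splice a vertex lying on the portion of the path $A$ has already traversed, detaching $u$ (and everything below that vertex) into a tree on $B$'s side. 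After that, $A$'s next splice $P[r_u] \gets P[r_v]$ relocates the subtree rooted at $r_u$---a subtree that no longer contains $u$ and lies in a tree containing neither of $A$'s endpoints---into $P[r_v]$'s tree. The transient damage to the partition is therefore not confined to $A$'s two trees and is not repaired by $A$'s completion alone; it is only repaired once $B$ (and any further operations entangled with it) complete. Your induction, as stated, has no handle on this cross-operation entanglement. A second, related problem is that an invariant asserted only at \emph{quiescent} moments cannot drive the induction: in a phase where operations overlap continuously, there may be no quiescent point between the first invocation and the last response, so there is nothing to chain the inductive steps through.

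The paper's proof addresses exactly these two problems with one device: it maintains an auxiliary graph $G_C$ consisting of the forest edges of the parent array \emph{plus} a ``shadow edge'' for every pending union (added at invocation between the roots of the trees containing its endpoints, removed at response, with the real link edge added upon a completion that returns $\mathsf{True}$), and proves by induction on completion times that (i) once a union on $(u,v)$ completes, $u$ and $v$ remain connected in $G_C$, and (ii) $G_C$ never connects vertices lying in different components of $G$. Because $G_C$ explicitly accounts for all pending edges, this invariant holds at every step rather than only at quiescent ones, and the stale-pointer scenario above becomes harmless: the displaced subtree stays connected to the rest of its component in $G_C$ through the pending operations' shadow edges. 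To repair your proof you would need to replace the quiescent-partition invariant with an all-times invariant of this form---or some equivalent bookkeeping that charges each transient disconnection caused by a splice to a specific pending operation whose completion discharges it.
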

\begin{proof}
  We provide a high-level proof sketch.  The proof is based on
  building a graph of completed and currently active union operations.
  Let this graph be $G_{C}$. The graph (roughly) corresponds to the
  trees concurrently maintained by the algorithm, and includes extra
  edges corresponding to the currently executing \textsc{Union}
  operations.
  \begin{itemize}
    \item When a $\textsc{Union}(u,v)$ operation starts (is invoked),
      we place a \emph{shadow edge} between the root of the tree
      currently containing $u$ and the tree currently containing $v$.

    \item When a $\textsc{Union}(u,v)$ operation completes (we receive
      its response), we first check if it returns $\mathsf{True}$
      due to atomically hooking a tree root $r_u$ to a vertex $h_v$.
      If so, we link the tree containing $r_u$ to $h_v$ in the graph.
      Then, we remove the shadow edge corresponding to this Union
      operation from the graph.
  \end{itemize}
  The proof is by induction on the completion times of the operations
  that (i) once a $\textsc{Union}(u,v)$ call completes, $u$ and $v$
  always remain connected in $G_{C}$ and (ii) any vertices $u$ and $v$
  in different components in $G$ are never connected in $G_{C}$. The
  proof is similar to that of
  Theorem~\ref{thm:connectit_connectivity_correct_monotone}.
\end{proof}

\subsubsection{Counter-example for Rem's Algorithm with SpliceAtomic and FindCompress}

\begin{figure}[!t]
  \centering
  \hspace{-2em}
    \includegraphics[width=0.8\columnwidth]{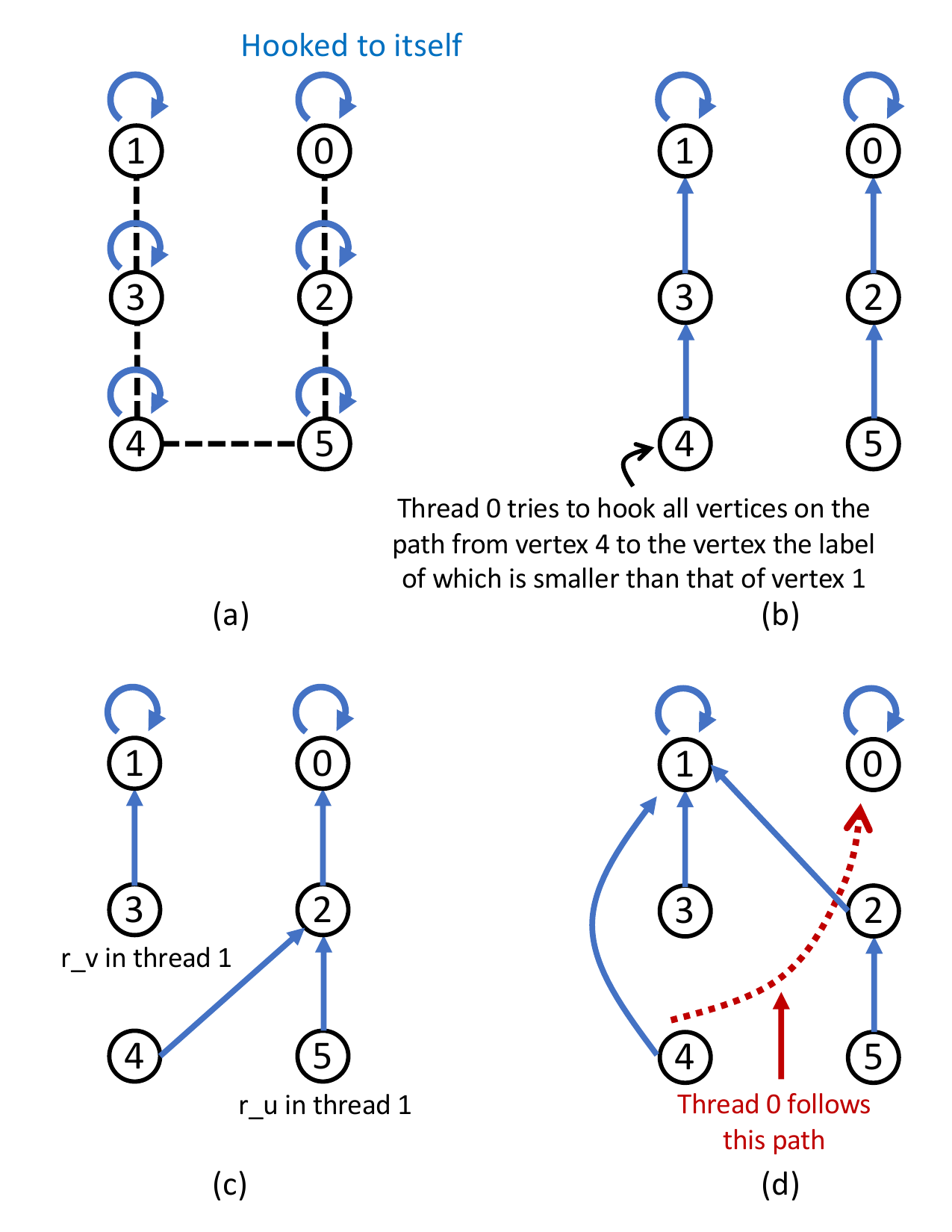}\\
        \caption{\small A counter example of Rem's algorithm with \splice{} and \findcompress{}.
  }\label{fig:rem_countex}
\end{figure}

As mentioned in Section~\ref{subsec:finish}, combining the
\ensuremath{\mathsf{Find-}} \ensuremath{\mathsf{Compress}} option with the \splice{} rule in the \unionremlock{} and
\unionremcas{} algorithms results in incorrect connectivity
algorithms. Figure~\ref{fig:rem_countex} illustrates a counter-example
for these algorithms. Without loss of generality, we will explain why
\unionremlock{} can lead to an incorrect result with \findcompress{}
and \splice{} (the counter-example is identical for \unionremcas{}).

Consider the following input graph where $V=\{0,1,2,3,4,5\}$ and
$E=\{(0,2),(1,3),(2,5),(3,4),(4,5)\}$ (shown in Figure~\ref{fig:rem_countex} (a)).
Assume the \unionremlock{} algorithm is applied to the edges $(2,5)$ and $(0,2)$ sequentially.
After that, thread 0 applies \unionremlock{} to the edge $(3,4)$, and
is about to execute Line~\ref{countex:compress} in
Algorithm~\ref{alg:find_options}
(i.e., thread 0 is in the process of executing \findcompress{}
(Line~\ref{countex:remfind} in Algorithm~\ref{alg:union_rem_lock})).

In the meantime, suppose another thread finishes applying
\ensuremath{\mathsf{UF\mhyphen}} \ensuremath{\mathsf{Rem\mhyphen Lock}} to the edge $(1,3)$. The state of the trees at this
point in the execution is shown in Figure~\ref{fig:rem_countex} (b).
The only outstanding operation is in thread 0, which will still
execute \findcompress{} from vertex $4$.

Now, thread 1 applies \unionremlock{} to the edge $(4,5)$: after one iteration of while statement (Line~\ref{countex:while} in Algorithm~\ref{alg:union_rem_lock}),
$P[4]$ is changed from 3 to 2, and $r_u=5$, $r_v=3$ (Figure~\ref{fig:rem_countex} (c)).

After that, thread 0 executes the while loop
(Line~\ref{countex:compress} in Algorithm~\ref{alg:find_options}),
which hooks $P[4]$ and $P[2]$ to 1 (Figure~\ref{fig:rem_countex} (d));
$P[0]$ is not hooked to vertex $1$ as $P[0] = 0 < 1$.

Therefore vertex 0 is now isolated, and so thread 1 cannot link vertex
1 to any other vertices during \unionremcas{}. Thus, the number of
connected components after applying \unionremlock{} is 2, which is
incorrect.

\subsection*{Additional Finish Methods}

\subsubsection{Shiloach-Vishkin}\label{sec:shiloachvishkin}

We include Shiloach-Vishkin's algorithm~\cite{ShiloachV82} in
\framework{}, which is a classic parallel connectivity algorithm. The
algorithm works by combining vertices into trees over a series of
synchronous rounds using linking rules.  Since only roots of trees can
be linked (from larger root to smaller), the algorithm is naturally monotonic.

We provide pseudocode for our implementation of Shiloach-Vishkin in
Algorithm~\ref{alg:shiloach_vishkin} in Appendix~\ref{apx:pseudocode}.
The algorithm iteratively hooks roots of trees onto each other and
compresses these trees by linking all vertices to the root of the
tree.  Each \emph{round} of the algorithm maps over all edges. If an
edge goes between two tree roots, the algorithm tries to \emph{hook}
the larger root to the smaller one.  This operation can either be done
using a plain write, which is what existing implementations of
Shiloach-Vishkin use, or using the atomic \emph{writeMin} operation to
hook to the lowest incident root, which is what our algorithm does. A
round ends by making the parent pointers of all vertices point to the
root of their tree using pointer jumping.
By considering the hook operations the algorithm tries to apply to
each edge as a set of concurrent union operations, it is easy to show
that this algorithm is linearizable, and thus linearizably monotonic,
since it only links tree roots. Overall, our implementation of
Shiloach-Vishkin requires $O(\log n)$ rounds, and each round can be
implemented in $O(n + m)$ work and $O(\log n)$ depth. The overall work
and depth of the algorithm are $O((n+m)\log n)$ and $O(\log^2 n)$ respectively.

\subsubsection{Stergiou}\label{sec:stergiou}
Stergiou et al.~\cite{Stergiou2018} presented a simple connectivity
algorithm for the bulk synchronous parallel (BSP) model. Stergiou's
algorithm is nearly expressible in the Liu-Tarjan framework, with the
exception that the algorithm uses two parents arrays,
\emph{prevParent} and \emph{curParent}, instead of just one array as
in the Liu-Tarjan algorithms.
\footnote{Through personal communication with the authors of~\cite{Stergiou2018}, we have learned that they use a single array in their actual implementation.}
In each round, the algorithm first updates \emph{prevParent} to be
equal to \emph{curParent}. It then performs a \parentconnect{},
updating \emph{curParent} based on the parent information in
\emph{prevParent}. Finally, it applies \shortcut{} on \emph{curParent}
and repeats until the parent array no longer changes. This algorithm
is similar to the $\mathsf{PUS}$ algorithm of Liu-Tarjan, except that
it uses two parent arrays instead of a single array. Stergiou's
algorithm is not monotonic since the algorithm can relabel non-root
vertices in the parents array. The correctness proof for this
algorithm is similar to the one for the Liu-Tarjan framework. We
compose Stergiou's algorithm with sampling methods in the same way as
with the Liu-Tarjan framework.

\subsubsection{Label Propagation}\label{sec:labelprop}
Label propagation is arguably the most frequently implemented parallel
graph connectivity today, and has been implemented by many existing
popular graph processing systems, including Pregel, Giraph, and other
frameworks~\cite{Pregel, Giraph, Nguyen2014, ShunB2013,
da2015flashgraph}. The \labelprop{} algorithm can be viewed as
iterative sparse-matrix vector multiplication (SpMV), where the
product is done over the $(\min,\min)$ semiring.

The implementation of this algorithm works as follows.  Initially,
each vertex is its own parent (we interpret the parents in the
algorithm as the labels).  In each round, the algorithm maintains a
subset of vertices called a \defn{frontier} that had their parent
change in the previous round (at the start of the algorithm, this set
contains all vertices).  Then, in each round, the algorithm processes
all edges incident to vertices in the current frontier, writing the
minimum parent ID to the neighbors using a \textsc{writeMin}.  The
algorithm terminates once the array of parents no longer changes,
which will occur within $D$ rounds where $D$ is the diameter of the
graph. In the worst case, the algorithm will process all edges in each
round giving overall work $O(mD)$ and depth $(D \log n)$. Note that
the \labelprop{} algorithm is \emph{not monotonic}, since the
label updates only affect the 1-hop neighborhood of a vertex spreading
its label, and not entire trees.

The correctness of this algorithm is folklore, but we provide a
generic proof of correctness for it, and \otherminbased{} algorithms
like Liu-Tarjan and Stergiou's algorithm in
Theorem~\ref{thm:min_based_correctness} below.
We compose \labelprop{} with sampling methods by using the same
idea as with our sampled Liu-Tarjan framework, since relabeling the
largest component to have the smallest ID ensures that vertices in
this component never change their ID in subsequent rounds.

\subsection*{Correctness for Other Min-Based
Methods}\label{apx:min_based_correctness}
We start by abstractly defining the \emph{\otherminbased{} finish
methods}, which captures Liu-Tarjan's algorithms, \labelprop{},
and Stergiou's algorithm.
\begin{definition}
An \defn{\otherminbased{} finish method} is a round-based algorithm
that repeatedly performs an ``application" step on all edges. An edge
application takes an edge $(u,v)$ and may update $P[u]$ (resp. $P[v]$)
based on $P[v]$ (resp.  $P[u]$), where a parent value is updated if
and only if the new value is strictly smaller than the previous value.
The algorithm terminates once the parent values stop changing.
\end{definition}
Label propagation, Liu-Tarjan's algorithms, and Stergiou's algorithms
all satisfy this definition. The following theorem states that
composing \otherminbased{} algorithms with sampling algorithms
produces correct results in \framework.

\begin{theorem}\label{thm:min_based_correctness}
  The connectivity meta-algorithm in \framework{}
  (Algorithm~\ref{alg:framework_connectivity}) applied with a sampling
  method $\mathcal{S}$ satisfying
  Definition~\ref{def:sampling_produces_trees} and an \otherminbased{}
  finish method $\mathcal{F}$ produces a correct connectivity
  labeling.
\end{theorem}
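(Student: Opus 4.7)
The plan is to prove correctness by contradiction, exploiting the termination criterion of \otherminbased{} algorithms: the algorithm only halts when no parent value can be decreased via any edge application. I would combine this termination property with Definition~\ref{def:sampling_produces_trees} on the sampling output and the relabeling trick that moves $\largestcomp{}$ to have the smallest IDs.

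First, I would fix notation. Let $\labelvar{}$ be the labeling produced by $\mathcal{S}$, which by Definition~\ref{def:sampling_produces_trees}\ref{def:sample_trees_one} represents height-one trees, and by \ref{def:sample_trees_two} is a valid partial connectivity labeling. Apply the relabeling so that every vertex in $\largestcomp{}$ receives a label smaller than any label outside $\largestcomp{}$, and let $P$ be the parent array on which $\mathcal{F}$ operates, initialized from this relabeled $\labelvar{}$. The crucial invariant I would establish first is that for every $v \in \largestcomp{}$, $P[v]$ remains equal to its initial minimal value throughout the execution of $\mathcal{F}$: since \otherminbased{} updates only decrease labels, and all labels in $\largestcomp{}$ are already at the global minimum, no edge application (even one we did not skip) could change them. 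This justifies skipping edges oriented out of $\largestcomp{}$ without losing information; and by symmetry of $G$, every cross-edge $(u,v)$ with $u \in \largestcomp{}$, $v \notin \largestcomp{}$ is still applied from $v$'s side.

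Next, I would prove the two directions of correctness of the final labeling $\labelvar{}''[v] = \labelvar{}'[\labelvar{}[v]]$ obtained by composing $\labelvar{}$ with the finish output $\labelvar{}' = P$ on the contracted graph. For soundness, $P[u] = P[v]$ at termination implies that both labels trace back through a chain of edge applications to a common source vertex; since each application follows an edge of $G$ and $\labelvar{}$ is already a valid partial labeling, this witnesses a path from $u$ to $v$ in $G$. For completeness, suppose some $u, v$ lie in the same component of $G$ but $P[u] \neq P[v]$ at termination. Pick any $u$--$v$ path in $G$; by the pigeonhole principle some edge $(w_i, w_{i+1})$ along this path satisfies $P[w_i] \neq P[w_{i+1}]$. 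Assume WLOG $P[w_i] < P[w_{i+1}]$. If $w_{i+1} \notin \largestcomp{}$, then the edge application $(w_i, w_{i+1})$ (which is executed because $w_{i+1} \notin \largestcomp{}$) would strictly decrease $P[w_{i+1}]$, contradicting termination. If $w_{i+1} \in \largestcomp{}$, the relabeling invariant forces $P[w_{i+1}]$ to be minimal, contradicting $P[w_i] < P[w_{i+1}]$.

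The main obstacle is carefully handling the interaction between the skip optimization and the non-monotonicity of \otherminbased{} methods: unlike the linearizably monotone case (Theorem~\ref{thm:connectit_connectivity_correct_monotone}), here individual updates can rewrite labels of non-root vertices, so an inductive argument over operations does not directly preserve partial-labeling validity. The contradiction argument sidesteps this by reasoning only about the \emph{terminal} state, relying on the fact that any inconsistency would have to be exposed along some $G$-path and would have prevented termination. The $\largestcomp{}$-minimality invariant must be verified with a short induction on update steps to ensure the skip optimization does not introduce a missed update; once that is in hand, the remainder of the argument follows cleanly from the termination condition of \otherminbased{} algorithms.
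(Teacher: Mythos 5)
Your proof follows essentially the same route as the paper's: a contradiction argument on the terminal state---if two vertices in the same component of $G$ ended with different labels, some edge on a connecting path would have endpoints with differing labels, and applying that edge in the final round would have strictly decreased one of them, contradicting termination---combined with the observation that relabeling $\largestcomp{}$ to the globally smallest ID makes those labels immutable under min-based updates, so skipping edges oriented out of $\largestcomp{}$ loses nothing (the cross-edges are applied from the non-skipped endpoint). Your write-up is, if anything, slightly more explicit than the paper's (the $\largestcomp{}$-minimality invariant, the case split on whether the offending endpoint lies in $\largestcomp{}$, and the soundness direction that equal labels imply actual connectivity, which the paper leaves implicit), but the core argument is the same.
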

\begin{proof}
The proof is by contradiction. Let $\labelvar{}$ be the labels produced by
$\mathcal{S}$ satisfying Definition~\ref{def:sampling_produces_trees},
that is they are consistent with some contraction of $G$ such that
composing the contraction with the connectivity induced by
inter-component edges results in a correct connectivity labeling of
$G$. The proof only relies on the fact that the sampling method only
maps vertices to the same label in $\labelvar{}$ if they are in the same
component. Let $\labelvar{}'$ be the final connectivity labeling after running
$\mathcal{F}$.

Suppose that two vertices $u$ and $v$ in the same component in $G$ are
assigned different labels in $\labelvar{}'$. Since $u$ and $v$ are connected,
there is a path between them in $G$, and on this path, there must be an
edge $(x,y)$ where $\labelvar{}'[x] \neq \labelvar{}'[y]$. However, since the
\otherminbased{} method $\mathcal{F}$ applies all edges in each round,
on the last round it must have applied the $(x,y)$ edge, and since
$\labelvar{}'[x] \neq \labelvar{}'[y]$, one of these values must have
changed in this round, contradicting the assumption that this was the
last round of the algorithm.

Furthermore, the \otherminbased{} algorithms composed with sampling
methods can skip the edges incident to the largest sampled component,
since the largest sampled component is relabeled to have the smallest
ID, and thus by the preceding argument, all vertices that are
connected to the largest sampled component in $G$ will eventually
acquire its ID.  \end{proof}

\subsection{\framework{}: Spanning Forest}\label{apx:spanning_forest}
In this section, we show how to extend \framework{} to solve the
problem of generating a spanning forest of the graph.
Algorithm~\ref{alg:framework_spanning_forest} shows the generic
\framework{} algorithm for this problem. Like our connectivity
algorithm, the algorithm can be supplied with different combinations
of sample and finish methods and generate correct spanning forest
algorithms. The main idea in the algorithm is to map each edge in the
spanning forest to a unique vertex that is one of the edge's
endpoints.

\myparagraph{Algorithm Description} The main changes between the
\oursystem{} connectivity algorithm and spanning forest algorithm are
as follows. The spanning forest algorithm requires (i) supplying an
$\codevar{edges}$ array to the algorithm, which is an array of size
$|V|$ where each entry is initially an edge pair of a suitable null
value $(\infty, \infty)$ (Line~\ref{sf:initialize_edges}); (ii) the
sampling method generates a subset of the spanning forest edges in
addition to computing partial connectivity information
(Line~\ref{sf:sampling}); and (iii) the $\codevar{finish}$ procedure
now takes both the partially computed components and the partially
computed forest from the sampling step, and finishes computing the
spanning forest of the graph (Line~\ref{sf:finish}).  Finally, since
some of the vertices may not have a corresponding spanning forest
edge, we filter the \textsc{edges} array to remove any pairs which are
not edges (Line~\ref{line:filteredges}).
Our goal when building this interface was to allow as many finish
methods as possible to work with it, and so we did not consider
optimizations, such as replacing the edges array, which is an array of
edge pairs, by a parents array which would only store a single
endpoint per vertex (such a modification could be made to work for
a small subset of the finish methods, but would exclude many other
finish methods from working in the interface).

We show in the following subsections that any correct sampling method
can be combined with any \emph{root-based finish method} (defined in
Section~\ref{sec:spanningforestfinish}) to generate correct spanning
forest algorithms (Theorem~\ref{thm:spanning_forest_correct}). Our
framework currently only works for root-based algorithms, and thus
excludes algorithms such as the class of non-root-based algorithms in
Liu-Tarjan's framework.

\begin{algorithm}\caption{\framework{} Framework: Spanning Forest} \label{alg:framework_spanning_forest}
\small
\begin{algorithmic}[1]
\Procedure{SpanningForest}{$G(V, E), \codevar{sampling}, \codevar{finish}$}
  \State $\codevar{labels} \gets \{i \rightarrow i\ |\ i \in [|V|]\}$\label{sf:initialize_comps}
  \State $\codevar{edges} \gets \{i \rightarrow (\infty, \infty) \ |\ i \in [|V|]\}$\label{sf:initialize_edges}
  \State $(\codevar{labels},\codevar{edges}) \gets \codevar{sampling}.\textsc{SampleForest}(G, \codevar{labels}, \codevar{edges})$\label{sf:sampling}
  \State $\largestcomp{} \gets \textsc{IdentifyFrequentComp}(\codevar{labels})$
  \State $(\codevar{labels},\codevar{edges}) \gets \codevar{finish}.\textsc{FinishForest}(G, \codevar{labels},\codevar{edges}, \largestcomp{})$\label{sf:finish}
  \State \algorithmicreturn{} $\textsc{Filter}(\codevar{edges}, \textsc{fn} (u, v) \rightarrow ((u,v) \neq (\infty, \infty)))$\label{line:filteredges}
\EndProcedure
\end{algorithmic}
\end{algorithm}

\subsubsection{Sampling Methods for Spanning Forest}
We first introduce the following correctness definition for spanning
forest sampling methods, which ensures that the sample methods compose
correctly with all finish methods that are subsequently applied.

\begin{definition}[Sampling for Spanning Forest]\label{def:samplingvalidity}
  Given an undirected graph $G(V, E)$ let $\mathcal{E}_{S}$ denote
  the set of forest edges generated by a sampling method
  $\mathcal{S}$, and $C$ denote the connectivity labeling after
  applying $S$. The sampling method $\mathcal{S}$ is \defn{correct} if
  \begin{enumerate}[label=(\arabic*)]
    \item The sampling method satisfies
    Definition~\ref{def:sampling_produces_trees}.\label{def:sv_orig}

  \item $\textsc{Connectivity}(V, \mathcal{E}_{S}) = C$, i.e., finding
  the connected components on the spanning forest edges induces the
  connectivity labeling $C$.\label{def:sv_one}

  \item Each edge $e=(u,v) \in \mathcal{E}_{S}$ is assigned to exactly
  one non-root vertex, and no vertex is assigned more than one edge.\label{def:sv_two}
\end{enumerate}

\end{definition}
Requirement~\ref{def:sv_orig} is simply the original notion of
correctness for sampling algorithms in the context of connected
components.
Requirement~\ref{def:sv_one} corresponds to the intuitive
notion that the connectivity labeling generated by the sampling method
corresponds is equivalent to contracting the subset of spanning forest
edges returned by the method.
Requirement~\ref{def:sv_two} is required to correctly compose the
output of our sampling methods with our finish methods. We now discuss
how to adapt each of the sampling methods presented in
Section~\ref{subsec:sampling} to satisfy the requirements of
Definition~\ref{def:samplingvalidity}.

\myparagraph{$k$-Out Sampling} For \koutsample{}, we require the
following modifications. First, we include an edge $e=(u,v)$ as a
forest edge if it was successfully used to hook a root to a different
vertex in the underlying union-find algorithm used by the $k$-Out
sampling method. Let $r(u)$ and $r(v)$ be the roots of $u$ and $v$,
respectively, at the time this edge is hooked. Without loss of
generality, suppose $r(u)$ is the root we hook ($r(u)$ is no longer a
tree root after hooking). We store this edge in the \textsc{edges}
array by setting $\textsc{edges}[r(u)] = e$.

\myparagraph{BFS Sampling and LDD Sampling} We modify \bfssample{} to
take all edges in the BFS tree as spanning forest edges. Consider a
tree edge $e=(u,v)$ in the BFS tree where $u$ is the parent of $v$ in
the tree. Then, we assign $e$ to $v$ as part of the sampling
method. Finally, for low-diameter decomposition sampling, we take the
edges traversed by the searches in the LDD as spanning forest
edges.  As in BFS sampling, we assign each tree edge $e=(u,v)$ to the
\emph{child} in the BFS tree rooted at the cluster containing $u$ and
$v$.

\begin{theorem}
\koutsample{}, \bfssample{}, and \lddsample{} all satisfy Definition~\ref{def:samplingvalidity}.
\end{theorem}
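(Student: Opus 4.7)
\medskip

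\noindent\textbf{Proof proposal.} The plan is to verify each of the three requirements of Definition~\ref{def:samplingvalidity} separately for each sampling method. Requirement~\ref{def:sv_orig} holds by Theorem~\ref{thm:apx:samplingcorrect}, which was already established for all three sampling methods, so I only need to show requirements~\ref{def:sv_one} and~\ref{def:sv_two}. The high-level idea is that every edge we place into $\mathcal{E}_{S}$ has a natural ``owner'' vertex (the non-root endpoint of the edge in the tree that the sampling method constructs), which is what makes the assignment well-defined and injective; and the set of inserted edges forms a spanning forest of the clusters in $C$, which is what makes the induced connectivity match.

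For \koutsample{}, I would argue requirement~\ref{def:sv_two} first: an edge $e=(u,v)$ is added to $\mathcal{E}_{S}$ only at the moment it is used by the underlying union-find to hook some root $r$ onto a vertex in another tree, at which point we store $e$ at $\codevar{edges}[r]$. Because $r$ immediately ceases to be a root after this hook and root-based union-find never makes $r$ a root again, $r$ can be hooked at most once, so no vertex receives two edges. For requirement~\ref{def:sv_one}, I would use induction on the sequence of successful hooks performed during the \koutsample{} run: after $k$ hooks, the contracted components in the union-find data structure coincide with the connected components of the subgraph $(V,\mathcal{E}_{S})$ formed so far, since each hook both unites two union-find components and adds exactly one edge joining them. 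Applying this at the end of the sampling phase yields $\textsc{Connectivity}(V,\mathcal{E}_{S}) = C$.

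For \bfssample{} and \lddsample{}, the argument is more direct. In \bfssample{} every non-root (non-source) vertex $v$ discovered by the BFS has a unique parent $u$ along the tree edge $(u,v)$, which we assign to $v$; this immediately gives requirement~\ref{def:sv_two}. The tree edges form a spanning tree of the vertices explored by the BFS, which is precisely the component labeled by $C$, and all other vertices are isolated in $(V,\mathcal{E}_{S})$ and carry their self-loop labels, so requirement~\ref{def:sv_one} follows. \lddsample{} is analogous, where the cluster BFS trees play the role of the single BFS tree: each non-center vertex is the child of exactly one tree edge in its cluster, giving requirement~\ref{def:sv_two}, and the union of the cluster trees forms a spanning forest of the clusters, which is exactly the partition induced by $C$, giving requirement~\ref{def:sv_one}.

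The only mildly subtle step is the \koutsample{} case, where edges are inserted concurrently by the parallel union-find, and I must check that the bookkeeping $\codevar{edges}[r] \gets e$ is conflict-free. This is where the observation that $r$ is the root being hooked (and so, by Jayanti--Tarjan-style linearizability of the underlying union-find, the hook of $r$ is linearized to a unique point before any further operation involving $r$ as a root) does the work: the linearization order of hooks gives a well-defined sequence in which each root is hooked at most once, so the write to $\codevar{edges}[r]$ is unique, and the inductive argument above can be applied along this linearization order. The other two methods are straightforward because they already process tree edges in a sequential, parent-to-child fashion.
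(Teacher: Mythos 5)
Your proposal is correct and follows essentially the same route as the paper's proof: requirement~(1) via Theorem~\ref{thm:apx:samplingcorrect}, requirement~(3) for \koutsample{} via the observation that only roots are hooked and each root is hooked at most once, and requirements~(2) and~(3) for \bfssample{}/\lddsample{} via the unique-parent structure of the (cluster) BFS trees. Your explicit induction along the linearization order of hooks, and the remark that linearizability makes the write to $\codevar{edges}[r]$ conflict-free, simply flesh out what the paper compresses into ``since the union-find algorithm is monotonic, it is easy to verify\ldots''; this is a welcome elaboration, not a different argument.
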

\begin{proof}
Theorem~\ref{thm:apx:samplingcorrect} shows that all three sampling
methods satisfy requirement~\ref{def:sv_orig}
Definition~\ref{def:samplingvalidity}.

For \koutsample{}, since the union-find algorithm is monotonic, it is
easy to verify that the set of forest edges we include from the
sampling produces exactly the trees represented by the parent array
after \koutsample{}, and thus this method satisfies
requirement~\ref{def:sv_one} of
Definition~\ref{def:samplingvalidity}. Requirement~\ref{def:sv_two}
can also be satisfied since \koutsample{} algorithm only hooks
roots. Recall that when the algorithm applies an edge $(u,v)$ that
hooks a root $r(u)$ to $r(v)$, it sets $\textsc{edges}[r(u)] = e$.
This assignment satisfies the first part of \ref{def:sv_two} since $e$
is included in the forest, and is only assigned to exactly one
non-root vertex, namely $r(u)$.  It also satisfies the second part of
\ref{def:sv_two} since only vertices that are roots at some point in
the algorithm have edges assigned to them, and every root is hooked at
most once.  Therefore, a vertex has at most one edge assigned to it,
as required.

Let $T$ be the BFS tree of the sampled component. Since a BFS tree is
clearly connected, the sampling method clearly satisfies
requirement~\ref{def:sv_one} of Definition~\ref{def:samplingvalidity}.
Now, consider a tree edge $e=(u,v)$ in the BFS tree where $u$ is the
parent of $v$ in the tree, and recall that we assign $e$ to $v$, which
is not a root vertex. Since each vertex in the tree has exactly one
parent, this assignment satisfies requirement~\ref{def:sv_two} of
Definition~\ref{def:samplingvalidity}.

The proof for \lddsample{} is identical to that of \bfssample{} since
each cluster in the LDD is explored using a BFS rooted at the center.
We described the algorithm used to compute LDD earlier in
Section~\ref{subsec:sampling}, and recall that it works by sampling a
start time from an exponential distribution for each vertex, and
performing a simultaneous breadth-first search.
Vertices join the search at their start time if they are not already
covered by a cluster. We refer to Miller et al.~\cite{MillerPX2013}
and Shun et al.~\cite{SDB14} for more details on LDD.
\end{proof}

\subsubsection{Finish Methods for Spanning
Forest}\label{sec:spanningforestfinish}

Next, we apply our framework to handle a class of \defn{root-based
spanning-forest algorithms}. As the name suggests, a root-based
spanning-forest algorithm is one that adds an edge to a spanning
forest if the edge successfully hooks the root of the tree containing
one of its endpoints to a smaller vertex in the other endpoint's tree.
It is easy to see that the entire class of union-find algorithms
considered in this paper satisfy this definition, since each of the
union steps in these algorithms only operates on roots, and atomically
hooks a root to a smaller vertex in the other tree.
The main correctness property of our framework for spanning forest is
summarized by the following theorem:

\begin{theorem}\label{thm:spanning_forest_correct}
Given an undirected graph $G(V, E)$, composing a correct sampling
method $\mathcal{S}$
(as per Definition~\ref{def:samplingvalidity}) with a root-based
spanning forest method $\mathcal{F}$ using
Algorithm~\ref{alg:framework_spanning_forest} outputs a correct
spanning forest.
\end{theorem}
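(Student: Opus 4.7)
My plan is to prove this by tracking a single invariant over the execution of Algorithm~\ref{alg:framework_spanning_forest}: throughout the sampling phase and the subsequent finish phase, the set of edges currently stored in the \textsc{edges} array induces exactly the rooted trees maintained by the partial connectivity labeling $\labelvar{}$. Once this invariant is established, correctness follows in two standard pieces: (i) that the final output is acyclic (hence a forest), and (ii) that it spans each connected component of $G$. The latter will piggyback on Theorem~\ref{thm:connectit_connectivity_correct_monotone} (for the linearizably monotone finish methods) and the sampling correctness conditions in Definition~\ref{def:samplingvalidity}.

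For the acyclicity piece, I would proceed by induction on the sequence of ``hook'' events generated across both phases. A root-based algorithm only modifies the parent pointer of a vertex $r$ that is currently a tree root, and only by linking it to a vertex in a disjoint tree; therefore every newly recorded forest edge joins two previously disjoint trees in $\mathcal{E}$, and the invariant ``$\mathcal{E}$ is a forest realizing the current rooted-tree structure'' is preserved. The write $\textsc{edges}[r(u)] \gets (u,v)$ does not clobber an existing edge because $r(u)$ was, by assumption, a root immediately prior to the hook, and no root has an edge assigned to it (this is exactly the content of requirement~\ref{def:sv_two} of Definition~\ref{def:samplingvalidity} for the sampling phase, and is a consequence of the root-based property for the finish phase, since a vertex transitions from root to non-root monotonically and so its slot is written at most once).

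For the spanning piece, I would argue that at termination every pair of vertices $u, v$ in the same component of $G$ lies in the same tree of $\mathcal{E}$. By the maintained invariant, trees in $\mathcal{E}$ are in bijection with trees of the final labeling $\labelvar{}$, so it suffices to show that the final labeling correctly identifies components---which is precisely Theorem~\ref{thm:connectit_connectivity_correct_monotone}. The main obstacle, and the point I expect will require the most care, is reconciling this with the two-phase structure: the finish method deliberately skips traversing out of vertices in $\largestcomp{}$. I would handle this by observing that requirement~\ref{def:sv_one} of Definition~\ref{def:samplingvalidity} guarantees that the sampling already contributed a spanning sub-forest for $\largestcomp{}$ before the finish phase begins, so no edges inside $\largestcomp{}$ need to be added later; and that any inter-component edge $(u,v)$ with $u \in \largestcomp{}, v \notin \largestcomp{}$ is still applied from the $v$-side because $G$ is symmetric (the same argument used in the proof of Theorem~\ref{thm:connectit_connectivity_correct_monotone}). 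The final \textsc{Filter} step then simply removes the sentinel entries for vertices that never had an edge assigned (i.e., the unique per-tree root), yielding exactly $n - k$ edges for a graph with $k$ components, which with acyclicity completes the proof.
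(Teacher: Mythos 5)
Your proposal is correct, and it uses the same essential ingredients as the paper's proof---the root-based hooking discipline, requirement~\ref{def:sv_two} of Definition~\ref{def:samplingvalidity} to guarantee that the slot $\textsc{edges}[r(u)]$ is free exactly when a root is hooked, and the sampling correctness conditions to glue the two phases together---but it organizes them differently. The paper's proof factors through the contracted graph: it views the finish phase as a root-based spanning forest algorithm running on $G[P]$, invokes that algorithm's correctness as a black box to get a spanning forest of $G[P]$, and then composes this forest with the sampling forest via Definition~\ref{def:sampling_produces_trees}. You instead argue directly on $G$ by maintaining a single invariant across both phases (the \textsc{edges} array realizes exactly the rooted trees of the current labeling), proving acyclicity by induction on hook events, and importing the spanning property from Theorem~\ref{thm:connectit_connectivity_correct_monotone} via the tree bijection. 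Your route is more explicit in two places the paper glosses over: the non-clobbering argument during the finish phase (your observation that the set of roots only shrinks, so any root hooked in the finish phase already had an empty slot at the end of sampling) and the handling of edges skipped out of $\largestcomp{}$, which the paper buries inside ``the finish method is correct.'' The paper's contraction view, in exchange, cleanly separates the two phases and avoids re-proving connectivity correctness inside the spanning forest argument. One small scope caveat in your write-up: your appeal to Theorem~\ref{thm:connectit_connectivity_correct_monotone} covers only the linearizably monotone finish methods, whereas the paper's root-based class also includes Rem's variants with \splice{}; for those your argument still goes through, but the spanning step must cite the phase-concurrent correctness result (Theorem~\ref{thm:rem_correct}) in place of the linearizably monotone one.
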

\begin{proof}
We provide a high-level proof sketch. The parents array, $P$, emitted
by the sampling method represents a set of rooted trees, where the
forest edges discovered by the sampling method, $E_S$, induce the
connectivity information stored in the trees. Applying a root-based
spanning forest method to this configuration (the parents array, $P$,
and the edge array, $E_S$) can be viewed as applying a root-based
spanning forest algorithm to $G[P]$, the graph induced by the
connectivity mapping generated by the sampling method.

The finish method by definition will hook a number of roots during its
execution. Since each root can be hooked at most once, and does not
have an edge written into its index in $E_S$ yet (by
requirement~\ref{def:sv_two} of
Definition~\ref{def:samplingvalidity}), the finish method can
successfully store this edge in $E_S$. Since the finish method is
correct, the algorithm outputs a correct spanning forest for $G[P]$.
Applying the fact that the sampling method also satisfies
Definition~\ref{def:sampling_produces_trees} shows that the union of
the forest generated by the sampling method and the forest generated on
$G[P]$ is a spanning forest for $G$.
\end{proof}

Theorem~\ref{thm:spanning_forest_correct} provides correctness
guarantees for the following algorithms in our framework, which can be
composed with all three sampling methods:
\begin{itemize}[itemsep=0pt]
  \item All union-find variants, including variants of Rem's algorithm, and all
    internal parameter combinations for these algorithms
  \item Shiloach-Vishkin's algorithm
  \item The root-based algorithms in the Liu-Tarjan framework
\end{itemize}

\subsection{\framework{}: Streaming}\label{apx:streaming}

\myparagraph{Algorithm Description} We first describe the generic
\framework{} algorithm for computing streaming connectivity. The
pseudocode for our algorithm is described in
Algorithm~\ref{alg:framework_connectivity_streaming}.

Algorithm~\ref{alg:framework_connectivity_streaming} is initialized
similarly to the \framework{} connectivity algorithm
(Algorithm~\ref{alg:framework_connectivity}) on an initial graph
(which is possibly empty). Before batch-processing starts, the
framework calls the \textsc{Initialize} method on the initial graph
(Line~\ref{line:streaming_initialize}).  The \textsc{ProcessBatch}
function provided by the framework can then be called by clients,
where each \textsc{ProcessBatch} call is executed internally in
parallel by calling the underlying finish algorithm's batch-processing
mechanism (Line~\ref{line:streaming_finish}). The algorithm then
returns the results of the queries (Line~\ref{line:returnqueries}).

\begin{algorithm}\caption{\framework{} Framework: Streaming} \label{alg:framework_connectivity_streaming}
\small
\begin{algorithmic}[1]
\Procedure{Initialize}{$G(V, E), \codevar{sampling\_opt}, \codevar{finish\_opt}$}
  \State $\codevar{sampling} \gets \textsc{GetSamplingAlgorithm}(\codevar{sample\_opt})$
  \State $\codevar{finish} \gets \textsc{GetFinishAlgorithm}(\codevar{finish\_opt})$
  \State $\codevar{components} \gets \{i \rightarrow i\ |\ i \in [|V|]\}$\label{line:initialize_init}
  \State $\codevar{sampling}.\textsc{SampleComponents}(G, \codevar{components})$\label{line:initialize_sample}
  \State $\largestcomp{} \gets \textsc{IdentifyFrequent}(\codevar{components})$\label{line:initialize_identifyfreq}
  \State $\codevar{finish}.\textsc{FinishComponents}(G, \codevar{components},\largestcomp{})$\label{line:initialize_finish}
  \State \algorithmicreturn{}$\ \codevar{finish}$
\EndProcedure

\State $\mathsf{let}\ \mathcal{BC} \gets \textsc{Initialize}(G,
\codevar{sampling\_opt},
\codevar{finish\_opt})$\label{line:streaming_initialize}

\Procedure{ProcessBatch}{$B = \{\emph{updates}, \emph{queries}\}$}
\State $\codevar{query\_results} \gets \mathcal{BC}.\textsc{ProcessBatch}(B)$\label{line:streaming_finish}
  \State \algorithmicreturn{}$\
  \codevar{query\_results}$\label{line:returnqueries}
\EndProcedure
\end{algorithmic}
\end{algorithm}

\myparagraph{Correctness in the Parallel Batch-Incremental Setting}
What do we mean by an algorithm being correct in the parallel
batch-incremental setting? A natural definition is to enforce that the
algorithm is linearizable for a set of union and find operations, and
that the linearization points of operations that occur in previous
batches are fixed, i.e., operations in future batches cannot affect
the linearization points of operations in previous batches that have
already happened. If union and find operations within a batch are run
concurrently, we require these operations to be linearizable
with respect to the connectivity information at the start of the
batch. We believe this correctness definition, which we use in this
paper, is intuitive, and captures the desired properties of parallel
batch-incremental algorithms.

Recall the different types of streaming algorithms introduced in
Section~\ref{subsec:streaming}.  The reason we categorize algorithms
as Type~\ref{lab:phaseconcurrentalgs}, and apply the updates and finds
phase concurrently is because these algorithms are not linearizable
when applying finds concurrently with updates.

The proof that the algorithms of Type~\ref{lab:waitfreealgs} and
Type~\ref{lab:syncupdates} are correct according to this definition
follows due to the fact that these algorithms are linearizable for a
collection of concurrent unions and finds.
The proof that algorithms of Type~\ref{lab:phaseconcurrentalgs} are
correct in the parallel batch-incremental setting when run
phase-concurrently follows from their correctness for a collection of
edge updates in the static setting. After applying all updates, the
parents array stores the correct connectivity information such that two
vertices are in the same component if and only if they have the same
root in the parents array. Therefore, we have the following theorem:
\begin{theorem}
  Type~\ref{lab:waitfreealgs}, Type~\ref{lab:syncupdates}, and
  Type~\ref{lab:phaseconcurrentalgs} streaming algorithms implemented
  in \framework{} are correct when run in the parallel
  batch-incremental setting.
\end{theorem}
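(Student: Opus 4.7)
The plan is to prove the theorem by handling the three algorithm types separately, leveraging the correctness results already established earlier in the paper for each class, and gluing them together with a single inductive argument across batches. The key invariant to maintain at batch boundaries is that the underlying parents array encodes the connectivity of the cumulative graph seen so far, i.e., that two vertices share a root if and only if they lie in the same connected component of the union of all edges inserted in prior batches. Given this invariant, the task reduces to showing that within a batch, each algorithm processes its updates and queries in a way that is linearizable with respect to the start-of-batch state, and reestablishes the invariant at the end of the batch.

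For Type~\ref{lab:waitfreealgs}, I would simply observe that the wait-free asynchronous model subsumes the parallel batch-incremental model: an execution within a batch is a special case of the concurrent executions for which these algorithms were already shown (via Jayanti and Tarjan's proofs and the arguments in Section~\ref{sec:uf}) to be linearizable. Linearizable monotonicity then pins the linearization points of operations to within their own batch, because no later operation can undo a tree merge. For Type~\ref{lab:syncupdates}, I would view each batch as invoking the static \shiloachvishkin{} or root-based \liutarjan{} algorithm on the graph whose edges are the previously incorporated ones (as encoded by the current forest) together with the new batch edges, and appeal to the static-setting correctness proof: once the round-based iteration converges, \textsc{Find} calls serve the queries correctly. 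The reason these algorithms are relegated to the batch-incremental (rather than wait-free) setting, namely that repeated edge processing until convergence is needed, does not obstruct correctness within a batch because the batch is finite and the convergence criterion is monotone.

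For Type~\ref{lab:phaseconcurrentalgs}, the proof is a direct application of Theorem~\ref{thm:rem_correct}, which already establishes the correctness and linearizability of \unionremcas{} and \unionremlock{} with \splice{} in the phase-concurrent setting. Since \framework{} inserts an internal barrier between the update phase and the query phase of each batch, the hypothesis of that theorem is satisfied per batch, and queries in the query phase correctly reflect the connectivity after all insertions in the batch have taken effect.

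The main obstacle, and the one I would spend the most care on, is the cross-batch inductive step for Type~\ref{lab:phaseconcurrentalgs}: I need to check that after a phase-concurrent batch completes, the parent-pointer forest is in a state for which Theorem~\ref{thm:rem_correct}'s hypotheses hold afresh at the start of the next batch. Concretely, the splice rule moves non-root vertices, and I would need to verify that no splice step can leave the forest inconsistent with the cumulative graph's connectivity once all in-flight unions of the batch have linearized. This should follow from the $G_C$-based invariant in the proof of Theorem~\ref{thm:rem_correct} applied at the end of the update phase, but it is the one place where a short explicit argument, rather than a black-box citation, is warranted.
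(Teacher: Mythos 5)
Your proposal is correct and follows essentially the same route as the paper's argument: decompose by type, appeal to linearizability for a collection of concurrent unions and finds for Types~\ref{lab:waitfreealgs} and~\ref{lab:syncupdates}, and invoke the phase-concurrent correctness of Rem's algorithm (Theorem~\ref{thm:rem_correct}) for Type~\ref{lab:phaseconcurrentalgs}. The explicit cross-batch invariant you insist on --- that the parents array encodes the connectivity of the cumulative graph at each batch boundary --- is exactly what the paper asserts (tersely) when it states that after applying all updates the parents array stores correct connectivity information, so your version is simply a more careful rendering of the same proof.
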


\myparagraph{Correctness in the Wait-Free Asynchronous Setting}
We use the standard definition of linearizability in the wait-free
asynchronous setting (see Section~\ref{sec:prelims}). Recall from
Section~\ref{subsec:streaming} that we only consider
Type~\ref{lab:waitfreealgs} and Type~\ref{lab:syncupdates}
algorithms. The correctness for these two types follows from the fact
that both types of algorithms are linearizable for a collection of
concurrent unions and finds. Note that for Type~\ref{lab:syncupdates},
only the \textsc{IsConnected} operations may run concurrently, since
both the Shiloach-Vishkin and Liu-Tarjan algorithms process updates
synchronously, over a number of rounds. Our point for
Type~\ref{lab:syncupdates} algorithms is that the finds can run
concurrently while the connectivity information is being updated by
the batch-incremental update algorithm.

\begin{theorem}
  Type~\ref{lab:waitfreealgs} and Type~\ref{lab:syncupdates}
  algorithms implemented in \framework{} are correct when run in the
  wait-free asynchronous setting.
\end{theorem}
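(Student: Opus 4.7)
The plan is to reduce correctness in the wait-free asynchronous streaming setting to the linearizability results already established for the corresponding static concurrent union-find algorithms. The key observation is that each streaming operation can be implemented using the static primitives: an $\textsc{Insert}(u,v)$ corresponds to a $\textsc{Union}(u,v)$, and an $\textsc{IsConnected}(u,v)$ is a pair of $\textsc{Find}$ calls followed by an equality comparison of the returned labels.

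For Type~\ref{lab:waitfreealgs} algorithms, the argument is essentially immediate. These are precisely the union-find variants of Section~\ref{sec:uf} (excluding Rem's with \splice{}), each of which has already been shown to be linearizable for arbitrary collections of concurrent union and find operations (a fact we have used in establishing linearizable monotonicity). Since the wait-free asynchronous streaming setting simply invokes these operations on shared memory as calls arrive, linearizability of the streaming operations follows directly; the linearization point of each $\textsc{IsConnected}$ can be taken to be that of its second internal find, which is well-defined because these finds are individually linearizable and wait-free. The linearization order so obtained is consistent with the real-time order of operations, as required by the definition of linearizability.

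For Type~\ref{lab:syncupdates} algorithms (Shiloach-Vishkin and the root-based Liu-Tarjan algorithms), the subtlety is that a single batch of insertions induces a multi-round synchronous computation that modifies the labels array, and only \textsc{IsConnected} queries are permitted to interleave concurrently with this computation. I would place the linearization point of a query at the moment its two find traversals complete and argue correctness via a monotonicity invariant: the connectivity partition represented by the labels array is non-decreasing over time, since once two vertices are placed in the same tree, they remain so in all subsequent configurations. Consequently, even if a find traverses a path whose parent pointers are being rewritten concurrently by a round of the update algorithm, it terminates at a root corresponding to some intermediate partition that is no coarser than the partition at the query's invocation and no finer than the one at its response. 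Two vertices whose finds return the same root are therefore genuinely connected at the chosen linearization point, and two returning distinct roots are genuinely disconnected with respect to some reachable intermediate state.

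The step I expect to be the main obstacle is formalizing the claim that concurrent pointer-chasing in the presence of Shiloach-Vishkin's shortcutting (and the analogous shortcut/alter phases of Liu-Tarjan) always terminates at a valid root. The potential hazard is that a shortcut performed by the synchronous update algorithm could atomically reassign a parent pointer while the query is mid-traversal, so a find might skip an intermediate node or load a stale pointer that has since been overwritten. Resolving this requires a careful case analysis over the allowable parent-pointer rewrites showing that every intermediate value read along the way is the value held at some earlier point in the execution, combined with the monotonicity invariant above to conclude that the vertex ultimately reached is a root in some valid intermediate labeling. Given these two pieces, the linearization point of the query witnesses a configuration in which its returned answer is correct, completing the proof.
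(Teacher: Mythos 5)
Your high-level route matches the paper's: reduce streaming correctness to the linearizability of the underlying concurrent \textsc{Union}/\textsc{Find} operations, and for Type~\ref{lab:syncupdates} exploit the fact that insertions are processed in synchronous batches so that only \textsc{IsConnected} queries interleave with the update computation. The paper's own proof is exactly this reduction, stated tersely. However, your attempt to fill in the details contains a step that fails: implementing \textsc{IsConnected}$(u,v)$ as two independent \textsc{Find}s followed by an equality test, linearized at the second \textsc{Find}, is \emph{not} linearizable in the presence of concurrent unions. Concretely, suppose $u$ and $v$ lie in the same tree, rooted at $r$, for the entire duration of the query. The first find returns $r$; a concurrent \textsc{Union} arising from some unrelated insertion then links $r$ under the root $s$ of another tree; the second find returns $s$. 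The query reports ``not connected,'' yet $u$ and $v$ were connected at every instant between invocation and response --- in particular at your proposed linearization point, where both have root $s$ --- so no valid linearization point exists. The same objection defeats the Type~\ref{lab:syncupdates} half of your argument: monotonicity does guarantee that each \emph{individual} find is consistent with some intermediate partition, but two finds returning distinct roots do not witness disconnection at any \emph{single} instant, which is what linearizability requires; ``disconnected with respect to some reachable intermediate state'' is too weak, because the two finds may reflect different states.

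Closing this gap requires treating the connectivity query as a primitive operation with its own linearizability proof, rather than deriving it from find linearizability. The standard fix (used in the \jayanti{} paper on which the paper's theorem leans) is a retry loop: after obtaining roots $r_u \neq r_v$, re-check that $r_u$ is still a root; if so, that re-check instant is a linearization point at which $u$'s root is $r_u$ and $v$'s tree cannot contain $r_u$, so the two vertices provably lie in distinct components, and otherwise the query retries. When the two finds agree, the later root observation serves as the linearization point, since trees only merge. This is precisely the content of the phrase ``linearizable for a collection of concurrent unions and finds'' that the paper's proof invokes for these algorithm families; your proposal attempts to reconstruct that fact from the linearizability of finds alone, and that reconstruction is where it breaks.
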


\section{\framework{} Evaluation}\label{apx:experiments}
In this section we provide additional experimental results that are
deferred from the main body of the paper.

\subsection{Parallel Connectivity without Sampling}\label{apx:cpu_no_sample}
This section reports additional results and performance analyses
about \framework{} in the setting where no sampling is performed.

\subsubsection*{Union-Find Algorithms}
We start by analyzing how quantities such as the \maxpathlen{} and
\totalpathlen{} for union-find algorithms affect the algorithm
performance, as well as how hardware counters can explain performance
differences between union-find algorithms.

\begin{figure*}[!t]
  \centering
  \hspace{-2em}
    \includegraphics[width=0.8\textwidth]{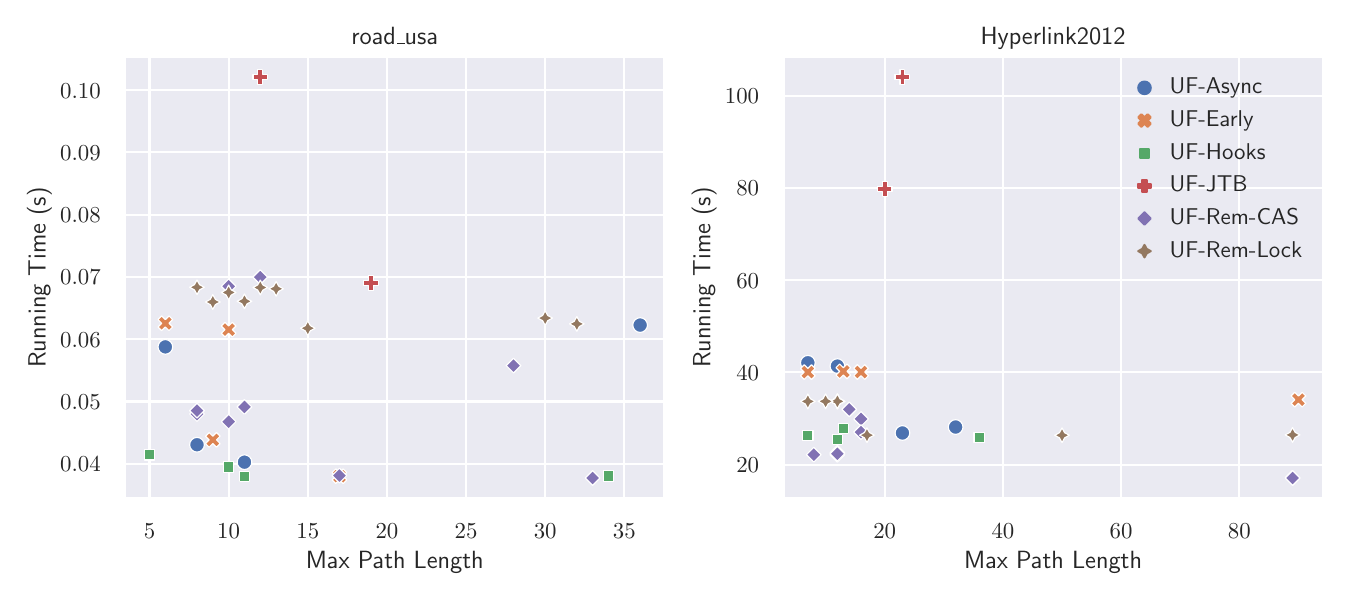}\\ 
	\caption{\small Plot of the \maxpathlen{} encountered during
      algorithm execution vs. running time in seconds for the
      road\_usa and Hyperlink2012 graphs.
  }\label{fig:max_path_vs_running_time}
\end{figure*}

\myparagraph{Max Path Length (MPL)}
Figure~\ref{fig:max_path_vs_running_time} plots the \defn{\maxpathlen{}}
statistic vs. the total running time for different \unionfind{} variants
on the road\_usa and Hyperlink2012 graphs, which are the smallest and
largest graphs used in our evaluation (the plots for other graphs
are elided in the interest of space; we note that they are very
similar). We observed that this value is consistently small across all
executions for both the smallest and largest graphs we test on,
reaching a maximum of 36 on the road\_usa graph, and a maximum of less
than 100 on the Hyperlink2012 graph. Perhaps surprisingly, this
quantity does not seem to be correlated with the running time---on the
Hyperlink2012 graph the \unionremcas{} implementation has one of the
largest \maxpathlen{}s, but still achieves the lowest running
time in this setting.

\begin{figure}[!t]
  \centering
  \hspace{-2em}
    \includegraphics[width=\columnwidth]{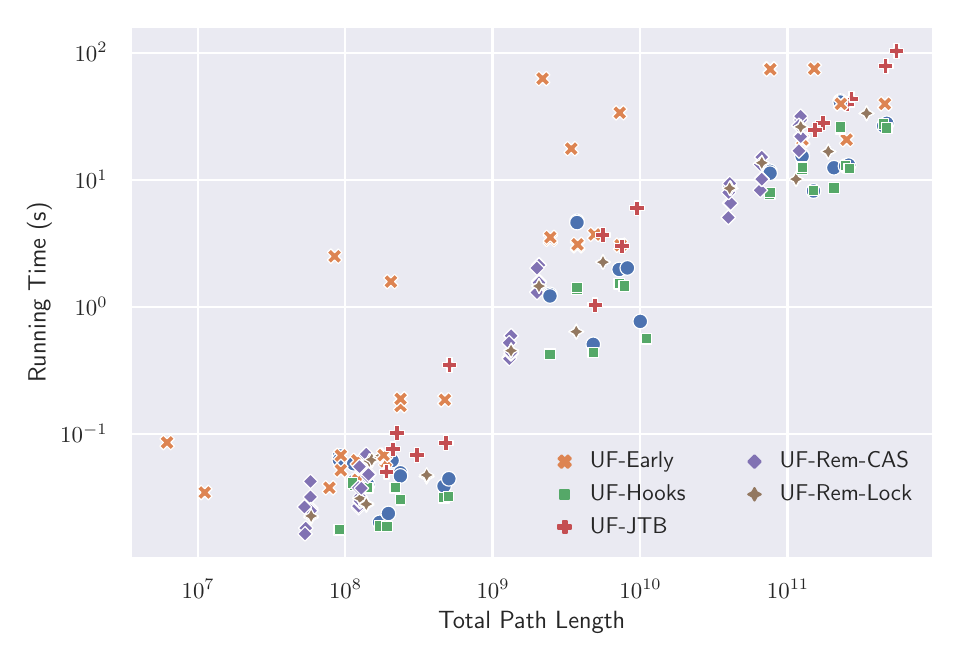}\\ 
	\caption{\small Plot of the \totalpathlen{} traversed
      during algorithm execution (log-scale) vs. running time in
      seconds (log-scale) for all \unionfind{} experiments on all of
      our inputs.
  }\label{fig:total_path_vs_running_time}
\end{figure}

\begin{figure*}[!t]
  \vspace{-1em}
  \centering
  \begin{subfigure}{0.75\textwidth}
  \hspace*{1cm}
  \includegraphics[width=\textwidth]{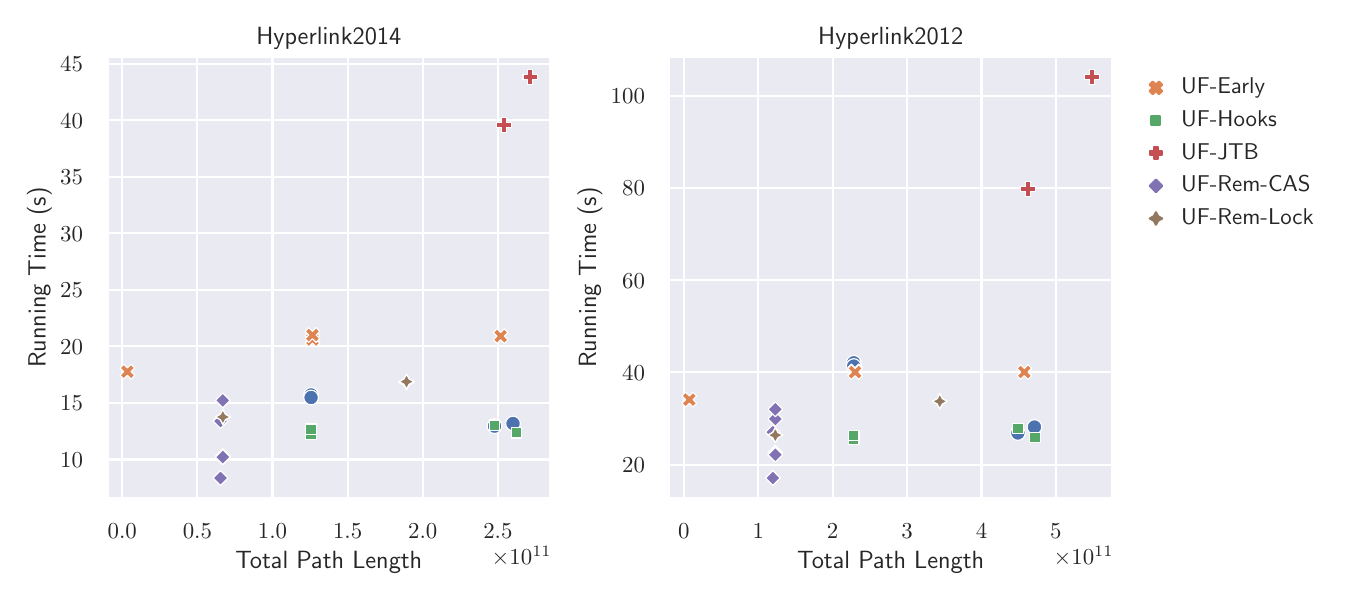}\\ 
	  \caption{\small Plot of the \totalpathlen{} traversed
      during algorithm execution (log-scale) vs. running time in
      seconds (log-scale) for all \unionfind{} experiments on the
      Hyperlink2014 and Hyperlink2012 graphs.
  }\label{fig:hyperlink_total_path}
  \end{subfigure}\\
  \begin{subfigure}{0.75\textwidth}
  \hspace*{1cm}
    \includegraphics[width=\textwidth]{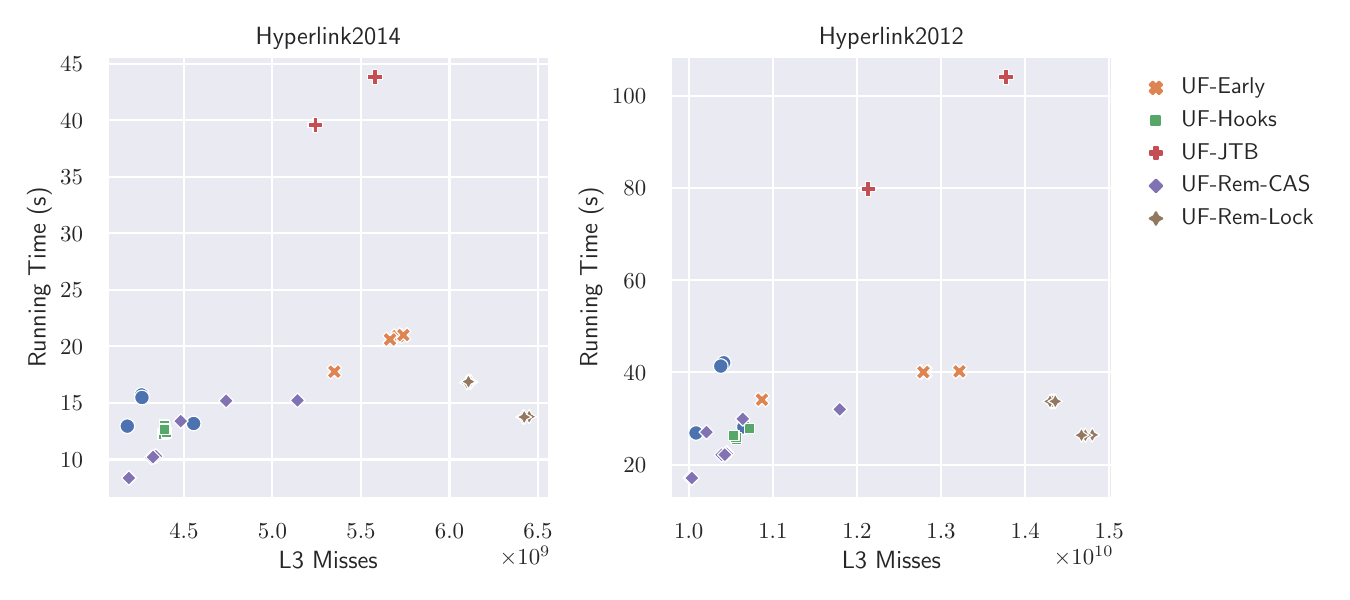}\\ 
    \caption{\small Plot of the number of LLC misses during algorithm
      execution vs. running time in seconds for all \unionfind{}
      experiments on the Hyperlink2014 and Hyperlink2012 graphs.
    }\label{fig:hyperlink_l3_misses}
  \end{subfigure}\\
  \begin{subfigure}{0.75\textwidth}
  \hspace*{1cm}
    \includegraphics[width=\textwidth]{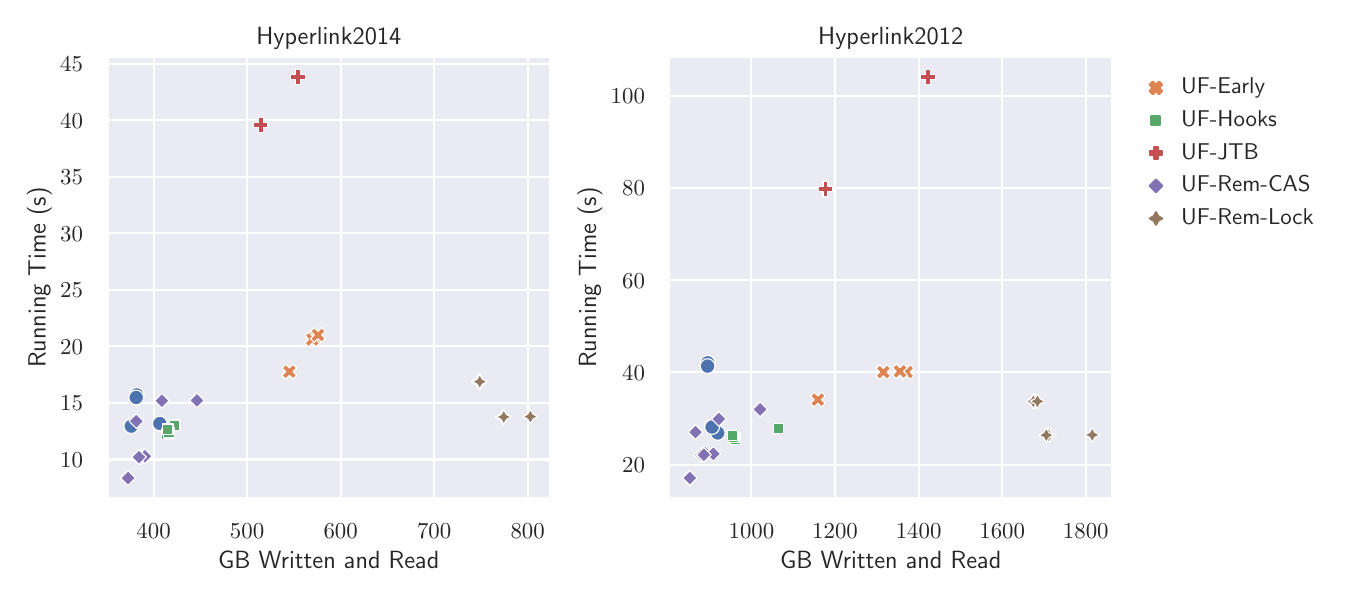}\\ 
    \caption{\small Plot of the total GB written and read during
      execution vs. running time in seconds for all \unionfind{}
      experiments on the Hyperlink2014 and Hyperlink2012 graphs.
    }\label{fig:hyperlink_bytes}
    \vspace{0.5em}
  \end{subfigure}
  \caption{Performance statistics for the Hyperlink2014 and
  Hyperlink2012 graphs.}
\end{figure*}

\myparagraph{\totalpathlen{} (TPL)}
Figure~\ref{fig:total_path_vs_running_time} plots the
\defn{\totalpathlen{} (TPL)} statistic vs. the total running time
across all graphs for different \unionfind{} variants (both axes are
shown in log-scale).  The plot shows that the TPL is highly relevant
for predicting the running time---the TPL has a Pearson
correlation coefficient of 0.738 with the running time (the
MPL has a much weaker coefficient of 0.344).
Figure~\ref{fig:hyperlink_total_path} plots the same statistic for the
Hyperlink2014 and Hyperlink2012 graphs, where we see a similar trend.
The poor performance of the \jayanti{} algorithms under both find
configurations is explained largely by the large total
path lengths traversed by these algorithms (the algorithm with larger
TPL and running time was always for the \findnaive{} implementation).
We note that the \unionearly{} algorithm exhibits a single data point
which experiences significantly lower TPL than any of the other
algorithms. This data point is for the version of the \unionearly{}
algorithm combined with the \findnaive{} strategy. Since the algorithm
does not perform any finds, the only contribution to the path length
is from the path length traversed by the $\mathsf{Union}$ method.
Despite this fact, it is interesting that the \unionearly{} code with
the \findnaive{} option incurs such a low TPL.  We note that
\findnaive{} is always the fastest option for this algorithm.

\begin{figure}[!t]
  \centering
  \hspace{-2em}
    \includegraphics[width=\columnwidth]{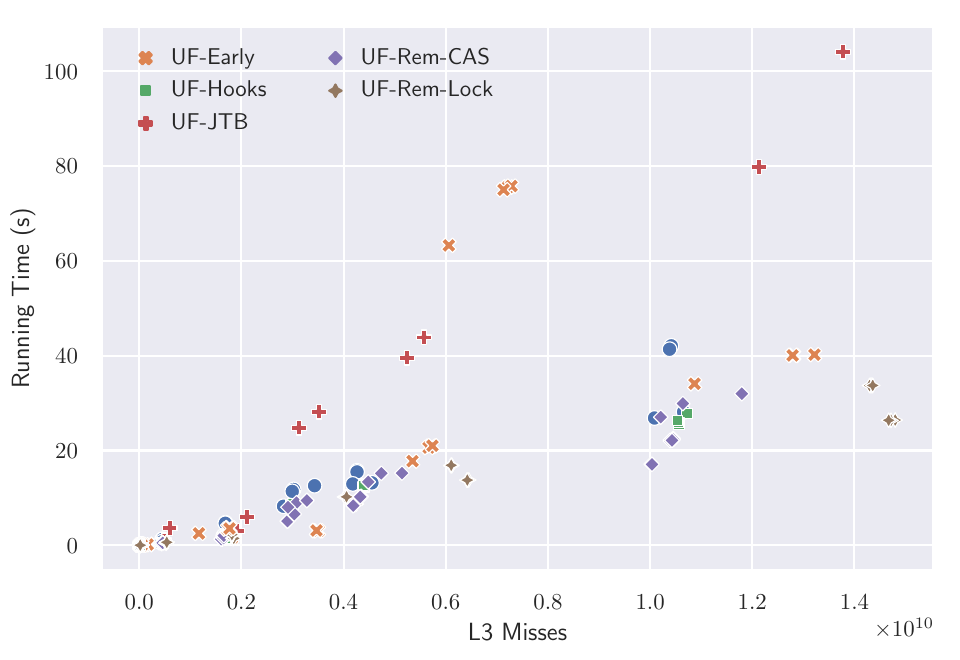}\\ 
    \caption{\small Plot of the number of LLC misses incurred
      during algorithm execution vs. running time in seconds for all
      \unionfind{} experiments on all of our inputs.
  }\label{fig:l3_vs_running_time}
\end{figure}

\myparagraph{LLC Misses}
In Figure~\ref{fig:l3_vs_running_time}, we plot the number of
LLC misses vs. running time across all graphs.
Figure~\ref{fig:hyperlink_l3_misses} plots the same quantities for
the Hyperlink2014 and Hyperlink2012 graphs. We see that the
\unionearly{} implementations, which perform poorly on the
Hyperlink2014 graph incur a large number of LLC misses. The Pearson
correlation coefficient between the number of LLC misses and the
overall running time is 0.797 across all graphs, which is comparable
with the correlation between the TPL and the running
time. Lastly, we observe that the fastest algorithms incur the fewest
number of LLC misses, with \unionremcas{} with the \findsplit{} option
achieving the point in the bottom left corner of the plot for the
Hyperlink graphs.

\begin{figure}[!t]
  \centering
  \hspace{-2em}
    \includegraphics[width=\columnwidth]{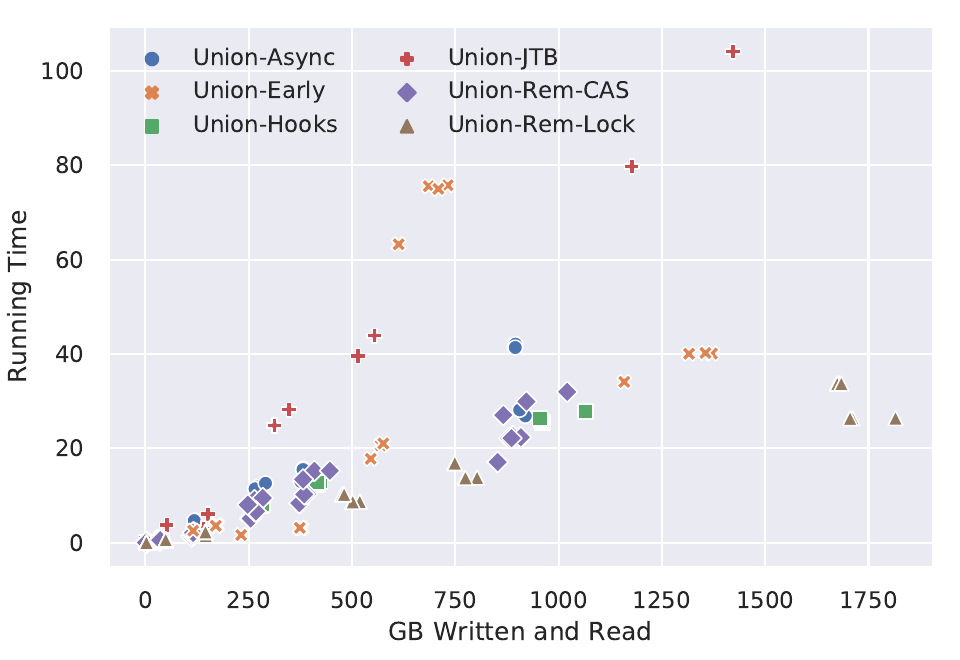}\\ 
    \caption{\small Plot of the total number of bytes
      written and read (in GB) during algorithm execution vs. running
      time for all \unionfind{} experiments on all of our
      inputs.
  }\label{fig:bytes_vs_running_time}
\end{figure}

\myparagraph{Total Bytes Written/Read}
In Figure~\ref{fig:bytes_vs_running_time}, we plot the total gigabytes
written to and read from the memory controller vs. running time
across all graphs. Figure~\ref{fig:hyperlink_bytes} plots the same
quantities for the Hyperlink2014 and Hyperlink2012 graphs.  As one
might expect, the trends appear to be similar to the number of LLC
misses. Perhaps surprisingly, although the \unionremlock{} requires the
largest amount of memory traffic, these algorithms are much faster
than algorithms that read and write fewer bytes across the memory
controllers (\unionearly{} and \jayanti{}). This quantity has a
Pearson correlation coefficient with the running time of 0.774, which
is similar to that of the LLC misses statistic. Finally, we observe
that the \unionremcas{} and \unionasync{} algorithms require
reading and writing very little data.

\subsubsection{Liu-Tarjan Algorithms}

\begin{figure}[!t]
  \centering
  \hspace{-4em}
    \includegraphics[width=\columnwidth]{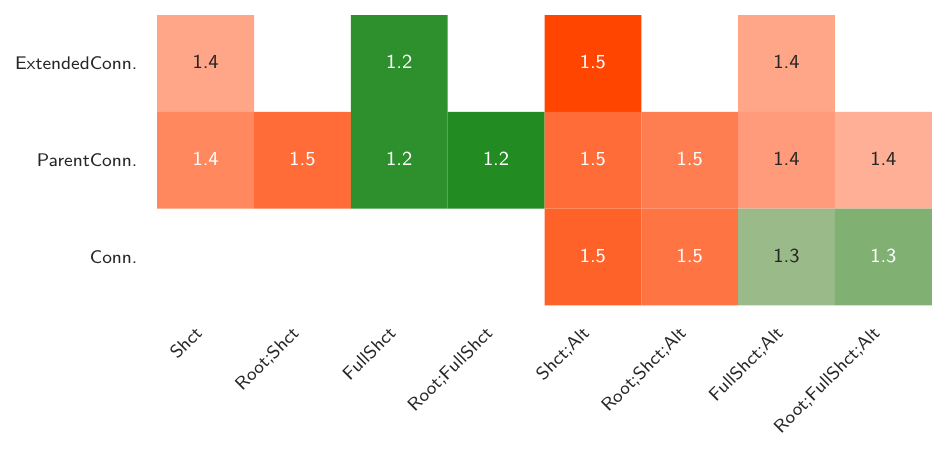}\\ 
  \caption{\small Relative performance of different Liu-Tarjan
    implementations on graphs used in our evaluation of
    \framework{} in the \emph{No Sampling} setting.
  }\label{fig:heatmap_lt_nosample}
\end{figure}

\begin{figure*}[!t]
  \centering
  \hspace{-1em}
    \includegraphics[width=0.75\textwidth]{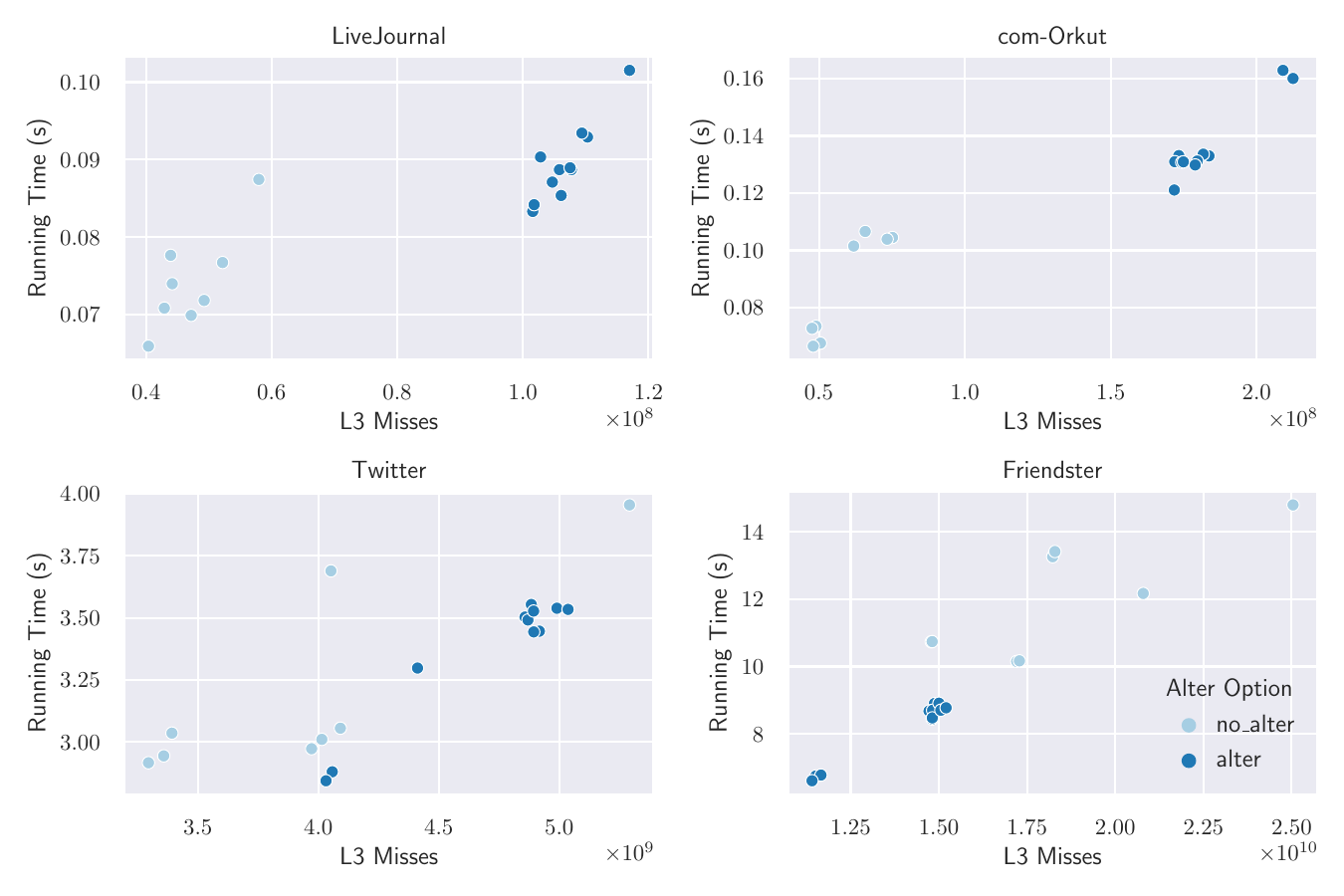}\\ 
    \caption{\small Plot of the number of LLC misses for each
      algorithm execution vs. running time in seconds for the LiveJournal,
      com-Orkut, Twitter, and Friendster graphs.
  }\label{fig:lt_llc_misses}
\end{figure*}

Figure~\ref{fig:heatmap_lt_nosample} shows the relative performance of
different Liu-Tarjan implementations from \framework{} in the \emph{No
Sampling} setting. These results are averaged across all graphs used
in our evaluation, and are computed the same way as in
our \unionfind{} evaluation previously.

We observed that the four fastest Liu-Tarjan variants in this setting
are the new variants we implemented that perform a
$\mathsf{FullShortcut}$, namely $\mathsf{PUF}$, $\mathsf{EUF}$,
$\mathsf{PRF}$, and $\mathsf{CRFA}$.  The remaining methods achieve
between 1.3x--1.5x slowdown on average across all graphs, with the
slowest algorithms on average being those using the $\mathsf{Alter}$
method.

Figure~\ref{fig:lt_llc_misses} analyzes how the number of
LLC misses affects the running time. We see that the algorithm
performance of the Liu-Tarjan variants is significantly determined by
the number of LLC misses (the Pearson correlation coefficient is
0.982).  We note that the algorithms using the $\mathsf{Alter}$ method
are sometimes the fastest Liu-Tarjan variants, which can be seen in
Figure~\ref{fig:lt_llc_misses}.

\subsection{Parallel Connectivity with Sampling}\label{apx:cpu_sample}

This section reports additional results, and performance analyses
about \framework{} in the setting where sampling is performed.
In particular, we evaluate how \framework{} algorithms perform when
combined with the \koutsample{}, \bfssample{}, and \lddsample{}
strategies, and discuss the speedups of the sampling-based algorithms
compared to the performance numbers without sampling presented in
Section~\ref{subsec:static_cpu_nosample}.

\myparagraph{\framework{} Using \koutsample{}}
The second group of rows in Table~\ref{table:static_cpu_all} shows the performance of \framework{}
algorithms using the \koutsample{} scheme on our graph inputs.
We observe that the performance of \framework{} algorithms combined
with \koutsample{} is consistently fast, and produces the fastest
overall algorithm on almost half of our graph inputs (across both
small and large graphs). Like in the no sampling setting, the
\unionfind{} algorithms achieve very high performance, with
\unionasync{} and \unionremcas{} achieving similar performance (their
running times are within a few percent of each other on all of the
graphs in this setting). Unlike in the no sampling setting, the
performance of other \unionfind{} variants is very close to that of
the fastest \unionfind{} variants, with the slowest overall variant on
average being \unionremlock{}. On average, the union-find
implementations achieve a 2.11x speedup over the fastest connectivity
implementations without sampling, with the highest average speedup
obtained by \unionremcas{} (2.25x faster than the fastest algorithms
without sampling, on average).
Our Liu-Tarjan, Shiloach-Vishkin, and \labelprop{} algorithms
also achieve high performance when compared with the fastest
algorithms for each graph. Liu-Tarjan achieves 1.71x average speedup
over the fastest algorithm in the no sampling setting,
Shiloach-Vishkin achieves 1.53x average speedup, and \labelprop{}
achieves 1.68x average speedup. These results indicate that these
algorithms compose efficiently with the \koutsample{} scheme.

Across all of our graphs, the one exception where \koutsample{} does
not provide significant speedups is for the road\_usa graph. The
main reason is that the road\_usa graph has very low average degree,
with most vertices having between 2--5 edges.  Therefore, the cost of
simply going over this small set of edges twice during two-phase
execution outweighs the additional benefit provided by sampling,
since we almost finish computing connectivity after applying
\koutsample{} (see Section~\ref{sec:sampling_eval} for more
experimental analysis of \koutsample{} on this graph).

Next, we compare the effectiveness of \koutsample{} compared with the
fastest results obtained by any combination of sampling and finish
methods in the \framework{} framework. We can see from
Table~\ref{table:static_cpu_all}
that the \unionremcas{} algorithm achieves the most consistent
performance relative to the fastest algorithms, being at most 27\%
slower than the fastest implementation for all graphs, and within 14\%
on average. \unionasync{} and \unionhook{} are similarly competitive,
being at most 30\% and 31\% slower, and within 19\% and 18\% of the
fastest on average, respectively. We conclude that applying one of
these three union-find algorithms in addition to \koutsample{} results
in consistent high performance.

To summarize, we find that \koutsample{} is an effective and robust
sampling scheme that consistently works across a variety of different
graph types, and algorithm types (\unionfind{} variants, Liu-Tarjan
variants, and Shiloach-Vishkin) to produce algorithms that are within
1.12x on average of the fastest running times obtained by any
algorithm in our framework. Finally, for the Hyperlink2012 graph, the
largest publicly-available real-world graph, the \unionremcas{}
algorithm combined with \koutsample{} achieves the fastest time out of
all algorithm combinations in our framework, running in \emph{8.2
seconds} in total. This is a \emph{3.14x improvement} over the fastest
known result for this graph, previously achieved by the Graph Based
Benchmark Suite~\cite{DhBlSh18} using the work-efficient connectivity
algorithm by Shun et al.~\cite{SDB14}, and is competitive with two
empirical lower-bounds on the performance of any connectivity
algorithm (analyzed in Section~\ref{sec:edge_gather}).

\myparagraph{Performance Using \bfssample{}}
The third group of rows in Table~\ref{table:static_cpu_all} show the performance of \framework{}
algorithms using the \bfssample{} scheme on our graph inputs.
We can immediately see that the performance of this method depends
significantly on the diameter of the input graph (listed in
Table~\ref{table:sizes}) as the number of rounds in parallel BFS is
proportional to the diameter---for example, the performance of these
methods on the road\_usa graph is significantly worse than the
unsampled running times (roughly 60--70x slower). One interesting
exception is for \labelprop{}, where the algorithm is 5.1x faster
than the unsampled version, since the \labelprop{} algorithm can
avoid processing the high-diameter component identified by the BFS.

The performance on the remaining graphs is highly competitive: the
fastest running time using \bfssample{} achieves between 1.2--4.9x
speedup over the fastest running time without sampling, and 2.5x
speedup on average. Like with \koutsample{}, the fastest running times
using \bfssample{} are obtained using \unionfind{} variants, with
\unionremcas{} running within 3\% of the fastest variant using
\bfssample{}. \unionasync{} and \unionhook{} are similarly
competitive, running between 1.1--4.7x faster and 1.1--4.8x faster
than the fastest algorithm without sampling (both achieve 2.4x speedup
on average). Finally, we observe that for the three largest graphs,
\unionremcas{} consistently achieves the fastest performance out of
all variants using \bfssample{}, with \bfssample{} achieving the
fastest overall running time for \framework{} on the Hyperlink2014
graph.

Our Liu-Tarjan, Shiloach-Vishkin, and \labelprop{} algorithms
compose efficiently with \bfssample{}, achieving significant speedups
over the versions of these algorithms that do not use sampling. On
graphs excluding road\_usa (where as discussed, \bfssample{} performs
poorly), the Liu-Tarjan algorithms achieve between 4.4--28x speedup
over the unsampled versions, achieving 12.3x speedup over the
unsampled versions on average. Our Shiloach-Vishkin algorithm also
achieves strong speedups over the unsampled versions, achieving
between 3.87--12.4x speedup, and 7.58x speedup on average. Our
\labelprop{} algorithm achieves between 3.41--13.1x speedup over
the unsampled \labelprop{} algorithm, and achieves 7x speedup on
average.

To summarize, we find that \bfssample{} is a robust sampling strategy
that achieves very high performance when combined with our
\framework{} algorithms on \emph{low-diameter graphs exhibiting a
massive connected component}, such as social networks and Web graphs.
On high-diameter graphs such as road networks, the performance results
suggest that a different sampling scheme, such as \koutsample{} would
be a better choice, unless the \labelprop{} algorithm is used in which
case \bfssample{} is a good choice (the main source of
performance improvement for \labelprop{} is having the BFS fully
cover the largest component, since the BFS can be accelerated using
direction-optimization~\cite{Beamer12}).

\myparagraph{Performance Using \lddsample{}}
Finally, we consider the performance of \framework{} algorithms using
the \lddsample{} scheme on our graph inputs (the fourth group of rows in
Table~\ref{table:static_cpu_all}).
First, we consider the high-diameter road\_usa graph. Compared to
\bfssample{}, the performance numbers using \lddsample{} are
significantly better (between 22--25x), except for \labelprop{}
where the time using \lddsample{} is 5.5x slower since the
massive high-diameter component is still slowly traversed by a large
number of costly iterations, even after applying \lddsample{}. The
reason for the improvement on other algorithms is since \lddsample{}
only requires a few rounds to compute an LDD, whereas a BFS on this
network requires thousands of rounds to traverse the high-diameter
component in this graph.

Focusing now on low-diameter graphs, the performance of \lddsample{}
achieves significant speedups over the \framework{} algorithms in the
no sampling setting. Our \unionremcas{} implementation achieves
between 0.98--4.67x speedup over the fastest connectivity
implementation in the no sampling setting, and 2.3x speedup on
average. \unionasync{} and \unionhook{} also achieve high performance,
achieving between 0.96--4.6x and 0.95--4.5x speedup (2.34x and 2.28x
speedup on average), respectively. The remaining union-find
implementations also achieve high performance under this scheme.
Overall, we see that \unionremcas{} is a robust choice in this
setting, and is either the fastest implementation, or within 30\% of
the fastest across all low-diameter graphs. On the three largest
graphs, \unionremcas{} is either the fastest algorithm, or within 22\%
of the fastest running time, and for the largest graph, this algorithm
combined with \lddsample{} is within 1\% of the fastest running time.

Our Liu-Tarjan, Shiloach-Vishkin, and \labelprop{} algorithms
achieve good speedups over the unsampled versions on graphs other than
road\_usa. The Liu-Tarjan algorithms achieve between 3.5--21.4x
speedup over these algorithms in the unsampled versions, and 10x
speedup on average. The Shiloach-Vishkin implementation achieves
between 3.1--27.3x speedup over the unsampled versions, and 11.6x
speedup on average. Finally, the \labelprop{} implementation also
achieves good speedup, ranging from 2.1--9.7x speedup, and 5.7x
speedup on average.

To summarize, we find that \lddsample{} offers a more robust sampling
mechanism compared with \bfssample{} with performance that achieves
significant speedups over the \framework{} algorithms that do not use
sampling. Furthermore, its performance is not overly degraded on
high-diameter networks, and achieves excellent performance on
low-diameter social networks and Web graphs. We perform a thorough
analysis of the parameters to the \lddsample{} scheme in
Section~\ref{sec:sampling_eval}.

\begin{figure}[!t]
  \centering
  \hspace{-2.5em}
    \includegraphics[width=\columnwidth]{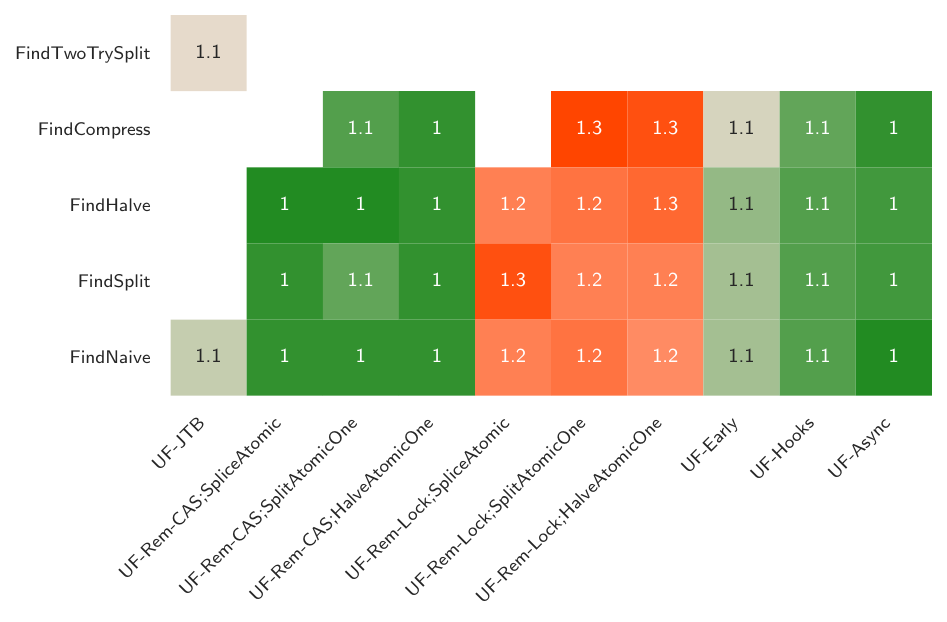}\\ 
    \caption{\small Relative performance of \unionfind{} implementations, averaged across all inputs with \koutsample{}.
  }\label{fig:heatmap_uf_kout}
\end{figure}
\begin{figure}[!t]
  \centering
  \hspace{-2.5em}
    \includegraphics[width=\columnwidth]{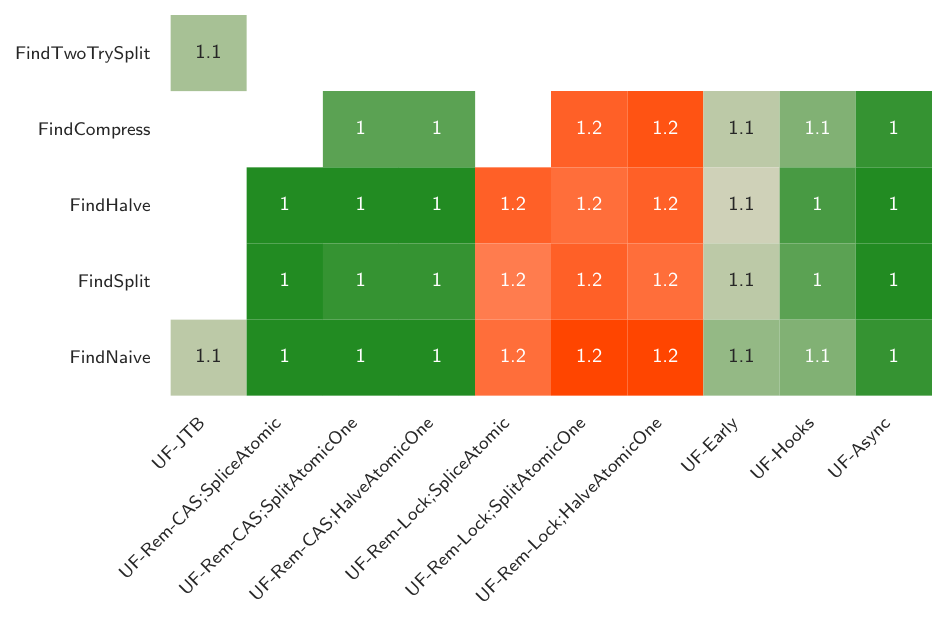}\\ 
    \caption{\small Relative performance of \unionfind{} implementations, averaged across all inputs with \bfssample{}.}\label{fig:heatmap_uf_bfs}
\end{figure}

\myparagraph{Union-Find Performance using Sampling}
Figures~\ref{fig:heatmap_uf_kout}, \ref{fig:heatmap_uf_bfs}, and
\ref{fig:heatmap_uf_ldd} show heatmaps of the relative performance of
different \unionfind{} variants when using \koutsample{}, \bfssample{},
and \lddsample{}, respectively.  Like
Figure~\ref{fig:heatmap_uf_nosample}, the running times are aggregated
across all of our graph inputs.
Unlike Figure~\ref{fig:heatmap_uf_nosample}, the relative running times of
each algorithm are significantly closer together for all three
sampling schemes, ranging from 1--1.3x slower than the fastest
\unionfind{} variant on average for \koutsample{}, 1--1.2x for
\bfssample{}, and 1--1.3x for \bfssample{}. As mentioned previously,
the fastest \unionfind{} variants are just 1.12x slower on average than
the fastest \framework{} algorithm across all sampling schemes.
Therefore, we conclude that applying any \unionfind{} variant in
conjunction with our sampling schemes results in high performance
across all of our graph inputs.

\begin{figure}[!t]
  \centering
  \hspace{-2.5em}
    \includegraphics[width=\columnwidth]{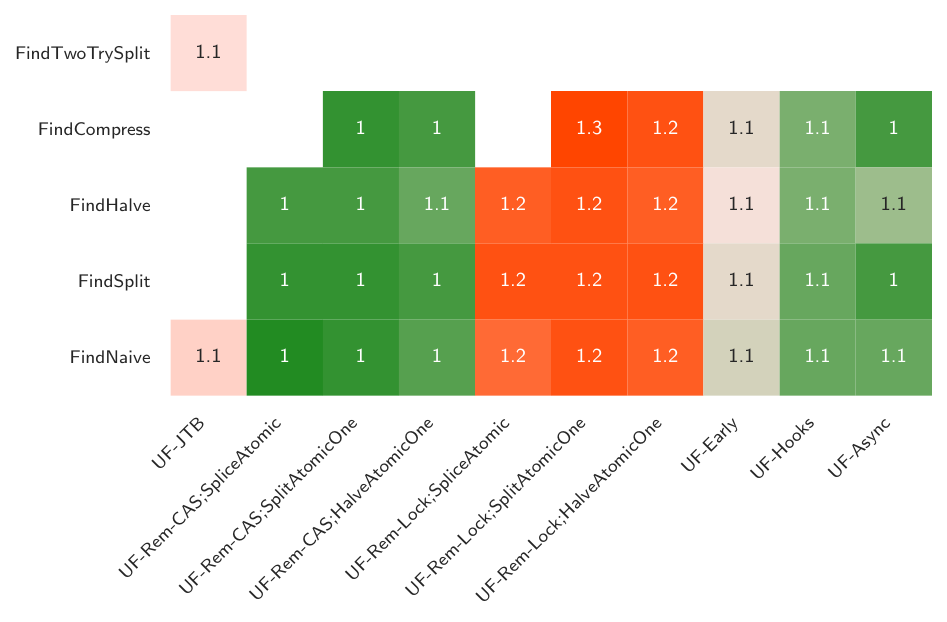}\\ 
    \caption{\small Relative performance of \unionfind{} implementations, averaged across all inputs with \lddsample{}.}\label{fig:heatmap_uf_ldd}
\end{figure}

In summary, all three schemes result in meaningful performance
improvements over the unsampled versions, and significantly outperform
existing state-of-the-art connectivity results using \framework{}
algorithms.
\subsection{\framework{} vs. State-of-the-Art}\label{apx:static_cpu_comparison}

In this section, we provide a comparison between
\framework{} and a set of state-of-the-art publicly-available
connectivity implementations. The results we analyze can be found in
Table~\ref{table:static_cpu_all}.

{\myparagraph{Comparison with \bfscc{}~\cite{ShunB14}} First, we
  observe that the \bfscc{} algorithm has highly variable performance
  that depends on the input graph diameter. In the no sampling setting
  for \framework{}, it ranges from being 92x slower than the
  fastest \framework{} algorithm for the road\_usa graph, to up to
  1.81x and 1.86x faster for the com-Orkut and Twitter graphs,
  respectively. Compared to \framework{} algorithms using sampling on
  instances other than the road\_usa graph, where it performs
  understandably poorly, our implementations are still always faster,
  ranging from between 1.22--21.7x faster, and 6.46x faster on
  average.

{\myparagraph{Comparison with Shun et al.~\cite{SDB14}} We observe that the
  provably work-efficient LDD-based algorithm from Shun et
  al.~\cite{SDB14} achieves more consistent performance across all
  graphs, including both low and high-diameter networks. Our
  implementations without sampling are between 1.45--19.4x faster,
  and 5.6x faster on average. Our implementations with sampling are
  between 3.1--27.5x faster, and 9.3x faster on average. We note that
  prior to this paper, this algorithm, implemented as part of the
  Graph Based Benchmark Suite (GBBS) was the fastest reported running time
  for the Hyperlink2012 graph, running in 25 seconds~\cite{DhBlSh18}.
  \emph{The fastest \framework{} algorithm, \unionremcas{} with
    \splitone{} improves on this result, even without using any
  sampling, running in just under 14 seconds}.

	{\myparagraph{Comparison with MultiStep~\cite{Slota14}} The
MultiStep method implements a hybrid scheme that performs
the \bfssample{} scheme combined with \labelprop{}. The MultiStep
algorithm performs $O(\mathsf{diam}(G)m + n)$ work in the worst case,
and requires $O(\mathsf{diam}(G)\log n)$ depth, where
$\mathsf{diam}(G)$ is the graph diameter.  The performance of the
algorithm is highly variable, especially on high-diameter graphs like
road\_usa, where it is 2.2x slower than our \labelprop{}
implementation. Compared with the fastest \framework{} implementation
without sampling, our implementations are significantly faster, even
without sampling. Our codes are between 1.95--21.4x faster across all
low-diameter graphs, and 10.1x faster on average. Compared with the
fastest \framework{} implementation with sampling, our codes are
between 9.6--30.3x faster across all low-diameter graphs, and 18.6x
faster on average. Their code performs significantly worse than ours
on road\_usa, running 1,057x slower than our fastest code. We note that
their algorithm was unable to successfully run on the Twitter graph.
Compared to the Slota et al.~implementation, our implementation of the
MultiStep approach (\bfssample{} combined with \labelprop{}) is
between 3.9--22x faster than their implementation, and 12.8x faster on
average. The MultiStep code implements the direction-optimization, but
we observed that their BFS performance is significantly slower than
BFS performance in \framework{}, which could be due to a
cache-optimized version of the edge traversal from the Graph Based
Benchmark Suite used in our implementation.

{\myparagraph{Comparison with Galois~\cite{Nguyen2013}} The connectivity
algorithm implemented in Galois uses \labelprop{} in conjunction with
a priority-based asynchronous scheduler~\cite{Nguyen2013}. The
algorithm is not work-efficient, and we are not aware of any
theoretical upper-bounds on its work when using Galois' priority-based
scheduler. Excluding the road\_usa graph where their code performs
very poorly using \labelprop{}, our codes without sampling are between
1.78--3.69x faster than theirs (2.36x faster on average), and our
codes using sampling are between 2.84--12.6x faster than theirs (7.27x
faster on average). Our codes are 2.17x faster than their \emph{Async}
algorithm on the road\_usa graph (we tried other implementations from
Galois and report times for the fastest one, since their \labelprop{}
implementation performed poorly on this graph).  Finally, their slow
performance on the Twitter graph is likely due to the fact that the
Galois implementation requires multiple rounds, where a large fraction
of the vertices are active when performing \labelprop{}.

{\myparagraph{Comparison with Patwary et al.~\cite{PatwaryRM12}} The
  connectivity algorithm by Patwary et al. is a
  concurrent version of Rem's algorithm. The algorithm is the
  same as the \unionremlock{} implementation in our framework.
  However, we find that our implementations are significantly faster
  than theirs, and that both of our implementations of Rem's
  algorithms are competitive with, and often outperform theirs. Our
  fastest implementations without sampling achieve between 1.27--2.87x
  speedup over their implementation (1.99x speedup on average), and
  our fastest implementations with sampling achieve between
  2.43--6.28x speedup over their implementation (4.4x speedup on
  average). We note that they also provide a lock-free version of
  Rem's algorithm, but we observed that this algorithm does not always
  produce correct answers, and thus do not report results for it.

{\myparagraph{Comparison with GAPBS~\cite{BeamerAP15}}}
Finally, we compare \framework{} with the Shiloach-Vishkin and
Afforest implementations from the GAP Benchmark Suite (GAPBS). The
Shiloach-Vishkin implementation from GAPBS suffers from an
implementation issue that can cause it to perform $O(mn)$ work
and $O(n)$ depth in the worst case (under an adversarial
scheduler), but we note that this behavior does not seem to arise in
practice. The Afforest implementation is implemented jointly with the
authors of the Afforest paper by Sutton et al.~\cite{Sutton2018}, and
was always faster than the original Afforest
implementation~\cite{Sutton2018}. Our fastest implementations
significantly outperform their Shiloach-Vishkin implementation. Our
fastest implementations without sampling are between 3.67--17.9x
faster (9.55x faster on average), and our fastest implementations with
sampling are between 3.67--61.2x faster (27.1x faster on average). We
observe that the Afforest algorithm is sometimes faster than our
algorithms without sampling (note that Afforest does perform
sampling). Our fastest implementations without sampling are between
0.33--4.17x faster than their Afforest implementation (2.08x faster on
average). Our fastest implementations with sampling are between
1.53--8.55x faster than their Afforest implementation (3.9x faster on
average). We perform an in-depth evaluation of the Afforest sampling
scheme in Section~\ref{sec:sampling_eval}.

\subsection{Streaming Parallel Graph Connectivity}\label{apx:streaming_cpu}
In this appendix, we provide deferred experimental results on our
\framework{} streaming algorithms. We provide more experimental
results showing tradeoffs between batch size and throughput, and
addition latency plots.

\myparagraph{Additional Observations about Throughput Experiments}
We observe that the throughput of all algorithms, even on the
smallest graph that we consider is at least hundreds of millions of updates
per second for all graphs. In general, the throughput appears to be
higher for larger graphs, although some graphs, such as com-Orkut have
very high throughput despite being quite small, which could be due to
cache effects, since the vertex data fits within LLC.
Another source of high performance is that the updates within a batch
are left unpermuted in this experiment, and are ordered
lexicographically. This ordering provides good data locality when
performing work, since the consecutive edges that are applied are
likely to contain vertices whose IDs are close together in the vertex
ordering.

We ran an experiment to measure the effect of the edge ordering on
throughput across different input graphs and found that the
performance difference between the two methods scales almost linearly
with the number of LLC cache misses incurred by the algorithm. For
example, for the Twitter graph, applying the edges of the graph in a
single batch for the \unionremcas{} algorithm with \splitone{} without
and with permuting the batch results in 3.85 billion and 2.23
billion edges per second, respectively, or a 1.72x higher throughput
in the unpermuted setting. The number of LLC misses measured in these
settings is 1.10 billion and 2.05 billion, respectively, which is 1.8x
fewer misses in the unpermuted setting. The trend is similar for other
graphs, with the performance gap explained by the larger number of
cache misses in the permuted setting. We feel that enforcing that
batches are permuted is somewhat unrealistic, since our expectation is
that real-world data streams possess a large degree of locality
(intuitively many of the same vertices are updated, and updates to the
same vertex are likely to occur in close succession).

\myparagraph{Throughput vs. Batch Size}
Figure~\ref{fig:batch_size_vs_throughput_extra} shows the throughput
obtained by our algorithms as a function of the batch size for a
subset of our input graphs (the trends for graphs not shown in the
figure are very similar). We observe that all \framework{} algorithms
achieve consistently high throughput on very small to very large
batch sizes. Our fastest union-find algorithms achieve a throughput of
over 10 million updates per second at a batch size of 100, and exceed
100 million updates per second for batches of size 1,000 on all
graphs.

\begin{figure*}[!t]
\begin{subfigure}{0.35\textwidth}
  \centering
  \includegraphics[width=\textwidth]{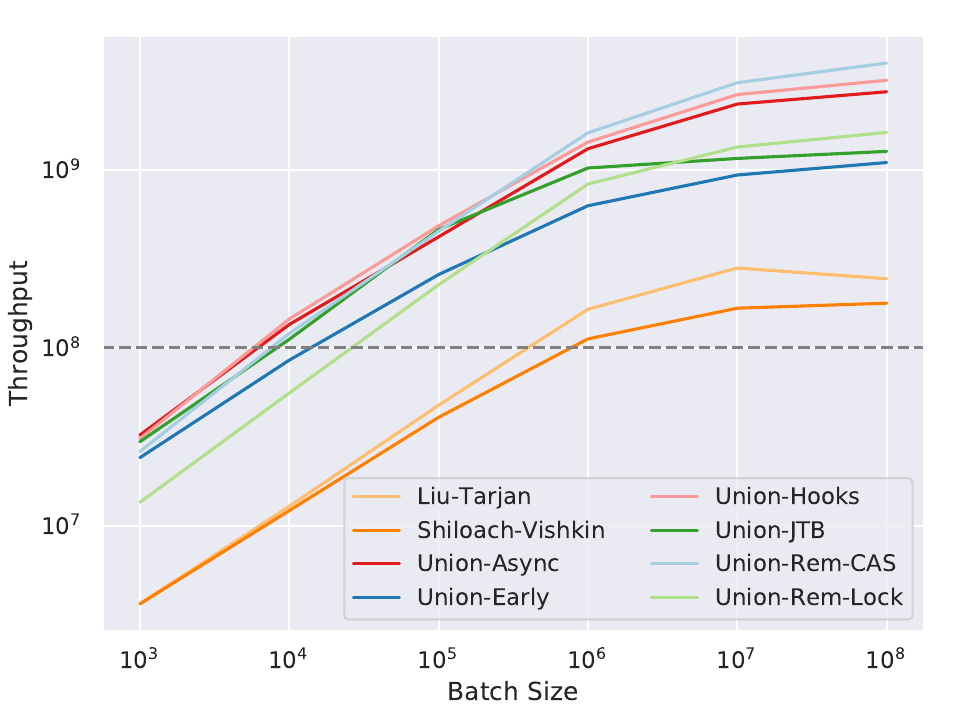}
  \caption{Throughput vs. batch-size for road\_usa} 
\end{subfigure}
\begin{subfigure}{0.35\textwidth}
  \centering
  \includegraphics[width=\textwidth]{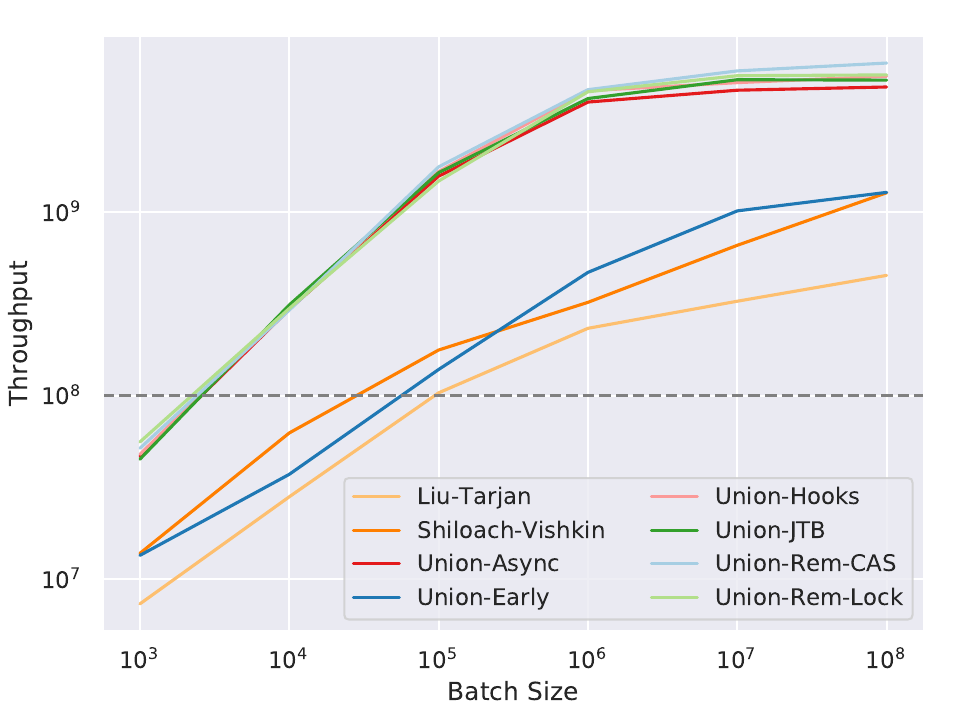}
  \caption{Throughput vs. batch-size for Orkut} 
\end{subfigure} \\
\begin{subfigure}{0.35\textwidth}
  \centering
  \includegraphics[width=\textwidth]{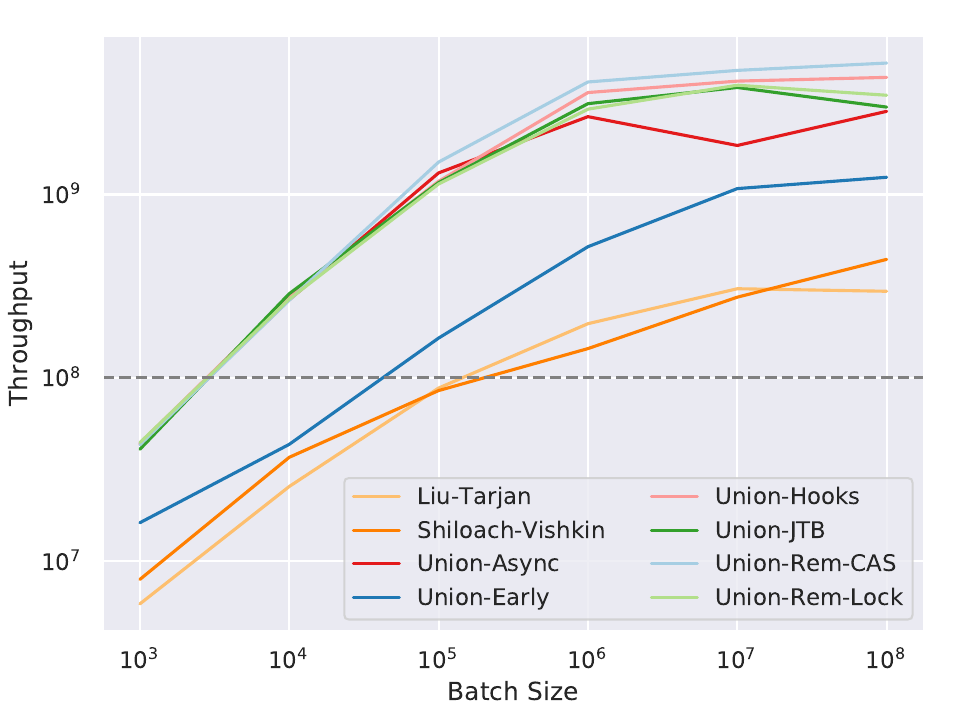}
  \caption{Throughput vs. batch-size for LiveJournal} 
\end{subfigure}
\begin{subfigure}{0.35\textwidth}
  \centering
  \includegraphics[width=\textwidth]{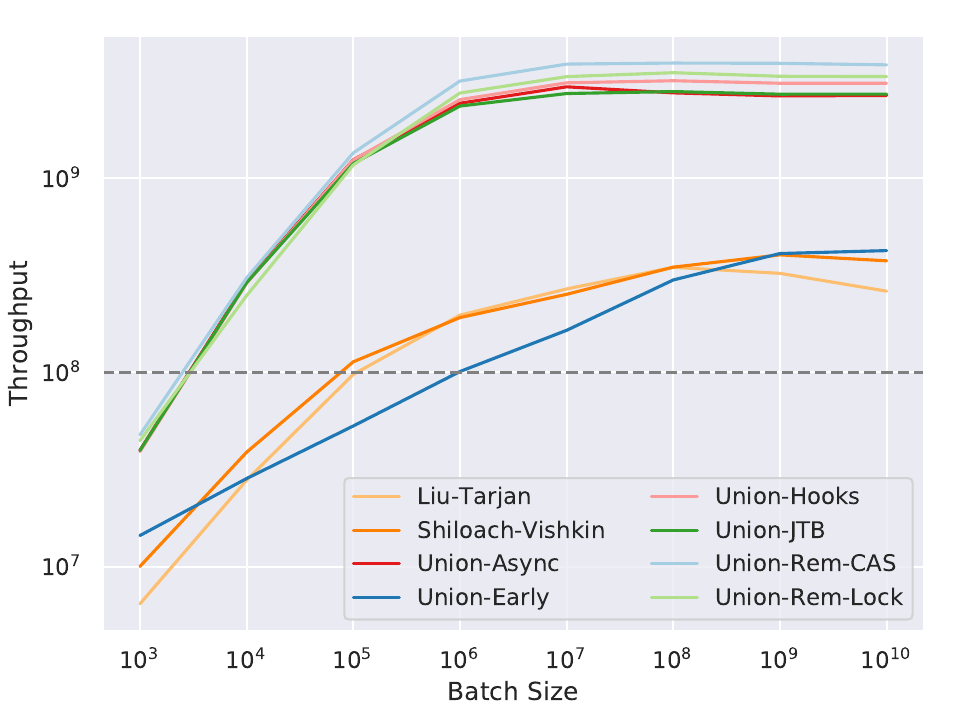}
  \caption{Throughput vs. batch-size for Twitter} 
\end{subfigure}
  \caption{\small Throughput vs. batch size of different union-find
  implementations on different graphs. The experiment is run on a
  72-core machine with 2-way hyper-threading enabled.
}\label{fig:batch_size_vs_throughput_extra}
\end{figure*}

\myparagraph{Mixed Inserts and Queries Throughput}
In these experiments we study how the ratio of insertions to queries
affects the overall throughput. We use the \unionremcas{} algorithm as
a case study for this experiment, and evaluate the performance of
different variants of this algorithm, varying the ratio of insertions
to updates.
To generate an update stream with an insert-to-query ratio of $x$,
we generate $1/x$ many queries on random pairs of vertices, and create
a batch permuting the original updates and queries together. Another
alternate experiment would be to fix a particular batch size, and then
vary the ratio of queries to updates. We tried this strategy but
observed that this has the undesirable effect of varying the number of
connected components, which affects both query and update performance
and thus makes it difficult to interpret the results. Instead, we
choose to fix the updates, and generate random pairs of vertices to
use as queries. Unlike the experiments in the previous section, we
randomly permute the batch so as to accurately reflect the cost of the
\unionremcas{} variants which must reorder a batch for correctness.

\begin{figure*}[!t]
\begin{subfigure}{0.4\textwidth}
  \centering
  \includegraphics[width=\textwidth]{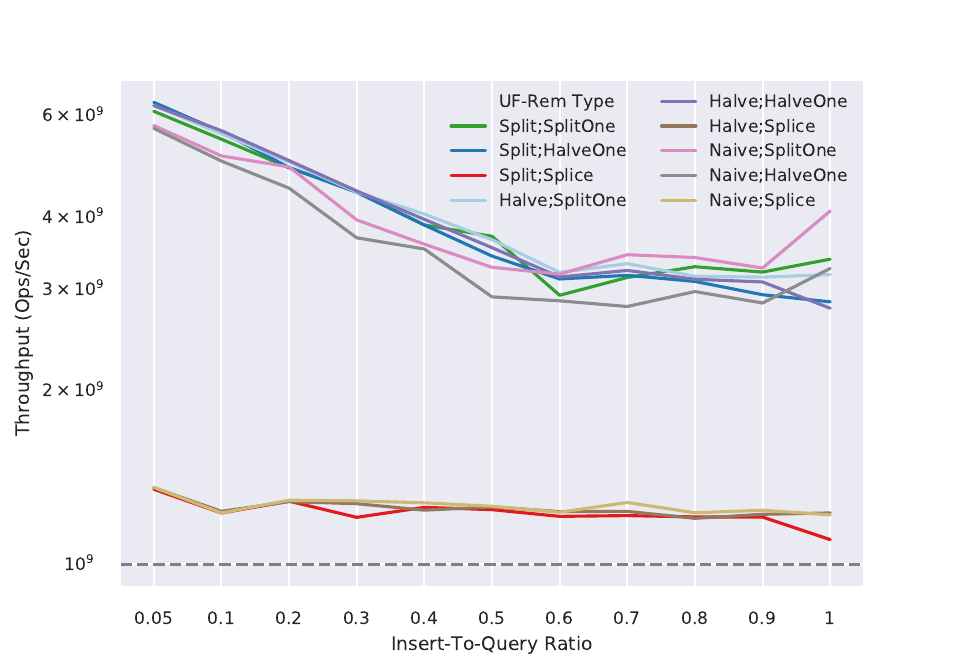}
\end{subfigure}
\begin{subfigure}{0.4\textwidth}
  \centering
  \includegraphics[width=\textwidth]{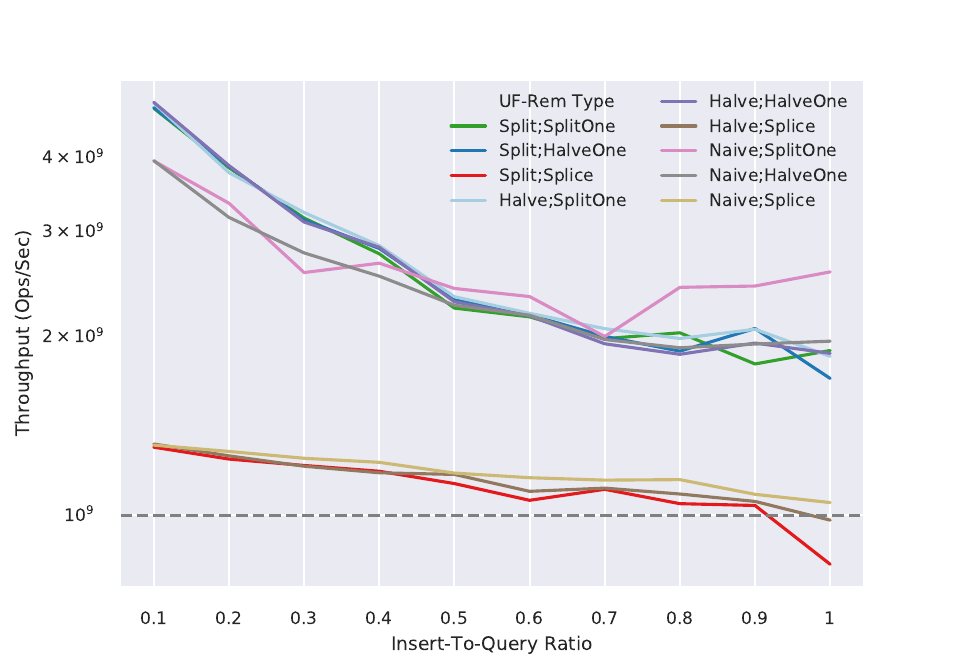}
\end{subfigure}
  \caption{\small Throughput of different \unionremcas{}
  implementations on the com-Orkut and LiveJournal graphs, plotted
  against the ratio of insertions to queries in a batch. The
  experiment is run on a 72-core machine with 2-way hyper-threading
  enabled.
  }\label{fig:streaming_union_rem_cas_orkut}
\end{figure*}

Figure~\ref{fig:streaming_union_rem_cas_orkut} shows the result of
this experiment on com-Orkut and LiveJournal graphs (the results for
other graphs are similar). We observe that the throughput of the
algorithms is higher when the ratio between the inserts-to-queries is
small, and gradually drops as the ratio approaches $1$. We also
observe that the ordering of the algorithms (the relative throughputs)
changes as the ratio of inserts-to-queries changes. When there are
very few inserts and mostly queries, the fastest algorithms are those
that perform additional compression within the query, since queries
can help subsequent queries save work by compressing the paths
traversed in the union-find trees. On the other hand, when the ratio of
inserts-to-queries is large, the algorithms that perform best are
those which perform no additional compression during the finds. On
both graphs, after a ratio of about 0.6--0.7, the fastest algorithm is
the variant which performs no additional compression on a find, and
uses the \splitone{} option (like in the static setting).

\begin{figure*}[!t]
  \hspace{1em}
  \includegraphics[width=0.8\textwidth]{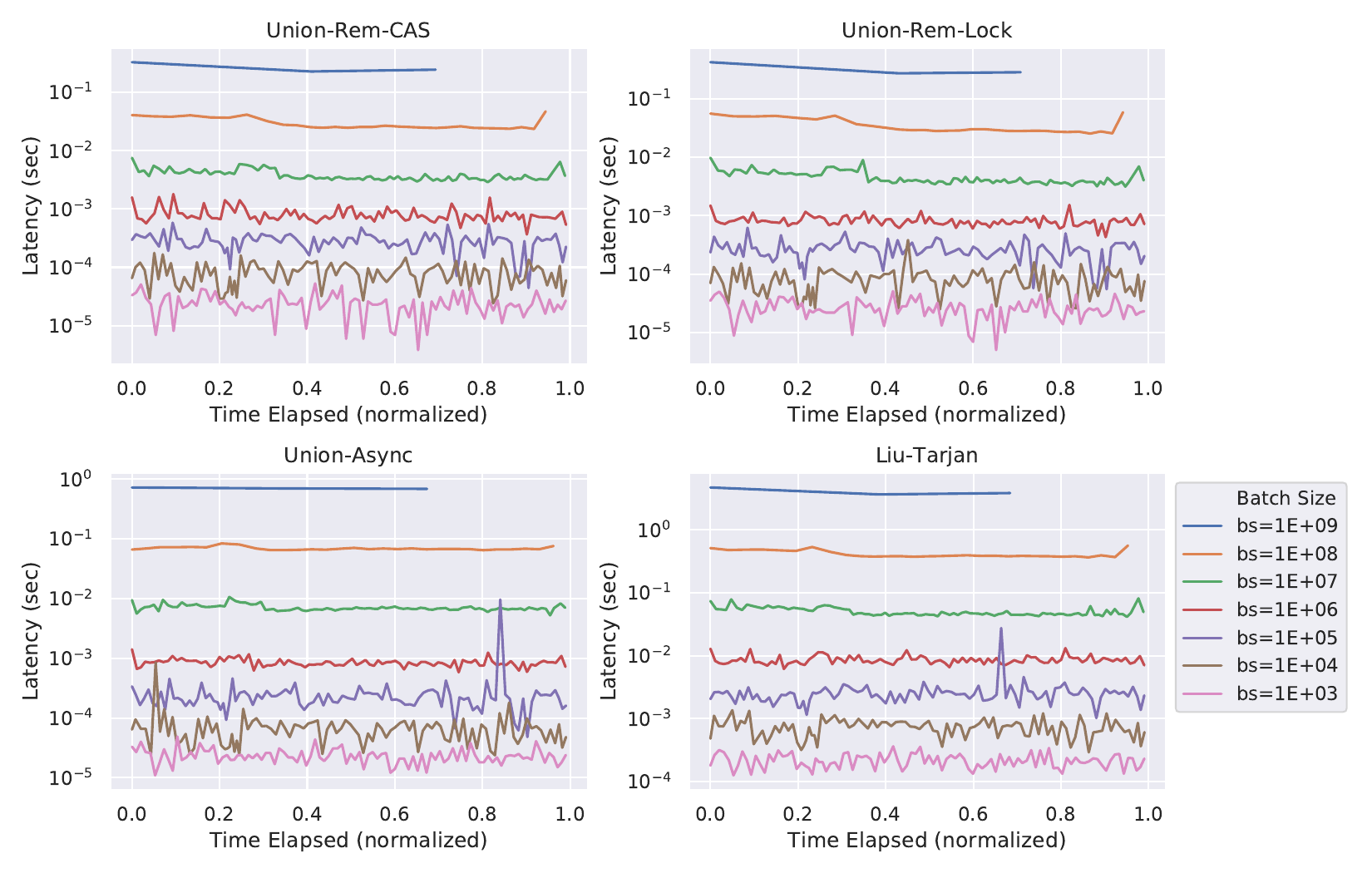}
  \caption{\small Latency of batch operations (seconds) plotted for
    varying batch sizes for the \unionremcas{}, \unionremlock{},
    \unionasync{}, and \liutarjan{} algorithms on the Hyperlink2012
    graph. The running time of each operation sequence is normalized
    by the duration of the experiment at that batch size. The $y$-axis is in log-scale. All
    experiments are run on a 72-core machine with 2-way
    hyper-threading enabled.
  }\label{fig:all_hyperlink_latency}
\end{figure*}

\myparagraph{Additional Latency Plots}
Figure~\ref{fig:all_hyperlink_latency} shows the latency of different
\framework{} implementations while processing an update-only sequence
containing a 10\% sample of the Hyperlink2012 graph. We observe that
the latency for all of our algorithms is highly consistent, with the
median time per batch for different batch sizes always being within
1--2\% of the average batch time across all algorithms and graphs. The
update time, which is plotted in log-scale, grows linearly with the
batch size which is also linearly increasing in log-scale.  Since the
total time taken to process the graph at different batch sizes is
different, we rescale the $x$-axis for each experiment to lie in the
range $[0, 1)$, and subsample the streams to plot a manageable number
of operations. We plot the operations as occurring at their start
time.

\subsection{Sampling Evaluation}\label{sec:sampling_eval}
We now discuss the empirical performance of the sampling methods
presented in this paper. We focus on analyzing (i) the running time of
each sampling method; (ii) the fraction of the vertices covered by the
largest component the sampling method identifies; and (iii) the number
of inter-component edges remaining after running the sampling method.

\begin{table}[!t]
\footnotesize
\centering
\begin{tabular}[t]{l | c|c|c|c|c|c}
  \toprule
  {\bf Graph} & $\mathsf{BFS}$ (s) & $\mathsf{BFS}$ Cov. & $\mathsf{BFS}$ IC. & $\mathsf{LDD}$ (s) & $\mathsf{LDD}$ Cov. & $\mathsf{LDD}$ IC. \\
  \midrule
  {RU}          &2.67     & 100\%   & 0\%          &0.0743     &100\%     &0\% \\ 
  {LJ}        &0.0109   & 99.9\%  & 0.0129\%   &0.0136     &99.9\%    &0.0129\%, \\ 
  {CO}          &0.00909  & 100\%   & 0\%          &0.00823    &100\%     &0\% \\
  {TW}            &0.0868   & 100\%   & 0\%          &0.117      &100\%     &0\% \\
  {FR}         &0.330    & 100\%   & 0\%          &0.3266     &29.0\%    &43.9\% \\
  {CW}            &2.04     & 97.6\%  & 0.161\%    &1.523      &97.6      &0.161\% \\
  {HL14}      &2.49     & 91.4\%  & 0.560\%    &3.101      &91.4\%,   &0.530\%, \\
  {HL12}      &8.42     & 93.9\%  & 0.538\%    &7.586      &93.9\%    &0.483\%, \\
  \bottomrule
\end{tabular}
\caption{Performance of the \bfssample{} and \lddsample{} schemes on
  graphs used in our experiments. The {\bf Cov.} columns report the
  percentage of vertices captured in the most frequent component captured
  by the sampling method. The {\bf IC.} columns report the fraction of
  inter-component edges remaining after applying the sampling method.
  All times reported are measured on a 72-core machine (with 2-way
  hyper-threading enabled).}
\label{table:bfs_ldd_sampling}
\end{table}

\myparagraph{\bfssample{} and \lddsample{} Performance}
We start with the closely related \bfssample{} and \lddsample{}
schemes. Table~\ref{table:bfs_ldd_sampling} reports the running times
and sampling quality of these two schemes. For \lddsample{}, we report
the results using the fixed parameters that are used in our
experiments ($\beta = 0.2$, and setting permuting to
$\mathsf{False}$).

Perhaps as expected, the running time of \bfssample{} is high on the
high-diameter road\_usa graph, although it completely covers this
graph achieving 100\% coverage, with no inter-component edges
remaining. For the remaining graphs, the running times are reasonable,
and a BFS from a random source covers a large portion of the graph,
leaving just a fraction of a percentage of inter-component edges on
four of our graphs.

For \lddsample{}, the performance is quite reasonable, running in close
to the running time of a BFS on each of these graphs, ranging from
35.9x faster on road\_usa, to at most 1.24x slower on Hyperlink2014.
\lddsample{} achieves an average speedup of 5.35x over \bfssample{},
running between 1.39x slower to 1.34x faster, and is within 1\% of the
running time of \bfssample{} on average on graphs excluding road\_usa.
Thus, for graphs with low to moderate diameter, the performance of the
two schemes is roughly equivalent. Compared to \bfssample{},
\lddsample{} achieves similar coverage of vertices in the most
frequent connected component, and also cuts a similar number of
inter-component edges, actually cutting 5--11\% fewer edges on both
the Hyperlink2014 and Hyperlink2012 graphs. However, both schemes
already cut fewer than 1\% of the edges, and so the difference in
quality is very modest.

\begin{figure}[!t]
  \centering
  \hspace{5em}
    \includegraphics[width=\columnwidth]{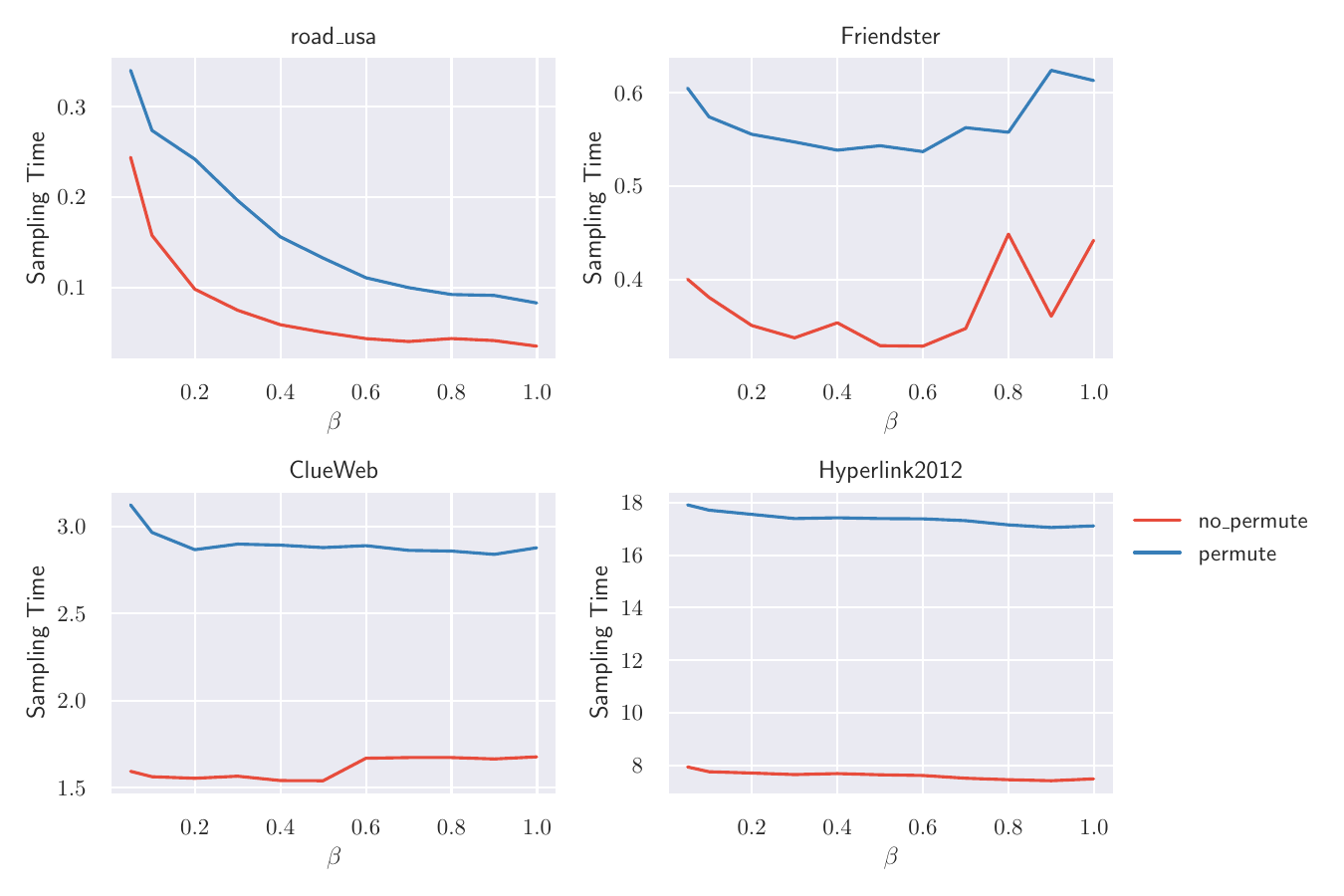}\\ 
    \caption{\small Plot of $\beta$ vs. running time (seconds) when
    permuting is both enabled and disabled in LDD sampling.
    Experiments are run on a 72-core machine with 2-way hyper-threading.}\label{fig:ldd_running_times}
\end{figure}

\begin{figure}[!t]
  \centering
  \hspace{5em}
    \includegraphics[width=\columnwidth]{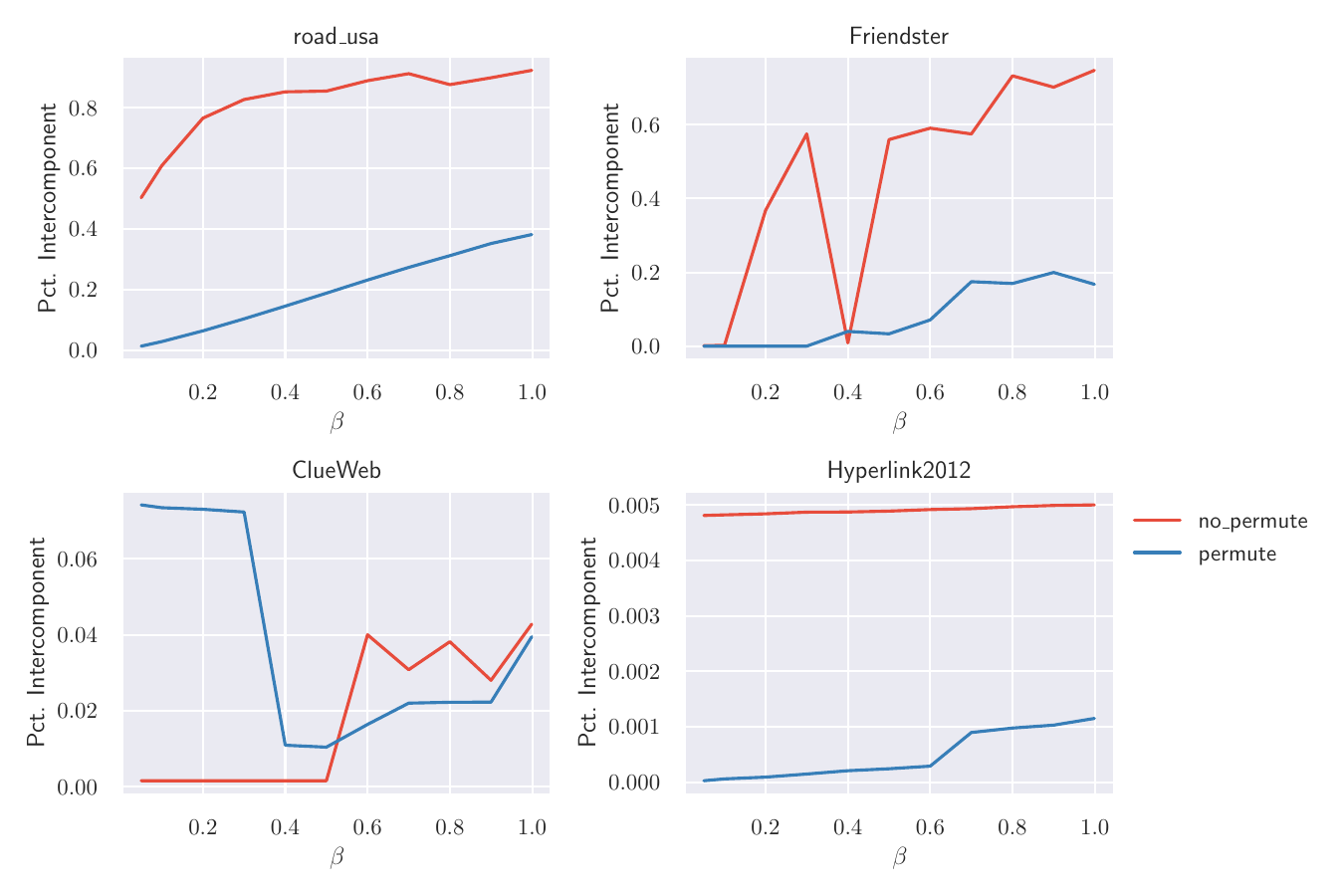}\\ 
    \caption{\small Plot of $\beta$ vs. fraction of inter-component
    edges remaining after sampling when permuting is both enabled and
    disabled in LDD sampling.}\label{fig:ldd_ic_edges}
\end{figure}

\begin{figure}[!t]
  \centering
  \hspace{5em}
    \includegraphics[width=\columnwidth]{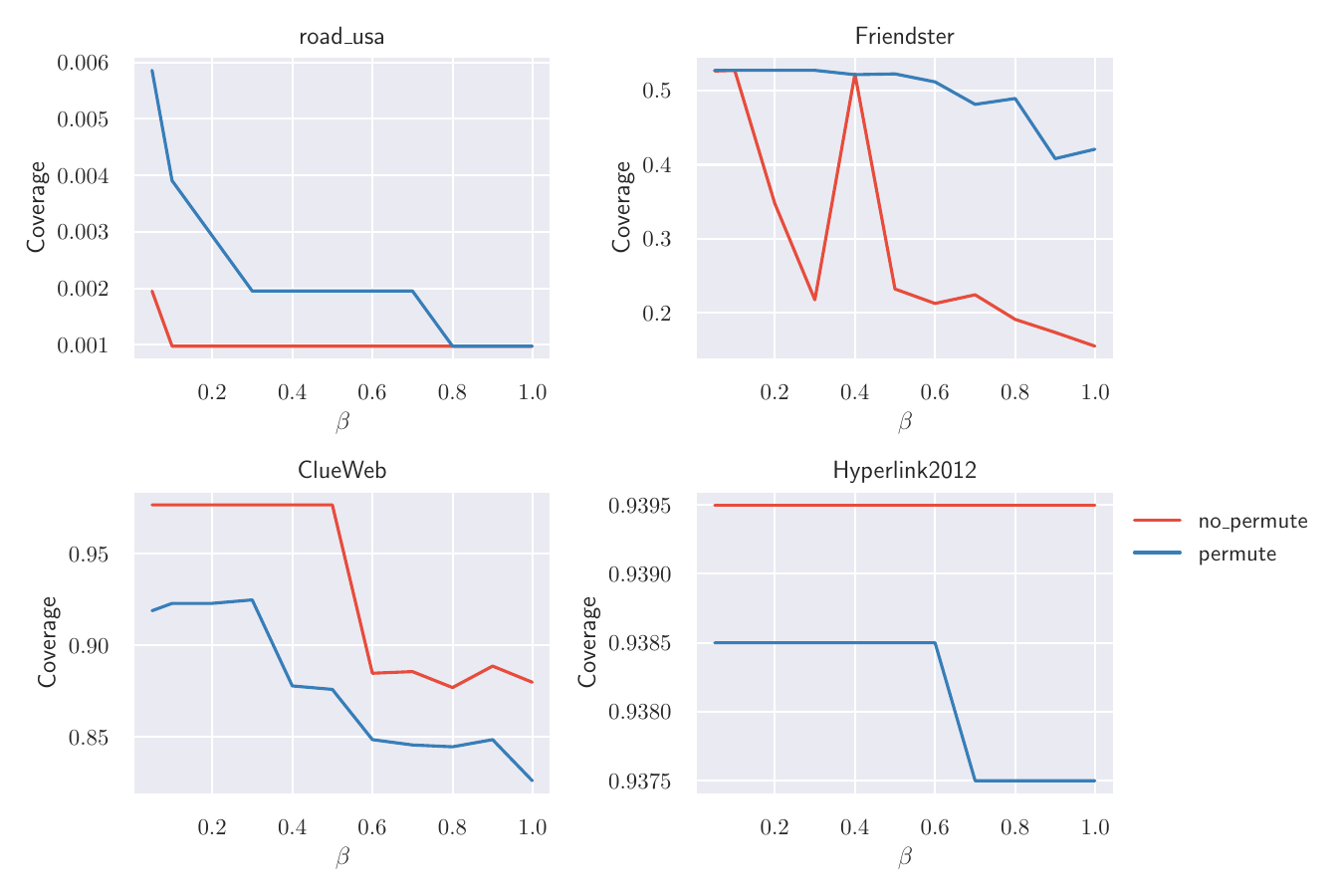}\\ 
    \caption{\small Plot of $\beta$ vs. percentage of vertices covered
    in the largest component when permuting is both enabled and
    disabled in LDD sampling. }\label{fig:ldd_coverage}
\end{figure}

\myparagraph{\lddsample{} Parameters}
Next, we study how adjusting the value of $\beta$ and applying a
random permutation on the vertex ordering in the LDD algorithm affects
both the coverage and number of inter-component edges. In theory, the
$\beta$ parameter in \lddsample{} controls both the strong diameter of
the components, as well as the number of edges cut in the graph in
expectation. In particular, for some $0 < \beta < 1$, \lddsample{}
produces clusters with diameter $O(\log n / \beta)$, and cuts $O(\beta
m)$ edges in expectation.

Figure~\ref{fig:ldd_running_times} plots the tradeoff between
different values of $\beta$ and the overall running time of the
algorithm on four of our input graphs, including the high-diameter
road\_usa graph, the Friendster social network, and the ClueWeb and
Hyperlink2012 Web graphs. We see that for road\_usa, the running time
decreases as a function of $\beta$, which is expected since larger
values of $\beta$ produce lower-diameter clusterings which in turn
require fewer synchronous rounds to compute. However, higher values of
$\beta$ also indicate a larger number of distinct components, since
more vertices wake up in early rounds and start their own clusters.
This explains why the running time actually increases for large values
of $\beta$ on the Friendster graph.

Figure~\ref{fig:ldd_ic_edges} plots the fraction of
\emph{inter-component edges}; letting $I$ be the fraction of
inter-component edges, the theory predicts that $I \approx \beta$.
This fraction is of interest, since $I$ is a lower-bound on the
number of edges that must be processed by our connectivity algorithms,
since all edges coming out of the most frequently identified component
are skipped. We can see that theory is very useful in predicting the
number of inter-component edges---for the road\_usa graph, the number of
inter-component edges grows linearly with $\beta$ (almost exactly
following the line $.4 \beta$). We also observe that the number of cut
edges is not always monotonically increasing with $\beta$, which can
be observed on the ClueWeb graph, although the fraction of cut edges
does seem to grow as $\beta$ is increased as a rule of thumb.

Lastly, Figure~\ref{fig:ldd_coverage} shows the fraction of vertices
covered in the most frequent component identified by \lddsample{}, for
different values of $\beta$. We see that the size of this component is
extremely small for the road\_usa graph (less than 1\%), which is
consistent with what theory would predict for a high-diameter graph.
For lower-diameter graphs, the fraction covered in the most frequent
component is much higher, ranging from 80\%--98\% for the ClueWeb
graph and 93\%-94\% for the Hyperlink2012 graph. Both using a random
permutation and using the original vertex ordering perform well,
although on the large real-world Web graphs, the original ordering
consistently produces a massive component that is 1--5\% larger than
the one produced by a random permutation of vertices.

\begin{figure}[!t]
  \centering
  \hspace{5em}
    \includegraphics[width=\columnwidth]{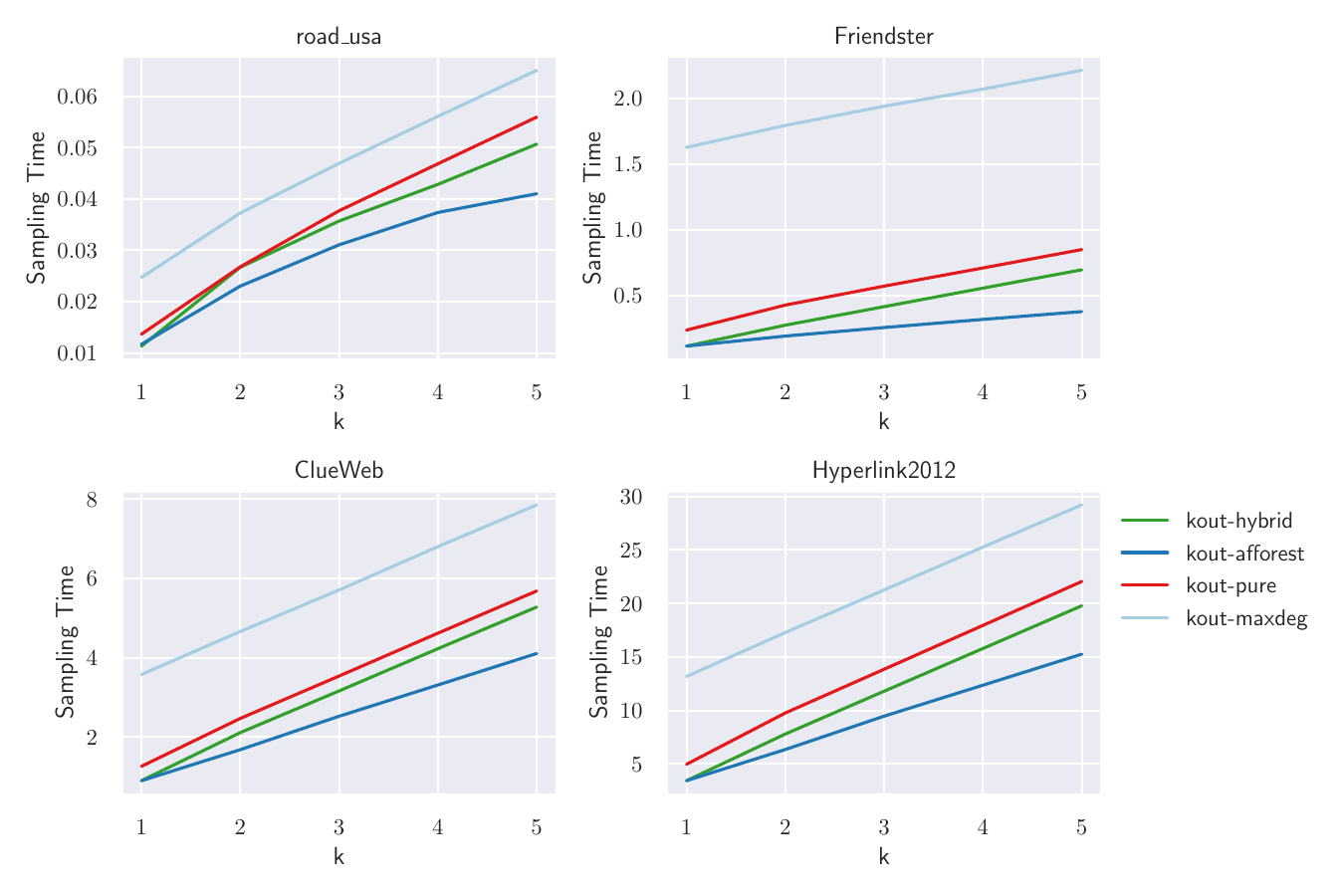}\\ 
    \caption{\small Plot of $k$ vs. running time (seconds) for
      different variants of \koutsample{}.
    Experiments are run on a 72-core machine with 2-way hyper-threading.}\label{fig:kout_running_times}
\end{figure}

\begin{figure}[!t]
  \centering
  \hspace{5em}
    \includegraphics[width=\columnwidth]{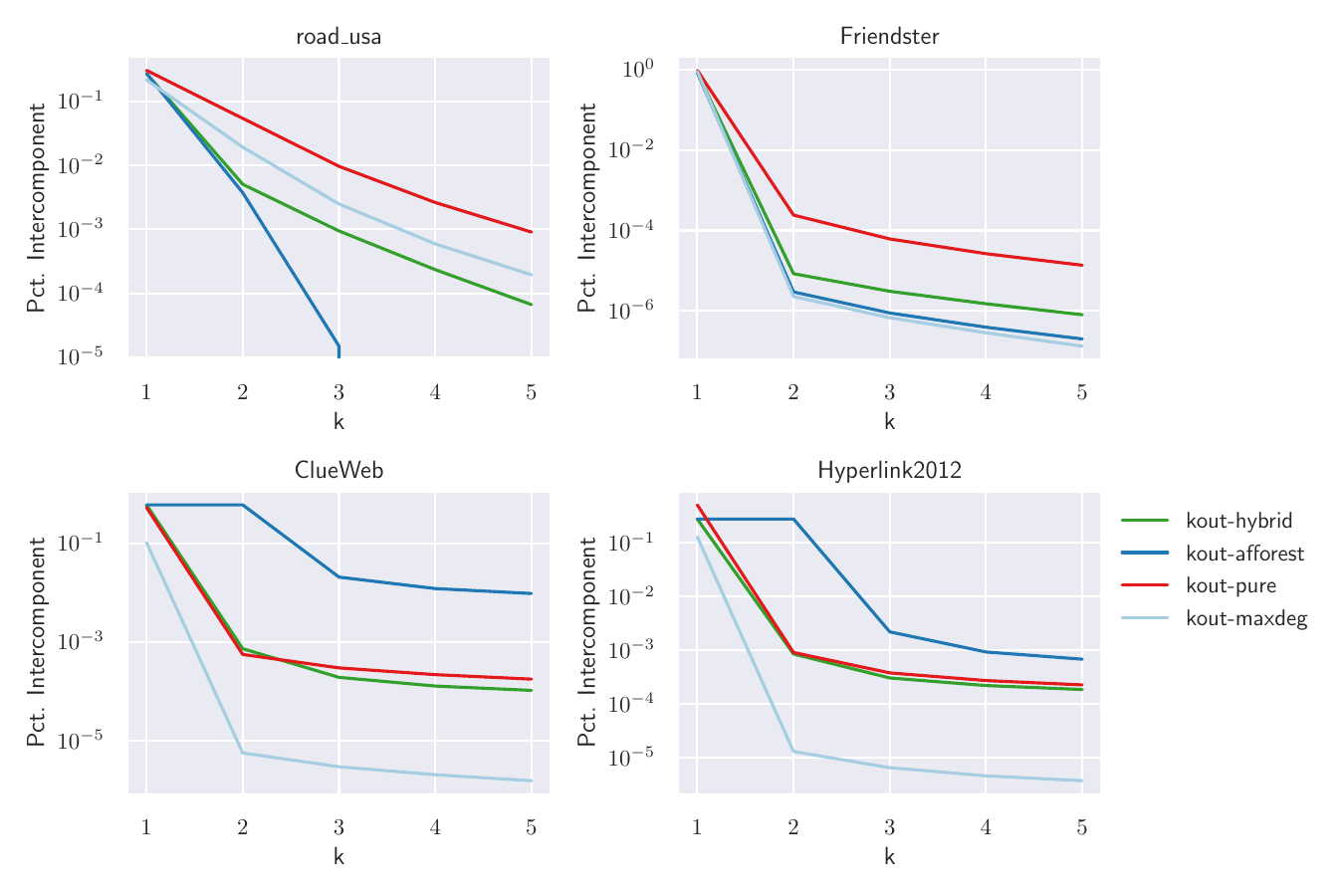}\\ 
    \caption{\small Plot of $k$ vs. fraction of inter-component
    edges remaining after sampling for different variants of
  \koutsample{}.}\label{fig:kout_ic_edges}
\end{figure}

\begin{figure}[!t]
  \centering
  \hspace{5em}
    \includegraphics[width=\columnwidth]{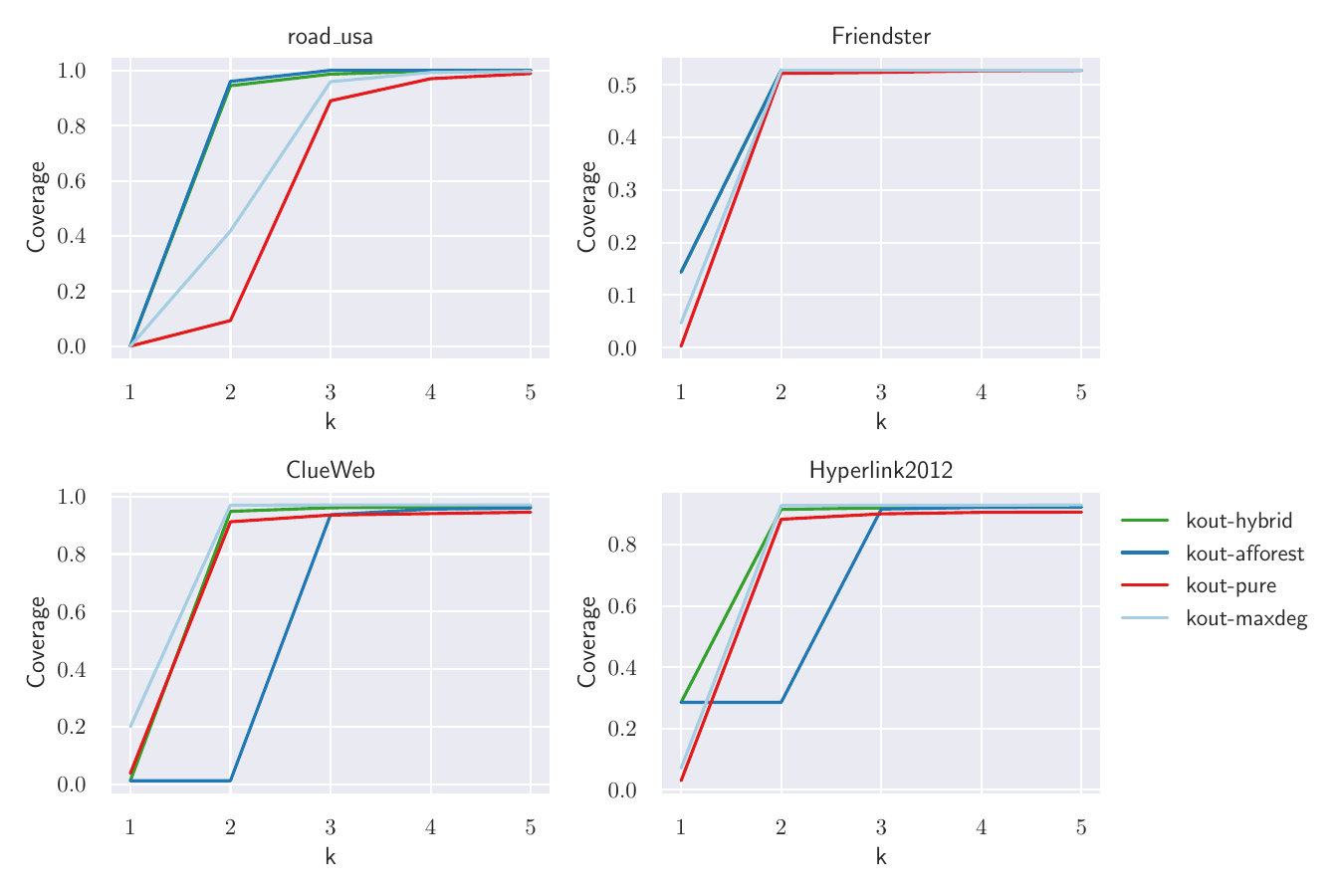}\\ 
    \caption{\small Plot of $k$ vs. percentage of vertices covered
    in the largest component for different variants of \koutsample{}. }\label{fig:kout_coverage}
\end{figure}

\myparagraph{\koutsample{}}
We now turn to \koutsample{}, which is particularly interesting due to
both its connection with recent fundamental theoretical
results~\cite{holm2019kout}. As described in
Section~\ref{subsec:sampling}, given an input $k \geq 1$, the
algorithm implemented in this paper selects the first neighbor
incident to each vertex, and selects $k-1$ other edges uniformly at
random. The sampled components are computed by contracting all
selected edges. We use a \unionfind{} algorithm to perform the
contraction, but note that other min-based algorithms can also be used
for this step.

However, there are several other design possibilities and
options in this scheme. We explore the following $k$-out algorithm
variants in this paper, which differ in how they select edges to use
in the $k$-out algorithm (Algorithm~\ref{alg:kout_sample}).

\begin{itemize}[itemsep=0pt]
  \item Afforest (\koutafforest{}): This scheme proposed by
    Sutton et al.~\cite{Sutton2018} selects \emph{the first $k$ edges}
    incident to each vertex to use.

  \item Pure Sampling (\koutpure{}): This scheme proposed by Holm et
    al.~\cite{holm2019kout} selects $k$ edges incident to each vertex
    uniformly at random to use.

  \item Hybrid Sampling (\kouthybrid{}): This scheme selects
    the first edge incident to each vertex, and $k-1$ edges incident
    to the vertex uniformly at random to use.

  \item Max-Degree Sampling (\koutmaxdeg{}): This scheme
    selects the edge corresponding to the largest degree neighbor
    incident to each vertex, and $k-1$ edges other incident to the
    vertex uniformly at random to use.
\end{itemize}

Both the \kouthybrid{} and \koutmaxdeg{} schemes are new strategies
proposed in this paper. We are not aware of any experimental
evaluation of the \koutpure{} scheme proposed by Holm et
al.~\cite{holm2019kout} prior to our work. The result Holm et
al.~provide for \koutpure{} is as follows:

\begin{theorem}\label{thm:holm}
  Let
  $G$ be an undirected $n$-vertex graph and let $k \geq c \log n$,
  where $c$ is a sufficiently large constant. Let $G'$ be a random $k$-out
  subgraph of $G$. Then the expected number of edges in $G$ that
  connect different connected components of $G'$ is $O(n/k)$.
\end{theorem}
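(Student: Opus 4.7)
My plan is to bound the expected number of inter-component edges by combining a tail bound on how likely small vertex subsets are to be isolated in $G'$ with a charging argument that attributes each crossing edge to a component. Let $X$ denote the number of edges in $G$ whose endpoints lie in different components of $G'$, and for each vertex $v$ let $Y_v$ count the $G$-edges incident to $v$ whose other endpoint lies outside $v$'s component in $G'$; then $X = \tfrac{1}{2}\sum_v Y_v$, so it suffices to prove $\sum_v E[Y_v] = O(n/k)$. An important easy case first: if $v$ has degree at most $k$ in $G$, then $v$ selects all of its neighbors, so every neighbor of $v$ is in the same component of $G'$ as $v$, giving $Y_v = 0$. So I only need to handle vertices with $d_v \geq k$.

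The core probabilistic step is to bound the chance that a given vertex set $S \ni v$ is precisely the component of $v$ in $G'$. This requires that every $u \in S$ picks all $k$ of its sampled edges inside $S$, which happens with probability at most $\prod_{u \in S}(d_u^S/d_u)^k$, where $d_u^S$ is the number of neighbors of $u$ inside $S$. The second key step is to partition vertices according to a "local expansion" parameter: say $v$ is in a "well-connected" state if its component $C(v)$ has at least $k$ many chosen neighbors, and in a "stranded" state otherwise. Using a Chernoff bound over $v$'s $k$ independent neighbor choices (each of which is at most a $(1 - |\partial C(v)| / d_v)$-fraction likely to land on a neighbor inside $C(v)$ if $v$'s external degree is significant), I would show that the probability of $v$ being stranded with external $G$-degree $\ell$ decays exponentially in $k$ times the external-to-total-degree ratio. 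Because $k \geq c\log n$, these exponential factors overcome the polynomial number of candidate $S$'s and edges.

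The main obstacle is getting the tight $O(n/k)$ rather than the naive $O(m/k)$ one would obtain by just summing per-edge disconnection probabilities. I would handle this by charging each crossing edge $(u,w)$ to the smaller of its two endpoints' components, and then bounding, for each $v$ and each potential component size $s$, the expected contribution as a sum $\sum_s s \cdot \Pr[v \text{ lies in a component of size } s]$. Here I would use a union bound over all $O\bigl(\binom{n}{s}\bigr)$ candidate sets of size $s$, against the probability bound from the previous paragraph, and verify that the $(c\log n)$-fold product dominates the $\binom{n}{s}$ term when $c$ is a sufficiently large constant, collapsing the sum to a geometric series whose total is $O(1/k)$ per vertex.

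Finally, I would assemble the pieces: summing the per-vertex bound $E[Y_v] = O(1/k)$ over all $n$ vertices yields $E[X] = O(n/k)$ as claimed. A subtlety I anticipate is handling vertices of very high $G$-degree, where the internal/external degree ratio may behave nontrivially; I expect to dispatch this case by observing that $v$ having a single chosen edge land in a large component is already enough to absorb most of its external neighbors into that component, so the expected number of $v$'s $G$-edges that remain crossing is bounded independently of $d_v$.
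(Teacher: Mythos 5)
This theorem is not proved in the paper at all: it is quoted as background from Holm et al.~\cite{holm2019kout}, whose main result it is, precisely because the proof is genuinely difficult. Independent of that, your proposal has a fatal structural gap: the linchpin claim that $E[Y_v] = O(1/k)$ for every vertex $v$ is false, so the final step (``summing the per-vertex bound over all $n$ vertices'') cannot be carried out. Concretely, let $s = k^3$ and let $G$ consist of a hub $v$ joined to exactly one vertex in each of $t = (n-1)/s$ disjoint cliques of size $s$. Each clique is internally connected in $G'$ with high probability, so the components of $G'$ are $v$'s component and the untouched cliques. The component of $v$ contains at most the $k$ cliques reached by $v$'s own samples, plus those cliques whose hub-neighbor sampled its edge to $v$; the latter happens with probability $k/s = 1/k^2$ per clique, so in expectation only about $k + t/k^2$ cliques join $v$. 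Hence $E[Y_v] \geq t(1 - 1/k^2) - k = \Omega(n/k^3)$. Taking $n \approx k^6$ (so that $k \geq c\log n$ holds easily), this gives $E[Y_v] = \Omega(k^3)$, polynomially far from $O(1/k)$, even though the theorem's global bound $O(n/k) = O(k^5)$ is respected because the crossing edges are concentrated at a single vertex. This shows the $O(n/k)$ bound is irreducibly global: it cannot be distributed uniformly over vertices, and any proof must amortize crossing edges across the whole graph, which is exactly what makes the result hard.

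The intermediate machinery has related problems. Your Chernoff bound over $v$'s $k$ choices conditions on the quantity $|\partial C(v)|/d_v$, but $C(v)$ is a random set determined by all vertices' choices, including $v$'s own, so this conditioning is circular as stated. The union bound of $\prod_{u \in S}(d_u^S/d_u)^k$ against $\binom{n}{s}$ also attacks the wrong regime: the events $\{C(v) = S\}$ over distinct $S$ are disjoint, so their probabilities already sum to at most $1$; the dangerous sets are those whose vertices have a tiny \emph{fraction} of their degree leaving $S$ (so the product bound is near $1$ and the union bound yields nothing) while the \emph{absolute} number of leaving edges is large---exactly the hub configuration above. Your closing remark, that one sampled edge landing in a large component absorbs most of a high-degree vertex's external neighbors, fails in the same configuration, since $v$'s neighbors lie in many pairwise distinct large components. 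The parts of the proposal that are correct (the identity $X = \tfrac{1}{2}\sum_v Y_v$, and $Y_v = 0$ when $d_v \leq k$) carry none of the load.
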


\emph{\koutsample{} Running Time.}
Figure~\ref{fig:kout_running_times} shows the running times of
the different strategies on four of our graph inputs. We observe that
the \koutmaxdeg{} strategy consistently requires the largest running
times, since it requires reducing over all edges incident to the
vertex, and performing an indirect read on the neighboring endpoint to
find the largest degree neighbor. The other schemes all have a similar
cost, with \koutafforest{} being the fastest across all graphs, and
\koutpure{} being the second slowest across all graphs. As expected,
\kouthybrid{}'s performance interpolates that of \kouthybrid{} and
\koutpure{}, and is equal to the performance of \koutafforest{} for
$k=1$. This is expected, since the hybrid algorithm is the same as
\koutafforest{} for $k=1$, and selects $k-1$ random edges, thus
approximating the performance of \koutpure{} for larger $k$.

\emph{\koutsample{} Inter-Component Edges.}
Figure~\ref{fig:kout_ic_edges} plots the fraction of
\emph{inter-component edges} generated by different variants of the
\koutsample{} scheme as a function of $k$. We plot the fraction of
inter-component edges in log-scale to help interpret the plot.  We see
that using $k=1$ performs very poorly for all schemes, with the
exception of the \koutmaxdeg{} scheme, where it achieves reasonable
performance for the ClueWeb and Hyperlink2012 graphs (both of these
graphs are power-law degree distributed, and have several very high
degree vertices). The \koutafforest{} scheme performs very well on the
road\_usa graph, probably due to the fact that most vertices in this
graph have just a few neighbors---for $k \geq 4$ the \koutafforest{}
sampling scheme finds all components of this graph and thus the
inter-cluster edge percentage is 0 for $k \geq 4$. For the other
graphs, the \koutmaxdeg{} scheme consistently performs best.

The \koutpure{} and \kouthybrid{} schemes perform similarly, with the
\kouthybrid{} scheme usually performing slightly better when the
\koutafforest{} algorithm performs well due to the graph's vertex
ordering. On the other hand, for the ClueWeb and Hyperlink2012 graphs,
the \koutafforest{} scheme performs quite poorly---these appear to be
instances where greedily taking the first $k$ vertices based on the
input vertex order results in the graph clustering into local
clusters, with many inter-cluster edges between them. The vertex
orders of these graphs are based on the lexicographical ordering of
pages on the Web, which orders web pages within the same domain
consecutively. Therefore, one explanation for this behavior is that
the first $k$ edges are all "inter-domain" edges which result in
clusters based on domains being formed by the sampling algorithm.

\begin{table}[!t]
\footnotesize
\centering
\begin{tabular}[t]{l | c|c|c}
  \toprule
  {\bf Graph} & $\mathsf{KOut (Hybrid)}$ (s) & $\mathsf{KOut (Hybrid)}$ Cov. & $\mathsf{KOut (Hybrid)}$ IC.  \\
  \midrule
  {RU}          &0.0267   &94.4\%  &0.505\%    \\
  {LJ}        &8.82e-2  &99.9\%  &4.20e-4\%    \\ 
  {CO}          &8.574e-2 &100\%   &0\%    \\
  {TW}            &0.112    &99.9\%  &5.66e-3\%    \\
  {FR}         &0.274    &52.7\%  &8.39e-4\%    \\
  {CW}            &2.11     &94.8\%  &0.735\%    \\
  {HL14}      &3.31     &89.9\%  &0.0323\%    \\
  {HL12}      &7.79     &91.5\%  &0.0857\%    \\
  \bottomrule
\end{tabular}
\caption{Performance of the \koutsample{} schemes on
  graphs used in our experiments. The {\bf Cov.} column reports the
  percentage of vertices captured in the most frequent component captured
  by the sampling method. The {\bf IC.} column reports the fraction of
  inter-component edges remaining after applying the sampling method.
  All times reported are measured on a 72-core machine (with 2-way
  hyper-threading enabled).
  }
\label{table:kout_sampling}
\end{table}

\emph{\koutsample{} Coverage.}
Next, Figure~\ref{fig:kout_coverage} plots the fraction of vertices
contained in the most frequent component identified by sampling as a
function of $k$. The results are similar to those previously analyzed
from Figure~\ref{fig:kout_ic_edges}: on some graphs with good vertex
orderings (road\_usa, Friendster), \koutafforest{} achieves excellent
performance. However, on other graphs (ClueWeb, Hyperlink2012),
\koutafforest{} achieves very poor performance due to the ordering
containing significant local structure that prevents large components
from forming. The \koutpure{} algorithm addresses this issue, and
achieves consistent performance on all graphs. The \kouthybrid{}
algorithm further improves the performance of \koutpure{},
significantly outperforming \koutafforest{} on the ClueWeb and
Hyperlink2012 graphs for $k=2$, and achieving performance close to
\koutafforest{} on the other graphs. Table~\ref{table:kout_sampling}
shows the running times, percent coverage, and percent of
inter-component edges remaining for the \koutsample{} scheme used in
our paper ($k = 2$ using the \kouthybrid{} algorithm).

\emph{\koutsample{} Summary.}
Finally, we consider how the performance of these schemes compares
with the predictions of the theory. First, we note that the theory
rules out the possibility of Theorem~\ref{thm:holm} holding for $k=1$
(however, it is conjectured to hold for $k \geq 2$). We see that in
practice for $k \geq 2$ a huge fraction of the edges become
intracluster, with very few remaining inter-cluster edges, and far
fewer edges remain than predicted by theory. For example, for the
Hyperlink2012 graph, the \koutpure{} scheme results in 0.0009149\%
inter-component edges, or 206M inter-component edges, which is about
1/20th the number of vertices in this graph. Our results suggest that
the empirical performance of these schemes is far better than
suggested by theory. It would be interesting to explore applying these
ideas to speed up connectivity algorithms in other settings, such as
practical distributed connectivity implementations, or possibly even
in practical dynamic connectivity algorithms.

\subsubsection{Comparison with MapEdges and GatherEdges}\label{sec:edge_gather}
Next, we compare the cost of our connectivity algorithms with two
basic graph processing primitives, \textsc{MapEdges} and
\textsc{GatherEdges}. These are primitives that map over all vertices
in parallel, and perform a parallel reduction over their incident
edges, storing the result in an array proportional to the number of
vertices. \textsc{MapEdges} returns a value of $\mathsf{1}$ per edge,
which corresponds to computing the degree of each vertex after reading
every edge.  \textsc{GatherEdges} performs an indirect read into an
array at the location corresponding to the neighbor id, and so the
performance of this primitive depends on the structure and ordering of
the input graph.

Both primitives are useful baselines for understanding the performance
of more complex graph algorithms.
In particular, \textsc{MapEdges} corresponds to the cost of reading
and processing a graph input, and storing some output value for each
vertex.  \textsc{GatherEdges} serves as a useful practical lower-bound
on the performance of graph algorithms that perform indirect accesses
over every edge. This is useful for connectivity algorithms, since
every connectivity algorithm that always outputs a correct answer (Las
Vegas algorithm) must check some connectivity information for both
endpoints of an edge, which seems impossible to do without performing
an indirect read.\footnote{One exception is for $n$ vertex graphs with
  more than ${n-1}\choose{2}$ edges, in which case it is clear the
  graph is guaranteed to be connected.}
All connectivity algorithms in our framework perform an indirect read
across every edge to the parent array of the neighboring vertex.

Table~\ref{table:gathervsalg} shows the running time of the
\textsc{MapEdges} and \textsc{GatherEdges} primitives, and the time
taken by the fastest connectivity implementation in \framework{} for
that graph. First, we observe that there is an order of magnitude
difference in running times between the \mapedges{} and
\gatheredges{}: \gatheredges{} is 13.5x slower on average than
\mapedges{} across all inputs. The difference between the two is more
noticeable for small graphs, since the sequential reads used in
\mapedges{} are likely to be served by cache. On the other hand, for
\gatheredges{}, even our smallest graph has 23.9M vertices, whose
vertex data is already larger than our multicore machine's LLC, and
thus many of the indirect reads across edges will be cache misses.

\setlength{\tabcolsep}{2pt}
\begin{table}[!t]
\footnotesize
\centering
\begin{tabular}[t]{l | r|r|r|r}
  \toprule
  {\bf Graph} & \textsc{MapEdges} &\textsc{GatherEdges}  & (No Sample)
              & \framework{} (Sample) \\
  \midrule
  {road\_usa}          &5.54e-3   &1.33e-2  & 2.80e-2  & 3.77e-2   \\ 
  {LiveJournal}        &1.31e-3   &8.59e-3  & 1.27e-2  & 8.96e-3   \\ 
  {com-Orkut}          &9.77e-4   &1.65e-2  & 1.91e-2  & 8.56e-3   \\
  {Twitter}            &2.63e-2   &0.488    & 0.316    & 9.24e-2   \\
  {Friendster}         &2.77e-2   &1.50     & 0.902    & 0.183     \\
  {ClueWeb}            &0.790     &2.77     & 4.04     & 1.69      \\
  {Hyperlink2014}      &1.36      &4.14     & 6.64     & 2.83      \\
  {Hyperlink2012}      &2.96      &10.4     & 13.9     & 8.20      \\
  \bottomrule
\end{tabular}
\caption{72-core with hyper-threading running times (in seconds)
  comparing the performance of \textsc{GatherEdges} to the performance
  of the fastest connectivity implementation in \framework{} without
sampling (No Sample), and with sampling used (Sample).}
\label{table:gathervsalg}
\end{table}

\subsection{Algorithms for Massive Web Graphs}\label{sec:large_graphs}
Table~\ref{table:big_comparison} in the introduction presents the
results of comparing \framework{} algorithms with these existing
state-of-the-art results from different computational paradigms. We
can see that in all cases, \framework{} achieves significant speedup,
sometimes ranging on orders of magnitude in terms of raw running time
alone. We now provide a detailed breakdown of how \framework{}
algorithms compare with the fastest systems from different categories.

\subsubsection*{Distributed Memory Systems}
There have been various distributed-memory connected components
algorithms that report running times for Hyperlink2012, which we
compare with here. Slota et al.~\cite{Slota2016} present a
distributed-memory connectivity algorithm, and report a running time
of \emph{63 seconds} for finding the largest connected component on the
Hyperlink2012 graph using 8,192 cores (256 nodes each with two 16-core
processors) on the Blue Waters supercomputer.  Stergiou et
al.~\cite{Stergiou2018} describe a distributed-memory connected
components algorithm. Using 12,000 cores (1,000 nodes each with two
6-core CPUs) and 128TB of RAM on a proprietary cluster at Yahoo!, they report a
running time of \emph{341 seconds} for this graph.  Dathathri et
al.~\cite{Dathathri2018} report a running time of \emph{75.3 seconds} for
connected components in the D-Galois graph processing system for
Hyperlink2012 using 256 nodes (each with a KNL processor with 68 cores
and 96GB RAM) on the Stampede KNL Cluster.  Very recently, Zhang et
al.~\cite{zhang2020fastsv} present a variant of Shiloach-Vishkin for
distributed-memory and report a running time of \emph{30 seconds} on
262,000 cores of an XC40 supercomputer. Compared with these results,
our fastest times are significantly faster.
\begin{itemize}[topsep=0pt,itemsep=0pt,parsep=0pt,leftmargin=16pt]
  \item Compared to Slota et al.~\cite{Slota2016}, our results are
  7.68x faster, using a machine with 16x less memory, and 56x fewer
  hyper-threads. Additionally, our algorithm finds all connected
  components, whereas their algorithm can only find the largest
  connected component (this is trivial to do on this graph by running
  a breadth-first search from a random vertex).

  \item Compared to Stergiou et al.~\cite{Stergiou2018}, our results
  achieve 41.5x speedup over their result, using 128x less memory, and
  166x fewer hyper-threads.

  \item Compared to the recent results of Dathathri et
  al.~\cite{Dathathri2018}, our results are 9.18x faster. Furthermore,
  we use 24x less memory, and 483x fewer hyper-threads.

  \item Compared to the very recent results of Zhang et al.~\cite{zhang2020fastsv}, which is
  the fastest current distributed-memory running time to solve
  connectivity on this graph, our results are 3.65x faster, using at
  least 256x less memory (each node on their system has at least 64GB
  of memory, to up to 256GB of memory, but unfortunately the authors
  did not report these statistics), and 1,819x fewer hyper-threads.
\end{itemize}

\subsubsection*{External Memory Systems}
Mosaic~\cite{maass2017mosaic} reports in-memory running times on the
Hyperlink2014 graph. Their system is optimized for external memory
execution. They solve connected components in 700 seconds on a machine
with 24 hyper-threads and 4 Xeon-Phis (244 cores with 4 threads each)
for a total of 1,000 hyper-threads, 768GB of RAM, and 6 NVMe SSDs.
FlashGraph~\cite{da2015flashgraph} reports disk-based running times
for the Hyperlink2012 graph on a 4-socket, 32-core machine with 512GB
of memory and 15 SSDs. On 64 hyper-threads, they solve connected
components in 461 seconds.
\begin{itemize}[topsep=0pt,itemsep=0pt,parsep=0pt,leftmargin=16pt]
  \item Compared with Mosaic, \framework{} is 247x faster, using 1.3x
    more memory, but 6.94x fewer hyper-threads. Although the Xeon-Phi
    chips Mosaic uses are no longer available, from a cost
    perspective, machines like the one we use in this paper are widely
    available from existing cloud service providers for a few dollars
    on the hour, but to the best of our knowledge, one cannot easily
    rent a machine like the one used in Mosaic.

  \item Compared with FlashGraph's result for the Hyperlink2012 graph,
    \framework{} is 56x faster, but uses 1.95x more memory, and 2.25x
    more hyper-threads. Therefore, we achieve orders of magnitude more
    performance, using comparatively little extra resources, although
    the comparison is not entirely fair, since FlashGraph must read
    this graph from an array of SSDs. It would be interesting for
    future work to evaluate whether our union-find algorithms can be
    used as part of a ``semi-external'' memory connectivity algorithm
    that streams the edges of the graph from SSD, and whether such an
    approach would improve on existing external and semi-external
    memory results.
\end{itemize}

\subsubsection*{Existing Shared-Memory (DRAM and NVRAM) Systems}
Finally, we compare with several recent results showing that the
Hyperlink2012 graph can be processed in the main memory of a single
machine. To the best of our knowledge, the first such result is from
Dhulipala et al.~\cite{DhBlSh18}, who showed a running time of 38.3
seconds for this graph on a single-node system with 1TB of memory and
144 hyper-threads. The authors later presented an improved result,
showing that the algorithm runs in just over 25 seconds on the same
machine.
Subsequently, two simultaneous results achieved very high performance
for processing this graph on a system with relatively little DRAM, but
a significant amount of NVRAM~\cite{Gill2020, dhulipala2020semi}. The
result by Gill et al.~\cite{Gill2020} extends the Galois system to
support graph processing when the graph is stored on non-volatile
memory. They achieve a running time of 76 seconds on a machine with
376GB of memory, 3TB of NVRAM, and 96 hyper-threads. Dhulipala et
al.~\cite{dhulipala2020semi} also describe a system for graph
processing on NVRAMs based on a semi-asymmetric approach. Their system
runs in 36.2 seconds on the same machine (376GB of memory, 3TB of
NVRAM and 96 hyper-threads).
\begin{itemize}[topsep=0pt,itemsep=0pt,parsep=0pt,leftmargin=16pt]
  \item Compared with GBBS, our results are 3.2x faster on the same
  machine as used in~\cite{DhBlSh18}. Given that this result was the
  fastest previous result under any computational model that we are
  aware of, we improve on the state-of-the-art time for processing
  this graph by 3.2x, while using exactly the same resources as those
  used by the authors of GBBS~\cite{DhBlSh18}.

  \item Compared with the non-volatile memory results of Gill et
  al.~\cite{Gill2020} and Dhulipala et al.~\cite{dhulipala2020semi},
  our results for Hyperlink2012 are 9.26x faster, using a system with
  2.65x more memory, and 1.5x more hyper-threads. It is an interesting
  question how quickly our new algorithms solve the connectivity
  problem when running in the system used by these authors, and also
  theoretically in the model proposed by Dhulipala et
  al.~\cite{dhulipala2020semi}.

\end{itemize}


\section{\framework{} Pseudocode}\label{apx:pseudocode}
In this section, we provide
pseudocode for methods in the \framework{} framework.

\algblock{ParFor}{EndParFor}
\algnewcommand\algorithmicparfor{\textbf{parfor}}
\algnewcommand\algorithmicpardo{\textbf{do}}
\algnewcommand\algorithmicendparfor{}
\algrenewtext{ParFor}[1]{\algorithmicparfor\ #1\ \algorithmicpardo}
\algrenewtext{EndParFor}{\algorithmicendparfor}
\algtext*{EndParFor}{}

\subsection{Sampling Schemes}\label{apx:subsec:sampling_pseudocode}
In Algorithms~\ref{alg:kout_sample},~\ref{alg:bfs_sample},
and~\ref{alg:ldd_sample}, we provide the deferred pseudocode for our
sampling methods in \framework{}, and refer to the text in
Section~\ref{subsec:sampling} for the descriptions.

\begin{algorithm*}
\caption{\koutsample{}} \label{alg:kout_sample}
\small
\begin{algorithmic}[1]
\Procedure{$k$OutSample}{$G(V, E), \codevar{labels}, k = 2$}
  \State $\codevar{edges} \gets$ $\{$ First edge from each vertex $\} \cup \{$Sample $k-1$ edges uniformly at random from each vertex$\}$
  \State $\textsc{UnionFind}(\codevar{edges}, \codevar{labels})$
  \State Fully compress the components array, in parallel.\label{line:koutcompress}
  \State  \algorithmicreturn{} $\codevar{labels}$
\EndProcedure
\end{algorithmic}
\end{algorithm*}

\begin{algorithm*}\caption{Breadth-First Search Sampling} \label{alg:bfs_sample}
\small
\begin{algorithmic}[1]
\Procedure{\bfssample{}}{$G(V, E), \codevar{labels}$, $c = 5$}
  \For{$i \in [c]$}
    \State $s \gets \textsc{RandInt}() \bmod |V|$
    \State $\codevar{labels} \gets \textsc{LabelSpreadingBFS}(G, s)$
    \Comment{Runs direction-optimizing BFS from $s$.
    $\codevar{labels}[u] = s$ if $u$ is reachable from $s$, and
    $\codevar{labels}[u] = u$ otherwise.}
    \State $\codevar{freq} \gets \textsc{IdentifyFrequent}(\codevar{labels})$
    \If {$\codevar{freq}$ makes up more than 10\% of labels}
      \State  \algorithmicreturn{} $\codevar{labels}$
    \EndIf
  \EndFor
  \State \algorithmicreturn{} $\{i \rightarrow i\ |\ i \in [1, |V|]\}$
  \Comment{Return the identity labeling}
\EndProcedure
\end{algorithmic}
\end{algorithm*}

\begin{algorithm*}[!t]\caption{Low-Diameter Decomposition Sampling} \label{alg:ldd_sample}
\small
\begin{algorithmic}[1]
\Procedure{\lddsample{}}{$G(V, E), \codevar{labels}$}
  \State $\codevar{labels} \gets \textsc{LDD}(G, \beta = 0.2)$
  \State  \algorithmicreturn{} $\codevar{labels}$
\EndProcedure
\end{algorithmic}
\end{algorithm*}

\subsection{Union Find Algorithms}
Next, we provide the pseudocode for all of our union-find
implementations, with the one exception of \jayanti{}'s algorithm and
\findtwotry{}, whose pseudocode is presented in~\cite{JayantiTB19}.
Algorithm~\ref{alg:union_find_meta} shows our generic union-find
template algorithm, which takes a custom union operator, find
operator, and splice operator, and runs the resulting algorithm
combination. Note that the splice operator is only valid for Rem's
algorithms. Algorithm~\ref{alg:find_options} provides pseudocode for
the different find implementations. Algorithm~\ref{alg:splice_options}
provides pseudocode for different splice options.

\begin{itemize}[topsep=1pt,itemsep=0pt,parsep=0pt,leftmargin=20pt]
  \item \unionasync{} (Algorithm~\ref{alg:unionasync})
  \item \unionhook{} (Algorithm~\ref{alg:union_nondeterministic})
  \item \unionearly{} (Algorithm~\ref{alg:union_early})
  \item \unionremlock{} (Algorithm~\ref{alg:union_rem_lock})
  \item \unionremcas{} (Algorithm~\ref{alg:union_rem_cas})
\end{itemize}

\subsection{Shiloach-Vishkin}
 The pseudocode for our adaptation of the Shiloach-Vishkin
algorithm is presented in Algorithm~\ref{alg:shiloach_vishkin}.

\subsection{Liu-Tarjan Algorithms}

Algorithms in the Liu-Tarjan framework
consist of rounds, where each round performs a \emph{connect phase}, a
\emph{shortcut phase}, and possibly an \emph{alter phase}.
The \emph{connect phase} updates the parents of edges based on
different operations. A \connect{} uses the endpoints of an edge as
candidates for both vertices. A \parentconnect{} uses the parents of
the endpoints of an edge as candidates for both vertices. Lastly, an
\extendedconnect{} uses the parent values of an edge as candidates for
both the edge endpoints as well as the parents of the endpoints.
Within the connect phase, the operation can also choose to update the
parent value of a vertex if and only if it is a tree-root at the start
of the round, which we refer to as \rootupdate{}. In all cases, an
update occurs if and only if the candidate parent is smaller than the
current parent.  Next, the algorithms perform the \emph{shortcut phase}, which performs one step of path
compression. Some of the algorithms perform a \fullshortcut{}, which
repeats shortcutting until no further parents change.  The algorithm
can then execute an optional \emph{alter phase}.  \alter{} updates the
endpoints of an edge to be the current labels of the endpoints (this
step is required for correctness when using \connect{}).

Below we list the variants of the Liu-Tarjan framework
implemented in \framework{}. We note that only five of these variants
were explored in their original paper. 
We refer the reader to
Section~\ref{subsec:finish} for details about the different algorithm
options in this framework, as well as the description of the overall
framework.
\begin{enumerate}[label=(\arabic*),itemsep=0pt]
  \item $\mathsf{CUSA}$: $\{\connect{}, \shortcut{},\alter{}\}$ 
  \item $\mathsf{CRSA}$: $\{\connect{}, \rootupdate{}, \shortcut{},\alter{}\}$ 
  \item $\mathsf{PUSA}$: $\{\parentconnect{}, \shortcut{}, \alter{}\}$ 
  \item $\mathsf{PRSA}$: $\{\parentconnect{}, \rootupdate{}, \shortcut{}, \alter{}\}$
  \item $\mathsf{PUS}$: $\{\parentconnect{}, \shortcut{}\}$
  \item $\mathsf{PRS}$: $\{\parentconnect{}, \rootupdate{}, \shortcut{}\}$
  \item $\mathsf{EUSA}$: $\{\extendedconnect{}, \shortcut{}, \alter{}\}$
  \item $\mathsf{EUS}$: $\{\extendedconnect{}, \shortcut{}\}$

  \item $\mathsf{CUFA}$: $\{\connect{}, \fullshortcut{},\alter{}\}$
  \item $\mathsf{CRFA}$: $\{\connect{}, \rootupdate{}, \fullshortcut{},\alter{}\}$
  \item $\mathsf{PUFA}$: $\{\parentconnect{}, \fullshortcut{}, \alter{}\}$
  \item $\mathsf{PRFA}$: $\{\parentconnect{}, \rootupdate{}, \fullshortcut{}, \alter{}\}$
  \item $\mathsf{PUF}$: $\{\parentconnect{}, \fullshortcut{}\}$
  \item $\mathsf{PRF}$: $\{\parentconnect{}, \rootupdate{}, \fullshortcut{}\}$
  \item $\mathsf{EUFA}$: $\{\extendedconnect{}, \fullshortcut{}, \alter{}\}$
  \item $\mathsf{EUF}$: $\{\extendedconnect{}, \fullshortcut{}\}$
\end{enumerate}

\begin{algorithm*}\caption{\ufmetaalgorithm{}} \label{alg:union_find_meta}
\begin{algorithmic}[1]
\State $\mathsf{FindOption} = \{$ \Comment{Shared by UnionFind Algorithms}
  \State \hspace{1em}$\findnaive{}$, \Comment{No path compression}
  \State \hspace{1em}$\findsplit{}$, \Comment{Path splitting}
  \State \hspace{1em}$\findhalve{}$, \Comment{Path halving}
  \State \hspace{1em}$\findcompress{}$, \Comment{Full path compression}
\State $\}$
\State $\mathsf{SpliceOption} = \{$ \Comment{Used by Rem's algorithms when path has not yet reached a root}
  \State \hspace{1em}$\splitone{}$, \Comment{Performs one path split}
  \State \hspace{1em}$\halveone{}$, \Comment{Performs one path halve}
  \State \hspace{1em}$\splice{}$, \Comment{Performs a splice operation}
\State $\}$
\State $\mathsf{UnionOption} = \{$
  \State \hspace{1em}$\mathsf{\unionasync{}}$, \Comment{Asynchronous union-find}
  \State \hspace{1em}$\mathsf{\unionhook{}}$, \Comment{Asynchronous hook-based union-find}
  \State \hspace{1em}$\mathsf{\unionearly{}}$, \Comment{Union with early-optimization}
  \State \hspace{1em}$\mathsf{\unionremlock{}}$, \Comment{Lock-based Rem's algorithm*}
  \State \hspace{1em}$\mathsf{\unionremcas{}}$, \Comment{CAS-based Rem's algorithm*}
\State $\}$
\Procedure{Connectivity}{$G$, $\codevar{labels}$,
$\largestcomp{}$, $\mathsf{UnionOption}$, $\mathsf{FindOption}$,
$\mathsf{SpliceOption}$}
\State $\mathcal{U} = \mathsf{Union\mhyphen Find}(\mathsf{UnionOption}$, $\mathsf{FindOption}$, $\mathsf{FindOption})$
\ParFor {$\{u\ |\ \codevar{labels}[u] \neq \largestcomp{}\}$}
  \ParFor {$\{(u, v) \in d^{+}(u)\}$}
    \State $\mathcal{U}.\mathsf{Union}(u, v, \codevar{labels})$
  \EndParFor
\EndParFor
\EndProcedure
\end{algorithmic}
\end{algorithm*}

\begin{algorithm*}\caption{Find Algorithms} \label{alg:find_options}
\small
\begin{algorithmic}[1]

\Procedure{FindNaive}{$u, P$}
  \State $v \gets u$
  \While {$v \neq P[v]^{*}$}
    \State $v \gets P[v]$
  \EndWhile
  \State \algorithmicreturn{} $v$
\EndProcedure

\Procedure{FindCompress}{$u, P$}
  \State $r \gets u$
  \If {$P[r] = r$} \algorithmicreturn{} $r$
  \EndIf
  \While {$r \neq P[r]$}
    \State $r \gets P[r]$
  \EndWhile
  \While {$j \gets P[u] > r$}\label{countex:compress}
  \State $P[u] \gets r$, $u \gets j$
  \EndWhile
  \State \algorithmicreturn{} $r$
\EndProcedure

\Procedure{\findsplit{}}{$u, P$}
  \State $v \gets P[u], w \gets P[v]$
  \While{$v \neq w^{*}$}
    \State $\textsc{\cas{}}(\&P[u], v, w)$
    \State $u \gets v$
  \EndWhile
  \State \algorithmicreturn{} $v$
\EndProcedure

\Procedure{\findhalve{}}{$u, P$}
  \State $v \gets P[u], w \gets P[v]$
  \While{$v \neq w^{*}$}
    \State $\textsc{\cas{}}(\&P[u], v, w)$
    \State $u \gets P[u]$
  \EndWhile
  \State \algorithmicreturn{} $v$
\EndProcedure
\end{algorithmic}
\end{algorithm*}

\begin{algorithm*}\caption{Splice Algorithms} \label{alg:splice_options}
\small
\begin{algorithmic}[1]

\Procedure{\splitone{}}{$u, x, P$}
  \State $v \gets P[r_u], w \gets P[v]^{*}$
  \If{$v \neq w$}
    \State $\textsc{\cas{}}(\&P[u], v, w)$
  \EndIf
  \State \algorithmicreturn{} $v$
\EndProcedure

\Procedure{\halveone{}}{$u, x, P$}
  \State $v \gets P[r_u], w \gets P[v]^{*}$
  \If{$v \neq w$}
    \State $\textsc{\cas{}}(\&P[u], v, w)$
  \EndIf
  \State \algorithmicreturn{} $w$
\EndProcedure

\Procedure{\splice{}}{$u, v, P$}
\State $p_u = P[u]^{*}$
  \State $\textsc{\cas{}}(\&P[u], p_u, P[v])$
  \State \algorithmicreturn{} $p_u$
\EndProcedure
\end{algorithmic}
\end{algorithm*}

\begin{algorithm*}\caption{\unionasync{}} \label{alg:unionasync}
\small
\begin{algorithmic}[1]
  \State \textsc{Find} = \text{one of } $\{\mathsf{FindNaive}, \mathsf{\findsplit{}}, \mathsf{\findhalve{}}, \mathsf{FindCompress}\}$
\Procedure{Union}{$u, v, P$}
\State $p_u \gets \textsc{Find}(u, P), p_v \gets \textsc{Find}(v, P)$
\While{$p_u \neq p_v$}\Comment{WLOG, let $p_u > p_v$}
\If{$p_u = P[p_u]$ and $\textsc{\cas{}}(\&P[u], p_u, p_v)$}
  \State \algorithmicreturn{}
  \EndIf
  \State $p_u \gets \textsc{Find}(u, P), p_v \gets \textsc{Find}(v, P)$
\EndWhile
\EndProcedure
\end{algorithmic}
\end{algorithm*}

\begin{algorithm*}\caption{\unionhook{}} \label{alg:union_nondeterministic}
\small
\begin{algorithmic}[1]
\State \textsc{Find} = \text{one of } $\{\mathsf{FindNaive},\mathsf{\findsplit{}}, \mathsf{\findhalve{}}, \mathsf{FindCompress}\}$
\State $\codevar{H} \gets \{i \rightarrow \infty\ |\ i \in |V| \}$ \Comment{Array of hooks}
\Procedure{Union}{$u, v, P$}
\State $p_u \gets \textsc{Find}(u, P), p_v \gets \textsc{Find}(v, P)$
\While{$p_u \neq p_v$} \Comment{WLOG, let $p_u > p_v$}
\If{$p_u = P[p_u]$ and $\textsc{\cas{}}(\&H[u], \infty, p_v)$}
  \State $P[p_u] \gets p_v$
  \State \algorithmicreturn{}
  \EndIf
  \State $p_u \gets \textsc{Find}(u, P), p_v \gets \textsc{Find}(v, P)$\Comment{WLOG, let $p_u < p_v$}
\EndWhile
\EndProcedure
\end{algorithmic}
\end{algorithm*}

\begin{algorithm*}\caption{\unionearly{}} \label{alg:union_early}
\small
\begin{algorithmic}[1]
\State \textsc{Find} = \text{one of } $\{\mathsf{FindNaive},\mathsf{\findsplit{}}, \mathsf{\findhalve{}}, \mathsf{FindCompress}\}$
\Procedure{Union}{$u, v, P$}
\State $p_u \gets u, p_v \gets v$
\While{$p_u \neq p_v$} \Comment{WLOG, let $p_u > p_v$}
\If{$p_u = P[p_u]$ and $\textsc{\cas{}}(\&P[u], p_u, p_v)$}
    \State $\textsc{break}$
  \EndIf
  \State $z \gets P[p_u], w \gets P[z]$
  \State $\textsc{\cas{}}(\&P[p_u], z, w)$
  \State $p_u \gets w$
\EndWhile
\State $\textsc{Find}(u, P), \textsc{Find}(v, P)$ \Comment{Elided if the find-option is $\mathsf{FindNaive}$}
\EndProcedure
\end{algorithmic}
\end{algorithm*}

\begin{algorithm*}\caption{\unionremlock{}} \label{alg:union_rem_lock}
\small
\begin{algorithmic}[1]
\State \textsc{Compress} = \text{one of } $\{\mathsf{FindNaive}, \mathsf{\findsplit{}}, \mathsf{\findhalve{}}\}$
\State \textsc{\splice{}} = \text{one of } $\{\mathsf{\splitone{}}, \mathsf{\halveone{}}, \mathsf{\splice{}}\}$
\State $\codevar{L} \gets \mathsf{mutex}[|V|]$ \Comment{Array of locks}
\Procedure{Union}{$u, v, P$}
\State $r_u \gets u, r_v \gets v$
\While{$P[r_u] \neq P[r_v]^{*}$} \Comment{WLOG, let $P[r_u] > P[r_v]$} \label{countex:while}
  \If{$r_u = P[r_u]$}
    \State $L[r_u].\textsc{Lock}()$
    \State $p_v \gets P[r_v]$
    \If{$r_u = P[r_u]$ and $r_u > p_v^{*}$}
      \State $P[r_u] \gets p_v$
    \EndIf
    \State $L[r_u].\textsc{Unlock}()$
    \State \algorithmicreturn{} $\mathsf{True}$
  \Else
    \State $r_u \gets \textsc{\splice{}}(r_u, r_v, P)$
  \EndIf
\EndWhile
\If {$\textsc{Compress} \neq \mathsf{FindNaive}$}
  \State $\textsc{Compress}(u, P), \textsc{Compress}(v, P)$ \label{countex:remfind}
\EndIf
\State \algorithmicreturn{} $\mathsf{False}$
\EndProcedure
\end{algorithmic}
\end{algorithm*}

\begin{algorithm*}\caption{\unionremcas{}} \label{alg:union_rem_cas}
\small
\begin{algorithmic}[1]
\State \textsc{Compress} = \text{one of } $\{\mathsf{FindNaive}, \mathsf{\findsplit{}}, \mathsf{\findhalve{}}\}$
\State \textsc{\splice{}} = \text{one of } $\{\mathsf{\splitone{}}, \mathsf{\halveone{}}, \mathsf{\splice{}}\}$
\Procedure{Union}{$(u, v, P)$}
\State $r_u \gets u, r_v \gets v$
\While{$P[r_u] \neq P[r_v]^{*}$} \Comment{WLOG, let $P[r_u] > P[r_v]$}
\If{$r_u = P[r_u]$ and $\textproc{\cas{}}(\&P[r_u], r_u, P[r_v])^{*}$}
    \If {$\textsc{Compress} \neq \findnaive{}$}
      \State $\textsc{Compress}(u, P)$
      \State $\textsc{Compress}(v, P)$
    \EndIf
    \State \algorithmicreturn{} $\mathsf{True}$
  \Else
    \State $r_u \gets \textsc{\splice{}}(r_u, r_v, P)$
  \EndIf
\EndWhile
\State \algorithmicreturn{} $\mathsf{False}$
\EndProcedure
\end{algorithmic}
\end{algorithm*}

\begin{algorithm*}\caption{$\mathsf{Shiloach\mhyphen Vishkin}$}\label{alg:shiloach_vishkin}
\small
\begin{algorithmic}[1]
\Procedure{Connectivity}{$G$, $\codevar{labels}$, $\largestcomp{}$}
\State $\codevar{changed} \gets \mathsf{true}$
\State $\codevar{candidates} \gets \{ v \in V\ |\ \codevar{labels}[v] \neq \largestcomp{} \}$
\State $\codevar{prev\_labels} \gets \codevar{parents}$ \Comment{A copy}
\While {$\codevar{changed} = \mathsf{true}$}
  \State $\codevar{changed} = \mathsf{false}$
  \ParFor {$\{v \in \codevar{candidates}\}$}
    \ParFor {$\{(u, v) \in d^{+}(u)\}$}
      \State $p_u = \codevar{labels}[u], p_v = \codevar{labels}[v]$
      \State $l = \min(p_u, p_v), h = \max(p_u, p_v)$
      \If {$l \neq h$ and $h = \codevar{prev\_labels}[h]$}
        \State $\textsc{\writemin{}}(\&\codevar{labels}[h], l)$
        \State $\codevar{changed} \gets \mathsf{true}$
      \EndIf
    \EndParFor
  \EndParFor
  \ParFor {$\{v \in V\}$}
    \State $\textsc{FullShortcut}(v, \codevar{labels})$
    \State $\codevar{prev\_labels}[v] \gets \codevar{parents}[v]$
  \EndParFor
\EndWhile
\EndProcedure
\end{algorithmic}
\end{algorithm*}

\end{document}